\newcommand{\N}{\mathds{N}}
\newcommand{\Z}{\mathds{Z}}
\newcommand{\R}{\mathds{R}}
\declaretheorem[numberwithin=section]{theorem}
\declaretheorem[style=plain, name=Proposition, sibling=theorem]{proposition}
\declaretheorem[style=plain, name=Lemma, sibling=theorem]{lemma}
\declaretheorem[style=plain, name=Corollary, sibling=theorem]{corollary}
\declaretheorem[style=plain, name=Definition, sibling=theorem]{definition}
\declaretheorem[style=plain, name=Example, sibling=theorem]{example}
\declaretheorem[style=plain, name=Notation, sibling=theorem]{notation}
\declaretheorem[style=plain, name=Remark, sibling=theorem]{remark}
\declaretheorem[style=plain, name=Terminology, sibling=theorem]{terminology}
\newcommand{\samp}{\Omega}            
\newcommand{\vs}[1]{E_{#1}}           
\newcommand{\ec}[1]{\left[#1\right]}  
\newcommand{\one}{\boldsymbol{1}}     
\newcommand{\Asterisk}{\mathop{\scalebox{1.5}{\raisebox{-0.2ex}{$\ast$}}}}   
\newcommand{\indep}{\perp\!\!\!\perp} 
\newcommand{\Indep}[3]{#1 \indep #2 \  |  \ #3}   
\newcommand{\IndepF}[4]{#2 \underset{#1}{\indep} #3  \ |  \ #4}  
\DeclareMathOperator{\Maps}{Maps}     
\DeclareMathOperator{\Meas}{Meas}     
\DeclareMathOperator{\CRS}{CR}        
\newcommand{\mon}[1]{#1}             
\newcommand{\gro}[1]{#1}             
\newcommand{\meas}{\mu}              
\newcommand{\set}[1]{\widetilde{#1}} 
\newcommand{\atom}[1]{p_{#1}}        
\newcommand{\start}[1]{[#1]}         
\newcommand{\range}[2]{\left[#1:#2\right]}    
\newcommand{\CR}{F}                  
\newcommand{\num}[1]{\left| #1\right|}    
\newcommand{\law}[2]{{#2}_{#1}}      
\newcommand{\cond}[2]{{#1}|_{#2}}    
\newcommand{\Sim}[1]{\Delta(#1)}   
\newcommand{\WP}[2]{{#1}_{#2}}      
\newcommand{\KL}{D}                  
\newcommand{\CE}{C} 
\newcommand{\Ent}{I}                  
\newcommand{\AC}[1]{\widetilde{#1}}    
\newcommand{\sep}{\|}                   
\newcommand{\Prty}[1]{\mathcal{#1}}    
\newcommand{\TC}[1]{TC_{#1}}           
\newcommand{\DTC}[1]{DTC_{#1}}           
\newcommand{\indic}[1]{\mathds{1}_{#1}}   
\newcommand{\FCMI}[1]{#1}               
\DeclareMathOperator{\IM}{Im}          
\newcommand{\Gr}[1]{\mathcal{#1}}      
\newcommand{\Ver}[1]{\mathcal{#1}}       
\newcommand{\Ed}[1]{\mathcal{#1}}         
\newcommand{\Grdif}[2]{{#1}^{\setminus #2}}         
\newcommand{\ncomp}[1]{s\left( #1 \right)}       
\newcommand{\edge}{\scalebox{1.6}[1]{$-$}}
\newlength{\depthofsumsign}
\newcommand{\bigindep}[1][1.4]{
    \mathop{%
        \raisebox
            {-#1\depthofsumsign+1\depthofsumsign}
            {\scalebox
                {#1}
                {$\displaystyle\indep$}%
            }
    }
}
\newcommand{\bigindepF}[1]{#1: \bigindep}
\newcommand{\bigscolon}[1][1.6]{
    \mathop{%
            \depthofsumsign+1\depthofsumsign}
            {\scalebox
                {#1}
                {$\displaystyle ;$}%
            }
}
\newcommand{\loss}{\mathcal{L}}
\begin{document}

\title{Abstract Markov Random Fields}
\date{}
\author{Leon Lang}
\email{l.lang@uva.nl}
\thanks{Main contributing author.}
\orcid{0000-0002-1950-2831}
\affiliation{Informatics Institute, University of Amsterdam}
\author{Clélia de Mulatier}
\email{c.m.c.demulatier@uva.nl}
\affiliation{Informatics Institute, University of Amsterdam}
\affiliation{Institute for Theoretical Physics, University of Amsterdam}
\orcid{0000-0003-3578-5453}
\author{Rick Quax}
\email{r.quax@uva.nl}
\affiliation{Informatics Institute, University of Amsterdam}
\orcid{0000-0002-0299-0074}
  \author{Patrick Forré}
\email{p.d.forre@uva.nl}
\affiliation{Korteweg-de Vries Institute for Mathematics, University of Amsterdam}
\orcid{0000-0003-4663-3842}

\maketitle

\begin{abstract}
  Markov random fields are known to be fully characterized by properties of their information diagrams, or $\Ent$-diagrams. 
  In particular, for Markov random fields, regions in the $\Ent$-diagram corresponding to disconnected vertex sets in the graph vanish. 
  Recently, $\Ent$-diagrams have been generalized to $\CR$-diagrams, for a larger class of functions $\CR$ satisfying the chain rule beyond Shannon entropy, such as Kullback-Leibler divergence and cross-entropy. 
  In this work, we generalize the notion and characterization of Markov random fields to this larger class of functions $\CR$ and investigate preliminary applications.

  We define $\CR$-independences, $\CR$-mutual independences, and $\CR$-Markov random fields and characterize them by their $\CR$-diagram. 
  In the process, we also define $\CR$-dual total correlation and prove that its vanishing is equivalent to $\CR$-mutual independence. 
  We then apply our results to information functions $\CR$ that are applied to probability mass functions. 
  We show that if the probability distributions of a set of random variables are Markov random fields for the same graph, then we formally recover the notion of an $\CR$-Markov random field for that graph. 
  We then study the Kullback-Leibler diagrams on specific Markov chains, leading to a visual representation of the second law of thermodynamics and a simple explicit derivation of the decomposition of the evidence lower bound for diffusion models.
\end{abstract}


\tableofcontents


\section{Introduction}\label{sec:Introduction}

Entropy, mutual information, and higher-order information terms between several random variables can be visualized in information diagrams, also known as $I$-diagrams.
This is known as Hu's Theorem~\citep{Hu1962,Yeung1991}.
These diagrams become especially interesting when the visualized variables obey conditional independences, which then implies that some regions in these diagrams vanish.
A series of papers~\citep{Kawabata1992,Yeung2002a,Yeung2019} exploited this idea to study Markov random fields, which according to the global Markov property satisfy a set of conditional independences determined by an underlying graph~\citep{Preston1976,Spitzer1971,Hammersley1971}.
This then implies that all intersections in the $I$-diagram corresponding to \emph{disconnected vertex sets} in the graph vanish, a result we visualize for simple graphs in Figure~\ref{fig:three_variables_mrfs}.

Recently, Hu's theorem has been generalized from entropy $I$ to more general functions $\CR$ on commutative, idempotent monoids satisfying a chain rule~\citep{Lang2022}. 
Examples for $\CR$ are Kullback-Leibler divergence, cross-entropy, Tsallis entropy, and even Kolmogorov complexity. 
The resulting $\CR$-diagrams show structurally the same relations as $I$-diagrams, and thus allow higher-order $\CR$-terms to be visualized and reasoned about in a unified way --- see Figure~\ref{fig:venn_diagram_comparison}.

Functions such as the cross-entropy or Kullback-Leibler divergence are important in the context of statistical modeling of multivariate data, in which one aims to find a probabilistic model able to reproduce the information structure of the data. 
For instance, the $\CR$-diagram for cross-entropy allows us to visualize how the cross-entropy between a model probability distribution and the data distribution is decomposed into higher-order terms.  
\cite{Cocco2012} used these higher-order terms (which they called cluster (cross)-entropies) in their adaptive cluster expansion approach to statistical modeling of data with Ising models. 
Kullback-Leibler divergence has been studied in the context of decompositions of joint entropy and information~\citep{Amari2001} and is ubiquitous in machine learning. 
In other contexts, for example Kolmogorov complexity, the precise meaning of the higher-order terms is not yet clear.

In this work, we take the generalization of $I$-diagrams to $\CR$-diagrams as a motivation to generalize the results from~\citet{Kawabata1992,Yeung2002a,Yeung2019}.
We define $\CR$-independences by vanishing $\CR$-terms of degree two, similar to how the probabilistic independence of random variables is characterized by vanishing mutual information.
We then define $\CR$-mutual independences and $\CR$-dual total correlation.
We show that an $\CR$-mutual independence is characterized by a vanishing $\CR$-dual total correlation.
We then define $\CR$-Markov random fields by the global Markov property and fully characterize them in terms of the $\CR$-diagram:
The global Markov property holds if and only if regions corresponding to disconnected vertex sets in the graph vanish.

We then apply this theory to the case where $\CR$ is a function like entropy, Kullback-Leibler divergence, or cross-entropy that is applied to probability mass functions. 
We show that when applying $\CR$ to sets of probability distributions that form a Markov random field with respect to the same graph, then the underlying random variables form an $\CR$-Markov random field. 
In particular, when applying our $\CR$-diagram characterization of $\CR$-Markov random fields, this implies that regions in the $\CR$-diagram corresponding to disconnected vertex sets disappear.
We then apply this result to the case of two joint distributions that form a Markov chain with equal transition probabilities from one time-step to the next, and the specific case that $\CR$ is Kullback-Leibler divergence.
This leads to a degeneracy of the Kullback-Leibler diagram in which the Kullback-Leibler divergence progressively shrinks ``over time'' --- a diagrammatic visualization of a weak version of the second law of thermodynamics.
Finally, we study the loss function of diffusion models --- the evidence lower bound --- and show how its explicit decomposition can be derived using a Kullback-Leibler diagram over a Markov chain.

\subsection*{Notation}

For $i, k \in \N$, we set $\range{i}{k} \coloneqq \{i, i+1, \dots, k\}$ if $i \leq k$ and $\range{i}{k} = \emptyset$, else.
As a special case, we set $\start{k} \coloneqq \range{1}{k}$.
If $I$ is a set and $i \in I$, we often write $I \setminus i$ for $I \setminus \{i\}$.
For a set $\Sigma$, we denote by $2^{\Sigma}$ its power set, i.e., the set of its subsets.
If $W_i$ are sets indexed with $i \in I$, then $W_I$ denotes $\bigcup_{i \in I} W_i$.
If $X_i, i \in I$ are elements of a commutative monoid and $A \subseteq I$, then we set $X_A \coloneqq \prod_{a \in A} X_a$.
If $X_1, \dots, X_n$ are elements of a separoid that form a Markov chain, then we will write $X_1 \to X_2 \to \dots \to X_n$.
We will denote the trivial measurable space by $\Asterisk$, which contains precisely one element denoted $\ast \in \Asterisk$.
The Shannon entropy of a random variable $X$ is denoted by $I(X)$ or $I_1(X)$, deviating from the typical notation of $H(X)$; The aim is to emphasize more strongly how entropy is embedded in the collection of (higher-order) Shannon information functions like mutual information and interaction information.

\section{Background and Outline}\label{sec:tech_back}

In this section, we introduce important background in multivariate information theory and its abstract generalizations, precisely state Yeung's characterization of Markov random fields in terms of $I$-diagrams and our generalizations of those results, and outline the rest of the paper. 
The aim is for this section to be self-contained and to provide sufficient context to appreciate the general results that then follow.

In Section~\ref{sec:ent_mi_ii}, we introduce Shannon entropy, mutual information, and interaction information in the abstract setting from~\citet{Lang2022}, highlighting the structure of monoids acting on abelian groups.
In Section~\ref{sec:background}, we then state the generalized Hu theorem from~\citet{Lang2022} and explain how it gives rise to the well-known $I$-diagrams from~\citet{Yeung1991} when specializing to the case of Shannon entropy.
In Section~\ref{sec:graph_terminology} we introduce some graph terminology necessary in the theory of Markov random fields.
In Section~\ref{sec:yeungs_characterization}, we introduce Markov random fields in separoids and show Yeung's characterization of those in the probabilistic context in terms of the $I$-diagram~\citep{Yeung2002a}.
In Section~\ref{sec:outline}, we motivate and state our generalization of this characterization to $F$-diagrams, making use of the generalized Hu theorem from~\citet{Lang2022}, and then outline the rest of the work, which will contain the proofs and applications of this result, in Section~\ref{sec:outline_coming}.

\subsection{Entropy, Mutual Information, and Interaction Information}\label{sec:ent_mi_ii}

In this section, we introduce the well-known notions of entropy, mutual information, and interaction information of higher degrees in the precise formal framework from~\citet{Lang2022} that reveals the monoid structure that we make use of in our work.
Let $\samp$ be a countable discrete measurable space.
We do \emph{not} fix a probability mass function on $\samp$, which is useful for obtaining a monoid action later in Definition~\ref{def:averaged_conditioning}.
When we speak of \emph{random variables}, then we mean functions $X: \samp \to \vs{X}$ with a \emph{finite and discrete} value space $\vs{X}$.

We write the space of probability mass functions $P: \samp \to [0, 1]$ as $\Sim{\samp}$.
We equip $\Delta(\samp)$ with the smallest $\sigma$-algebra that makes all evaluation maps
\begin{equation*}
  \text{ev}_x: \Delta(\samp) \to \R, \ \ \ P \mapsto \text{ev}_x(P) \coloneqq P(x)
\end{equation*}
for all elements $x \in \samp$ measurable.
In the finite case, this equals the Borel $\sigma$-algebra under the embedding $\Delta(\samp) \subseteq \R^{\samp} \cong \R^{|\samp|}$.
This measurable structure is standard in the context of the Giry monad~\citep{Giry1982}.
Measurability is not strictly necessary in our discrete setting, but would become important if one were to generalize our results to a non-discrete domain.

For a probability mass function $P \in \Sim{\samp}$, we write the distributional law with respect to a random variable $X: \samp \to \vs{X}$ as
\begin{equation*}
  \law{X}{P} \in \Delta(\vs{X}), \quad \law{X}{P}(x) \coloneqq P\big( X^{-1}(x)\big) = \sum_{\omega \in X^{-1}(x)} P(\omega).
\end{equation*}
This is also a probability mass function.
Furthermore, for $x \in \vs{X}$ with $P_X(x) \neq 0$, we define the conditional probability mass function $\cond{P}{X = x} \in \Sim{\samp}$ by
\begin{equation*}
  \cond{P}{X=x} \in \Delta(\samp), \quad \cond{P}{X=x}(\omega) \coloneqq
  \frac{P\big( \{\omega\} \cap X^{-1}(x)\big)}{\law{X}{P}(x)}.
\end{equation*}

If $X: \samp \to \vs{X}$ and $Y: \samp \to \vs{Y}$ are two random variables, then their joint variable is given by
\begin{equation*}
  XY: \samp \to \vs{X} \times \vs{Y}, \quad \omega \mapsto \big(X(\omega), Y(\omega)\big).
\end{equation*}
If the random variable is clear from the context, we write $P(x)$ for $\law{X}{P}(x)$.
Similarly, we may write $P(x \mid y)$ for $\law{X}{\big(\cond{P}{Y=y}\big)}(x)$
and $P(x, y)$ for $\law{XY}{P}(x, y)$.

We want to impose the structure of a monoid on collections of random variables.
Recall that a commutative, idempotent monoid $\mon{M} = (\mon{M}, \cdot, \one)$ consists of a set $\mon{M}$ together with a multiplication rule $\cdot: \mon{M} \times \mon{M} \to \mon{M}$ that is associative, commutative, has $\one$ as its neutral element, and is idempotent: $X \cdot X = X$ for all $X \in \mon{M}$.
For $X, Y \in \mon{M}$, write $X \precsim Y$ if $X \cdot Y = Y$.
With this definition, $(\mon{M}, \precsim)$ becomes a \emph{join-semilattice}, which is an equivalent description of a commutative, idempotent monoid.
Intuitively, it is often useful to think of $\precsim$ as the inclusion of sets, and of the product of elements in $\mon{M}$ as a union, which will be made precise in Theorem~\ref{thm:hu_kuo_ting_generalized}.
Join-semilattices were used in the development of the theory of separoids in~\cite{Dawid2001}, but we will not make use of that viewpoint and will throughout use the structure of monoids.
We also note that we do \emph{not} have a meet operator in our framework, neither in the context of random variables, nor in the generalization to monoids.

To impose the structure of a monoid on collections of random variables, we need to identify \emph{equivalent} random variables.
For two random variables $X, Y$ on $\samp$, we define $X \precsim Y$ if $X$ is a deterministic function of $Y$, meaning there exists a function $f_{XY}: \vs{Y} \to \vs{X}$ such that $X = f_{XY} \circ Y$, i.e., $X(\omega) = f_{XY}(Y(\omega))$ for all $\omega \in \samp$.
We write $X \sim Y$ if $X \precsim Y$ and $Y \precsim X$, which is an equivalence relation.

In the following, we will then write $X$ for both the random variable and its equivalence class, and we denote by $\Asterisk$ the trivial measurable space that contains precisely one element $\ast \in \Asterisk$.
One obtains the following:

\begin{proposition}
  [The Monoid of Random Variables]
   \label{pro:monoid_of_rvs}
   Equivalence classes of random variables, together with the multiplication given by the join operation $X \cdot Y \coloneqq XY$ and the neutral element given by $\one: \samp \to \Asterisk$, form a commutative, idempotent monoid.
\end{proposition}

\begin{proof}
  For a proof in this precise framework, see~\citet[Section 2.3]{Lang2022}.
  In the framework of lattices, this was formulated by~\citep{Dawid2001}. 
  We note that the collection of equivalence classes of random variables on $\samp$ is indeed a \emph{set} instead of a proper class, as an equivalence class $[X]$ can be identified with the partition $\big\lbrace X^{-1}(x) \mid x \in \vs{X} \big\rbrace$ on $\samp$. 
\end{proof}

Clearly, any subset of equivalence classes of random variables --- if it is closed under multiplication and contains a constant random variable --- then also forms a commutative, idempotent monoid.
It is useful to work with monoids of random variables since they have useful structure, and since the information functions we are concerned with do not depend on the representative of an equivalence class.
Recall that an abelian group $\gro{G} = (\gro{G}, +, 0)$ consists of a set $\gro{G}$ together with an addition rule $+: \gro{G} \times \gro{G} \to \gro{G}$ that is associative and commutative, has $0$ as its neutral element, and has ``negative'' elements: $g + (-g) = (-g) + g = 0$.
Now, define $\Meas(\Delta(\samp), \R)$ as the abelian group of measurable functions from $\Delta(\samp)$ to $\R$, where we define $(F + F')(P) \coloneqq F(P) + F'(P)$.

\begin{definition}[Shannon Entropy]\label{def:shannon_entropy} 
  Let $\ln: (0, \infty) \to \R $ be the natural logarithm 
  and $X: \Omega \to \vs{X}$ be a random variable. 
  The \emph{Shannon entropy of $X$ with respect to $P \in \Delta(\Omega)$} is given by
  \begin{equation*}
    I(X; P) \coloneqq I(P_X) = - \sum_{x \in E_X} P_X(x) \ln P_X(x) \in \R.
  \end{equation*}
  The \emph{entropy function} or \emph{Shannon entropy} of $X$ is the measurable function
  \begin{equation*}
    I(X) \in \Meas(\Delta(\Omega), \R), \ \ \ \big[I(X)\big](P) \coloneqq I(X; P)
  \end{equation*}
  defined on probability mass functions. 
  This function does not depend on the representative of the equivalence class $X$.
\end{definition}

\begin{definition}[Averaged Conditioning]\label{def:averaged_conditioning}
  Let $F \in \Meas(\Delta(\samp), \R)$.
  For a random variable $X: \samp \to \vs{X}$, define the \emph{averaged conditioning} of $F$ by $X$ as
  \begin{equation}
    (X.F)(P) \coloneqq \sum_{x \in \vs{X}} P_X(x) F(P|_{X = x}).
    \label{eq:sum_converges_unconditionally}
  \end{equation}
  Then $X.F \in \Meas(\Delta(\samp), \R)$.
  This definition does not depend on the representative of the equivalence class of $X$.
\end{definition}

Let $M$ be a monoid and $G$ an abelian group.
Then an additive monoid action (or monoid action for short) is a function $.: \mon{M} \times \gro{G} \to \gro{G}$ for which $\one \in \mon{M}$ acts trivially ($\one.g = g$), which is associative ($X.(Y.g) = (X \cdot Y).g$), and which is additive ($X.(g + h) = X.g + X.h$) ---  which also implies $X.0 = 0$.
Additive monoid actions generalize the conditioning operation of information functions on random variables, as the following proposition, whose proof we leave to the readers, shows:

\begin{proposition}
  \label{pro:monoid_action}
  Let $M$ be a monoid of equivalence classes of random variables and $G = \Meas(\Delta(\samp), \R)$.
  The averaged conditioning $. \colon M \times G \to G$ from Definition~\ref{def:averaged_conditioning} is a monoid action.
\end{proposition}

This viewpoint of the averaged conditioning was perhaps first studied in the context of information cohomology~\citep{Baudot2015a,Vigneaux2019a,Baudot2019}.
The proof of the following well-known chain rule of Shannon entropy is also left to the reader to prove:

\begin{proposition}[Chain Rule]
\label{cor:cocycle_condition_entropy}
  The following chain rule
  \begin{equation*}
    I(XY) = I(X) + X.I(Y) 
  \end{equation*}
  holds for arbitrary random variables $X: \Omega \to \vs{X}$ and $Y: \Omega \to \vs{Y}$.
\end{proposition}

We remark that $X.I(Y)$ is typically written as $I(Y \mid X)$ in other literature on information theory.
In our context, the notation $X.I(Y)$ is more natural, as it emphasizes that it comes from a monoid action.
For the following definition, set $I_1 \coloneqq I$, which then embeds Shannon entropy into the set of higher-order information functions:

\begin{definition}[Mutual Information, Interaction Information]\label{def:interaction_information}
    Let $q \in \N$ and assume that $I_{q-1}$ is already defined.
    Assume also that $Y_1, \dots, Y_q$ are $q$ random variables on $\Omega$.
    Then we define $I_q(Y_1;\dots;Y_q) \in \Meas(\Delta(\samp), \R)$, the \emph{interaction information of degree $q$}, as the function
    \begin{equation*}
      I_q(Y_1; \dots ; Y_q ) \coloneqq I_{q-1}(Y_1; \dots; Y_{q-1}) - Y_q.I_{q-1}(Y_1; \dots ; Y_{q-1}).
    \end{equation*}
    $I_2$ is also called mutual information.
    This definition does not depend on the representatives of the equivalence classes of $Y_1, \dots, Y_q$.
    For a specific probability mass function $P \in \Delta(\samp)$, we set $I_q(Y_1; \dots ; Y_q; P) \coloneqq \big[ I_q(Y_1; \dots ; Y_q) \big](P) \in \R$, and, when there is another random variable $X$ on $\samp$ on which we condition,
    $X.I_q(Y_1; \dots ; Y_q; P) \coloneqq \big[ X.I_q(Y_1; \dots Y_q) \big](P)$.
\end{definition}

  We now summarize the abstract properties of interaction information $I_q$. 
  Let $M$ be a commutative, idempotent monoid of (equivalence classes of) random variables as in Proposition~\ref{pro:monoid_of_rvs}.
  By abuse of notation, we do not distinguish between random variables and their equivalence classes, i.e., we write $Y$ instead of $\ec{Y}$.
  Denote by $G \coloneqq \Meas\big(\Delta(\Omega), \R\big)$ the abelian group of measurable functions from $\Delta(\Omega)$ to $\R$.
  Averaged conditioning $.: M \times G \to G$ is a well-defined monoid action.

  We can view $I_q$ as a function
  $I_q: M^q \to G$ that is defined on tuples of \emph{equivalence classes} of random variables.
  By Proposition~\ref{cor:cocycle_condition_entropy}, entropy $I_1$ satisfies the equation
  \begin{equation*}
    I_1(XY) = I_1(X) + X.I_1(Y)
  \end{equation*}
  for all $X, Y \in M$, where $X.I_1(Y)$ is the result of the action of $X \in M$ on $I_1(Y) \in G$ via averaged conditioning.
  Finally, by Definition~\ref{def:interaction_information}, for all $q \geq 2$ and all $Y_1, \dots, Y_q \in M$, one has
  \begin{equation*}
    I_q(Y_1; \dots ; Y_q) = I_{q-1}(Y_1; \dots ; Y_{q-1}) - Y_q.I_{q-1}(Y_1; \dots ; Y_{q-1}).
  \end{equation*}

\subsection{The Generalized Hu Theorem and \texorpdfstring{$\CR$}--Diagrams}\label{sec:background}

We now work towards a presentation of the generalized Hu theorem from~\cite{Lang2022}, which generalizes~\cite{Hu1962} and the $\Ent$-diagrams from~\cite{Yeung1991}.
These diagrams show in one overview the (higher-order) information functions of a set of variables and how they additively compose each other.
Fix an abelian group $\gro{G}$ (generalizing $\Meas(\Delta(\samp), \R)$ from above) and a commutative, idempotent monoid $\mon{M}$ (generalizing a monoid of equivalence classes of random variables). 
We also fix an additive monoid action $. \colon M \times G \to G$, generalizing the averaged conditioning.
For a set $\Sigma$, denote by $2^{\Sigma}$ its power set, i.e., the set of its subsets.

\begin{definition}[($\gro{G}$-Valued) Measure]\label{def:group-valued_measure_new}
  Let $\gro{G}$ be an abelian group and $\Sigma$ a set.\footnote{We only make use of the case $\Sigma = \set{X}$ as defined below in Equation~\eqref{eq:Sigma_definition}.}
  A \emph{$\gro{G}$-valued measure} is a function $\meas: 2^\Sigma \to \gro{G}$ with the property 
  \begin{equation*}
    \meas(A_1 \cup A_2) = \meas(A_1) + \meas(A_2)
  \end{equation*}
  for all disjoint $A_1, A_2 \subseteq \Sigma$.
  One automatically obtains $\meas(\emptyset) = 0$, and $\meas$ turns arbitrary finite disjoint unions into the corresponding finite sums.
\end{definition}

We now fix elements $X_1, \dots, X_n \in \mon{M}$, $n \geq 0$.
These are the elements for which we want to obtain an $\CR$-diagram later on.
Since $\mon{M}$ is commutative, every product of these elements (of arbitrary order and multiplicity) can be reordered such that all $X_i$ with the same index $i$ are next to each other.
Then, since $\mon{M}$ is idempotent, we can reduce the product further until each $X_i$ appears maximally once. 
This means that general products of the $X_i$ are of the form 
\begin{equation}
  \label{eq:general_product_reduction}
  X_I \coloneqq \prod_{i \in I}X_i \coloneqq X_{i_1}X_{i_2} \cdots X_{i_q}
\end{equation}
for some possibly empty subset $I = \{i_1 < i_2 < \dots < i_q\} \subseteq \start{n} = \{1, \dots, n\}$.
Furthermore, we have $X_IX_J \coloneqq X_I \cdot X_J = X_{I \cup J}$.

\paragraph{Definition of $\set{X}$}
We set
\begin{equation*}
  \set{X} \coloneqq \set{X}(n) \coloneqq 2^{[n]} \setminus \{\emptyset\}.
\end{equation*}
For $\emptyset \neq I \subseteq \start{n}$, we will write $p_I \coloneqq I$.
We can then also write
\begin{equation}\label{eq:Sigma_definition}
  \set{X} = \Big\lbrace \atom{I} \ \big| \   \emptyset \neq I \subseteq \start{n} \Big\rbrace.
\end{equation}

\begin{figure}
  \centering
  \includegraphics[width=\textwidth]{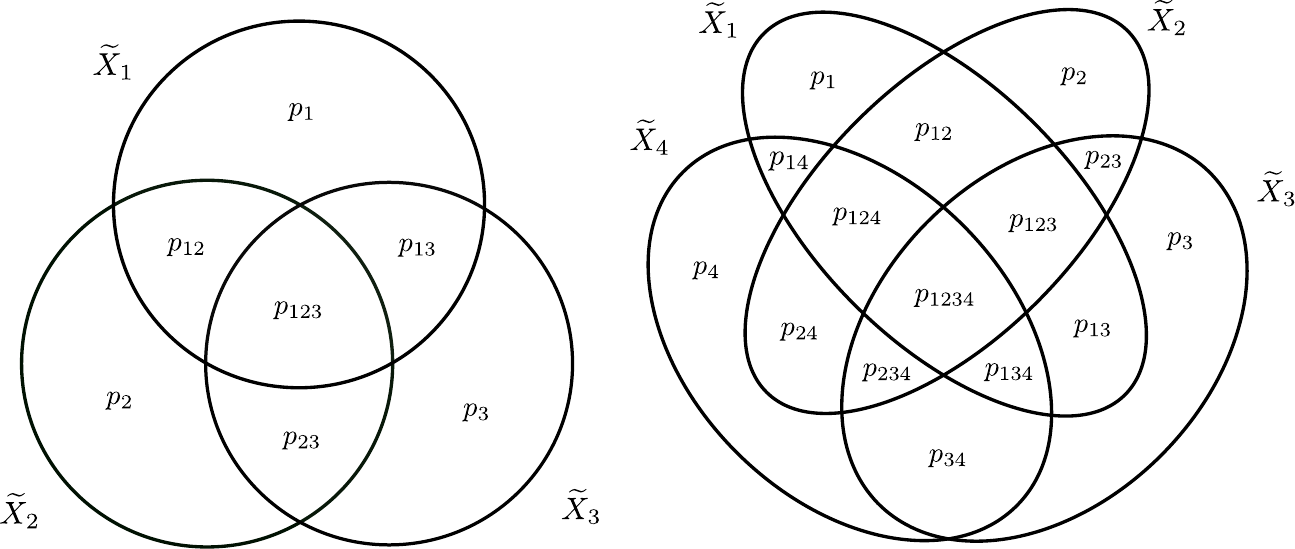}
  \caption{A depiction of $\set{X} = \set{X}(n)$ for $n = 3$ and $n = 4$, which will later be used to represent all the (higher-order) information functions.}
  \label{fig:reduce_circles}
\end{figure}

For $i \in \start{n}$, we denote by $\set{X}_i \coloneqq \big\lbrace \atom{I} \in \set{X} \ \ | \ \ i \in I \big\rbrace$  a set which we can imagine to be depicted by a disk corresponding to the element $X_i$.
We visualize this in Figure~\ref{fig:reduce_circles}.
This is actually the simplest construction that leads to the $\set{X}_i$ being in general position, meaning that for each choice of a nonempty set of disks indexed by $i \in I$, there is a single point inside all of them and not in any of the others:
\begin{equation}\label{eq:general_position}
  \bigcap_{i \in I} \set{X}_i \ \  \setminus \ \  \bigcup_{j \in \start{n} \setminus I}\set{X}_{j} = \{\atom{I}\}. 
\end{equation}
Consequently, we call $p_I$ also the \emph{atom} corresponding to $I$, as it is an atomic part of a diagram of intersecting disks.

With $\set{X}_I \coloneqq \bigcup_{i \in I} \set{X}_i$ we denote the union of the disks corresponding to the joint variable $X_I$.
Clearly, we have $\set{X} = \set{X}_{\start{n}}$.
In the following, we will also be flexible with our notation. 
For example, if we have a commutative, idempotent monoid $\mon{M}$ and fixed elements $X, Y, Z, W \in \mon{M}$, then we can also define 
\begin{equation*}
  \set{XYZW} = \set{X} \cup \set{Y} \cup \set{Z} \cup \set{W}.
\end{equation*}
The atom $\atom{XZ}$ would then be characterized by
\begin{equation*}
  \lbrace \atom{XZ} \rbrace = \Big(\set{X} \cap \set{Z}\Big) \setminus \Big(\set{Y} \cup \set{W}\Big).
\end{equation*}

The measure constructed in the proof of the following theorem is essentially constructed using a Möbius inversion formula:

\begin{theorem}[Generalized Hu Theorem,~\cite{Lang2022}]\label{thm:hu_kuo_ting_generalized}
  Let $\mon{M}$ be a commutative, idempotent monoid, $\gro{G}$ an abelian group, and $. : \mon{M} \times \gro{G} \to \gro{G}$ an additive monoid action.
 
  Assume $\CR_1: \mon{M} \to \gro{G}$ is a function that satisfies the following chain rule: for all $X, Y \in \mon{M}$, one has
    \begin{equation}\label{eq:cocycle_equationn}
      \CR_1(XY) =  \CR_1(X) +  X.\CR_1(Y)  .
   \end{equation}
Construct $\CR_q: \mon{M}^q \to \gro{G}$ for $q \geq 2$ inductively by
\begin{equation}\label{eq:inductive_definition}
  \CR_q(Y_1; \dots ; Y_q) \coloneqq \CR_{q-1}(Y_1; \dots ; Y_{q-1}) - Y_q.\CR_{q-1}(Y_1; \dots ; Y_{q-1})
\end{equation}
for all $Y_1, \dots, Y_q \in \mon{M}$.

Fix elements $X_1, \dots, X_n \in \mon{M}$, $n \geq 0$.
Set $\set{X} = \set{X}(n)$ as in Equation~\eqref{eq:Sigma_definition}.
Then there exists a $\gro{G}$-valued measure $\set{\CR}: 2^{\set{X}} \to \gro{G}$ such that for all $q \geq 1$ and $J, L_1, \dots, L_q \subseteq \start{n}$, the following identity holds:
\begin{equation}\label{eq:hu_kuo_ting_equation}
  X_J.\CR_q\big(X_{L_1}; \dots ; X_{L_q}\big) = \set{\CR}\Bigg(\bigcap_{k = 1}^q \set{X}_{L_k} \setminus \set{X}_J\Bigg).
\end{equation}
\end{theorem}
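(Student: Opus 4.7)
The plan is to define the measure $\set{\CR}$ by prescribing its value on each atom of $\set{X}$ and extending additively, then verify Equation~\eqref{eq:hu_kuo_ting_equation} by induction. Every subset of $\set{X}$ is a unique disjoint union of singleton atoms $\{\atom{I}\}$, so specifying $\set{\CR}(\{\atom{I}\})$ for each nonempty $I \subseteq \start{n}$ suffices. Specializing Equation~\eqref{eq:hu_kuo_ting_equation} to the singleton atom identified in Equation~\eqref{eq:general_position}, namely $J = \start{n}\setminus I$, $q = |I|$, and $L_k = \{i_k\}$ with $I = \{i_1, \dots, i_{|I|}\}$, forces the definition
\begin{equation*}
  \set{\CR}(\{\atom{I}\}) \;\coloneqq\; X_{\start{n} \setminus I}.\CR_{|I|}\bigl(X_{i_1}; \dots; X_{i_{|I|}}\bigr).
\end{equation*}

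For this to be independent of the enumeration of $I$, I would first prove that $\CR_q$ is symmetric in its arguments via the closed form
\begin{equation*}
  \CR_q(Y_1; \dots; Y_q) \;=\; -\sum_{\emptyset \neq S \subseteq \start{q}} (-1)^{|S|}\, \CR_1(Y_S),
\end{equation*}
derived by induction on $q$ from the recursion Equation~\eqref{eq:inductive_definition}, collapsing each occurrence of $Y_j.\CR_1(Y_S)$ to $\CR_1(Y_j Y_S) - \CR_1(Y_S)$ via the chain rule Equation~\eqref{eq:cocycle_equationn}. Symmetry of $\CR_q$ in $(Y_1, \dots, Y_q)$ is then manifest.

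I would prove Equation~\eqref{eq:hu_kuo_ting_equation} by induction on $q$. For the inductive step, Equation~\eqref{eq:inductive_definition} together with the action axioms gives
\begin{equation*}
  X_J.\CR_q(X_{L_1}; \dots; X_{L_q}) \;=\; X_J.\CR_{q-1}(X_{L_1}; \dots; X_{L_{q-1}}) \;-\; X_{J \cup L_q}.\CR_{q-1}(X_{L_1}; \dots; X_{L_{q-1}}),
\end{equation*}
while on the set-theoretic side,
\begin{equation*}
  \bigcap_{k=1}^{q-1}\set{X}_{L_k} \setminus \set{X}_J \;=\; \Bigl(\bigcap_{k=1}^{q}\set{X}_{L_k} \setminus \set{X}_J\Bigr) \sqcup \Bigl(\bigcap_{k=1}^{q-1}\set{X}_{L_k} \setminus \set{X}_{J \cup L_q}\Bigr),
\end{equation*}
so additivity of $\set{\CR}$ translates the recursion into exactly the desired identity via the induction hypothesis. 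The base case $q = 1$ reduces to the claim $X_J.\CR_1(X_L) = \set{\CR}(\set{X}_L \setminus \set{X}_J)$, which I would handle by a nested induction on $|L|$: for $|L| = 1$, the identity reduces to a direct computation using the atom formula and the consequence $X_i.\CR_1(X_i) = 0$ of idempotency $X_i \cdot X_i = X_i$ applied to the chain rule; the step follows by peeling one element off $L$ via the chain rule together with an analogous set-theoretic partition argument.

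The main obstacle is the base case $q = 1$. Expanding $\set{\CR}(\set{X}_L \setminus \set{X}_J)$ via the atom definition produces a combinatorial double sum over $I \subseteq \start{n}\setminus J$ with $I \cap L \neq \emptyset$ and over nonempty $S \subseteq I$ (using the inclusion-exclusion form of $\CR_{|I|}$), which must telescope through the chain rule back to the single expression $X_J.\CR_1(X_L)$. Verifying this telescoping with the correct signs is the delicate point; once established, both outer inductions proceed by routine bookkeeping.
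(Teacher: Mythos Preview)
The paper does not actually prove this theorem: it is quoted from \cite{Lang2022} without proof (see the sentence ``We now work towards a presentation of the generalized Hu theorem from~\cite{Lang2022}'' preceding the statement, and the absence of any proof environment afterward). So there is no in-paper argument to compare against.

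Your outline is the standard one and is essentially correct. Defining $\set{\CR}$ on atoms by the forced formula, establishing symmetry of $\CR_q$ via the inclusion--exclusion identity (which the paper itself cites from \cite{Lang2022}, Corollary~3.6, as a \emph{consequence} of Hu's theorem, but which, as you note, can be derived directly from the recursion and the chain rule), and then inducting on $q$ with the displayed disjoint-union identity is exactly how such results are proved. One caution: your description of the base case $q=1$, $|L|=1$ as ``a direct computation'' understates the work. Even for $L=\{i\}$ and general $J$, the right-hand side $\set{\CR}(\set{X}_i\setminus\set{X}_J)$ is a sum over \emph{all} atoms $\atom{I}$ with $i\in I\subseteq\start{n}\setminus J$, each contributing $X_{\start{n}\setminus I}.\CR_{|I|}(\ldots)$, and collapsing this to $X_J.\CR_1(X_i)$ already requires the full inclusion--exclusion machinery you set up. A cleaner route is to first prove the unconditioned case $J=\emptyset$ for all $L$ (where the telescoping is most transparent), and then obtain general $J$ by observing that $X_J.\CR_1(X_L)=\CR_1(X_{J\cup L})-\CR_1(X_J)$ corresponds on the measure side to $\set{\CR}(\set{X}_{J\cup L})-\set{\CR}(\set{X}_J)=\set{\CR}(\set{X}_L\setminus\set{X}_J)$. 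This sidesteps the delicate double sum you flag as the main obstacle.
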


\begin{figure}
  \centering
  \includegraphics[width=\textwidth]{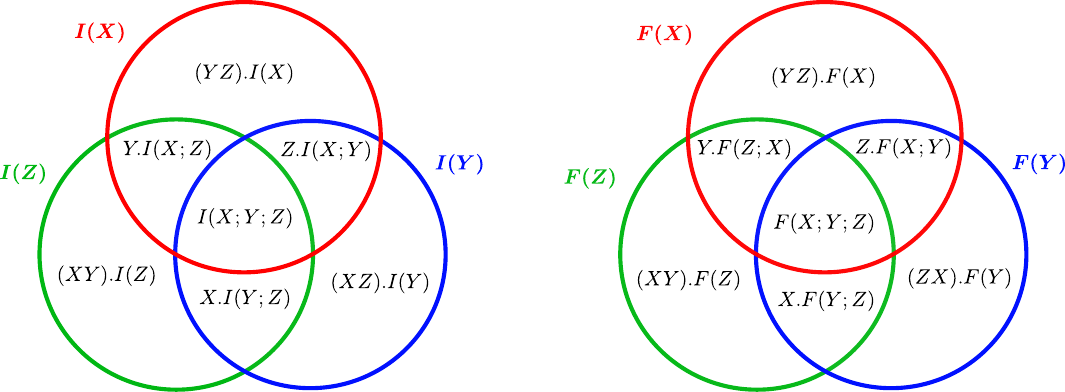}
  \caption{A depiction of the $\Ent$-diagram and the $\CR$-diagram from the (generalized) Hu theorem.
    On the left, it shows the interplay of (conditional) Shannon entropy, mutual information, and interaction information for three random variables $X, Y, Z$.
  On the right, $X, Y$ and $Z$ are elements of a commutative, idempotent monoid, generalizing the collection of equivalence classes of random variables together with their joint operation, and the higher-order terms are all derived from a function $\CR$ satisfying the chain rule $\CR(XY) = \CR(X) + X.\CR(Y)$.}
  \label{fig:venn_diagram_comparison}
\end{figure}

Note that in the preceding theorem, the $\gro{G}$-valued measure $\set{\CR}$ depends on the fixed elements $X_1, \dots, X_n$. 
For ease of notation, we write every $\CR_q$ simply as $\CR$.
Thus, in an expression of the form $\CR(X; Y; Z)$,
$\CR$ is necessarily $\CR_3$.
Since Shannon entropy $I$ and its higher order generalizations satisfy all the assumptions from Theorem~\ref{thm:hu_kuo_ting_generalized}, as explained in Section~\ref{sec:ent_mi_ii}, we recover the $I$-diagrams from~\citet{Yeung1991} as a special case.
We refer back to Figure~\ref{fig:venn_diagram_comparison} for a depiction of the $I$-diagram or $\CR$-diagram that results from the (generalized) Hu theorem. 
For figures that explain how to use Hu's theorem to visualize additive identities involving $\CR$, we refer to~\cite{Lang2022}.
As an example, we note that Figure~\ref{fig:venn_diagram_comparison} shows that
\begin{equation*}
  I(X) = (YZ).I(X) + Y.I(X;Z) + Z.I(X;Y) + I(X; Y; Z),
\end{equation*}
as we can see by observing how the disk corresponding to $I(X)$ decomposes. 
Formally, such identities follow from the fact that $\set{F}$ (or $\set{I}$ in the special case of Shannon's information functions) is a \emph{measure}, which means it is additive over disjoint unions in $\set{X}$.

\subsection{Graph Terminology}\label{sec:graph_terminology}

Before introducing Markov random fields, we need to discuss some necessary graph terminology.

\begin{definition}
  [Graph]
  \label{def:graph}
  A \emph{graph} is a tuple $\Gr{G} = (\Ver{V}, \Ed{E})$, where
  \begin{itemize}
    \item $\Ver{V}$ is a finite set, called the set of \emph{vertices};
    \item $\Ed{E} \subseteq \big\lbrace \{i, j\} \subseteq \Ver{V} \mid i \neq j \big\rbrace$, which we call the set of \emph{edges}.
  \end{itemize}
\end{definition}

Let $\Gr{G} = (\Ver{V}, \Ed{E})$ be a graph.
Instead of $\{i, j\} \in \Ed{E}$, we also write $i \edge j$, or $i \edge j \in \Ed{E}$.
Note that edges are non-oriented, as indicated by them being sets instead of tuples.
Furthermore, we have $i \neq j$ for every edge $i \edge j$, which means that there are no \emph{loops} (i.e., edges from a vertex to itself) in the graph.
However, there can be \emph{cycles}, i.e., walks from a vertex to itself passing through other vertices.

For a subset $\Ver{U} \subseteq \Ver{V}$, we can define $\Grdif{\Gr{G}}{\Ver{U}}$ as the graph obtained from $\Gr{G}$ by removing all the vertices in $\Ver{U}$ and all the edges that have at least one endpoint in $\Ver{U}$.
Formally, one can write
\begin{equation*}
  \Grdif{\Gr{G}}{\Ver{U}} \coloneqq (\Ver{V} \setminus \Ver{U}, \Grdif{\Ed{E}}{\Ver{U}}), \quad \text{where} \quad 
  \Grdif{\Ed{E}}{\Ver{U}} \coloneqq \big\lbrace \{i, j\} \in \Ed{E} \mid i, j \in \Ver{V} \setminus \Ver{U}\big\rbrace.
\end{equation*}

A \emph{walk} in $\Gr{G}$ from $i$ to $j$ is a sequence of edges $i \edge i_1, i_1 \edge i_2, \dots, i_n \edge j$ in $\Ed{E}$.
We write such a walk as $i \edge i_1 \edge \dots \edge i_n \edge j$.
The empty sequence is considered to be a walk from a vertex $i$ to itself, and walks are allowed to have the same vertices twice.
A vertex set $\Ver{U} \subseteq \Ver{V}$ is called \emph{connected in $\Gr{G}$} if for all $i \neq j \in \Ver{U}$ there is a walk in $\Gr{G}$ from $i$ to $j$.\footnote{Note that in Figure~\ref{fig:three_variables_mrfs}, we will also have a notion of connectedness. That notion, however, is about connectedness in $\Grdif{\Gr{G}}{\big( \Ver{V} \setminus \Ver{U}\big)}$, see Definition~\ref{def:type_II_atoms}.}
A set $\Ver{U} \subseteq \Ver{V}$ is called a \emph{component} if it is a maximal connected set, that is:
\begin{itemize}
  \item $\Ver{U}$ is connected in $\Gr{G}$;
  \item if $\Ver{U} \subseteq \Ver{W}$ and $\Ver{W}$ is connected in $\Gr{G}$, then $\Ver{U} = \Ver{W}$.
\end{itemize}
It is easy to see that the set of components of $\Gr{G}$ forms a partition of $\Ver{V}$:
formally, the components are the equivalence classes under the equivalence relation $\sim$ on $\Ver{V}$ that is defined by $i \sim j$ if there is a walk from $i$ to $j$ in $\Gr{G}$.
For a vertex set $\Ver{U} \subseteq \Ver{V}$, we define $\ncomp{\Ver{U}}$ as the number of components in the graph $\Grdif{\Gr{G}}{\Ver{U}}$.
In general, we denote by $\Ver{V}_1(\Ver{U}), \dots, \Ver{V}_{\ncomp{U}}(\Ver{U})$ these components.
We call $\Ver{U}$ a \emph{cutset} if $\ncomp{\Ver{U}} > 1$.
Note that cutsets do not necessarily need to ``cut'' anything: 
if $\Gr{G}$ already contains several components, then $\Ver{U} = \emptyset$ is a cutset.
Finally, for disjoint subsets $\Ver{A}, \Ver{B}, \Ver{C} \subseteq \Ver{V}$, we say that $\Ver{C}$ \emph{separates} $\Ver{A}$ from $\Ver{B}$ if \emph{every} walk from \emph{any} vertex in $\Ver{A}$ to \emph{any} vertex in $\Ver{B}$ passes through \emph{some} vertex in $\Ver{C}$.
Note: if $\Ver{A} \neq \emptyset \neq \Ver{B}$ in this definition, then $\Ver{C}$ is automatically a cutset.

We now define the notions of connected and disconnected atoms with respect to a graph $\Gr{G}$,
which builds a bridge between the graph and information diagrams.
They are identical to the notions of type I and type II atoms defined in~\cite{Yeung2002a}.

\begin{definition}
  [Connected and Disconnected Atoms]
  \label{def:type_II_atoms}
  Let $n \in \N$ and $\Gr{G}$ a graph with vertex set $\mathcal{V} = [n]$, and let $\set{X} = \set{X}(n)$ be defined as in the previous subsection.
  For $\Ver{W} \subseteq \start{n}$, the corresponding atom $\atom{\Ver{W}} \in \set{X}$ is called
  \emph{disconnected} with respect to $\Gr{G}$ if $\Ver{V} \setminus \Ver{W} = \start{n} \setminus \Ver{W}$ is a cutset, i.e., if $\Ver{W}$ is disconnected as a vertex set in $\Grdif{\Gr{G}}{\big(\start{n} \setminus \Ver{W}\big)}$.
  Otherwise, $\atom{\Ver{W}}$ is called \emph{connected} (with respect to $\Gr{G}$).
\end{definition}

\subsection{Yeung's Characterization of Markov Random Fields via \texorpdfstring{$I$}{I}-diagrams}\label{sec:yeungs_characterization}

Markov random fields encode independence relations of random variables that correspond to separations in a corresponding graph.
As such, we first need to define:

\begin{definition}[Probabilistic Conditional Independence]
   \label{def:probabilistic_independence}
 Let $X, Y, Z$ be random variables on $\samp$ and $P \in \Delta(\samp)$ a probability mass function.
   Then $X$ and $Y$ are said to be \emph{$P$-independent} given $Z$, written
   \begin{equation*}
     \IndepF{P}{X}{Y}{Z},
   \end{equation*}
   if for all $(x, y, z) \in \vs{X} \times \vs{Y} \times \vs{Z}$, the joint distribution decomposes as follows:\footnote{If $P(z) = 0$, then $P(x \mid z)$ is not defined. But then $P(x, y, z) = 0$ and $P(y, z) = 0$, so we can simply define the right-hand-side of the equation as zero.}
   \begin{equation}\label{eq:independence_relation}
      P(x, y, z) = P(x \mid z) \cdot P(y, z).
   \end{equation}
   Equivalently, for all $z \in \vs{Z}$ with $P(z) \neq 0$ one has
   \begin{equation*}
     P(x, y \mid z) = P(x \mid z) \cdot P(y \mid z).
   \end{equation*}
   Equivalently, for all $y, z \in \vs{Y} \times \vs{Z}$ with $P(y, z) \neq 0$ one has
   \begin{equation*}
      P(x \mid y, z) = P(x \mid z).
   \end{equation*}
\end{definition}

If $X' \sim X$, $Y' \sim Y$, $Z' \sim Z$, and $P \in \Sim{\samp}$, then
\begin{equation*}
  \IndepF{P}{X}{Y}{Z} \quad \Longleftrightarrow \quad \IndepF{P}{X'}{Y'}{Z'},
\end{equation*}
see, for example,~\cite{Dawid2001}, Section 6.2.
Thus, also in this context, it is suitable to identify a random variable with its equivalence class. 
This definition gives rise to a separoid, which is a useful abstract structure that applies to all settings that we study in this paper. 
We use the slightly adapted but equivalent version of the separoid axioms from~\cite{Forre2021a}, Appendix A.4:

\begin{definition}
  [Separoid, Separoid Axioms]
  \label{def:separoid}
  Let $\mon{M}$ be a commutative, idempotent monoid. 
  Let $\indep$ be a relation on $\mon{M}^2 \times \mon{M}$, written for $X, Y, Z \in \mon{M}$ by
  \begin{equation*}
    \Indep{X}{Y}{Z}.
  \end{equation*}
  The tuple $(\mon{M}, \indep)$ is called a \emph{separoid} if $\indep$ satisfies the following \emph{separoid axioms} for all $X, Y, Z, W \in \mon{M}$:
  \begin{align*}
    & \mathrm{(S1) \ symmetry}: \Indep{X}{Y}{Z} \ \ \Longrightarrow \ \ \Indep{Y}{X}{Z}; \\
    & \mathrm{(S2) \ redundancy}: W \precsim Z \ \ \Longrightarrow \ \ \Indep{W}{Y}{Z};  \\
    & \mathrm{(S3) \ decomposition}: \Indep{WX}{Y}{Z} \ \ \Longrightarrow \ \ \Indep{X}{Y}{Z}; \\
    & \mathrm{(S4) \ weak \ union}: \Indep{WX}{Y}{Z} \ \ \Longrightarrow \ \ \Indep{W}{Y}{XZ}; \\
    & \mathrm{(S5) \ contraction}: \big(\Indep{W}{Y}{XZ} \mathrm{\ and\ } \Indep{X}{Y}{Z}\big) \ \ \Longrightarrow \ \ \Indep{WX}{Y}{Z}.
  \end{align*}
\end{definition}

For a relationship to the more familiar notion of a (semi-)graphoid, see~\cite[Section 3.2]{Dawid2001}.
In this paper, if a relation $\Indep{X}{Y}{Z}$ holds, then we also say ``$X$ and $Y$ are independent given $Z$'', or ``$X$ and $Y$ are independent conditioned on $Z$'', and often append ``with respect to $\indep$'' to clarify the independence relation.

Note that properties (S2)--(S5) have obvious ``right-handed versions'' as well due to symmetry (S1).
When we refer to one of the properties (S2)--(S5), then we mean one of the two versions depending on the context.
We obtain:

\begin{proposition}
  \label{pro:forms_a_separoid}
  Let $\mon{M}$ be a monoid of random variables as in Proposition~\ref{pro:monoid_of_rvs} and $P \in \Sim{\samp}$ a probability mass function.
  Then $(\mon{M}, \indep_P)$ is a separoid.
\end{proposition}

\begin{proof}
  This is well-known and appeared originally in this formulation in~\cite{Dawid2001}.
  It also follows from the generalization given in~\cite{Forre2021a}, Theorem 3.11.
\end{proof}

We are now ready to define Markov random fields via the global Markov property, in general separoids and also in the specific probabilistic case:

\begin{definition}
  [Global Markov Property, Markov Random Field]
  \label{def:global_markov_property}
  Let $\Gr{G} = (\Ver{V}, \Ed{E})$ be a graph with $\Ver{V} = \start{n}$ and $(M, \indep)$ a separoid.
  A sequence of elements $X_1, \dots, X_n \in \mon{M}$ is called a \emph{Markov random field} with respect to $\indep$ and $\Gr{G}$ if it satisfies the \emph{global Markov property}, that is:
  for all disjoint $\Ver{A}, \Ver{B}, \Ver{C} \subseteq \Ver{V}$ such that $\Ver{C}$ separates $\Ver{A}$ from $\Ver{B}$, we have
  \begin{equation*}
    \Indep{X_{\Ver{A}}}{X_{\Ver{B}}}{X_{\Ver{C}}}.\footnotemark
  \end{equation*}
  \footnotetext{We could have also defined the separation relation $A \perp B \ | \ C$ on $\Gr{G}$, defined for all vertex sets $A, B, C$ in $\Gr{G}$, and not only those that are disjoint as in our terminology. This relation would satisfy the separoid axioms. The global Markov property is then equivalent to the statement that $A \perp B \ | \ C$ implies $\Indep{X_{\Ver{A}}}{X_{\Ver{B}}}{X_{\Ver{C}}}$. This, in turn, can be phrased as the statement that $A \mapsto X_A$ constitutes a \emph{separoid homomorphism} as defined in~\citet[Definition 1.3]{Dawid2001}.}
  
  Finally, in the probabilistic context, if random variables $X_1, \dots, X_n$ on $\samp$ form a Markov random field with respect to $\Gr{G}$ and $\indep_P$, then we say that $X_1, \dots, X_n$ form a $P$-Markov random field with respect to $\Gr{G}$.
\end{definition}

\begin{figure}
  \centering
  \includegraphics[width=\textwidth]{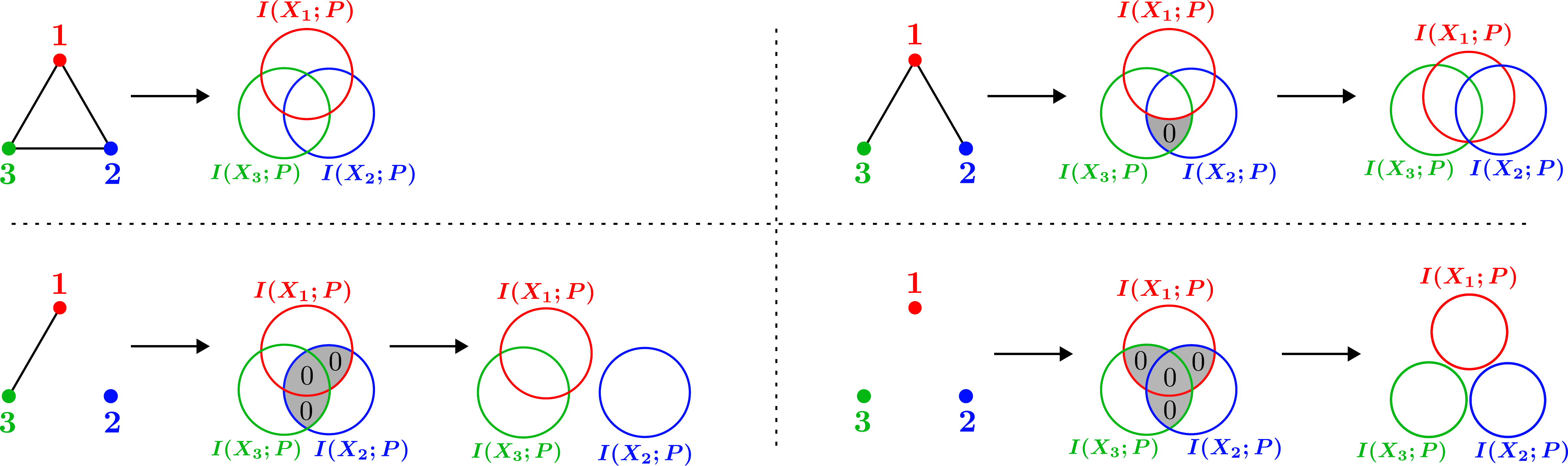}
  \caption{We show the effect of the graph structure of simple Markov random fields on the corresponding $I$-diagrams for a fixed probability mass function $P$.
    The $\Ent$-diagram visualizes the relationships of the entropy, mutual information, and interaction information of the variables.
    The figure is based on Yeung's result Theorem~\ref{thm:charac_proba_mrfs}, which shows that a set of random variables forms a Markov random field corresponding to a graph if and only if all atoms in the $\Ent$-diagram corresponding to disconnected sets of vertices in the graph disappear.
    \newline  Concretely, to a set of vertices $J$, the corresponding atom in the $\Ent$-diagram is the intersection of all disks with indices $j \in J$, without any element in the union of all the other disks.
  For the lower left panel, the three sets of vertices $\{1, 2\}$, $\{2, 3\}$, and $\{1, 2, 3\}$ are disconnected, giving rise to three disappearing atoms in the $\Ent$-diagram. 
    Consequently, $\Ent(X_2)$ can be drawn to not intersect with the other disks.
  However, the other four sets of vertices $\{1\}, \{2\}, \{3\}, \{1, 3\}$ are clearly connected, which means that we cannot infer their corresponding atoms to vanish.
  Similar reasoning applies to the other three panels. 
}
  \label{fig:three_variables_mrfs}
\end{figure}

The definition of $P$-Markov random fields is stated in terms of conditional $P$-independences.
Now, one crucial observation is the well-known fact that such independences can be characterized using conditional mutual information:
\begin{equation}\label{eq:equivalence}
  \IndepF{P}{X}{Y}{Z} \quad \Longleftrightarrow \quad Z.I(X; Y; P) = 0.
\end{equation}
Yeung's insight was that this should make it possible to characterize $P$-Markov random fields by properties of their corresponding $I$-diagrams. 
Let $\mon{M}$ be the monoid of (equivalence classes of) random variables on $\samp$, and $\Ent: \mon{M} \to \Meas\big( \Sim{\samp}, \R \big)$ the Shannon entropy function, as defined in Definition~\ref{def:shannon_entropy}.
Let $X_1, \dots, X_n$ be fixed random variables on the sample space $\samp$.
Let $\set{\Ent}: 2^{\set{X}} \to \Meas\big( \Sim{\samp}, \R \big)$ be the measure resulting from Hu's Theorem~\ref{thm:hu_kuo_ting_generalized}.
For any probability mass function $P \in \Sim{\samp}$, we then obtain the ``slice''
\begin{equation}\label{eq:slice_function}
  \set{\Ent}^P: 2^{\set{X}} \to \R, \quad A \mapsto \big[\set{\Ent}(A)\big](P).
\end{equation}
This is a signed measure with values in the real numbers instead of functions, and it visualizes the interplay of the (higher-order) Shannon information functions \emph{for the specific probability mass function $P$}.

This can be imagined as a ``slice'' of the original $\Ent$-diagram: 
e.g., in the special case that $\samp = \{0, 1\}$, the mass functions $P \in \Sim{\samp}$ are essentially numbers in $[0, 1]$.
For each $P \in [0, 1]$, we then get an information diagram corresponding to $\set{I}^P$.
If we stack these ``on top of each other'', the full stack is the $\Ent$-diagram of $\set{I}$, with the ``slices'' given by the diagrams for $\set{I}^P$.
\citet{Yeung2002a} then showed:

\begin{theorem}
  [Characterization of $P$-Markov Random Fields]
  \label{thm:charac_proba_mrfs}
  Let $\Ent: \mon{M} \to \Meas\big( \Sim{\samp}, \R \big)$ be the Shannon entropy function.
  Let $X_1, \dots, X_n$ be random variables on $\samp$ and $P \in \Sim{\samp}$ a probability mass function, giving rise to $\set{\Ent}^P: 2^{\set{X}} \to \R$ by Equation~\eqref{eq:slice_function}.
  Let $\Gr{G}$ be a graph with vertex set $\start{n}$.
  Then the following two statements are equivalent:
  \begin{itemize}
    \item $X_1, \dots, X_n$ form a $P$-Markov random field with respect to $\Gr{G}$; 
    \item $\set{\Ent}^{P}(\atom{\Ver{W}}) = 0$ for all atoms $\atom{\Ver{W}}$ that are disconnected with respect to $\Gr{G}$.
  \end{itemize}
\end{theorem}

We visualize this result in Figure~\ref{fig:three_variables_mrfs}, which shows the intuitive relation between the graph of a Markov random field and vanishing regions of atoms that are ``disconnected'' according to the graph.
One can then specialize the theorem to Markov chains:

\begin{definition}
  [Markov Chain]
  \label{def:markov_chain}
  Let $n \geq 0$ and $(M, \indep)$ a separoid.
  A sequence of elements $X_1, \dots, X_n$ in $\mon{M}$ is called a \emph{Markov chain} (with respect to $\indep$)
  if for all $2 \leq i \leq n$, the following independence holds:
  \begin{equation*}
    \Indep{X_i}{X_{\start{i-2}}}{X_{i-1}}.
  \end{equation*}
  We also write $X_1 \to X_2 \to \dots \to X_n$ to indicate that the sequence $X_1, \dots, X_n$ forms a Markov chain.

  Finally, in the probabilistic context, a collection of random variables $X_1, \dots, X_n$ on $\samp$ are said to form a $P$-Markov chain if they form a Markov chain with respect to $\indep_P$.
\end{definition}

For $n = 0, 1$ or $2$, all sequences form a Markov chain.
The first interesting case is $n = 3$.
The only non-vacuous condition for a Markov chain is then $\Indep{X_3}{X_1}{X_2}$.
The following corollary of Theorem~\ref{thm:charac_proba_mrfs} was already shown in~\citet{Kawabata1992}:

\begin{corollary}
  \label{cor:Markov_Chain_Charac_proba}
  With all notation as above, the following two statements are equivalent:
  \begin{itemize}
    \item $X_1, \dots, X_n$ form a $P$-Markov chain.
    \item $\set{I}^P(\atom{\Ver{W}}) = 0$ for all $\Ver{W} \subseteq \start{n}$ that do \emph{not} only contain consecutive numbers.
  \end{itemize}
\end{corollary}

\begin{figure}
  \centering
  \begin{subfigure}[b]{0.43\textwidth}
    \centering
    \includegraphics[width=\textwidth]{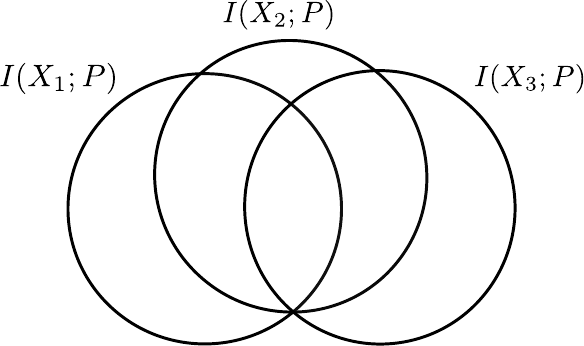}
  \end{subfigure}
  \hfill
  \begin{subfigure}[b]{0.47\textwidth}
    \centering
    \includegraphics[width=\textwidth]{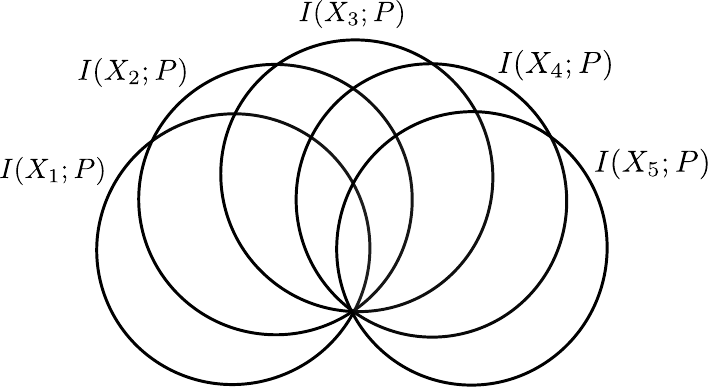}
  \end{subfigure}
  \caption{If random variables $X_1, \dots, X_n$ form a $P$-Markov chain, then many atoms in the $I$-diagram with respect to $P$ disappear by Corollary~\ref{cor:Markov_Chain_Charac_proba}. 
    The only atoms that remain are those corresponding to ``intervals'' in $\start{n}$. 
    This leads to a fan-like structure of the $I$-diagram with respect to $P$, as visualized here for $n = 3$ and $n = 5$.}
  \label{fig:four_and_five_circles_P}
\end{figure}

We visualize this result in Figure~\ref{fig:four_and_five_circles_P}.
As an example, this structure of the $I$-diagram can be used to prove the well-known data processing inequality:

\begin{corollary}[Data Processing Inequality]
  \label{cor:data_processing_in}
  Let $X_1 \to X_2 \to X_3$ be a $P$-Markov chain of random variables on $\samp$.
  Then $I(X_1; X_3; P) \leq I(X_1;X_2;P)$.
\end{corollary}

\begin{proof}
  Using Hu's theorem, Theorem~\ref{thm:hu_kuo_ting_generalized}, Equation~\eqref{eq:slice_function}, and the Venn diagram of the Markov chain from Figure~\ref{fig:four_and_five_circles_P}, we obtain:
  \begin{align*}
    I(X_1; X_2; P) &= \set{I}^P(\set{X}_1 \cap \set{X}_2)  \\
    &=  \set{I}^P(\set{X}_1 \cap \set{X}_3) + \set{I}^P(\set{X}_1 \cap \set{X}_2 \setminus \set{X}_3) \\
    &=  I(X_1; X_3; P) + X_3.I(X_1; X_2; P) \\
    & \geq I(X_1; X_3; P).
  \end{align*}
  In the last step, we used the well-known property of conditional mutual information to be non-negative.
\end{proof}

\subsection{An Outline of our Generalizations of Yeung's Results}\label{sec:outline}

We now explain how we aim to generalize Yeung's results on the $I$-diagram characterization of Markov random fields.
The motivation is that we would like to study implications of Markov random field structures on more general information diagrams.
For example, assume that $X_1, \dots, X_n$ are random variables on $\samp$ that form a Markov random field with respect to a graph $\Gr{G}$ and two different probability mass functions $P$ and $Q$.
Can we then say anything about how the Kullback-Leibler divergence $D(X_1 \cdots X_n; P \| Q)$ of $P$ and $Q$ over the whole joint variable $X_1 \cdots X_n$ distributes into components of different variables in this structure?
Such questions are commonplace for example in machine learning~\citep{Bishop2007}, where it is sometimes assumed that both the model- and data distribution follow a specific graphical form, and the loss function takes the form of a Kullback-Leibler divergence (or, equivalently up to a constant, cross-entropy) between the two.
We will demonstrate this type of application by showing how our theory allows to find a conceptually simple derivation of a decomposition of the loss function of diffusion models in Section~\ref{sec:diffusion_models}.

In this whole subsection, let $M$ be a commutative, idempotent monoid acting on an abelian group $G$, and $F: M \to G$ a function satisfying the chain rule, Equation~\eqref{eq:cocycle_equationn}.
To generalize Yeung's characterization, we use the setting of $F$-diagrams as in Hu's theorem, Theorem~\ref{thm:hu_kuo_ting_generalized}.
Motivated by Equation~\eqref{eq:equivalence}, which relates $P$-independence to mutual information, we generalize $P$-independence using $F$ itself.

\begin{definition}
  [$\CR$-independence]
  \label{def:F_independence}
  We define the relation $\indep_\CR$ on $\mon{M}^2 \times \mon{M}$ by
  \begin{equation*}
    \IndepF{\CR}{X}{Y}{Z} \ \ :\Longleftrightarrow \ \ Z.\CR(X; Y) = 0.
  \end{equation*}
  If $\IndepF{\CR}{X}{Y}{Z}$, then $X$ is called $\CR$-independent from $Y$ given $Z$.
\end{definition}

The definition is analogous to the conditional independence relation defined for general submodular information functions in~\cite{Steudel2010}.
In Proposition~\ref{pro:F-separoid} we will show that $(M, \indep_F)$ forms a separoid.
We can then specialize the notion of a Markov random field and Markov chains from general separoids to the separoid $(M, \indep_F)$:

\begin{terminology}
  [$\CR$-Markov Random Field, $\CR$-Markov Chain]
  \label{ter:mrf_mc_f}
  Elements $X_1, \dots, X_n \in M$ are said to form an $\CR$-Markov random field with respect to $\Gr{G}$ if they form a Markov random field with respect to $\indep_{\CR}$ and $\Gr{G}$, see Definition~\ref{def:global_markov_property}.

  Similarly, $X_1, \dots, X_n$ are said to form an $\CR$-Markov chain if they form a Markov chain with respect to $\indep_{\CR}$, see Definition~\ref{def:markov_chain}.
\end{terminology}

We will then prove the following theorem in Section~\ref{sec:markov_random_fields}:

\begin{theorem}
  [$\CR$-Markov Random Field Characterization]
  \label{thm:mrf_characterization}
  Let $\mon{M}$ be a commutative, idempotent monoid acting additively on an abelian group $\gro{G}$, and $\CR: \mon{M} \to \gro{G}$ a function satisfying the chain rule Equation~\eqref{eq:cocycle_equationn}.
  Additionally, fix elements $X_1, \dots, X_n$ giving rise to $\set{F}: 2^{\set{X}} \to \gro{G}$, and a graph $\Gr{G}$ with vertex set $\start{n}$.
  Then the following statements are equivalent:
  \begin{itemize}
    \item $X_1, \dots, X_n$ form an $\CR$-Markov random field with respect to $\Gr{G}$;
    \item $\set{\CR}(\atom{\Ver{W}}) = 0$ for all disconnected atoms $\atom{\Ver{W}} \in \set{X}$. 
  \end{itemize}
\end{theorem}

\subsection{An Outline of the Coming Sections}\label{sec:outline_coming}

In Section~\ref{sec:mrfs_in_separoids}, we take a step back from independences induced by a function $\CR$ by studying independences in separoids, Definition~\ref{def:separoid}, which generalizes both $P$-independence and $F$-independence.
We also define conditional mutual independences $\bigindep_{i = 1}^{n}{X_i} \ | \ Y$ and prove a simple characterization. 
We study Markov random fields in separoids and characterize them equivalently by the \emph{cutset property}; It states that if a vertex set $\Ver{U}$ is a ``cutset'' that, when removed, cuts the rest of the graph into components $\Ver{V}_1, \dots, \Ver{V}_q$, then this leads to a corresponding mutual independence $\bigindep_{i = 1}^{q} {X_{\Ver{V}_i}} \ | \ X_{\Ver{U}}$.
Finally, we characterize Markov chains in separoids.

In Section~\ref{sec:yeung2002}, we first study important consequences of Hu's theorem, Theorem~\ref{thm:hu_kuo_ting_generalized}, including the statement that $F$-independence gives rise to a separoid, Proposition~\ref{pro:F-separoid}.
One crucial property that is used in its proof is what we call \emph{subset determination}.
It states that when the value of a region in an $\CR$-diagram is known, this fully determines the value of all subregions.
In particular, if a region in an $\CR$-diagram vanishes, then all subsets vanish as well.
A precise formulation is given in Theorem~\ref{thm:subset_determination}.
Crucially, this replaces the repeated usage of inequalities in the proofs in~\citep{Yeung2002a}, which we could not make use of in our general context.
As an aside, in Appendix~\ref{sec:classification_CR}, we show that the basic idea of subset determination implies a characterization of all functions $\CR$ satisfying the chain rule, as long as $\mon{M}$ contains a ``top element'' $\top$ such that $X \cdot \top = \top$ for all $X \in \mon{M}$ --- which is the case for finitely generated $\mon{M}$.
The functions $\CR$ then correspond to elements in $\gro{G}$ that are \emph{annihilated} by $\top$, by the simple mapping $\CR \mapsto \CR(\top)$.
We also provide a cohomological interpretation of this result.

To avoid misunderstandings, we mention a subtlety with subset determination: it only applies to $\CR$-diagrams for functions $\CR: \mon{M} \to \gro{G}$ from a monoid to an abelian group \emph{on which the monoid additively acts}.
Formally, for information functions such as entropy or Kullback-Leibler divergence, it becomes invalid when \emph{fixing the underlying probability mass functions}.
For example, the joint entropy $I(XY; P)$ does not determine the mutual information $I(X;Y; P)$ even though the entire function $I(XY)$ determines the function $I(X;Y)$ via the monoid action.
Nevertheless, we will manage to apply our results to classical information functions in Section~\ref{sec:specializing_to_probabilistic} and can deduce Yeung's results for fixed probability mass functions, including Theorem~\ref{thm:charac_proba_mrfs}, as we demonstrate in Appendix~\ref{sec:slices}.

\begin{figure}
  \centering
  \includegraphics[width=\textwidth]{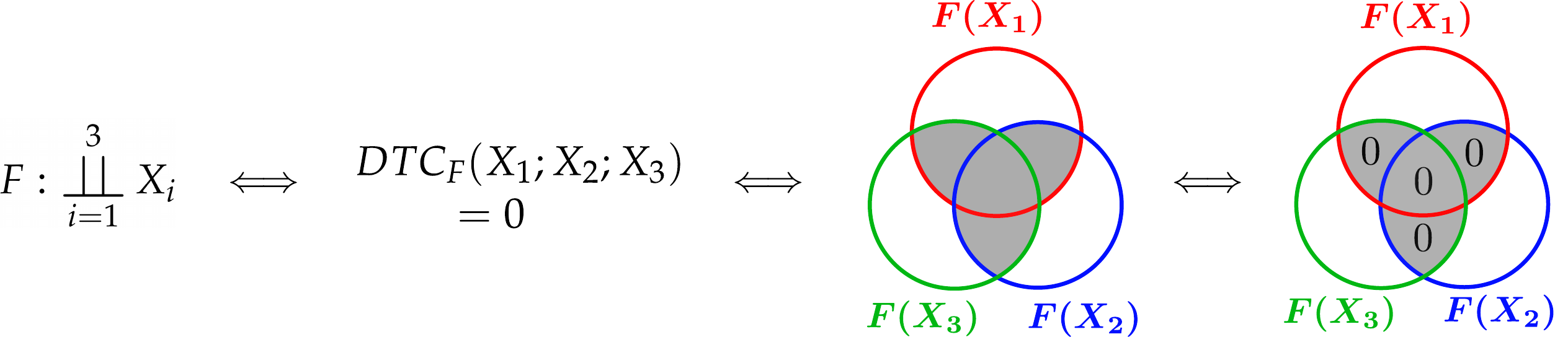}
  \caption{One key ingredient in the $\CR$-diagram characterization of $\CR$-Markov random fields is the characterization of $\CR$-mutual independences, Theorem~\ref{thm:charac_using_dual}, here visualized for three elements $X_1, X_2, X_3 \in \mon{M}$.
    The characterization shows that the mutual independence is equivalent to the vanishing of $\CR$-dual total correlation $\DTC{\CR}(X_1; X_2; X_3)$, which, by Hu's theorem, Theorem~\ref{thm:hu_kuo_ting_generalized}, corresponds to the vanishing of a region of four atoms in the $\CR$-diagram, visualized as gray.
    Subset determination, Theorem~\ref{thm:subset_determination}, then allows to conclude that every \emph{individual} atom in this region vanishes, as shown in the rightmost part of the figure.
  The implication from right to left again follows from Hu's theorem and the fact that $\CR$-diagrams visualize a \emph{measure}, meaning that larger regions are the sum of their atoms.}
  \label{fig:subset_determination_application}
\end{figure}

In the later parts of Section~\ref{sec:yeung2002}, we generalize the results from~\cite{Kawabata1992,Yeung2002a}.
We start by a simple characterization of pairwise $\CR$-independences via the $\CR$-diagram.
Motivated by this special case, and generalizing the classical case~\citep{Han1978}, we define $\CR$-dual total correlation by 
\begin{equation*}
  \DTC{\CR}(X_1; \dots; X_n) \coloneqq \CR(X_{[n]}) - \sum_{i = 1}^{n} X_{[n] \setminus i}.\CR(X_i),
\end{equation*}
where $X_{[n]} = X_1\cdots X_n$.
In Theorem~\ref{thm:charac_using_dual}, we then characterize conditional \emph{mutual} $\CR$-independences $\bigindepF{\CR}_{i = 1}^{n} X_i \ | \ Y$ by the vanishing of (conditional) $\CR$-dual total correlation $Y.\DTC{\CR}(X_1; \dots; X_n)$.
The theorem also uses Hu's theorem to characterize this by the vanishing of the corresponding atomic regions.
This crucially uses subset determination, Theorem~\ref{thm:subset_determination}, as visualized in Figure~\ref{fig:subset_determination_application}.
Notably, we also study $\CR$-total correlation $\TC{\CR}$, generalizing classical total correlation~\citep{Watanabe1960}, and find that it also provides a characterization of mutual $\CR$-independences, albeit only if the abelian group $\gro{G}$ is \emph{torsion-free}; The reason is that the total correlation double-counts atoms. 
This is studied in Appendix~\ref{sec:total_correlation_charac}.

We then prove Theorem~\ref{thm:mrf_characterization} in Section~\ref{sec:markov_random_fields}, which shows that $F$-Markov random fields (Terminology~\ref{ter:mrf_mc_f}) are fully characterized by the vanishing of atoms in the $\CR$-diagram that correspond to disconnected vertex sets in the underlying graph.
One ingredient of Theorem~\ref{thm:mrf_characterization} will be Theorem~\ref{thm:FCMIs_characterization}, the characterization of so-called full conditional mutual $\CR$-independences in terms of the $\CR$-diagram.
Finally, in Appendix~\ref{sec:generalizing_yeung_2019}, we explain which results from~\cite{Yeung2019} generalize to our setting, in particular by reproving Theorem~\ref{thm:charac_proba_mrfs} with our results.

In Section~\ref{sec:specializing_to_probabilistic}, we then come back to the case where the information functions are applied to probability mass functions. 
In particular, $\mon{M}$ is then a monoid of equivalence classes of random variables acting additively on the abelian group $\gro{G} = \Meas(\Delta(\samp)^{r+1}, \R)$ of measurable functions from tuples of $r+1$ probability mass functions over a finite sample space $\samp$ to the real numbers.
$\CR: \mon{M} \to \gro{G}$ is from a specific class of functions such as entropy (for $r = 0$), cross-entropy, or Kullback-Leibler divergence (for $r = 1$).
To give meaning to the resulting functions, we briefly explain how the higher-order cross-entropy and Kullback-Leibler divergence relate to the cluster cross-entropy from~\cite{Cocco2012}.
We then adapt $\gro{G}$ to only contain functions on \emph{subsets} of probability mass functions that are conditionally stable, which allows to restrict $\CR$ without losing the monoid action from $\mon{M}$ to $\gro{G}$. 
As we show, the property to form a $P$-Markov random field is conditionally stable, i.e., stable under conditioning $P$ on any evaluation $X = x$ of any random variable in the Markov random field, and so we can restrict information functions $\CR$ to this and similar properties.

In Theorem~\ref{thm:get_mrf_for_free_with_restriction}, we obtain an important result: When restricting $\CR$ to tuples of probability mass functions that satisfy a conditionally stable property that \emph{implies} the Markov random field property, then the underlying random variables formally form an $\CR$-Markov random field as in Section~\ref{sec:yeung2002}.\footnote{Importantly, there is no reverse of this statement. E.g., if two probability mass functions $P$ and $Q$ give rise to a Kullback-Leibler diagram with vanishing regions corresponding to the graph of a Markov random field, it does \emph{not} imply that $P$ and $Q$ factorize according to the graph. For example, whenever $P = Q$, the whole Kullback-Leibler diagram trivially vanishes, but we can't conclude anything nontrivial about $P$ or $Q$ from it.}
In particular, using Theorem~\ref{thm:mrf_characterization}, this leads to the vanishing of regions in the $\CR$-diagram that correspond to disconnected vertex sets.
We apply this to the case that $\CR$ is the Kullback-Leibler divergence restricted to tuples of joint probability mass functions on a Markov chain with equal transition probabilities from one time-step to the next.
In Theorem~\ref{thm:second_law} and Figure~\ref{fig:second_law}, we obtain a degeneracy of the Kullback-Leibler diagram that can be interpreted as a diagram-representation of a weak version of the second law of thermodynamics: The Kullback-Leibler divergence progressively ``shrinks over time''.
Additionally, we apply Theorem~\ref{thm:get_mrf_for_free_with_restriction} to obtain a simple explicit derivation of the evidence lower bound in diffusion models by decomposing the Kullback-Leibler divergence over a Markov chain.

Finally, in Section~\ref{sec:discussion}, we summarize our results and discuss possible extensions of the theory and open questions.

\section{Markov Random Fields in Separoids}\label{sec:mrfs_in_separoids}

In this section, we study Markov random fields in general separoids as defined in Definition~\ref{def:separoid} to prepare for our generalizations of information characterizations of Markov random fields.
In Section~\ref{sec:mutual_independence}, we define conditional \emph{mutual} independences in separoids and present three equivalent descriptions. 
They are used in the \emph{cutset} description of Markov random fields, which we prove in Section~\ref{sec:mrfs_separoids} to be equivalent to the usual global Markov property.
Finally, Section~\ref{sec:markov_chains_separoids_app} defines Markov chains in separoids and characterizes them as Markov random fields corresponding to a path-shaped graph.
All these results will apply to $\CR$-independence, as we will in the next section, in Proposition~\ref{pro:F-separoid}, show that it satisfies the separoid axioms.

\subsection{Conditional Mutual Independences in Separoids}\label{sec:mutual_independence}

Fix a separoid $(\mon{M}, \indep)$.
We remind again of our notation: For $i, k \in \N$, we set $\range{i}{k} \coloneqq \{i, i+1, \dots, k\}$ if $i \leq k$ and $\range{i}{k} = \emptyset$, otherwise.
As a special case, set $\start{k} \coloneqq \range{1}{k}$.
For a set $I$, set $I \setminus i \coloneqq I \setminus \{i\}$.

\begin{definition}[Conditional Mutual Independence]
  \label{def:mutual_independence}
  Let $X_1, \dots X_n, Y \in \mon{M}$.
  We say $X_1, \dots, X_n$ are \emph{mutually independent given $Y$}, written
  \begin{equation*}
    \bigindep_{i = 1}^{n} X_i \ | \ Y,
  \end{equation*}
  if for all $i = 1, \dots, n$, the following pairwise independence holds:
  \begin{equation*}
    \Indep{X_i}{X_{\start{n} \setminus i}}{Y}.
  \end{equation*}
  Recall that in this expression, we have $X_{[n] \setminus i} = \prod_{j \in [n] \setminus \{i\}} X_j$.
\end{definition}

Note: 
in some contexts we may also write
\begin{equation*}
  \Big( X_1 \indep X_2 \indep  \dots \indep X_n\Big) \ | \ Y \quad \text{or} \quad
  X_1 \indep \bigindep_{i = 2}^n X_i \ | \ Y
\end{equation*}
for the mutual independence of $X_1, \dots, X_n$ given $Y$.

\begin{proposition} 
  \label{pro:mutual_independence_characterization}
  For $X_1, \dots, X_n, Y \in \mon{M}$, the following statements are equivalent:
  \begin{enumerate}
    \item $\bigindep_{i = 1}^{n} X_i \ | \ Y$;
    \item For all $i = 1, \dots, n$, the following pairwise independence holds:
      \begin{equation*}
	\Indep{X_i}{X_{\start{i-1}}}{Y};
      \end{equation*}
    \item $\bigindep_{i = 1}^{n-1} X_i \ | \ Y$ and $\Indep{X_{n}}{X_{\start{n-1}}}{Y}$;
    \item for all disjoint $I_1, I_2 \subseteq \start{n}$, one has $\Indep{X_{I_1}}{X_{I_2}}{Y}$.
  \end{enumerate}
\end{proposition}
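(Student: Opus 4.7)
The plan is to establish the cyclic chain (4)\,$\Rightarrow$\,(1)\,$\Rightarrow$\,(2)\,$\Rightarrow$\,(3)\,$\Rightarrow$\,(4), carrying out the last implication by induction on $n$ and invoking the other three directions at stage $n-1$ as the inductive hypothesis. The base case $n \leq 1$ is vacuous, as $X_\emptyset = \one$ and redundancy (S2) handles the pairs involving $\emptyset$.

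The first three implications will be short. For (4)\,$\Rightarrow$\,(1), I specialize to $I_1 = \{i\}$ and $I_2 = \start{n} \setminus i$. For (1)\,$\Rightarrow$\,(2), I factor $X_{\start{n} \setminus i} = X_{\start{i-1}} \cdot X_{\range{i+1}{n}}$ and apply decomposition (S3) to $\Indep{X_i}{X_{\start{n} \setminus i}}{Y}$. For (2)\,$\Rightarrow$\,(3), I observe that (2) restricted to indices $i \leq n-1$ is statement (2) for the variables $X_1, \dots, X_{n-1}$, which by the inductive equivalence is the same as statement (1) for those variables, i.e.\ $\bigindep_{i=1}^{n-1} X_i \mid Y$; the remaining piece $\Indep{X_n}{X_{\start{n-1}}}{Y}$ is just (2) at $i = n$.

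The heart of the argument is (3)\,$\Rightarrow$\,(4). Given disjoint $I_1, I_2 \subseteq \start{n}$, if $n \notin I_1 \cup I_2$ then $I_1, I_2 \subseteq \start{n-1}$ and the claim follows from the inductive hypothesis, which upgrades $\bigindep_{i=1}^{n-1} X_i \mid Y$ to statement (4) on $X_1, \dots, X_{n-1}$. Otherwise, by symmetry (S1), I may assume $n \in I_2$ and write $I_2 = I_2' \cup \{n\}$ with $I_2' \subseteq \start{n-1} \setminus I_1$. The inductive hypothesis again yields $\Indep{X_{I_1}}{X_{I_2'}}{Y}$. From $\Indep{X_n}{X_{\start{n-1}}}{Y}$, decomposition gives $\Indep{X_n}{X_{I_1 \cup I_2'}}{Y}$, and weak union (S4) applied to the factorization $X_{I_1 \cup I_2'} = X_{I_1} \cdot X_{I_2'}$ yields $\Indep{X_n}{X_{I_1}}{Y \cdot X_{I_2'}}$. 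Contraction (S5) then combines $\Indep{X_{I_1}}{X_n}{Y \cdot X_{I_2'}}$ and $\Indep{X_{I_1}}{X_{I_2'}}{Y}$ into $\Indep{X_{I_1}}{X_n \cdot X_{I_2'}}{Y} = \Indep{X_{I_1}}{X_{I_2}}{Y}$, as required.

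The main obstacle is keeping the bookkeeping straight: I will need to apply the separoid axioms in the correct ``handedness'' (freely justified by (S1)) and be careful that the inductive equivalence supplies not only the weaker statement $\bigindep_{i=1}^{n-1} X_i \mid Y$ coming from part (3), but also its full upgrade to statement (4) on $X_1, \dots, X_{n-1}$, which is what closes both sub-cases above. Beyond that, no further ideas are needed — the only nontrivial move is the single application of contraction.
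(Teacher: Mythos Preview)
Your proof is correct. It differs from the paper's in organization: the paper establishes $1 \Leftrightarrow 2$ via a direct internal induction (fixing $i$ and growing $\Indep{X_i}{X_{\start{l}\setminus i}}{Y}$ from $l = i$ to $l = n$), and then separately proves $1 \Rightarrow 4$ by another internal induction on the size of $I_1$, building it up one index at a time. You instead run a single external induction on $n$ through the cycle $(4)\Rightarrow(1)\Rightarrow(2)\Rightarrow(3)\Rightarrow(4)$, invoking the full equivalence at stage $n-1$ as the hypothesis. The separoid moves are the same (decomposition, weak union, one contraction), but your packaging is a bit more economical: one induction instead of two, and the hard step $(3)\Rightarrow(4)$ reduces to a single contraction once the inductive hypothesis delivers statement (4) at level $n-1$. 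The paper's layout has the minor advantage that each implication is self-contained and does not depend on the other equivalences holding at lower $n$.
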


\begin{proof}
  1 immediately implies 2 by using decomposition (S3).
  Assume 2 holds. We want to prove $1$.
  Let $i \in \start{n}$.
  Assume by induction that we already know
  \begin{equation}\label{eq:what_we_know}
    \Indep{X_i}{X_{\start{l} \setminus i}}{Y}
  \end{equation}
  for some $l \geq i$, where the case $l = i$ holds by assumption. 
  If we can show the same with $l+1$ replacing $l$, then induction shows the statement for $l = n$, which then results in mutual independence since $i$ was arbitrary.
  For the induction step, note that
  \begin{equation}
    \Indep{X_{l+1}}{X_{\start{l}}}{Y},
  \end{equation}
  which, using symmetry (S1) and weak union (S4), implies
  \begin{equation}\label{eq:another_thing_know}
    \Indep{X_i}{X_{l+1}}{Y X_{\start{l} \setminus i}}.
  \end{equation}
  Contraction (S5) applied to Equations~\eqref{eq:what_we_know} and~\eqref{eq:another_thing_know} gives
  \begin{equation*}
    \Indep{X_i}{X_{\start{l+1} \setminus i}}{Y}, 
  \end{equation*}
  which shows the induction step.

  Knowing that 1 and 2 are equivalent then immediately implies that both are equivalent to 3.

  It is also clear that 4 implies 1.
  Finally, we show that 1 implies 4:
  let $I_1, I_2 \subseteq \start{n}$ be disjoint.
  The case $I_1 = \emptyset$ is trivial by redundancy $(S2)$.
  Thus, assume $\emptyset \neq I_1 = \big\lbrace i_1, \dots, i_k\big\rbrace$ with pairwise different elements $i_l$.
  Then from decomposition (S3) applied to the independence $\Indep{X_{i_1}}{X_{\start{n} \setminus i_1}}{Y}$ we obtain
  \begin{equation*}
    \Indep{X_{i_1}}{X_{I_2}}{Y}.
  \end{equation*}
  By induction, we can assume 
  \begin{equation}\label{eq:dd}
    \Indep{X_{I_1 \setminus i_k}}{X_{I_2}}{Y}
  \end{equation}
  and want to show that we can add $X_{i_k}$ to the left-hand-side. 
  By decomposition (S3) applied to the independence $\Indep{X_{i_k}}{X_{\start{n} \setminus i_k}}{Y}$ we obtain 
  \begin{equation*}
    \Indep{X_{i_k}}{X_{I_1 \cup I_2 \setminus i_k}}{Y},
  \end{equation*}
  and thus by weak union (S4)
  \begin{equation}\label{eq:ddd}
    \Indep{X_{i_k}}{X_{I_2}}{YX_{I_1 \setminus i_k}}.
  \end{equation}
  Contraction (S5) applied to Equations~\eqref{eq:dd} and~\eqref{eq:ddd} shows
  \begin{equation*}
    \Indep{X_{I_1}}{X_{I_2}}{Y},
  \end{equation*}
  which is the independence we wanted to show.
\end{proof}

\subsection{Markov Random Fields in Separoids}\label{sec:mrfs_separoids}

Fix again a general separoid $(\mon{M}, \indep)$.
In this part, we show that Markov random fields as defined in Definition~\ref{def:global_markov_property} by the global Markov property are equivalently described by the cutset property.

\begin{definition}
  [Full Conditional Mutual Independence (FCMI)]
  Let $X_1, \dots, X_n \in \mon{M}$.
  A conditional mutual independence
  \begin{equation*}
    \bigindep_{i = 1}^q X_{L_i} \ | \ X_J,
  \end{equation*}
  where $L_1, \dots, L_q, J \subseteq \start{n}$ are pairwise disjoint sets whose union is $\start{n}$, is called a \emph{full conditional mutual independence} (with respect to n and $X_1, \dots, X_n$) --- or FCMI for short.
  \label{def:fcmi}
\end{definition}

The term ``full'' indicates that each of the variables $X_i$, $i = 1, \dots, n$, appears exactly once. 
Recall the notion of a cutset and the corresponding components from Section~\ref{sec:graph_terminology}.

\begin{definition}
  [Cutset Property]
  \label{def:mrf}
  Let $\Gr{G} = (\Ver{V}, \Ed{E})$ a graph with $\Ver{V} = \start{n}$.
  A sequence of elements $X_1, \dots, X_n \in \mon{M}$ is said to satisfy the \emph{cutset property} with respect to $\indep$ and $\Gr{G}$ if for all cutsets $\Ver{U} \subseteq \Ver{V}$, the FCMI
  \begin{equation*}
    \bigindep_{i = 1}^{\ncomp{\Ver{U}}} X_{\Ver{V}_i(\Ver{U})} \ | \ X_{\Ver{U}}
  \end{equation*}
  holds.
\end{definition}

The following proposition shows that the global Markov property and cutset property are equivalent.
For the special case of probabilistic independence, this was also stated in the introduction of~\cite{Yeung2002a}.

\begin{proposition} 
  \label{pro:equivalence_global_markov}
  Let $\Gr{G} = (\Ver{V}, \Ed{E})$ a graph with $\Ver{V} = \start{n}$.
  A sequence of elements $X_1, \dots, X_n \in \mon{M}$ form a Markov random field if and only if they satisfy the cutset property (with respect to $\indep$ and $\Gr{G}$).
\end{proposition}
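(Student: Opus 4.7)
The plan is to prove the two directions separately, relying on Proposition~\ref{pro:mutual_independence_characterization} to convert between pairwise independences and mutual independences, together with the graph-theoretic observation that a cutset always separates two components from each other.

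For the direction cutset property $\Rightarrow$ global Markov property, I fix disjoint $\Ver{A}, \Ver{B}, \Ver{C} \subseteq \Ver{V}$ with $\Ver{C}$ separating $\Ver{A}$ from $\Ver{B}$. The trivial case $\Ver{A} = \emptyset$ or $\Ver{B} = \emptyset$ is handled by noting that $X_\emptyset = \one \precsim X_{\Ver{C}}$, so redundancy (S2) gives the desired independence. Otherwise, as observed after the definition of separation, $\Ver{C}$ is a cutset, and every vertex of $\Ver{A}$ (respectively $\Ver{B}$) lies in some component $\Ver{V}_i(\Ver{C})$ of $\Grdif{\Gr{G}}{\Ver{C}}$; since $\Ver{C}$ separates them, the component indices arising from $\Ver{A}$ are disjoint from those arising from $\Ver{B}$. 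The cutset property gives $\bigindep_{i=1}^{\ncomp{\Ver{C}}} X_{\Ver{V}_i(\Ver{C})} \mid X_{\Ver{C}}$, and Proposition~\ref{pro:mutual_independence_characterization}, item 4, applied to the disjoint index sets yields independence of the unions. Decomposition (S3) then cuts this down to $\Indep{X_{\Ver{A}}}{X_{\Ver{B}}}{X_{\Ver{C}}}$.

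For the direction global Markov property $\Rightarrow$ cutset property, I fix a cutset $\Ver{U}$ with components $\Ver{V}_1, \dots, \Ver{V}_q$ of $\Grdif{\Gr{G}}{\Ver{U}}$. The goal is to establish the FCMI $\bigindep_{i=1}^q X_{\Ver{V}_i} \mid X_{\Ver{U}}$. By Proposition~\ref{pro:mutual_independence_characterization}, item 2 $\Rightarrow$ item 1 (applied to the tuple $X_{\Ver{V}_1}, \dots, X_{\Ver{V}_q}$ with conditioning element $X_{\Ver{U}}$), it is enough to show that for each $i$,
\begin{equation*}
  \Indep{X_{\Ver{V}_i}}{X_{\Ver{V}_1 \cup \dots \cup \Ver{V}_{i-1}}}{X_{\Ver{U}}}.
\end{equation*}
Here I use the key graph-theoretic fact that $\Ver{U}$ separates $\Ver{V}_i$ from $\Ver{V}_1 \cup \dots \cup \Ver{V}_{i-1}$: any walk avoiding $\Ver{U}$ would be a walk in $\Grdif{\Gr{G}}{\Ver{U}}$ joining two distinct components, contradicting the maximality of components. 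The three sets $\Ver{V}_i$, $\Ver{V}_1 \cup \dots \cup \Ver{V}_{i-1}$, and $\Ver{U}$ are pairwise disjoint, so the global Markov property immediately gives the required pairwise independence.

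The main (mild) obstacle is just keeping track of empty-set edge cases and ensuring that the bookkeeping with components of $\Grdif{\Gr{G}}{\Ver{C}}$ and $\Grdif{\Gr{G}}{\Ver{U}}$ is clean; the substantive work is carried by Proposition~\ref{pro:mutual_independence_characterization} and the elementary observation that distinct components in the cut graph cannot be joined by a walk avoiding the cutset.
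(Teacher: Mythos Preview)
Your proof is correct and follows essentially the same approach as the paper. The only cosmetic difference is in the second direction: the paper builds the FCMI by an explicit induction using the equivalence of items 1 and 3 in Proposition~\ref{pro:mutual_independence_characterization}, whereas you invoke the equivalence of items 1 and 2 directly, which packages the same induction.
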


\begin{proof}
  Assume $X_1, \dots, X_n$ satisfy the cutset property with respect to $\indep$ and $\Gr{G}$.
  Let $\Ver{A}, \Ver{B}, \Ver{C} \subseteq \Ver{V}$ disjoint such that $\Ver{C}$ separates $\Ver{A}$ from $\Ver{B}$.
  Without loss of generality, we can assume that $\Ver{A}$ and $\Ver{B}$ are both nonempty, since otherwise the independence $\Indep{X_{\Ver{A}}}{X_{\Ver{B}}}{X_{\Ver{C}}}$ follows from redundancy (S2) and we are done.
  This, together with the separation assumption, implies that $\Ver{C}$ is a cutset.
  Let $\Ver{V}_1(\Ver{C}), \dots, \Ver{V}_{\ncomp{\Ver{C}}}(\Ver{C})$ be the components of $\Grdif{\Gr{G}}{\Ver{C}}$.
  Since $\Ver{A}, \Ver{B} \subseteq \Ver{V} \setminus \Ver{C} = \bigcup_{i = 1}^{\ncomp{\Ver{C}}} \Ver{V}_i(\Ver{C})$, we have
  \begin{equation*}
    \Ver{A} = \bigcup_{i = 1}^{\ncomp{\Ver{C}}} \Ver{V}_i^{\Ver{A}}(\Ver{C}), \quad \Ver{B} = \bigcup_{i = 1}^{\ncomp{\Ver{C}}} \Ver{V}_i^{\Ver{B}}(\Ver{C}),
  \end{equation*}
  where $\Ver{V}_i^{\Ver{A}}(\Ver{C}) \coloneqq \Ver{A} \cap \Ver{V}_i(\Ver{C})$ and $\Ver{V}_i^{\Ver{B}}(\Ver{C}) \coloneqq \Ver{B} \cap \Ver{V}_i(\Ver{C})$.
  Now, we claim that for all $i$, we have $\Ver{V}_i^{\Ver{A}}(\Ver{C}) = \emptyset$ or $\Ver{V}_i^{\Ver{B}}(\Ver{C}) = \emptyset$:
  indeed, if there were elements $a \in \Ver{V}_i^{\Ver{A}}(\Ver{C})$ and $b \in \Ver{V}_i^{\Ver{B}}(\Ver{C})$, then they would both be in $\Ver{V}_i(\Ver{C})$ and thus connected by a walk that lies completely within $\Ver{V}_i(\Ver{C}) \subseteq \Ver{V} \setminus \Ver{C}$, a contradiction to the assumption that $\Ver{C}$ separates $\Ver{A}$ from $\Ver{B}$.

  Thus, we can write
  \begin{equation}\label{eq:how_A_written}
    \Ver{A} = \bigcup_{i \in I_1} \Ver{V}_i^{\Ver{A}}(\Ver{C}), \quad \Ver{B} = \bigcup_{i \in I_2} \Ver{V}_i^{\Ver{B}}(\Ver{C}) 
  \end{equation}
  with $I_1 \cap I_2 = \emptyset$.
  Since $X_1, \dots, X_n$ satisfy the cutset property and $\Ver{C}$ is a cutset, we obtain the FCMI $\bigindep_{i = 1}^{\ncomp{\Ver{C}}} X_{\Ver{V}_i(\Ver{C})} \ | \ X_{\Ver{C}}$,
  from which, by the equivalence of parts $1$ and $4$ in Proposition~\ref{pro:mutual_independence_characterization}, we obtain
  \begin{equation}\label{eq:weird_pairwise_independence}
    \Indep{X_{\big[\bigcup_{i \in I_1}\Ver{V}_i(\Ver{C})\big]}}{X_{\big[\bigcup_{i \in I_2}\Ver{V}_i(\Ver{C})\big]}}{X_{\Ver{C}}}.
  \end{equation}
  Equation~\eqref{eq:how_A_written} and decomposition (S3) applied to both the left and right side of Equation~\eqref{eq:weird_pairwise_independence} implies $\Indep{X_{\Ver{A}}}{X_{\Ver{B}}}{X_{\Ver{C}}}$
  and thus the global Markov property.
  Therefore, $X_1, \dots, X_n$ form a Markov random field with respect to $\indep$ and $\Gr{G}$.

  For the other direction, assume the global Markov property holds.
  Let $\Ver{U} \subseteq \Ver{V}$ be a cutset and let $\Ver{V}_1(\Ver{U}), \dots, \Ver{V}_{\ncomp{\Ver{U}}}(\Ver{U})$ be the components of $\Grdif{\Gr{G}}{\Ver{U}}$.
  We know that $\Ver{U}$ separates $\Ver{V}_1(\Ver{U})$ from $\Ver{V}_2(\Ver{U})$, which by the global Markov property implies $\Indep{X_{\Ver{V}_1(\Ver{U})}}{X_{\Ver{V}_2(\Ver{U})}}{X_{\Ver{U}}}$.
  This can be interpreted as the conditional mutual independence $\bigindep_{i = 1}^{2} X_{\Ver{V}_i(\Ver{U})} \ | \ X_{\Ver{U}}$.
  Assume by induction that we know
  \begin{equation}\label{eq:an_induction_hyp}
    \bigindep_{i = 1}^{m} X_{\Ver{V}_i(\Ver{U})} \ | \ X_{\Ver{U}}.
  \end{equation}
  Note that $\Ver{U}$ also separates $\bigcup_{i = 1}^{m} \Ver{V}_i(\Ver{U})$ from $\Ver{V}_{m+1}(\Ver{U})$, which by the global Markov property implies
  \begin{equation*}
    \Indep{X_{\Ver{V}_{m+1}(\Ver{U})}}{\prod_{i = 1}^m X_{\Ver{V}_i(\Ver{U})}}{X_{\Ver{U}}}.
  \end{equation*}
  This, together with Equation~\eqref{eq:an_induction_hyp} and the equivalence of parts 1 and 3 in Proposition~\ref{pro:mutual_independence_characterization}, implies $\bigindep_{i = 1}^{m+1} X_{\Ver{V}_i(\Ver{U})} \ | \ X_{\Ver{U}}$.
  By induction, this shows the FCMI
  \begin{equation*}
    \bigindep_{i = 1}^{\ncomp{\Ver{U}}} X_{\Ver{V}_i(\Ver{U})} \ | \ X_{\Ver{U}}.
  \end{equation*}
  Overall, this shows that $X_1, \dots, X_n$ satisfies the cutset property with respect to $\indep$ and $\Gr{G}$.
\end{proof}

\subsection{Markov Chains in Separoids}\label{sec:markov_chains_separoids_app}

Again, we fix a general separoid $(\mon{M}, \indep)$.
Recall the definition of a Markov chain from Definition~\ref{def:markov_chain}.

\begin{proposition}[Characterization of Markov Chains]
  \label{pro:characterization_markov_chain}
  Let $X_1, \dots, X_n \in \mon{M}$.
  Let $\Gr{G} = \big( \Ver{V}, \Ed{E}\big)$ be the graph with $\Ver{V} = \start{n}$ and $\Ed{E} = \big\lbrace \{i, i+1\} \mid i = 1, \dots, n-1 \big\rbrace$.
  The following properties are equivalent:
  \begin{enumerate}
    \item $X_1 \to X_2 \to \dots \to X_n$, i.e., the sequence forms a Markov chain;
    \item $X_{\start{i-1}} \to X_i \to X_{\range{i+1}{n}}$ for all $i = 1, \dots, n-1$;
    \item $X_1, \dots, X_n$ form a Markov random field with respect to $\Gr{G}$ and $\indep$.
  \end{enumerate}
\end{proposition}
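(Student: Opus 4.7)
The plan is to establish the cycle $(3) \Rightarrow (2) \Rightarrow (1)$ and $(1) \Rightarrow (2) \Rightarrow (3)$. The first chain is almost immediate: in the path graph $\Gr{G}$, the singleton $\{i\}$ separates $\start{i-1}$ from $\range{i+1}{n}$, so the global Markov property $(3)$ directly yields $\Indep{X_{\start{i-1}}}{X_{\range{i+1}{n}}}{X_i}$, which is $(2)$; and applying decomposition $(S3)$ to $(2)$ at $i = j - 1$ extracts $\Indep{X_j}{X_{\start{j-2}}}{X_{j-1}}$ for each $j = 2, \dots, n$, which is $(1)$.

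For $(1) \Rightarrow (2)$, I would induct on $n$, the base cases $n \leq 2$ being vacuous. For the inductive step, fix $i \leq n - 1$. The case $i = n - 1$ is just $(1)$ at $j = n$ after applying symmetry $(S1)$. For $i < n - 1$, applying the induction hypothesis to the Markov subchain $X_1, \dots, X_{n-1}$ gives $\Indep{X_{\start{i-1}}}{X_{\range{i+1}{n-1}}}{X_i}$. Separately, starting from $(1)$ at $j = n$, namely $\Indep{X_n}{X_{\start{n-2}}}{X_{n-1}}$, a single weak union $(S4)$ moves $X_{\range{i}{n-2}}$ into the conditioning and yields $\Indep{X_n}{X_{\start{i-1}}}{X_{\range{i}{n-1}}}$. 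Contraction $(S5)$ then combines these two independences to deliver $\Indep{X_{\range{i+1}{n}}}{X_{\start{i-1}}}{X_i}$.

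For $(2) \Rightarrow (3)$, I would verify the cutset property from Proposition~\ref{pro:equivalence_global_markov}. Let $\Ver{U}$ be a cutset of $\Gr{G}$; its components are intervals, which I order as $\Ver{V}_1(\Ver{U}), \dots, \Ver{V}_{\ncomp{\Ver{U}}}(\Ver{U})$ with $\max \Ver{V}_l(\Ver{U}) < \min \Ver{V}_{l+1}(\Ver{U})$. By part~2 of Proposition~\ref{pro:mutual_independence_characterization}, it suffices to show $\Indep{X_{\Ver{V}_j(\Ver{U})}}{X_{\Ver{V}_1(\Ver{U}) \cup \cdots \cup \Ver{V}_{j-1}(\Ver{U})}}{X_{\Ver{U}}}$ for each $j \geq 2$. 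Writing $a_j = \min \Ver{V}_j(\Ver{U})$, one observes that $a_j - 1$ must belong to $\Ver{U}$, since otherwise the edge $\{a_j - 1, a_j\}$ would merge $\Ver{V}_{j-1}(\Ver{U})$ and $\Ver{V}_j(\Ver{U})$ into a single component. Applying $(2)$ at $i = a_j - 1$ yields $\Indep{X_{\start{a_j - 2}}}{X_{\range{a_j}{n}}}{X_{a_j - 1}}$; decomposition $(S3)$ then restricts each side to the relevant earlier/later components together with the portion of $\Ver{U}$ on the corresponding side of $a_j - 1$, and two applications of weak union $(S4)$ finally move those portions of $\Ver{U}$ into the conditioning.

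The main subtlety lies in the $(1) \Rightarrow (2)$ induction: one must split $X_{\start{n-2}}$ exactly at position $i$ via weak union so that it aligns with the inductive independence before contraction can close the argument. The step $(2) \Rightarrow (3)$, while longer to write, is essentially mechanical once the key vertex $a_j - 1 \in \Ver{U}$ adjacent to each component has been identified and the path-graph structure of cutsets is invoked.
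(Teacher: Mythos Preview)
Your proof is correct and follows essentially the same route as the paper. The paper proves the cycle $(1)\Rightarrow(2)\Rightarrow(3)\Rightarrow(1)$; your $(1)\Rightarrow(2)$ inducts on $n$ rather than (as the paper does) on the upper endpoint $l$ of $\range{i+1}{l}$ for fixed $i$, and in $(2)\Rightarrow(3)$ you pick the specific vertex $a_j-1\in\Ver{U}$ where the paper allows any $u\in\Ver{U}$ lying between the two blocks of components---but the separoid manipulations (weak union, contraction, decomposition) and the appeal to Proposition~\ref{pro:mutual_independence_characterization} and Proposition~\ref{pro:equivalence_global_markov} are the same.
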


\begin{proof}
  Assume 1. 
  To prove 2, let $i \in \{1, \dots, n-1\}$.
  We need to show the independence 
  \begin{equation}\label{eq:resa}
     \Indep{X_{\range{i+1}{n}}}{X_{\start{i-1}}}{X_i}.
  \end{equation}
  We already know that the independence $\Indep{X_{i+1}}{X_{\start{i-1}}}{X_i}$ holds.
  Assume by induction that 
  \begin{equation}\label{eq:ing1}
    \Indep{X_{\range{i+1}{l}}}{X_{\start{i-1}}}{X_i}
  \end{equation}
  for some $l \geq i+1$.
  If $l = n$, then we are done. 
  Otherwise, note that the independence $\Indep{X_{l+1}}{X_{\start{l-1}}}{X_l}$
  gives us, by weak union (S4), the property
  \begin{equation}\label{eq:ing2}
    \Indep{X_{l+1}}{X_{\start{i-1}}}{X_iX_{\start{i+1:l}}}.
  \end{equation}
  Contraction (S5) applied to Equations~\eqref{eq:ing1} and~\eqref{eq:ing2} results in $\Indep{X_{\range{i+1}{l+1}}}{X_{\start{i-1}}}{X_i}$.
  By induction, we obtain the result, Equation~\eqref{eq:resa}.

  Now assume 2.
  To prove 3, by Proposition~\ref{pro:equivalence_global_markov} it is enough to show the cutset property.
  For this, let $\Ver{U} \subseteq \Gr{G}$ a cutset and $\Ver{V}_1(\Ver{U})$, \dots, $\Ver{V}_{\ncomp{\Ver{U}}}(\Ver{U})$ the corresponding components in $\Grdif{\Gr{G}}{\Ver{U}}$.
  A component in $\Gr{G}$ necessarily consists of consecutive elements, and we can thus assume that they are ordered in such a way that $v_1 < \dots < v_{\ncomp{\Ver{U}}}$ for all vertices $v_1 \in \Ver{V}_1(\Ver{U})$, \dots, $v_{\ncomp{U}} \in \Ver{V}_{\ncomp{\Ver{U}}}(\Ver{U})$.

  Now, let $i  \in \{2, \dots, \ncomp{\Ver{U}}\}$ be arbitrary.
  Let $u \in \Ver{U}$ be any element such that $v < u < w$ for all $v \in \bigcup_{k = 1}^{i-1} \Ver{V}_k(\Ver{U})$ and $w \in \bigcup_{k = i}^{\ncomp{\Ver{U}}}\Ver{V}_k(\Ver{U})$; this clearly exists since otherwise we could merge the components with indices $i-1$ and $i$.
  Let $\Ver{U}_{u-}$ and $\Ver{U}_{u+}$ be the sets of elements of $\Ver{U}$ that are smaller and larger than $u$, respectively.
  Then by 2, we obtain
  \begin{equation*}
    \Indep{\big(X_{\Ver{V}_{i}(\Ver{U})} \cdots X_{\Ver{V}_{\ncomp{\Ver{U}}}(\Ver{U})} X_{\Ver{U}_{u+}}\big)}
    {\big(X_{\Ver{V}_{1}(\Ver{U})} \cdots X_{\Ver{V}_{i-1}(\Ver{U})} X_{\Ver{U}_{u-}}\big)}{X_u}.
  \end{equation*}
  With weak union (S4) applied to both sides, we obtain
  \begin{equation*}
     \Indep{\big(X_{\Ver{V}_{i}(\Ver{U})} \cdots X_{\Ver{V}_{\ncomp{\Ver{U}}}(\Ver{U})}\big)}
     {\big(X_{\Ver{V}_{1}(\Ver{U})} \cdots X_{\Ver{V}_{i-1}(\Ver{U})}\big)}{X_{\Ver{U}}}.
  \end{equation*}
  Decomposition (S3) applied to the left side gives
  \begin{equation*}
      \Indep{X_{\Ver{V}_{i}(\Ver{U})} }
     {\big(X_{\Ver{V}_{1}(\Ver{U})} \cdots X_{\Ver{V}_{i-1}(\Ver{U})}\big)}{X_{\Ver{U}}}. 
  \end{equation*}
  Since $i$ was arbitrary, by the equivalence of 1 and 2 in Proposition~\ref{pro:mutual_independence_characterization}, we obtain the FCMI $\bigindep_{i = 1}^{\ncomp{\Ver{U}}} X_{\Ver{V}_i(\Ver{U})} \ | \ X_{\Ver{U}}$,
  showing the cutset property and thus 3.

  Since $\{i-1\}$ separates $\{i\}$ from $\{1, \dots, i-2\}$ in $\Gr{G}$, the global Markov property shows that 3 implies 1.
\end{proof}

\section{Characterizing \texorpdfstring{$\CR$}{CR}-Independences and \texorpdfstring{$\CR$}{CR}-Markov Random Fields}\label{sec:yeung2002}

In this section we generalize the work~\cite{Yeung2002a} on information characterizations of (full) conditional mutual independences and Markov random fields from Shannon entropy to general $\CR$.
The reader may also compare with~\cite{Yeung2008}, Chapter 12, which contains the same content as \cite{Yeung2002a} with more explanations.

In Section~\ref{sec:subset_determination}, we state a main technique used in our proofs, Theorem~\ref{thm:subset_determination}, which we term \emph{subset determination}.
It states that the value of a region in an $\CR$-diagram always fully determines the value of all subregions, including the atomic parts. 
This result will replace the use of inequalities in~\cite{Yeung2002a}.
In Section~\ref{sec:separoids}, we then show that the $\CR$-independence defined in Definition~\ref{def:F_independence} satisfies the separoid axioms, and so all results from Section~\ref{sec:mrfs_in_separoids} apply.

In Section~\ref{sec:dual_total_correlation_charac}, we define conditional mutual $\CR$-independences and show in Theorem~\ref{thm:charac_using_dual} a characterization by the vanishing of a conditional $\CR$--dual total correlation and its corresponding atoms.
In Section~\ref{sec:full_conditional_independences}, with Theorem~\ref{thm:FCMIs_characterization}, we then generalize this to a characterization of full conditional mutual $\CR$-independences.
By Proposition~\ref{pro:equivalence_global_markov}, Markov random fields in a separoid are equivalently characterized by the cutset property, and thus by a set of full conditional mutual independences. 
Thus, the previous results lead to the proof for the characterization of $\CR$-Markov random fields in terms of the $\CR$-diagram, as stated in Theorem~\ref{thm:mrf_characterization}. 
We also specialize this to $\CR$-Markov chains.
Finally, in Appendix~\ref{sec:generalizing_yeung_2019}, we briefly look at~\cite{Yeung2019}, which builds on~\cite{Yeung2002a}, and explain which of the results transfer to our generalized setting.
We put this into the appendix since no later results build on this.

Let in this whole section $\mon{M}$ be a commutative, idempotent monoid acting additively on an abelian group $\gro{G}$ and $\CR: \mon{M} \to \gro{G}$ a function satisfying the chain rule Equation~\eqref{eq:cocycle_equationn}.

\subsection{Subset Determination}\label{sec:subset_determination}

The following theorem, which did not appear in~\cite{Lang2022}, highlights a property that we call \emph{subset determination}.
This crucial property lies at the heart of the proofs of the main results in this work.

\begin{theorem}[Subset Determination] 
  \label{thm:subset_determination}
  Let $\mon{M}$ be a commutative, idempotent monoid acting additively on an abelian group $\gro{G}$ and $\CR: \mon{M} \to \gro{G}$ be a function satisfying the chain rule Equation~\eqref{eq:cocycle_equationn}. 
  Fix elements $X_1, \dots, X_n \in \mon{M}$ and let $\set{X} \coloneqq \set{X}(n)$ be defined as in Equation~\eqref{eq:Sigma_definition}, resulting in the $\gro{G}$-valued measure $\set{\CR}: 2^{\set{X}} \to \gro{G}$ from Theorem~\ref{thm:hu_kuo_ting_generalized}.

  Then for any $A \subseteq \set{X}$ and any atom $\atom{I} \in A$, one has
  \begin{equation*}
    \set{\CR}(\atom{I}) = \sum_{K \subseteq I} (-1)^{\num{K} - \num{I}} \cdot X_{\start{n} \setminus K}.\set{\CR}(A).
  \end{equation*}
  In particular, if $\set{\CR}(A) = 0$, then $\set{\CR}(p_I) = 0$ for all atoms $p_I \in A$, and consequently $\set{\CR}(B) = 0$ for all $B \subseteq A$.
\end{theorem}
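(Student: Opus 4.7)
The plan is to reduce the statement to a standard Möbius-type inclusion--exclusion identity over subsets of $I$, after first showing that the monoid action of $X_J$ on $\set{\CR}(A)$ simply deletes from $A$ those atoms that meet the ``circles'' indexed by $J$.

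First I would establish the auxiliary identity
\begin{equation*}
  X_J.\set{\CR}(A) = \set{\CR}(A \setminus \set{X}_J)
\end{equation*}
for every $A \subseteq \set{X}$ and $J \subseteq \start{n}$. Both sides are $\gro{G}$-valued measures in the argument $A$ --- the left-hand side by additivity of the action, the right-hand side because $\set{\CR}$ itself is a measure --- so it suffices to check equality on singletons $A = \{\atom{I}\}$. Writing $\set{\CR}(\atom{I}) = X_{\start{n} \setminus I}.\CR\bigl(\bigscolon_{i \in I} X_i\bigr)$ via Corollary~\ref{cor:how_atoms_look_like}, then using associativity of the action and reapplying Hu's theorem (Equation~\eqref{eq:hu_kuo_ting_equation}), I obtain
\begin{equation*}
  X_J.\set{\CR}(\atom{I}) = \set{\CR}\Bigl(\bigcap_{i \in I}\set{X}_i \setminus \set{X}_{J \cup (\start{n} \setminus I)}\Bigr),
\end{equation*}
and the set on the right simplifies to $\{\atom{I}\}$ when $I \cap J = \emptyset$ and to $\emptyset$ otherwise --- which is exactly $\{\atom{I}\} \setminus \set{X}_J$.

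Specializing the auxiliary identity to $J = \start{n} \setminus K$ gives $X_{\start{n} \setminus K}.\set{\CR}(A) = \sum_{\atom{J'} \in A,\, J' \subseteq K} \set{\CR}(\atom{J'})$. Substituting into the right-hand side of the claimed formula and swapping the order of summation reduces matters to verifying that, for every $\atom{J'} \in A$ with $J' \subseteq I$, the inner sum
\begin{equation*}
  \sum_{J' \subseteq K \subseteq I} (-1)^{\num{K} - \num{I}}
\end{equation*}
equals $1$ when $J' = I$ and $0$ otherwise. Setting $d = \num{I}-\num{J'}$, this is the standard identity $\sum_{m=0}^{d} \binom{d}{m} (-1)^{m-d} = [d=0]$. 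Only the term $\atom{J'} = \atom{I}$ survives, proving the displayed formula. The ``in particular'' clause then follows at once: if $\set{\CR}(A) = 0$, each summand $X_{\start{n} \setminus K}.\set{\CR}(A)$ vanishes, so $\set{\CR}(\atom{I}) = 0$ for every $\atom{I} \in A$, and additivity of $\set{\CR}$ propagates this to $\set{\CR}(B) = 0$ for all $B \subseteq A$.

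The main obstacle is the auxiliary identity in Step 1: $\set{\CR}$ was originally characterized through Hu's theorem only on regions obtained from intersections and differences of the circles $\set{X}_i$, and one must be careful that the monoid action on such values is compatible with the set-theoretic operations on $2^{\set{X}}$. Reducing to atoms and invoking additivity of both $\set{\CR}$ and the action sidesteps this subtlety cleanly; the remainder of the argument is then an elementary binomial computation.
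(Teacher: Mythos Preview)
Your proposal is correct and follows essentially the same route as the paper. The paper isolates your atom-level computation as a separate lemma (that $X_{\start{n}\setminus K}.\set{\CR}(\atom{L})$ equals $\set{\CR}(\atom{L})$ when $L\subseteq K$ and $0$ otherwise) and then runs the same swap-and-collapse with the identical binomial identity; your auxiliary identity $X_J.\set{\CR}(A)=\set{\CR}(A\setminus\set{X}_J)$ is a clean repackaging of that lemma extended by additivity, but the underlying argument is the same.
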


\begin{remark} 
  \label{rem:subset_determination_and_slices}
  It is important to note that this theorem is \emph{not} true if we work in the setting of information functions that apply to probability mass functions and restrict our attention to a fixed probability mass function $P$.
  For example, the total entropy $\Ent(XY; P)$ of a joint variable $XY$ does not determine the mutual information $\Ent(X;Y; P)$ between the two, even though $\Ent(XY)$ \emph{does} determine $\Ent(X; Y)$.
  Nevertheless, we are able to apply our results also to fixed probability mass functions, as we explain in Appendix~\ref{sec:slices} on \emph{slices} of $\Ent$-diagrams.
\end{remark}

We first need more notation and several lemmas:

\begin{notation}
  \label{not:interaction_notation}
  For all $\emptyset \neq I = \big\lbrace i_1 <  \dots < i_q\big\rbrace \subseteq \start{n}$, we write
  \begin{equation*}
    \CR\big( \bigscolon_{i \in I} X_i\big) \coloneqq \CR\big( X_{i_1}; \dots ; X_{i_q} \big).
  \end{equation*}
\end{notation}

\begin{lemma}
  \label{lem:how_atoms_look_like}
  For all $\emptyset \neq I \subseteq \start{n}$, one has 
  \begin{equation*}
    \set{\CR}(\atom{I}) = X_{\start{n} \setminus I}.\CR\big( \bigscolon_{i \in I} X_i\big).
  \end{equation*}
\end{lemma}

\begin{proof}
  This follows directly from Equation~\eqref{eq:general_position} and Hu's Theorem.
\end{proof}

\begin{lemma}
  \label{lem:atoms_annihilated}
  Assume $\atom{L}$ is an atom and $K \subseteq \start{n}$.
  Then 
  \begin{equation*}
    X_{\start{n} \setminus K}.\set{\CR}(\atom{L}) = 
    \begin{cases}
      \set{\CR}(\atom{L}), \ L \subseteq K, \\
      0, \ \mathrm{else}.
    \end{cases}
  \end{equation*}
\end{lemma}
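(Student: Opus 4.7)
The plan is to combine Corollary~\ref{cor:how_atoms_look_like} with Hu's theorem, which together let us rewrite the left-hand side as the measure of an explicit region of $\set{X}$, and then to determine precisely when that region equals $\{\atom{L}\}$ and when it is empty.

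First, I would apply Corollary~\ref{cor:how_atoms_look_like} to get $\set{\CR}(\atom{L}) = X_{\start{n} \setminus L}.\CR\bigl(\bigscolon_{i \in L} X_i\bigr)$. Then, using associativity of the monoid action on $\gro{G}$ together with the fact that $\mon{M}$ is commutative and idempotent, I would rewrite
\begin{equation*}
  X_{\start{n} \setminus K}.\set{\CR}(\atom{L}) = \bigl(X_{\start{n} \setminus K} \cdot X_{\start{n} \setminus L}\bigr).\CR\bigl(\bigscolon_{i \in L} X_i\bigr) = X_{\start{n} \setminus (K \cap L)}.\CR\bigl(\bigscolon_{i \in L} X_i\bigr),
\end{equation*}
using $(\start{n} \setminus K) \cup (\start{n} \setminus L) = \start{n} \setminus (K \cap L)$.

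Next, I would apply the generalized Hu theorem (Equation~\eqref{eq:hu_kuo_ting_equation}) with $J = \start{n} \setminus (K \cap L)$ and the tuple $(X_i)_{i \in L}$ to rewrite the previous expression as
\begin{equation*}
  \set{\CR}\Bigl( \bigcap_{i \in L} \set{X}_i \setminus \set{X}_{\start{n} \setminus (K \cap L)} \Bigr).
\end{equation*}
By unwinding the definitions, an atom $\atom{M}$ lies in $\bigcap_{i \in L}\set{X}_i$ iff $L \subseteq M$, and it lies outside $\set{X}_{\start{n} \setminus (K \cap L)}$ iff $M \subseteq K \cap L$. Hence the region in question is $\{\atom{M} : L \subseteq M \subseteq K \cap L\}$.

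Finally, I would do a simple case split on whether $L \subseteq K$. If $L \subseteq K$, then $K \cap L = L$, forcing $M = L$, so the region is $\{\atom{L}\}$ and the measure equals $\set{\CR}(\atom{L})$. If $L \not\subseteq K$, then $K \cap L \subsetneq L$, so no such $M$ exists, the region is empty, and the measure is $0$. I do not expect a significant obstacle: the whole argument reduces to correctly identifying the region indexed by Hu's theorem, which is a direct combinatorial unpacking once the action is collapsed via associativity.
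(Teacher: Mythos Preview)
Your proof is correct and follows essentially the same approach as the paper: both use Corollary~\ref{cor:how_atoms_look_like} to express $\set{\CR}(\atom{L})$, collapse the two actions via associativity into $X_{\start{n}\setminus(K\cap L)}.\CR\bigl(\bigscolon_{i\in L}X_i\bigr)$, and then invoke Hu's theorem to identify the resulting region. The only cosmetic difference is that the paper, in the case $L\not\subseteq K$, peels off a single $X_l$ with $l\in L\setminus K$ and applies Hu's theorem to $X_l.\CR\bigl(\bigscolon_{l'\in L}X_{l'}\bigr)$ to see the region is empty, whereas you apply Hu's theorem once with the full conditioning set and read off both cases from the explicit description $\{\atom{M}:L\subseteq M\subseteq K\cap L\}$; your treatment is slightly more uniform but not substantively different.
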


\begin{proof}
  By Lemma~\ref{lem:how_atoms_look_like}, we have
  \begin{equation*}
    X_{\start{n} \setminus K}.\set{\CR}(\atom{L}) = X_{\start{n} \setminus K}. \Big( X_{\start{n} \setminus L}.\CR\big(\bigscolon_{l \in L} X_l \big)\Big)  = X_{\start{n} \setminus (L \cap K)}.\CR\big( \bigscolon_{l \in L} X_l \big). \\
  \end{equation*}
  From this, we immediately see the result for the case $L \subseteq K$.
  If $L \nsubseteq K$, then there is $l \in L \setminus K$. 
  Then $l \in \start{n} \setminus (L \cap K)$ and consequently, with $Y \coloneqq X_{\big(\start{n}\setminus{l}\big) \setminus (L \cap K)}$:
  \begin{equation*}
    X_{\start{n} \setminus K}.\set{\CR}(\atom{L}) = Y. \Big( X_l.\CR\big( \bigscolon_{l' \in L} X_{l'})\Big) 
     = Y. \set{\CR} \Bigg( \bigcap_{l' \in L} \set{X}_{l'} \setminus \set{X}_l\Bigg) 
     = Y.\set{\CR}(\emptyset) 
     = 0,
  \end{equation*}
  where we have used Hu's Theorem~\ref{thm:hu_kuo_ting_generalized} in the second step.
  That was to show.
\end{proof}

\begin{lemma}
  \label{lem:pure_combinatorics}
  Let $L \subseteq I$ be two sets. 
  Then
  \begin{equation*}
    \sum_{K: \ L \subseteq K \subseteq I} (-1)^{\num{K}} = (-1)^{\num{L}} \cdot \indic{I = L}.
  \end{equation*}
\end{lemma}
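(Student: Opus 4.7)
The plan is to recognize this as a standard inclusion–exclusion identity over the Boolean lattice and reduce it to the classical fact that $\sum_{J \subseteq S}(-1)^{|J|}$ vanishes unless $S = \emptyset$. The key observation is that the summation variable $K$ ranges over an interval $[L, I]$ in the subset lattice, and such an interval is isomorphic to the full power set of $I \setminus L$ via $K \mapsto K \setminus L$.

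Concretely, I would first perform the change of variables $J \coloneqq K \setminus L$, so that $K = L \cup J$ with $J \subseteq I \setminus L$, and $|K| = |L| + |J|$ since the union is disjoint. Substituting into the sum gives
\begin{equation*}
  \sum_{K : L \subseteq K \subseteq I} (-1)^{|K|} = (-1)^{|L|} \sum_{J \subseteq I \setminus L} (-1)^{|J|}.
\end{equation*}

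Next I would invoke the standard identity $\sum_{J \subseteq S} (-1)^{|J|} = \indic{S = \emptyset}$, which follows either from the binomial theorem applied to $(1 + (-1))^{|S|}$ or from a trivial pairing-off argument (fix any $s \in S$ and pair $J$ with $J \triangle \{s\}$ to exhibit a sign-reversing involution when $S \neq \emptyset$). Applying this with $S = I \setminus L$ — and noting that $I \setminus L = \emptyset$ is equivalent to $I = L$ because $L \subseteq I$ — yields exactly $(-1)^{|L|} \cdot \indic{I = L}$, completing the proof.

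There is no real obstacle here; the lemma is pure combinatorics with no interaction with the monoid action or the measure $\set{\CR}$. The only mild care needed is in verifying that $K \mapsto K \setminus L$ is a bijection between $\{K : L \subseteq K \subseteq I\}$ and $2^{I \setminus L}$, which is immediate.
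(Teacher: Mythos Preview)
Your proof is correct and essentially the same as the paper's: both reduce the sum to the vanishing of $(1-1)^{|I|-|L|}$ via the binomial theorem. The only cosmetic difference is that the paper groups the summands by $|K|$ and counts $\binom{|I|-|L|}{k-|L|}$ directly, whereas you first apply the bijection $K \mapsto K \setminus L$ to the interval $[L,I]$ and then invoke the standard identity $\sum_{J \subseteq S}(-1)^{|J|} = \indic{S=\emptyset}$.
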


\begin{proof}
  We have:
  \begin{align*}
    \sum_{K: \ L \subseteq K \subseteq I} (-1)^{\num{K}} & = \sum_{k = \num{L}}^{\num{I}} (-1)^{k} \cdot \num{\Big\lbrace K \  \big| \  L \subseteq K \subseteq I, \ \num{K} = k\Big\rbrace} \\
    & = \sum_{k = \num{L}}^{\num{I}} (-1)^k \cdot \binom{\num{I} - \num{L}}{k - \num{L}} \\
    & = (-1)^{\num{L}} \cdot \sum_{k = 0}^{\num{I} - \num{L}} \binom{\num{I} - \num{L}}{k} \cdot 1^{\num{I} - \num{L} - k} \cdot (-1)^{k} \\
    & \overset{(\star)}{=} (-1)^{\num{L}} \cdot (1 - 1)^{\num{I} - \num{L}} \\
    & = (-1)^{\num{L}} \cdot  \indic{I = L}.
  \end{align*}
  In $(\star)$, we used the well-known binomial theorem and in the final step that $0^0 = 1$.\footnote{If one is not comfortable with the definition $0^0 = 1$, one can also directly verify the overall result in the case $L = I$.}
\end{proof}

\begin{proof}[Proof of Theorem~\ref{thm:subset_determination}]
  We have
  \begingroup
  \allowdisplaybreaks
  \begin{align*}
    \sum_{K \subseteq I}(-1)^{\num{K} - \num{I}} \cdot X_{\start{n} \setminus K}. \set{\CR}(A) & = 
    \sum_{K \subseteq I}(-1)^{\num{K} - \num{I}} \cdot X_{\start{n} \setminus K}. \Bigg( \sum_{\atom{L} \in A} \set{\CR}(\atom{L})\Bigg)\\
    & = \sum_{\atom{L} \in A} (-1)^{ - \num{I}} \sum_{K \subseteq I}  (-1)^{\num{K}} \cdot X_{\start{n} \setminus K}.\set{\CR}(\atom{L}) \\
    & = \sum_{\atom{L} \in A} (-1)^{- \num{I}} \cdot \Bigg(\sum_{K: \ L \subseteq K \subseteq I} (-1)^{\num{K}} \Bigg) \cdot \set{\CR}(\atom{L}) & \big( \mathrm{Lemma}~\ref{lem:atoms_annihilated}\big) \\
    & = \sum_{\atom{L} \in A} (-1)^{- \num{I}} \cdot (-1)^{\num{L}} \cdot \indic{L = I} \cdot \set{\CR}(\atom{L}) & \big( \mathrm{Lemma}~\ref{lem:pure_combinatorics} \big)\\
    & = (-1)^{- \num{I}} \cdot (-1)^{\num{I}} \cdot \set{\CR}(\atom{I}) & \big( \atom{I} \in A \big)\\
    & = \set{\CR}(\atom{I}). \qedhere
  \end{align*}
  \endgroup
\end{proof}

\begin{remark}
  \label{rem:connection_to_F_classification}
  For the special case that $A = \set{X}$, the theorem says that $\CR\big(X_{\start{n}}\big)$ determines $\set{F}(\atom{I})$ for all atoms $\atom{I} \in \set{X}$.
  If $\mon{M}$ is generated by the $X_1, \dots, X_n$, this means that $\CR\big(X_{\start{n}}\big)$ entirely determines $\CR$.
  In Appendix~\ref{sec:classification_CR}, we generalize this observation and show that if $\top \in \mon{M}$ is any so-called \emph{top element}, then $\CR(\top)$ determines $\CR$ entirely.
  This leads to a one-to-one correspondence between elements of $\gro{G}$ that are \emph{annihilated} by $\top$ and functions $\CR: \mon{M} \to \gro{G}$ satisfying the chain rule.
  We also interpret this result as the vanishing of a cohomology group of degree one. 
\end{remark}

\subsection{\texorpdfstring{$\CR$}--Independence Satisfies the Separoid Axioms}\label{sec:separoids}

Let $\mon{M}$ again be a commutative, idempotent monoid acting on an abelian group $\gro{G}$, and $\CR: \mon{M} \to \gro{G}$ a function satisfying the chain rule Equation~\eqref{eq:cocycle_equationn}.
We now show that the $\CR$-independence introduced in Definition~\ref{def:F_independence} satisfies the separoid axioms.

Our proof of the following proposition, which shows that $(\mon{M}, \indep_{\CR})$ is a separoid, makes extensive use of Hu's Theorem~\ref{thm:hu_kuo_ting_generalized} and the subset determination property, Theorem~\ref{thm:subset_determination}.

\begin{proposition}
  \label{pro:F-separoid}
  Let $\mon{M}$ be a commutative, idempotent monoid acting on the abelian group $\gro{G}$ and $\CR: \mon{M} \to \gro{G}$ a function satisfying the chain rule Equation~\eqref{eq:cocycle_equationn}.
  Let $\indep_\CR$ be the $\CR$-independence relation on $\mon{M}$ from Definition~\ref{def:F_independence}. 
  Then $(\mon{M}, \indep_\CR)$ is a separoid.
\end{proposition}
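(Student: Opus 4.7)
The plan is to verify the five separoid axioms (S1)--(S5) by translating each conditional bivariate $\CR$-term into a region of the $\CR$-diagram via Hu's theorem, Theorem~\ref{thm:hu_kuo_ting_generalized}, and then exploiting subset determination, Theorem~\ref{thm:subset_determination}, together with additivity of the resulting $\gro{G}$-valued measure $\set{\CR}$. The guiding identity is $Z.\CR(X;Y) = \set{\CR}(\set{X}\cap \set{Y}\setminus \set{Z})$ in the diagram built from all elements appearing in a given axiom, which turns each algebraic implication into a transparent set-theoretic containment or decomposition.

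For (S1), I would expand $\CR(XY) = \CR(YX)$ using the chain rule in both orders, derive $\CR(X;Y) = \CR(Y;X)$ from the inductive definition of $\CR_2$, and apply $Z.$ to both sides. For (S2), the assumption $W \precsim Z$ reads $WZ = Z$, so the chain rule $\CR(WZ) = \CR(Z) + Z.\CR(W)$ forces $Z.\CR(W) = 0$; repeating the argument on $\CR(WZY) = \CR(ZY) + ZY.\CR(W)$ with $WZY = ZY$ yields $ZY.\CR(W) = 0$. Plugging both into $Z.\CR(W;Y) = Z.\CR(W) - (ZY).\CR(W)$, which is the inductive definition combined with associativity of the monoid action, gives the required vanishing.

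Axioms (S3) and (S4) reduce to a single observation: in the $\CR$-diagram for $\{W,X,Y,Z\}$, the region $(\set{W}\cup \set{X})\cap \set{Y}\setminus \set{Z}$ whose measure is $Z.\CR(WX;Y)$ contains both $\set{X}\cap \set{Y}\setminus \set{Z}$ (whose measure is $Z.\CR(X;Y)$) and $\set{W}\cap \set{Y}\setminus(\set{X}\cup \set{Z})$ (whose measure is $XZ.\CR(W;Y)$). Hence the hypothesis $Z.\CR(WX;Y) = 0$ propagates to the vanishing of each subregion's measure via subset determination.

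The most delicate axiom is (S5), where the target region must be \emph{assembled} rather than shrunk. I would establish the disjoint decomposition
\begin{equation*}
(\set{W}\cup \set{X})\cap \set{Y}\setminus \set{Z} \;=\; \bigl(\set{W}\cap \set{Y}\setminus (\set{X}\cup \set{Z})\bigr) \;\cup\; \bigl(\set{X}\cap \set{Y}\setminus \set{Z}\bigr),
\end{equation*}
whose two pieces are precisely the regions attached to the hypotheses $W \indep_{\CR} Y \mid XZ$ and $X \indep_{\CR} Y \mid Z$; disjointness is witnessed by the $\set{X}$-exclusion in the first piece against the $\set{X}$-inclusion in the second. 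Additivity of the $\gro{G}$-valued measure $\set{\CR}$ then delivers $Z.\CR(WX;Y) = 0$. The hard part is exactly (S5): subset determination alone is insufficient once one must combine two hypotheses into a larger region, so here the proof leans on the full measure structure granted by Hu's theorem rather than on Theorem~\ref{thm:subset_determination} alone.
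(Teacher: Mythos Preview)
Your proposal is correct and matches the paper's proof almost exactly for (S3), (S4), and (S5): the same subset containments feed into subset determination, and the same disjoint decomposition feeds into additivity for contraction. Your treatments of (S1) and (S2) are slightly more elementary than the paper's --- you argue directly from the chain rule and the inductive definition of $\CR_2$, whereas the paper uniformly invokes Hu's theorem (for (S1)) and Hu plus subset determination (for (S2), via $\set{W}\cap\set{Y}\setminus\set{Z}\subseteq\set{W}\setminus\set{Z}$). Both routes are equally valid; yours avoids building the diagram for the first two axioms, the paper's keeps the argument stylistically uniform.
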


\begin{proof}
  We need to prove the five separoid axioms from Definition~\ref{def:separoid}.
  In proving each axiom, we will use the elements of $\mon{M}$ appearing in the formulas as the elements $X_1, \dots, X_n$ in Hu's Theorem~\ref{thm:hu_kuo_ting_generalized}.
  \begin{enumerate}
    \item Symmetry: for any $X, Y, Z \in \mon{M}$, we have
      \begin{equation*}
	Z.\CR(X; Y) = \set{\CR} \big( \set{X} \cap \set{Y} \setminus \set{Z} \big) = \set{\CR} \big(  \set{Y} \cap \set{X} \setminus \set{Z} \big) = Z.\CR(Y; X),
      \end{equation*}
      showing symmetry $\IndepF{\CR}{X}{Y}{Z} \ \ \Longrightarrow \ \  \IndepF{\CR}{Y}{X}{Z}$.
     \item Redundancy:
       assume $W \precsim Z$, i.e., $WZ = Z$.
       It follows
     \begin{equation*}
       \set{\CR}\big(\set{W} \setminus \set{Z}\big) = Z.\CR(W) \overset{\text{Eq.~\eqref{eq:cocycle_equationn}}}{=} \CR(ZW) - \CR(Z) = \CR(Z) - \CR(Z) = 0.
     \end{equation*}
     Since $\set{W} \cap \set{Y} \setminus \set{Z} \subseteq \set{W} \setminus \set{Z}$, subset determination Theorem~\ref{thm:subset_determination} shows
     \begin{equation*}
       Z.\CR(W;Y) = \set{\CR}\big( \set{W} \cap \set{Y} \setminus \set{Z}\big) = 0
     \end{equation*}
     and thus $\IndepF{\CR}{W}{Y}{Z}$.
     \item Decomposition and weak union: 
       assume $\IndepF{\CR}{WX}{Y}{Z}$.
       Then we have
       \begin{align*}
	\set{\CR} \big( \set{WX} \cap \set{Y} \setminus \set{Z}\big) =  Z.\CR(WX; Y) = 0 . \\
       \end{align*}
       Note that 
       \begin{equation*}
	 \set{X} \cap \set{Y} \setminus \set{Z} \subseteq \set{WX} \cap \set{Y} \setminus \set{Z} \supseteq \set{W} \cap \set{Y} \setminus \set{XZ}. 
       \end{equation*}
       Thus, subset determination Theorem~\ref{thm:subset_determination} shows
       \begin{align*}
	&  Z.\CR(X;Y) = \set{\CR}\big( \set{X} \cap \set{Y} \setminus \set{Z} \big) = 0, \\
	&  (XZ).\CR(W; Y) = \set{\CR}\big( \set{W} \cap \set{Y} \setminus \set{XZ}\big) = 0,
       \end{align*}
       which shows $\IndepF{\CR}{X}{Y}{Z}$ and $\IndepF{\CR}{W}{Y}{XZ}$ and thus both decomposition and weak union.
     \item Contraction: assume $\IndepF{\CR}{W}{Y}{XZ}$ and $\IndepF{\CR}{X}{Y}{Z}$.
       We obtain from Hu's Theorem:
       \begin{align*}
	 Z.\CR(WX; Y) & = \set{\CR} \big( \set{WX} \cap \set{Y} \setminus \set{Z} \big) \\
	 & = \set{\CR}\big( \set{W} \cap \set{Y} \setminus \set{XZ}\big) + \set{\CR}\big( \set{X} \cap \set{Y} \setminus \set{Z} \big) \\
	 & = (XZ).\CR(W; Y) + Z.\CR(X;Y) \\
	 & = 0 + 0 = 0.
       \end{align*}
       Thus, we have proven $\IndepF{\CR}{WX}{Y}{Z}$.
       In the calculation, the second step uses that $\set{\CR}$ is additive over disjoint unions.
   \end{enumerate}
\end{proof}

In the following proposition, we show that $\CR$-independences are preserved under conditioning:

\begin{proposition} 
  \label{pro:weird_conditioning_trick}
  Let $(\mon{M}, \indep_{\CR})$ be the separoid from above.
  Let $X, Y, Z, W \in \mon{M}$.
  Then the following implication holds:
  \begin{equation*}
    \IndepF{\CR}{X}{Y}{Z} \quad \Longrightarrow \quad \IndepF{\CR}{X}{Y}{WZ}.
  \end{equation*}
\end{proposition}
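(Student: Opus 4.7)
The plan is to translate the hypothesis and the conclusion into statements about the $\gro{G}$-valued measure $\set{\CR}$ via Hu's Theorem~\ref{thm:hu_kuo_ting_generalized}, and then invoke subset determination (Theorem~\ref{thm:subset_determination}) to pass from the hypothesis region to the (smaller) conclusion region. This is essentially the same template used in the proof of Proposition~\ref{pro:F-separoid}, applied to the four fixed elements $X, Y, Z, W \in \mon{M}$.

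First, I would unfold the hypothesis: $\IndepF{\CR}{X}{Y}{Z}$ means $Z.\CR(X;Y) = 0$, which by Hu's theorem is
\begin{equation*}
  \set{\CR}\bigl( \set{X} \cap \set{Y} \setminus \set{Z}\bigr) = 0.
\end{equation*}
Next, I would unfold the desired conclusion: $\IndepF{\CR}{X}{Y}{WZ}$ means $(WZ).\CR(X;Y) = 0$, which by Hu's theorem equals
\begin{equation*}
  \set{\CR}\bigl( \set{X} \cap \set{Y} \setminus \set{WZ}\bigr) = \set{\CR}\bigl( \set{X} \cap \set{Y} \setminus (\set{W} \cup \set{Z})\bigr).
\end{equation*}

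The key observation is the set inclusion
\begin{equation*}
  \set{X} \cap \set{Y} \setminus (\set{W} \cup \set{Z}) \;\subseteq\; \set{X} \cap \set{Y} \setminus \set{Z},
\end{equation*}
since enlarging the subtracted set can only shrink the resulting region. Since $\set{\CR}$ vanishes on the right-hand side by hypothesis, subset determination (Theorem~\ref{thm:subset_determination}) implies that it vanishes on every subset, in particular on the left-hand side. This gives $(WZ).\CR(X;Y) = 0$, i.e., $\IndepF{\CR}{X}{Y}{WZ}$.

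I do not expect any real obstacle here; the argument is essentially routine once one has Hu's theorem and subset determination in hand. The only thing worth being careful about is identifying the ``fixed elements'' for Hu's theorem as $X, Y, Z, W$ so that both $\set{X} \cap \set{Y} \setminus \set{Z}$ and $\set{X} \cap \set{Y} \setminus (\set{W} \cup \set{Z})$ live inside the same abstract atom set $\set{XYZW}$ and are governed by a single measure $\set{\CR}$, which is exactly what makes the subset-determination step legitimate.
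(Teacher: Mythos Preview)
Your proposal is correct and matches the paper's proof essentially verbatim: the paper's one-line argument invokes Hu's theorem and subset determination via the inclusion $\set{X} \cap \set{Y} \setminus \set{WZ} \subseteq \set{X} \cap \set{Y} \setminus \set{Z}$, exactly as you do. Your additional remark about fixing $X, Y, Z, W$ as the elements for Hu's theorem is a sensible clarification that the paper leaves implicit.
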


\begin{proof}
  The claim follows from subset determination Theorem~\ref{thm:subset_determination} and Hu's Theorem~\ref{thm:hu_kuo_ting_generalized} by using that $\set{X} \cap \set{Y} \setminus \set{WZ} \subseteq \set{X} \cap \set{Y} \setminus \set{Z}$.
\end{proof}

\begin{remark}
  Note that the preceding proposition is incorrect when working in the probabilistic setting and fixing the probability mass function. 
  For example, the case of a collider
  \begin{equation*}
    \begin{tikzcd}
      X \ar[dr] & & Y \ar[dl] \\
    &  W
    \end{tikzcd}
  \end{equation*}
  in a Bayesian network shows that it is possible that the probabilistic independence $X \indep_P Y$ is true, while $X \indep_P Y \mid W$ is not.
  Nevertheless, we will be able to apply our theory to the probabilistic case in Sections~\ref{sec:specializing_to_probabilistic} and~\ref{sec:slices}.
  \label{rem:independence_propagates}
\end{remark}

\subsection{Conditional Mutual \texorpdfstring{$\CR$}--Independences and \texorpdfstring{$\CR$}--Dual Total Correlation}

The following definition specializes conditional mutual independences, Definition~\ref{def:mutual_independence}, to the case that the independence relation is given by $\indep_\CR$:

\begin{definition}
  [Conditional Mutual $\CR$-Independence]
  \label{def:conditional_mutual_independence_F}
  Let $X_1, \dots, X_n, Y \in \mon{M}$.
  If $X_1, \dots, X_n \in \mon{M}$ are mutually independent given $Y \in \mon{M}$ with respect to $\indep_{\CR}$, then we write
  \begin{equation*}
    \bigindepF{\CR}_{i = 1}^{n} X_i \ | \ Y.
  \end{equation*}
  We call this a \emph{conditional mutual $\CR$-independence}, and we say $X_1, \dots, X_n$ are mutually $\CR$-independent given $Y$.
\end{definition}

As in Proposition~\ref{pro:weird_conditioning_trick}, we get the following peculiar implication which we will later use:

\begin{proposition}
  \label{pro:weird_mutual_conditioning_trick}
  Let $X_1, \dots, X_n, Y, W \in \mon{M}$.
  Then we have the following implication:
  \begin{equation*}
    \bigindepF{\CR}_{i = 1}^{n} X_i \ | \ Y \quad \Longrightarrow \quad \bigindepF{\CR}_{i = 1}^{n} X_i \ | \ WY.
  \end{equation*}
\end{proposition}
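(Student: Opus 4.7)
The plan is to unfold the definition of conditional mutual $\CR$-independence and reduce the claim to $n$ applications of the single-variable conditioning result, Proposition~\ref{pro:weird_conditioning_trick}. Concretely, by Definition~\ref{def:mutual_independence} specialized to $\indep_{\CR}$ (Definition~\ref{def:conditional_mutual_independence_F}), the hypothesis $\bigindepF{\CR}_{i=1}^n X_i \ | \ Y$ is equivalent to the statement that for every $i \in \start{n}$ the pairwise independence
\begin{equation*}
  \IndepF{\CR}{X_i}{X_{\start{n}\setminus i}}{Y}
\end{equation*}
holds, and the desired conclusion $\bigindepF{\CR}_{i=1}^n X_i \ | \ WY$ is equivalent to the family of pairwise independences $\IndepF{\CR}{X_i}{X_{\start{n}\setminus i}}{WY}$ for $i \in \start{n}$.

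For each fixed $i$, I would apply Proposition~\ref{pro:weird_conditioning_trick} with the substitutions $X \leftarrow X_i$, $Y \leftarrow X_{\start{n}\setminus i}$, $Z \leftarrow Y$, and the extra conditioning variable taken to be $W$. This immediately produces $\IndepF{\CR}{X_i}{X_{\start{n}\setminus i}}{WY}$ from the hypothesis, and since $i$ was arbitrary we get the full family, hence $\bigindepF{\CR}_{i=1}^n X_i \ | \ WY$.

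There is essentially no obstacle here: the real work has already been done in Proposition~\ref{pro:weird_conditioning_trick}, whose proof rests on Hu's Theorem~\ref{thm:hu_kuo_ting_generalized} together with subset determination, Theorem~\ref{thm:subset_determination}, via the set-theoretic inclusion $\set{X_i} \cap \set{X_{\start{n}\setminus i}} \setminus \set{WY} \subseteq \set{X_i} \cap \set{X_{\start{n}\setminus i}} \setminus \set{Y}$. Thus the proof is a one-line reduction: apply Proposition~\ref{pro:weird_conditioning_trick} to each of the $n$ pairwise independences constituting the mutual independence, and reassemble them using Definition~\ref{def:mutual_independence}.
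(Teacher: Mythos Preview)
Your proposal is correct and matches the paper's own proof essentially verbatim: the paper also just says the result follows immediately from Proposition~\ref{pro:weird_conditioning_trick} applied to each of the pairwise $\CR$-independences that constitute the conditional mutual $\CR$-independence.
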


\begin{proof}
  This follows immediately from Proposition~\ref{pro:weird_conditioning_trick} using that conditional mutual $\CR$-independences are defined as a combination of conditional pairwise $\CR$-independences.
\end{proof}

As in Remark~\ref{rem:independence_propagates}, we mention that the preceding proposition does not hold in the probabilistic setting when \emph{fixing} the underlying probability mass function.

\subsubsection*{Warmup: Characterizing Pairwise Conditional $\CR$-Independence}

\begin{proposition}
  \label{pro:pairwise_independence_characterization}
  Let $X_1, X_2, Y \in \mon{M}$.
  Then the following are equivalent:
  \begin{enumerate}
     \item $\IndepF{\CR}{X_1}{X_2}{Y}$;
     \item $Y.\CR(X_1X_2) = Y.\CR(X_1) + Y.\CR(X_2)$;
     \item $Y.\CR(X_1X_2) = (YX_2).\CR(X_1) + (YX_1).\CR(X_2)$;
     \item $Y.\CR(X_1) = (YX_2).\CR(X_1)$;
     \item $Y.\CR(X_2) = (YX_1).\CR(X_2)$.
  \end{enumerate}
\end{proposition}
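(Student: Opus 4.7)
The plan is to reduce all five conditions to the vanishing of the single central atom $\set{\CR}(\set{X_1} \cap \set{X_2} \setminus \set{Y}) = Y.\CR(X_1; X_2)$, which by definition is statement~1. By Hu's Theorem~\ref{thm:hu_kuo_ting_generalized} applied to the three elements $X_1, X_2, Y$, every quantity appearing in statements~1--5 is the $\gro{G}$-valued measure $\set{\CR}$ of some union of atoms in the $\CR$-diagram, so each equivalence can be verified by identifying those regions and computing.

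I would establish the equivalence of 1 with each of 2--5 separately, rather than chasing a cycle of implications, since once the measure-theoretic picture is set up the computations are parallel. First I would record the identifications
\begin{align*}
Y.\CR(X_1 X_2) &= \set{\CR}\big((\set{X_1} \cup \set{X_2}) \setminus \set{Y}\big), \\
Y.\CR(X_i) &= \set{\CR}(\set{X_i} \setminus \set{Y}), \\
(Y X_j).\CR(X_i) &= \set{\CR}\big(\set{X_i} \setminus (\set{Y} \cup \set{X_j})\big),
\end{align*}
together with $Y.\CR(X_1;X_2) = \set{\CR}(\set{X_1} \cap \set{X_2} \setminus \set{Y})$, all of which are immediate from Hu's Theorem. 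Then for each of 2--5 I would decompose the regions on both sides into their constituent atoms (there are only three relevant atoms outside of $\set{Y}$, namely $\atom{X_1}$, $\atom{X_2}$, and $\atom{X_1 X_2}$ under the natural naming convention) and use additivity of the measure $\set{\CR}$ to observe that, after cancellation, the equality in each of 2--5 is precisely $\set{\CR}(\set{X_1} \cap \set{X_2} \setminus \set{Y}) = 0$.

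Alternatively — and equivalently — one can avoid drawing the diagram explicitly by applying the chain rule directly. The chain rule gives $Y.\CR(X_1 X_2) = Y.\CR(X_1) + (Y X_1).\CR(X_2)$, and the inductive definition of $\CR_2$ together with the symmetry $\CR(X_1;X_2) = \CR(X_2;X_1)$ (which itself follows from Hu's Theorem) yields both $Y.\CR(X_1;X_2) = Y.\CR(X_1) - (Y X_2).\CR(X_1)$ and $Y.\CR(X_1;X_2) = Y.\CR(X_2) - (Y X_1).\CR(X_2)$. Substituting these identities into the right-hand sides of 2, 3, 4, 5 and simplifying shows that each equation differs from $0 = 0$ by exactly one copy of $Y.\CR(X_1;X_2)$.

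There is no real obstacle here; the only thing to be careful about is the bookkeeping of which atoms lie in each region, and the observation that the symmetry of $\CR(X_1;X_2)$ under swapping $X_1 \leftrightarrow X_2$ is needed to obtain statement~5 from statement~4 (or directly from statement~1). I would state this symmetry once at the beginning and then write up the five equivalences in a uniform fashion.
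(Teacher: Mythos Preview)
Your proposal is correct and takes essentially the same approach as the paper: both use Hu's Theorem to express each term in statements 2--5 as the $\set{\CR}$-measure of a region, and then decompose those regions into atoms to see that each equality reduces exactly to $Y.\CR(X_1;X_2)=0$. The paper writes out the four decomposition identities explicitly rather than naming the three atoms, but the content is identical, and your observation about needing the symmetry $\CR(X_1;X_2)=\CR(X_2;X_1)$ from Hu's Theorem is exactly what the paper uses implicitly.
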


\begin{proof}
  Using Hu's Theorem~\ref{thm:hu_kuo_ting_generalized} with elements $X_1, X_2, Y$, we generally have
  \begin{align*}
    Y.\CR(X_1X_2) & = \set{\CR} \Big( \set{X}_1 \cup \set{X}_2 \setminus \set{Y}\Big) \\
    & = \set{\CR} \Big( \set{X}_1 \setminus \set{Y}\Big) + \set{\CR} \Big( \set{X}_2 \setminus \set{Y} \Big) - \set{\CR} \Big( \set{X}_1 \cap \set{X}_2 \setminus \set{Y} \Big) \\
    & = Y.\CR(X_1) + Y.\CR(X_2) - Y.\CR(X_1; X_2).
  \end{align*}
  This shows the equivalence of 1 and 2.
  Similarly, one can show the decomposition
  \begin{equation*}
    Y.\CR(X_1X_2) = (YX_2).\CR(X_1) + (YX_1).\CR(X_2) + Y.\CR(X_1; X_2),
  \end{equation*}
  which shows the equivalence of 1 and 3.
  Finally, we similarly obtain decompositions
  \begin{align*}
    Y.\CR(X_1) & = (YX_2).\CR(X_1) + Y.\CR(X_1; X_2); \\
    Y.\CR(X_2) & = (YX_1).\CR(X_2) + Y.\CR(X_1; X_2).
  \end{align*}
  The first decomposition shows the equivalence of 1 and 4, and the second the one of 1 and 5.
\end{proof}

Of interest to us are especially properties 2 and 3, which can equivalently be expressed as the following vanishing conditions:
\begin{align*}
  & Y. \Big[ \CR(X_1) + \CR(X_2) - \CR(X_1X_2) \Big] = 0; \\
  & Y. \Big[\CR(X_1X_2) - \big( X_2.\CR(X_1) + X_1.\CR(X_2)\big) \Big] = 0.
\end{align*}
The concepts of $\CR$--total correlation and $\CR$--dual total correlations provide natural generalizations of the quantities at the left-hand-sides of these conditions.
These generalize total correlation~\citep{Watanabe1960} and dual total correlation~\citep{Han1978} by replacing Shannon entropy $\Ent$ in the defining expressions by $\CR$:

\begin{definition}
  [$\CR$--(Dual )Total Correlation]
  \label{def:(dual)_total_correlation}
  Let $X_1, \dots, X_n \in \mon{M}$.
  Then their \emph{$\CR$--total correlation} is given by
  \begin{equation*}
    \TC{\CR}(X_1; \dots ; X_n) \coloneqq \sum_{i = 1}^{n} \CR(X_i) - \CR\big( X_{[n]}\big).
  \end{equation*}
  Similarly, the \emph{$\CR$--dual total correlation} is given by
  \begin{equation*}
    \DTC{\CR}(X_1; \dots ; X_n) \coloneqq \CR\big( X_{\start{n}}\big) - \sum_{i = 1}^{n} X_{\start{n} \setminus i}.\CR(X_i),
  \end{equation*}
  where $\start{n} \setminus i \coloneqq \start{n} \setminus \{i\}$.
  If $I = \big\lbrace i_1 < \dots < i_q \big\rbrace \subseteq \start{n}$, then we also write
  \begin{align*}
    \TC{\CR}\big( \bigscolon_{i \in I} X_i \big) & \coloneqq \TC{\CR}(X_{i_1}; \dots ; X_{i_q}), \\
    \DTC{\CR}\big( \bigscolon_{i \in I} X_i \big) & \coloneqq \DTC{\CR}(X_{i_1}; \dots ; X_{i_q}).
  \end{align*}
\end{definition}

Similarly to Proposition~\ref{pro:pairwise_independence_characterization}, we want to use $\CR$--total correlation and $\CR$--dual total correlation to characterize conditional \emph{mutual} $\CR$-independences. 
We will focus on the case of $\CR$--dual total correlation, which is slightly easier and works in full generality.
For the interested reader, we consider the case of $\CR$--total correlation in Appendix~\ref{sec:total_correlation_charac}.
This only provides a valid characterization in the case that the group $\gro{G}$ is \emph{torsion-free}, as Example~\ref{exa:torsion_example} will demonstrate.

\subsubsection*{Characterization using \texorpdfstring{$\CR$}--Dual Total Correlation}\label{sec:dual_total_correlation_charac}

Let $X_1, \dots, X_n \in \mon{M}$. For $I = \big\lbrace i_1 < \dots < i_q\big\rbrace \subseteq \start{n}$, we can then consider the interaction term
$\CR(X_{i_1}; \dots ; X_{i_q})$.
Recall the notation $\CR\big(\bigscolon_{i \in I} X_i \big) \coloneqq \CR\big(X_{i_1}; \dots ; X_{i_q}\big)$.

\begin{theorem}
  \label{thm:charac_using_dual}
  Let $\mon{M}$ be a commutative, idempotent monoid acting additively on an abelian group $\gro{G}$, and $\CR: \mon{M} \to \gro{G}$ a function satisfying the chain rule Equation~\eqref{eq:cocycle_equationn}.
  Let $X_1, \dots, X_n, Y \in \mon{M}$.
  Then the following properties are equivalent:
  \begin{enumerate}
    \item $\bigindepF{\CR}_{i = 1}^{n} X_i \ | \ Y$;
    \item $Y.\DTC{\CR}\big( \bigscolon_{i \in \start{n}} X_i\big) = 0$;
    \item $\big( YX_{\start{n} \setminus I}\big).\CR\big(\bigscolon_{i \in I} X_i \big) = 0$ for all $I \subseteq \start{n}$ with $\num{I} \geq 2$;
    \item $Y.\CR\big( X_i ; X_{\start{n} \setminus i}\big) = 0$ for all $i = 1, \dots, n$.
  \end{enumerate} 
\end{theorem}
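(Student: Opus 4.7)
The plan is to translate each of the four statements into the vanishing of the $\gro{G}$-valued measure $\set{\CR}$ on a prescribed collection of atoms of the $\CR$-diagram for $X_1, \dots, X_n, Y$, and then exploit subset determination (Theorem~\ref{thm:subset_determination}) to move freely between coarser regions and their constituent atoms. The equivalence $(1) \Leftrightarrow (4)$ is purely definitional: by Definition~\ref{def:conditional_mutual_independence_F} unpacked through Definitions~\ref{def:mutual_independence} and~\ref{def:F_independence}, $\bigindepF{\CR}_{i = 1}^{n} X_i \ | \ Y$ is literally the family of pairwise $\CR$-independences $Y.\CR(X_i; X_{[n] \setminus i}) = 0$ for $i = 1, \dots, n$, so this equivalence requires no work.

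The heart of the proof is the atomic dictionary for $(2)$, $(3)$, $(4)$. Writing $A \coloneqq \{\atom{I} : I \subseteq [n], \, |I| \geq 2\}$ and $A_i \coloneqq \{\atom{I} \in A : i \in I\}$, direct applications of Hu's theorem (Theorem~\ref{thm:hu_kuo_ting_generalized}) should yield
\begin{equation*}
Y.\DTC{\CR}\big(\bigscolon_{i \in [n]} X_i\big) = \set{\CR}(A), \qquad Y.\CR(X_i; X_{[n] \setminus i}) = \set{\CR}(A_i),
\end{equation*}
and, for each $I \subseteq [n]$ with $|I| \geq 2$,
\begin{equation*}
(YX_{[n] \setminus I}).\CR\big(\bigscolon_{i \in I} X_i\big) = \set{\CR}(\atom{I}).
\end{equation*}
The first identity is a cancellation: $Y.\CR(X_{[n]})$ accumulates $\set{\CR}(\atom{J})$ over all nonempty $J \subseteq [n]$, while each correction $Y.X_{[n] \setminus i}.\CR(X_i)$ isolates the singleton atom $\atom{\{i\}}$, so the subtraction leaves exactly the terms with $|J| \geq 2$. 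The third identity uses that $\bigcap_{i \in I} \set{X}_i$ consists of atoms $\atom{J}$ with $I \subseteq J$ and that further removing $\set{Y} \cup \set{X}_{[n] \setminus I}$ forces $J = I$; the second identity is analogous.

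With these translations in hand the equivalences follow mechanically. Additivity of $\set{\CR}$ on the disjoint unions $A = \bigcup_I \{\atom{I}\}$ and $A_i = \bigcup_{I \ni i} \{\atom{I}\}$ yields $(3) \Rightarrow (2)$ and $(3) \Rightarrow (4)$ for free; the converses come from subset determination, with $(2) \Rightarrow (3)$ applying Theorem~\ref{thm:subset_determination} to $A$, and $(4) \Rightarrow (3)$ applying it to each $A_i$ together with the observation that every atom $\atom{I} \in A$ lies in $A_i$ for any $i \in I$. The main obstacle I anticipate is the bookkeeping behind the three atomic identifications --- in particular the collapse $(YX_{[n] \setminus I}).\CR(\bigscolon_{i \in I} X_i) = \set{\CR}(\atom{I})$; once this dictionary is in place, Theorem~\ref{thm:subset_determination} cleanly plays the role that inequality-based arguments play in~\cite{Yeung2002a} and that are unavailable in the present generality.
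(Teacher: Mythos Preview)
Your proposal is correct. The atomic dictionary you set up is exactly the one the paper uses (after renaming $Y = X_{n+1}$ and working with the $\CR$-diagram on $n+1$ elements), and your identifications $Y.\DTC{\CR} = \set{\CR}(A)$, $Y.\CR(X_i;X_{[n]\setminus i}) = \set{\CR}(A_i)$, and $(YX_{[n]\setminus I}).\CR(\bigscolon_{i\in I}X_i) = \set{\CR}(\atom{I})$ are precisely the ones the paper computes.

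The one genuine difference is in how the cycle is closed. The paper proves $1 \Rightarrow 2$ by induction on $n$: it peels off $X_n$ using Proposition~\ref{pro:mutual_independence_characterization}, then invokes Proposition~\ref{pro:weird_mutual_conditioning_trick} (mutual $\CR$-independence survives extra conditioning) to apply the induction hypothesis to the first $n-1$ variables conditioned on $YX_n$. You instead route the implication through $(1)\Leftrightarrow(4)\Rightarrow(3)\Rightarrow(2)$, using subset determination on each $A_i$ for the step $(4)\Rightarrow(3)$. Your route is slightly more uniform --- everything reduces to Hu's theorem plus Theorem~\ref{thm:subset_determination}, with no induction and no appeal to the separoid-style Proposition~\ref{pro:weird_mutual_conditioning_trick} --- while the paper's route makes the connection to the separoid machinery from Section~\ref{sec:mrfs_in_separoids} more explicit. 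Either way the core engine is the same.
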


\begin{proof}
  Assume 1. 
  We prove 2 by induction over $n$, with the case $n = 2$ corresponding to the equivalence of 1 and 3 in Proposition~\ref{pro:pairwise_independence_characterization}.
  Let $n \geq 3$.
  By Proposition~\ref{pro:mutual_independence_characterization} we have
  \begin{equation*}
    \bigindepF{\CR}_{i = 1}^{n-1} X_i \ | \ Y, \quad \IndepF{\CR}{X_n}{X_{\start{n-1}}}{Y}.
  \end{equation*}
  The first $\CR$-independence implies by Proposition~\ref{pro:weird_mutual_conditioning_trick} the following: $\bigindepF{\CR}_{i = 1}^{n-1} X_i \ | \ YX_n.$
  By induction, first using the pairwise and then the mutual case,\footnote{Similarly to Proposition~\ref{pro:pairwise_independence_characterization}, we write the vanishing of the $\CR$--dual total correlation as a different equality by bringing a term to the other side.} we then obtain:
  \begin{align*}
    Y.\CR\big( X_{\start{n}}\big) & = (YX_{n}).\CR\big(X_{\start{n-1}}\big) + \big( YX_{\start{n-1}}\big).\CR(X_n) \\
    & = \sum_{i = 1}^{n-1} \big( Y X_n X_{\start{n-1} \setminus i}\big).\CR(X_i) + \big( YX_{\start{n-1}}\big).\CR(X_n) \\
    & = Y. \bigg( \sum_{i = 1}^{n} X_{\start{n} \setminus i}.\CR(X_i) \bigg).
  \end{align*}
  That shows 2.

  For the rest of the proof, write $X_{n+1} \coloneqq Y$ and assume that $X_1, \dots, X_{n+1}$ are the elements in Hu's Theorem~\ref{thm:hu_kuo_ting_generalized}.

  Now assume 2. 
  Note that
  \begin{align*}
   0 =  Y.\DTC{\CR}\big(\bigscolon_{i \in \start{n}} X_i \big) & = X_{n+1}.\CR\big(X_{\start{n}}\big) - \sum_{i = 1}^{n} X_{\start{n+1} \setminus i}.\CR(X_i) \\
    & = \set{\CR}\Big( \set{X}_{\start{n}} \setminus \set{X}_{n+1}\Big) - \sum_{i = 1}^{n} \set{\CR}(\atom{i}) \\
    & = \set{\CR}\Big( \big\lbrace \atom{I} \mid I \subseteq \start{n}, \num{I} \geq 2 \big\rbrace \Big),
  \end{align*}
  where we used Lemma~\ref{lem:how_atoms_look_like} in the second step.
  Let $I \subseteq \start{n}$ with $\num{I} \geq 2$. Then, subset determination Theorem~\ref{thm:subset_determination} and the same corollary again imply:
  \begin{align*}
    0 = \set{\CR}(\atom{I}) = X_{\start{n+1} \setminus I}.\CR\big( \bigscolon_{i \in I} X_i\big) = \big( YX_{\start{n} \setminus I}\big).\CR\big(\bigscolon_{i \in I} X_i\big).
  \end{align*}
  That is precisely 3.

  Now, assume 3.
  To prove 4, for symmetry reasons it is enough to show that $Y.\CR\big(X_n, X_{\start{n-1}} \big) = 0$.
  We have
  \begin{align*}
    Y.\CR\big( X_n; X_{\start{n-1}}\big) & = X_{n+1}.\CR\big( X_n; X_{\start{n-1}}\big) \\
    & = \set{\CR}\Big( \set{X}_n \cap  \set{X}_{\start{n-1}} \setminus \set{X}_{n+1}\Big) \\
    & = \sum_{\substack{I \subseteq \start{n}:  \ n \in I, \\ I \cap \start{n-1} \neq \emptyset}} \set{\CR}(\atom{I}) \\
    & = \sum_{\substack{I \subseteq \start{n}: \ n \in I,  \\ I \cap \start{n-1} \neq \emptyset}} X_{\start{n+1} \setminus I}.\CR\big(\bigscolon_{i \in I} X_i \big) \\
    & = \sum_{\substack{I \subseteq \start{n}:  \ n \in I, \\ I \cap \start{n-1} \neq \emptyset}} \big( YX_{\start{n} \setminus I}\big).\CR\big(\bigscolon_{i \in I} X_i \big) \\
    & = 0,
  \end{align*}
  where the fourth step used Lemma~\ref{lem:how_atoms_look_like}.
  The last step follows since for all $I$ over which we sum, we necessarily have $\num{I} \geq 2$.

  Assuming 4, 1 follows immediately by the definitions of conditional mutual and pairwise $\CR$-independences.
\end{proof}

\subsection{Full Conditional Mutual \texorpdfstring{$\CR$}--Independences}\label{sec:full_conditional_independences}

We now build on Section~\ref{sec:dual_total_correlation_charac}.
We will consider mutual $\CR$-independences of variables that are \emph{themselves} products of several variables. 
The main result of this section, Theorem~\ref{thm:FCMIs_characterization}, will generalize the equivalence between properties 1 and 3 in Theorem~\ref{thm:charac_using_dual}.

Fix $n \geq 0$ and $X_1, \dots, X_n \in \mon{M}$.
This gives rise to a set $\set{X} = \set{X}(n)$ of $2^{n} - 1$ atoms according to Equation~\eqref{eq:Sigma_definition} and a $\gro{G}$-valued measure $\set{F}: 2^{\set{X}} \to \gro{G}$ according to Hu's Theorem~\ref{thm:hu_kuo_ting_generalized}.

In the following, if $W_i$ are sets indexed with $i \in I$, then $W_I$ denotes $\bigcup_{i \in I} W_i$.
We now define $\CR$-FCMIs, specializing the notion of FCMIs from Definition~\ref{def:fcmi} to the setting with the independence relation given as $\indep_{\CR}$.
We will, however, first define the notion of a conditional partition, since this will prove valuable when studying the effect of $\CR$-FCMIs on $\CR$-diagrams:

\begin{definition}
  [Conditional Partition]
  \label{def:conditional_partition}
  Let $q \geq 1$, $L_i \subseteq \start{n}$ for all $i \in \start{q}$ and $J \subseteq \start{n}$.
  Set $L \coloneqq L_{\start{q}} = \bigcup_{i = 1}^{q} L_i$.
  Assume that the $L_i$ and $J$ are all pairwise disjoint and cover $[n]$.
  Then the family $\FCMI{K} \coloneqq \big( J, L_i, 1 \leq i \leq q \big)$
  is called a \emph{conditional partition} of $\start{n}$.\footnote{We allow $J$ and the $L_i$ to be empty, different from usual partitions.} 
\end{definition}

\begin{definition}
  [Full Conditional Mutual $\CR$-Independence ($\CR$-FCMI)]
  \label{def:f_fcmi}
  Let $\FCMI{K} = \big( J, L_i, 1 \leq i \leq q \big)$ be a conditional partition with $q \geq 2$.
  Then an FCMI of the form
  \begin{equation}\label{eq:first_fcmi}
    \bigindepF{\CR}_{i = 1}^{q} X_{L_i} \ | \ X_J
  \end{equation}
  is called a full conditional mutual $\CR$-independence (with respect to the previously fixed elements $X_1, \dots, X_n$) --- or $\CR$-FCMI for short.
  If this $\CR$-FCMI holds, then we also say that the conditional partition $\FCMI{K} = \big( J, L_i, 1 \leq i \leq q \big)$ \emph{induces} an $\CR$-FCMI (with respect to the previously fixed elements $X_1, \dots, X_n$).
\end{definition}

\begin{definition}
  [Image of a Conditional Partition]
  \label{def:image_of_FCMIs}
  Let $\FCMI{K} = \big( J, L_i, 1 \leq i \leq q \big)$ be a conditional partition of $\start{n}$ with $q \geq 2$.
  Then its image is defined as
  \begin{equation*}
    \IM(\FCMI{K}) \coloneqq \bigg\lbrace \atom{W} \in \set{X}(n) \ \Big| \ \exists I \subseteq \start{q}, \num{I} \geq 2, \forall i \in I \ \ \exists \emptyset \neq W_i \subseteq L_i: W = W_I = \bigcup_{i \in I} W_i \bigg\rbrace ,
  \end{equation*}
  i.e., as a certain set of atoms in $\set{X}(n)$.
\end{definition}

\begin{lemma}  
  \label{lem:atoms_in_peculiar_set}
  Let $q \geq 1$ and $\FCMI{K} = \big( J, L_i, 1 \leq i \leq q\big)$ a conditional partition of $\start{n}$.
  Let $\emptyset \neq I \subseteq \start{q}$. Then the equality
  \begin{equation*}
    A_I \coloneqq \bigcap_{i \in I} \set{X}_{L_i} \ \setminus \  \set{X}_{J \cup L \setminus L_I} = \bigg\lbrace \atom{W} \in \set{X}(n) \  \Big| \ \forall i \in I \  \exists \emptyset \neq W_i \subseteq L_i: W = W_I = \bigcup_{i \in I}W_i\bigg\rbrace
  \end{equation*}
  holds.
  Furthermore, if $q \geq 2$, then we have $\IM(\FCMI{K}) = \bigcup_{I \subseteq \start{q}: \ \num{I} \geq 2} A_I$.
\end{lemma}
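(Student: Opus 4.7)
The plan is to directly translate both sides of the claimed equality into conditions on the underlying index set $W$ of an atom $\atom{W}$ and then exploit the disjoint-cover property of the conditional partition $\FCMI{K}$. No subset-determination or Hu-theoretic machinery is needed; the content is purely the combinatorics of disjoint covers.

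First I would rewrite membership in $A_I$. By definition $\atom{W} \in \set{X}_k$ iff $k \in W$, whence $\atom{W} \in \set{X}_S$ iff $W \cap S \neq \emptyset$. Unfolding $A_I = \bigcap_{i \in I} \set{X}_{L_i} \setminus \set{X}_{J \cup L \setminus L_I}$ then gives two conditions on $W$: (i) $W \cap L_i \neq \emptyset$ for every $i \in I$, and (ii) $W \cap \bigl(J \cup (L \setminus L_I)\bigr) = \emptyset$. Because $\FCMI{K}$ is a conditional partition, $\start{n}$ is the disjoint union of $J$ and the $L_i$, so the complement of $J \cup L_{\start{q} \setminus I}$ inside $\start{n}$ is exactly $L_I$; thus (ii) is equivalent to $W \subseteq L_I$.

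Next I would show that (i) together with $W \subseteq L_I$ is equivalent to the existence of nonempty subsets $W_i \subseteq L_i$, $i \in I$, with $W = \bigcup_{i \in I} W_i$. For the forward implication, set $W_i \coloneqq W \cap L_i$: condition (i) yields $W_i \neq \emptyset$, and since $L_I$ decomposes as the pairwise disjoint union of the $L_i$, the inclusion $W \subseteq L_I$ gives $W = \bigcup_{i \in I}(W \cap L_i) = \bigcup_{i \in I} W_i$. Conversely, given such $W_i$, one has $W \subseteq L_I$ directly, and pairwise disjointness of $L_i, L_j$ for $i \neq j$ forces $W \cap L_i = W_i \neq \emptyset$, recovering (i). This establishes the first equality of the lemma.

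The supplementary statement $\IM(\FCMI{K}) = \bigcup_{I \subseteq \start{q},\,\num{I} \geq 2} A_I$ then follows by inspection: by the characterization just established, the right-hand side is exactly the set of atoms $\atom{W}$ for which some $I \subseteq \start{q}$ with $\num{I} \geq 2$ and nonempty $W_i \subseteq L_i$ exist with $W = \bigcup_{i \in I} W_i$, which is literally the defining condition of $\IM(\FCMI{K})$ in Definition~\ref{def:image_of_FCMIs}. The main ``obstacle'' is nothing more than careful bookkeeping in the translation between circle-intersection language and underlying index-set language.
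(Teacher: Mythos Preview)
Your proof is correct and follows essentially the same approach as the paper: both translate membership in $A_I$ into the two conditions $W \cap L_i \neq \emptyset$ for all $i \in I$ and $W \cap (J \cup L \setminus L_I) = \emptyset$, then set $W_i \coloneqq W \cap L_i$ and use the disjoint-cover property of the conditional partition to pass between the two descriptions. The only cosmetic difference is that you phrase condition (ii) as $W \subseteq L_I$ explicitly, whereas the paper writes the equivalent chain $W = W \cap (J \cup L) = W \cap L_I$.
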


\begin{proof}
  Let $\atom{W}$ be an atom, where by definition $\emptyset \neq W \subseteq \start{n} = J \cup L$.
  Then we have
  \begin{align*}
    \atom{W} \in A_I \quad & \Longleftrightarrow \quad  \forall i \in I \ \  \exists l \in L_i: p_W \in \set{X}_l \ \ \wedge \ \ \forall l \in J \cup L \setminus L_I: p_W \notin \set{X}_l \\
    & \Longleftrightarrow \quad \forall i \in I: \emptyset \neq W \cap L_i \ \ \wedge \ \ W \cap \big( J \cup L \setminus L_I\big) = \emptyset.
  \end{align*}
  Now, assume $\atom{W} \in A_I$ and let $W_i \coloneqq W \cap L_i \neq \emptyset$ for $i \in I$.
  Then from the preceding characterization, we obtain:
  \begin{equation*}
    W = W \cap (J \cup L) = W \cap L_I = W \cap \bigcup_{i \in I} L_i = \bigcup_{i \in I} W_i = W_I
  \end{equation*}
  and thus $\atom{W} = \atom{W_I}$ is in the set at the right-hand-side.

  If, vice versa, $\atom{W} = \atom{W_I}$ is of the stated form of the set at the right-hand-side, then $W \cap L_i = W_i \neq \emptyset$ for all $i \in I$ and $W \cap \big( J \cup L \setminus L_I\big) = \emptyset$, and thus $\atom{W} \in A_I$.

  The last statement follows immediately from the definition of $\IM(\FCMI{K})$ and what we just showed.
\end{proof}

Before we state the following proposition, we remind of the fact that, by Lemma~\ref{lem:how_atoms_look_like}, we have
\begin{equation*}
  \CR\big( X_{\start{n}}\big) = \set{\CR}\big( \set{X} \big) = \sum_{\atom{I} \in \set{X}} \set{\CR}(\atom{I}) = \sum_{\emptyset \neq I \subseteq \start{n}} X_{\start{n} \setminus I} . \CR\big( \bigscolon_{i \in I} X_i\big).
\end{equation*}

The following proposition generalizes this to the case that the left-hand-side is itself a ``term of higher degree'':

\begin{proposition}
  \label{pro:another_sum_formula}
  Let $q \geq 1$, and let $L_1, \dots, L_q \subseteq \start{n}$ be pairwise disjoint sets.
  For any $\emptyset \neq I \subseteq \start{q}$, we have
  \begin{equation*}
    \CR\big(  \bigscolon_{i \in I} X_{L_i}\big) = \sum_{\substack{(W_i)_{i \in I}: \\ \forall i: \ \emptyset \neq W_i \subseteq L_i}} X_{L_I \setminus W_I}. \CR\big( \bigscolon_{w \in W_I} X_{w}\big).\footnotemark
  \end{equation*}
  \footnotetext{For mitigating confusion, note that each $W_i$ is itself a set, and so each $w \in W_I$ is an element contained in exactly one of the $W_i$.}
\end{proposition}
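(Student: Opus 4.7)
The plan is to apply Hu's Theorem~\ref{thm:hu_kuo_ting_generalized} to both sides, express each as a sum of atomic values $\set{\CR}(\atom{V})$, and check that the two sums range over the same atoms.

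First I would rewrite the left-hand side. By Hu's Theorem (with conditioning set $J = \emptyset$),
\begin{equation*}
  \CR\big(\bigscolon_{i \in I} X_{L_i}\big) = \set{\CR}\Bigg(\bigcap_{i \in I} \set{X}_{L_i}\Bigg),
\end{equation*}
and an atom $\atom{V}$ lies in $\bigcap_{i \in I}\set{X}_{L_i}$ exactly when $V \cap L_i \neq \emptyset$ for every $i \in I$. Since the $L_i$ are pairwise disjoint, setting $W_i := V \cap L_i$ and $W_I := \bigcup_{i \in I} W_i = V \cap L_I$ and $V' := V \setminus L_I$ gives a bijection between such atoms $\atom{V}$ and pairs $\big((W_i)_{i \in I}, V'\big)$ with $\emptyset \neq W_i \subseteq L_i$ and $V' \subseteq [n] \setminus L_I$.

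Next I would rewrite each term on the right-hand side. Again by Hu's Theorem,
\begin{equation*}
  X_{L_I \setminus W_I}.\CR\big(\bigscolon_{w \in W_I} X_w\big) = \set{\CR}\Bigg(\bigcap_{w \in W_I}\set{X}_w \setminus \set{X}_{L_I \setminus W_I}\Bigg),
\end{equation*}
and the atoms in this region are precisely those $\atom{V}$ with $W_I \subseteq V$ and $V \cap L_I = W_I$, i.e.\ of the form $V = W_I \cup V'$ for some $V' \subseteq [n] \setminus L_I$.

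Finally I would conclude by additivity of the $\gro{G}$-valued measure $\set{\CR}$. As $(W_i)_{i \in I}$ ranges over tuples with $\emptyset \neq W_i \subseteq L_i$, the regions identified in the previous paragraph are pairwise disjoint (since $V \cap L_I = W_I$ determines the tuple) and their union is exactly $\bigcap_{i \in I}\set{X}_{L_i}$. Summing $\set{\CR}$ over this partition then matches the sum from the first step atom-by-atom, yielding the claimed identity. The argument is essentially a bookkeeping exercise; the only step requiring any care is the bijection between atoms and parameter tuples, but this is immediate from the disjointness of the $L_i$.
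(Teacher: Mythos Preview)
Your proposal is correct and follows essentially the same approach as the paper: apply Hu's Theorem to the left-hand side and match atoms against the right-hand side. The only difference is cosmetic: the paper first reduces without loss of generality to the case $I = \start{q}$ and $L_I = \start{n}$, which makes each right-hand summand a single atom value via Corollary~\ref{cor:how_atoms_look_like}, whereas you keep the general setting and instead apply Hu's Theorem to each right-hand summand and observe that the resulting regions partition $\bigcap_{i \in I}\set{X}_{L_i}$.
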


\begin{proof}
  Assume without loss of generality that $I = \start{q}$ and $L_I = \start{n}$.
Then, applying Hu's Theorem~\ref{thm:hu_kuo_ting_generalized} and Lemma~\ref{lem:atoms_in_peculiar_set}, we obtain:
  \begin{equation*}
    \CR\big( \bigscolon_{i \in I} X_{L_i}\big) = \set{\CR} \Bigg( \bigcap_{i \in I}  \set{X}_{L_i} \Bigg)  = \sum_{\substack{\emptyset \neq W \subseteq L_I: \\ \atom{W} \in \bigcap_{i \in I}  \set{X}_{L_i}}} \set{\CR}(\atom{W}) = \sum_{\substack{(W_i)_{i \in I}: \\ \forall i: \ \emptyset \neq W_i \subseteq L_i}} \set{\CR}(\atom{W_I}).
  \end{equation*}
  The result follows from $\set{\CR}(\atom{W_I}) = X_{L_I \setminus W_I}.\CR\big( \bigscolon_{w \in W_I} X_w\big)$,
  see Lemma~\ref{lem:how_atoms_look_like}.
\end{proof}

The following Theorem generalizes Theorem 5 of~\cite{Yeung2002a}.
We illustrate it in Figures~\ref{fig:fcmi_visualization_1} and~\ref{fig:fcmi_visualization_2}.
The reader may also compare with Figures~\ref{fig:three_variables_mrfs} and~\ref{fig:subset_determination_application} from the introduction, which illustrates the effect of simple $\CR$-FCMIs for the case that $\CR = \Ent$ and general $\CR$, respectively.
In the first of those figures, the FCMIs come from the graph-structure of a Markov random field.

\begin{figure}
  \centering
  \includegraphics[width=0.6\textwidth]{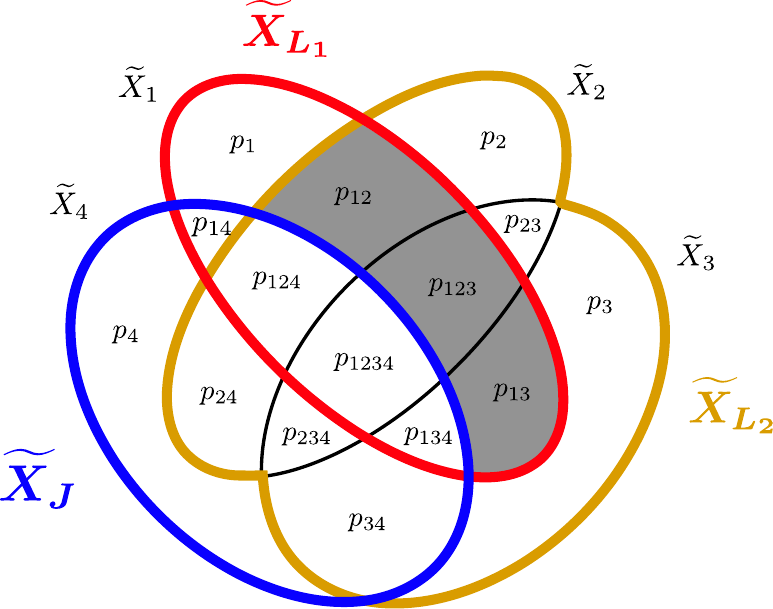}
  \caption{In this illustration, we visualize the full conditional mutual $\CR$-independence ($\CR$-FCMI) $\IndepF{\CR}{X_{L_1}}{X_{L_2}}{X_J}$ for the case $n = 4$, $J = \{4\}$, $L_1 = \{1\}$, and $L_2 = \{2, 3\}$.
  The independence is by Hu's Theorem~\ref{thm:hu_kuo_ting_generalized} given by $\set{\CR}\big(\set{X}_{L_1} \cap \set{X}_{L_2} \setminus \set{X}_J \big) = X_J.\CR(X_{L_1}; X_{L_2}) = 0$.
That is, $\set{\CR}$ vanishes on the gray region. 
  Subset determination Theorem~\ref{thm:subset_determination} results in $\set{\CR}$ vanishing even on all the atoms \emph{within} that region --- namely $\atom{12}, \atom{123}$, and $\atom{13}$.
Defining the conditional partition $\FCMI{K} = \big(J, L_1, L_2 \big)$, these atoms are precisely the elements in $\IM(\FCMI{K})$, thus confirming the characterization of $\CR$-FCMIs in terms of $\set{F}$ given in Theorem~\ref{thm:FCMIs_characterization}. }
  \label{fig:fcmi_visualization_1}
\end{figure}

\begin{theorem}
  \label{thm:FCMIs_characterization}
  Let $\mon{M}$ be a commutative, idempotent monoid acting additively on an abelian group $\gro{G}$ and $\CR: \mon{M} \to \gro{G}$ a function satisfying the chain rule Equation~\eqref{eq:cocycle_equationn}.
  We assume fixed elements $X_1, \dots, X_n \in \mon{M}$, giving rise to a $\gro{G}$-valued measure $\set{F}: 2^{\set{X}} \to \gro{G}$ according to Hu's Theorem~\ref{thm:hu_kuo_ting_generalized}.
  
  Let $\FCMI{K} = \big( J, L_i, 1 \leq i \leq q\big)$ be a conditional partition of $\start{n}$ with $q \geq 2$.
  Set $L \coloneqq L_{\start{q}}$.
  Then the following properties are equivalent:
  \begin{enumerate}
    \item $\FCMI{K}$ induces an $\CR$-FCMI with respect to $X_1, \dots, X_n$, i.e.\ $\bigindepF{\CR}_{i = 1}^{q} X_{L_i} \ | \ X_J$;
    \item for all $I \subseteq \start{q}$ with $\num{I} \geq 2$: $\big( X_{J \cup L \setminus L_I}\big).\CR\big( \bigscolon_{i \in I} X_{L_i}\big) = 0$;
    \item for all $I \subseteq \start{q}$ with $\num{I} \geq 2$ and all $(W_i)_{i \in I}$ with $\emptyset \neq W_i \subseteq L_i$ for all $i \in I$, we have $X_{J \cup L \setminus W_I}.\CR\big( \bigscolon_{w \in W_I} X_w\big) = 0$;
    \item $\set{\CR}(\atom{W}) = 0$ for all $\atom{W} \in \IM(\FCMI{K})$.
  \end{enumerate}
\end{theorem}

\begin{proof}  That 1 and 2 are equivalent follows immediately from the equivalence of properties 1 and 3 in Theorem~\ref{thm:charac_using_dual}.
  In doing so, $q$ replaces $n$, $X_J$ replaces $Y$, and $X_{L_i}$ replaces $X_i$.

  3 immediately implies 2 by Proposition~\ref{pro:another_sum_formula} and observing that $X_{J \cup L \setminus L_I} \cdot X_{L_I \setminus W_I} = X_{J \cup L \setminus W_I}$.
  3 and 4 are clearly seen to be equivalent using the definition of $\IM(\FCMI{K})$ and the fact that
  \begin{equation*}
    X_{J \cup L \setminus W_I}.\CR\big( \bigscolon_{w \in W_I} X_w\big) = X_{\start{n} \setminus W_I}.\CR\big( \bigscolon_{w \in W_I} X_w\big) = \set{\CR}(\atom{W_I}).
  \end{equation*}
  We used $J \cup L = \start{n}$ in the first step and Lemma~\ref{lem:how_atoms_look_like} in the second.

  Finally, we need to see that 2 implies 4.
  Let $\atom{W} \in \IM(K)$.
  Then by Lemma~\ref{lem:atoms_in_peculiar_set}, there is a set $I \subseteq \start{q}$ with $\num{I} \geq 2$ such that $\atom{W} \in   A_I = \bigcap_{i \in I} \set{X}_{L_i} \setminus  \set{X}_{J \cup L \setminus L_I}$.
  Then, using Hu's Theorem~\ref{thm:hu_kuo_ting_generalized} and property 2, we obtain $\set{\CR}\big(A_I\big) = \big( X_{J \cup L \setminus L_I}\big).\CR\big( \bigscolon_{i \in I} X_{L_i}\big) = 0$.
  By subset determination, Theorem~\ref{thm:subset_determination}, this results in particular in $\set{\CR}(\atom{W}) = 0$, and we are done.
\end{proof}

\begin{figure}
  \centering
  \includegraphics[width=\textwidth]{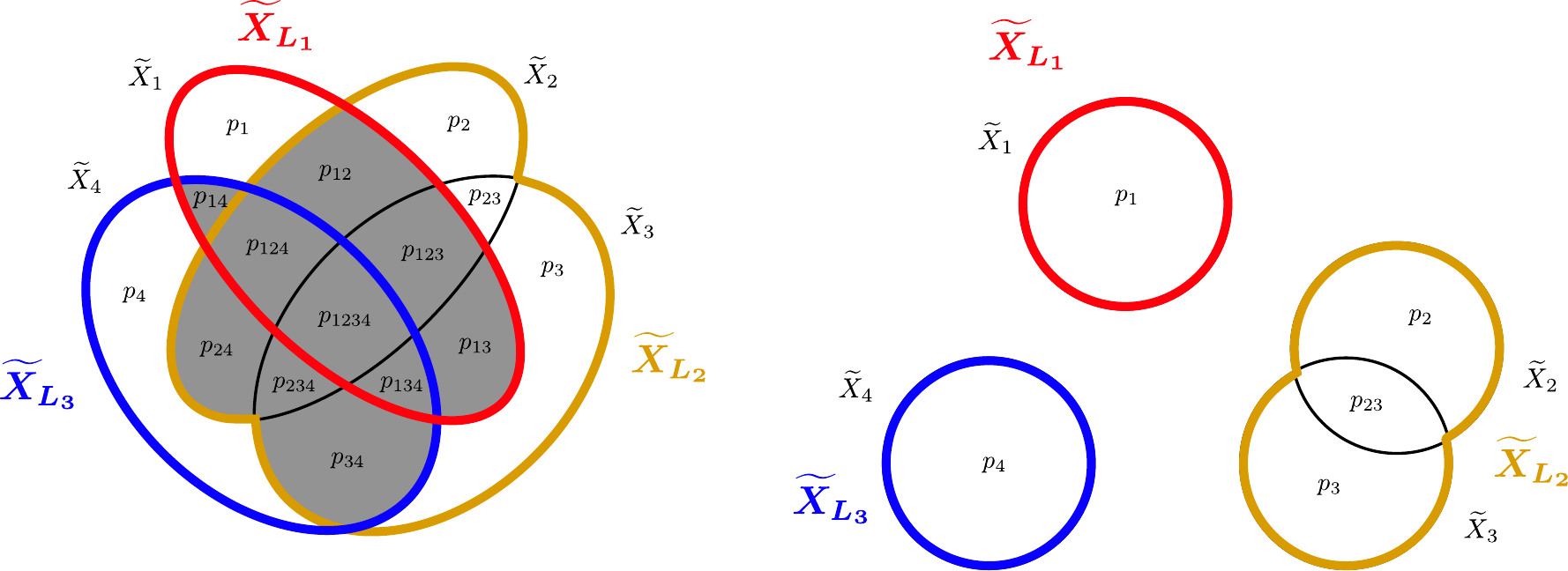}
  \caption{This figure visualizes the full conditional mutual $\CR$-independence ($\CR$-FCMI) of $X_{L_1}$, $X_{L_2}$ and $X_{L_3}$ for the case $n = 4$, $L_1 = \{1\}$, $L_2 = \{2, 3\}$, and $L_3 = \{4\}$.
  Here, the conditional variable is trivial, corresponding to $J = \emptyset$.
By the characterization of $\CR$-FCMIs, Theorem~\ref{thm:FCMIs_characterization}, $\set{\CR}$ vanishes on all atoms in the gray region.
  This leads to a degeneracy of the $\CR$-diagram, which can then be depicted on the small set of atoms at the right-hand-side of the figure.}
  \label{fig:fcmi_visualization_2}
\end{figure}

\begin{remark}\label{rem:remark_on_generality}
  If we replace property 1 in the preceding theorem by simply stating the independence relation --- without naming this an $\CR$-FCMI --- then properties 1, 2, and 3 are also equivalent without assuming $L \cup J = \start{n}$.
  This corresponds to conditional mutual $\CR$-independences that are not \emph{full}.  The reason is that $n$ does not even appear in those statements, and so we can always relabel the elements in $J \cup L$ to be equal to $\start{n'}$ for some $n' < n$.
  However, for the equivalence to statement $4$ we need the property $L \cup J = \start{n}$.
\end{remark}

\subsection{\texorpdfstring{$\CR$}--Markov Random Fields and \texorpdfstring{$\CR$}--Markov Chains}\label{sec:markov_random_fields}

In this section, we characterize $\CR$-Markov random fields with respect to a graph $\Gr{G}$ in terms of the $\CR$-diagram. 
Then we specialize this to a characterization of $\CR$-Markov chains.
These notions were defined in Terminology~\ref{ter:mrf_mc_f}.
As in the previous subsection, fix elements $X_1, \dots, X_n \in \mon{M}$, giving rise to a $\gro{G}$-valued measure $\set{F}: 2^{\set{X}} \to \gro{G}$ by Hu's Theorem~\ref{thm:hu_kuo_ting_generalized}.
Additionally, we now fix a graph $\Gr{G} = (\Ver{V}, \Ed{E})$ with vertex set $\Ver{V} = \start{n}$, see Definition~\ref{def:graph}.

\begin{lemma}
  \label{lem:atoms_image_disconnected}
  Let $\Ver{U}$ a cutset for $\Gr{G}$. 
  Let $\Ver{V}_i(\Ver{U})$, $1 \leq i \leq \ncomp{\Ver{U}}$ be the corresponding components and $\FCMI{K} \coloneqq \big( \Ver{U}, \Ver{V}_i(\Ver{U}), 1 \leq i \leq \ncomp{\Ver{U}} \big)$ the corresponding conditional partition.
  Then all $\atom{\Ver{W}} \in \IM(\FCMI{K})$ are disconnected.
\end{lemma}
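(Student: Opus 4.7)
The plan is to simply unpack the definitions and trace the walk-structure of the ambient graph. By Definition~\ref{def:image_of_FCMIs}, to say $\atom{\Ver{W}} \in \IM(\FCMI{K})$ is to say there exist a subset $I \subseteq \start{\ncomp{\Ver{U}}}$ with $|I| \geq 2$ and, for each $i \in I$, a nonempty $\Ver{W}_i \subseteq \Ver{V}_i(\Ver{U})$ such that $\Ver{W} = \bigcup_{i \in I} \Ver{W}_i$. The goal, per Definition~\ref{def:type_II_atoms}, is to show that $\Ver{W}$ is not connected as a vertex set in the graph $\Grdif{\Gr{G}}{(\start{n} \setminus \Ver{W})}$.

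The key observation is that $\Ver{W} \subseteq \bigcup_{i \in I} \Ver{V}_i(\Ver{U}) \subseteq \Ver{V} \setminus \Ver{U}$, so $\Ver{U} \subseteq \start{n} \setminus \Ver{W}$. Consequently, the graph $\Grdif{\Gr{G}}{(\start{n} \setminus \Ver{W})}$ is a vertex-induced subgraph of $\Grdif{\Gr{G}}{\Ver{U}}$; in particular, every walk in $\Grdif{\Gr{G}}{(\start{n} \setminus \Ver{W})}$ is also a walk in $\Grdif{\Gr{G}}{\Ver{U}}$.

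Now I would pick any two indices $i \neq j$ in $I$ (possible since $|I| \geq 2$) and vertices $w \in \Ver{W}_i$, $w' \in \Ver{W}_j$ (possible since both $\Ver{W}_i$ and $\Ver{W}_j$ are nonempty). Then $w \in \Ver{V}_i(\Ver{U})$ and $w' \in \Ver{V}_j(\Ver{U})$ lie in different components of $\Grdif{\Gr{G}}{\Ver{U}}$, so there is no walk from $w$ to $w'$ in $\Grdif{\Gr{G}}{\Ver{U}}$, and thus none in the subgraph $\Grdif{\Gr{G}}{(\start{n} \setminus \Ver{W})}$ either. Therefore $\Ver{W}$ fails to be connected in $\Grdif{\Gr{G}}{(\start{n} \setminus \Ver{W})}$, meaning $\atom{\Ver{W}}$ is disconnected.

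There is no real obstacle here; the proof is essentially bookkeeping. The only place where one must be attentive is to justify that restricting the graph further (from $\Grdif{\Gr{G}}{\Ver{U}}$ down to $\Grdif{\Gr{G}}{(\start{n} \setminus \Ver{W})}$) cannot create new walks — which follows immediately from the fact that removing vertices only removes edges.
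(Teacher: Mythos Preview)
Your proof is correct and essentially identical to the paper's: both unpack the definition of $\IM(\FCMI{K})$, pick vertices $w \in \Ver{W}_i$ and $w' \in \Ver{W}_j$ from distinct components of $\Grdif{\Gr{G}}{\Ver{U}}$, and then use $\Ver{U} \subseteq \start{n} \setminus \Ver{W}$ to conclude that no walk exists in $\Grdif{\Gr{G}}{(\start{n} \setminus \Ver{W})}$ either. Your explicit remark that removing more vertices cannot create new walks is the same observation the paper makes in one line.
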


\begin{proof}
  Let $\atom{\Ver{W}} \in \IM(\FCMI{K})$.
  There exists $I \subseteq \start{\ncomp{\Ver{U}}}$ with $\num{I} \geq 2$ and $\emptyset \neq \Ver{W}_i \subseteq \Ver{V}_i(\Ver{U})$ for all $i \in I$ such that 
  $\Ver{W} = \Ver{W}_I = \bigcup_{i \in I} \Ver{W}_i$.
  Now, let $i \neq j \in I$ and $w_i \in \Ver{W}_i \subseteq \Ver{V}_i(\Ver{U})$ and $w_j \in \Ver{W}_j \subseteq \Ver{V}_j(\Ver{U})$.
  Since $\Ver{V}_i(\Ver{U})$ and $\Ver{V}_j(\Ver{U})$ are components of $\Grdif{\Gr{G}}{\Ver{U}}$, there is no walk connecting $w_i$ and $w_j$ in $\Grdif{\Gr{G}}{\Ver{U}}$.
  Since $\Ver{U} \subseteq \Ver{V} \setminus \Ver{W}_I$, there can also be no such walk in $\Grdif{\Gr{G}}{(\Ver{V} \setminus \Ver{W}_I)}$.
  This shows that $\Ver{V} \setminus \Ver{W}_I$ is a cutset, and thus $\atom{\Ver{W}} = \atom{\Ver{W}_I}$ is disconnected.
\end{proof}

The following proof of the characterization of $\CR$-Markov random fields is similar to the one given in~\cite{Yeung2002a}, Theorem 8, for the special case of probabilistic Markov random fields.

\begin{proof}[Proof of Theorem~\ref{thm:mrf_characterization}]
  Assume that $X_1, \dots, X_n$ form an $\CR$-Markov random field with respect to $\Gr{G}$.
  Let $\atom{\Ver{W}}$ be a disconnected atom.
  We need to show $\set{\CR}(\atom{\Ver{W}}) = 0$.

  Define $\Ver{U} \coloneqq \Ver{V} \setminus \Ver{W}$.
  By assumption of $\Ver{W}$ being disconnected, $\Ver{U}$ is a cutset.
  Thus, $\ncomp{\Ver{U}} \geq 2$ and we have components $\Ver{V}_1(\Ver{U}), \dots, \Ver{V}_{\ncomp{\Ver{U}}}(\Ver{U})$ in $\Grdif{\Gr{G}}{\Ver{U}}$.
  Since $X_1, \dots, X_n$ form an $\CR$-Markov random field with respect to $\Gr{G}$, we obtain by Proposition~\ref{pro:equivalence_global_markov} the $\CR$-independence
  \begin{equation*}
    \bigindepF{\CR}_{i = 1}^{\ncomp{\Ver{U}}} X_{\Ver{V}_i(\Ver{U})} \ | \ X_{\Ver{U}}.
  \end{equation*}
  Now, notice that this is precisely the $\CR$-FCMI corresponding to the conditional partition $\FCMI{K} = \big(\Ver{U}, \Ver{V}_i(\Ver{U}), 1 \leq i \leq \ncomp{\Ver{U}}\big)$.
  By Theorem~\ref{thm:FCMIs_characterization}, $\set{\CR}(\atom{\Ver{W}'}) = 0$ for all atoms $\atom{\Ver{W}'} \in \IM(\FCMI{K})$.
  Now, notice that 
  $\bigcup_{i = 1}^{\ncomp{\Ver{U}}} \Ver{V}_i(\Ver{U}) = \Ver{V} \setminus \Ver{U} = \Ver{W}$, which, due to $\ncomp{\Ver{U}} \geq 2$ and $\Ver{V}_i(\Ver{U}) \neq \emptyset$, shows that $\atom{\Ver{W}} \in \IM(\FCMI{K})$.
  This shows $\set{\CR}(\atom{\Ver{W}}) = 0$, and we are done.

  For the other direction, assume that $\set{\CR}$ vanishes on all disconnected atoms.
  Now, let $\Ver{U} \subseteq \Ver{V}$ be a cutset, with components $\Ver{V}_1(\Ver{U}), \dots, \Ver{V}_{\ncomp{\Ver{U}}}(\Ver{U})$ of $\Grdif{\Gr{G}}{\Ver{U}}$.
  By Proposition~\ref{pro:equivalence_global_markov}, we need to show the $\CR$-independence
  \begin{equation*}
    \bigindepF{\CR}_{i = 1}^{\ncomp{\Ver{U}}}X_{\Ver{V}_i(\Ver{U})} \ | \ X_{\Ver{U}}.
  \end{equation*}
  This is the $\CR$-FCMI corresponding to the conditional partition $\FCMI{K} = \big(\Ver{U}, \Ver{V}_i(\Ver{U}\big), 1 \leq i \leq \ncomp{U})$.
  Thus, by Theorem~\ref{thm:FCMIs_characterization}, we need to show that $\set{\CR}(\atom{\Ver{W}}) = 0$ for all $\atom{\Ver{W}} \in \IM(\FCMI{K})$.
  Since all $\atom{\Ver{W}} \in \IM(\FCMI{K})$ are disconnected by Lemma~\ref{lem:atoms_image_disconnected}, this follows from assuming that $\set{\CR}$ vanishes on all disconnected atoms. 
\end{proof}

We refer back to Figure~\ref{fig:three_variables_mrfs} for a visualization of how $\CR$-diagrams of $\CR$-Markov random fields look like.
That visualization focused on $\Ent$-Markov random fields, but is entirely correct also in our general case, as the preceding theorem shows.

\begin{corollary}
  \label{cor:F_Markov_Chain_Charac}
  With all notation as above, the following two statements are equivalent:
  \begin{itemize}
    \item $X_1, \dots, X_n$ form an $\CR$-Markov chain.
    \item $\set{\CR}(\atom{\Ver{W}}) = 0$ for all $\Ver{W} \subseteq \start{n}$ that do \emph{not} only contain consecutive numbers.
  \end{itemize}
\end{corollary}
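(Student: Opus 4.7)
The plan is to combine two earlier results. By Proposition~\ref{pro:characterization_markov_chain}, the elements $X_1, \dots, X_n$ form a Markov chain with respect to $\indep_{\CR}$ if and only if they form a Markov random field with respect to $\indep_{\CR}$ and the path graph $\Gr{G} = (\Ver{V}, \Ed{E})$ with $\Ver{V} = \start{n}$ and $\Ed{E} = \big\{ \{i, i+1\} \mid 1 \leq i \leq n-1 \big\}$. Then, by Theorem~\ref{thm:mrf_characterization} applied to this $\Gr{G}$, being an $\CR$-Markov random field is equivalent to $\set{\CR}(\atom{\Ver{W}}) = 0$ for all disconnected atoms $\atom{\Ver{W}} \in \set{X}$. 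Thus, it suffices to prove the purely combinatorial claim that, for this particular path graph, an atom $\atom{\Ver{W}}$ is disconnected exactly when $\Ver{W} \subseteq \start{n}$ does not consist only of consecutive integers.

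To establish this last equivalence, recall from Definition~\ref{def:type_II_atoms} that $\atom{\Ver{W}}$ is disconnected if and only if $\start{n} \setminus \Ver{W}$ is a cutset, i.e., the subgraph $\Grdif{\Gr{G}}{(\start{n} \setminus \Ver{W})}$ has more than one component. Because the only edges of $\Gr{G}$ are between consecutive integers, the edges remaining in $\Grdif{\Gr{G}}{(\start{n} \setminus \Ver{W})}$ are precisely those $\{i, i+1\}$ with $i, i+1 \in \Ver{W}$. I would then argue in both directions: if $\Ver{W} = \{a, a+1, \dots, b\}$ is an interval, then the induced subgraph is itself a path (or a single vertex), hence has exactly one component; conversely, if $\Ver{W}$ is not an interval, there exist $i < k < j$ with $i, j \in \Ver{W}$ and $k \notin \Ver{W}$, and any walk from $i$ to $j$ in $\Gr{G}$ must traverse $k$ (since $\Gr{G}$ is a path), so $i$ and $j$ belong to distinct components of $\Grdif{\Gr{G}}{(\start{n} \setminus \Ver{W})}$.

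There is no substantial obstacle; the proof is essentially an application of two already-established theorems plus the elementary observation that connected subsets of a path are intervals. The only mild care needed is to handle the degenerate cases $|\Ver{W}| = 1$ (trivially an interval and trivially connected as a single vertex) and to notice that the description ``only contains consecutive numbers'' in the corollary is synonymous with ``forms an interval in $\start{n}$.''
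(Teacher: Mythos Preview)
Your proposal is correct and follows essentially the same approach as the paper: invoke Proposition~\ref{pro:characterization_markov_chain} to reduce to an $\CR$-Markov random field on the path graph, then apply Theorem~\ref{thm:mrf_characterization}, and finally identify the disconnected atoms of the path graph with the non-interval subsets of $\start{n}$. The paper simply asserts this last combinatorial fact without argument, whereas you spell it out; your added detail is sound and the handling of degenerate cases is fine.
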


\begin{proof}
  For the proof, we specialize to the graph $\Gr{G} = \big(\Ver{V}, \Ed{E}\big)$ with $\Ver{V} = \start{n}$ and $\Ed{E} = \big\lbrace \{i, i+1\} \mid i = 1, \dots, n-1\big\rbrace$, so $\Gr{G}$ is intuitively a chain.
Then from Proposition~\ref{pro:characterization_markov_chain} we know that the first statement is equivalent to $X_1, \dots, X_n$ forming an $\CR$-Markov random field with respect to $\Gr{G}$.
Since the disconnected atoms $\atom{\Ver{W}}$ with respect to $\Gr{G}$ are precisely those where $\Ver{W}$ does not only consist of consecutive numbers, the result follows from Theorem~\ref{thm:mrf_characterization}.
\end{proof}

\begin{figure}
  \centering
  \begin{subfigure}[b]{0.43\textwidth}
    \centering
    \includegraphics[width=\textwidth]{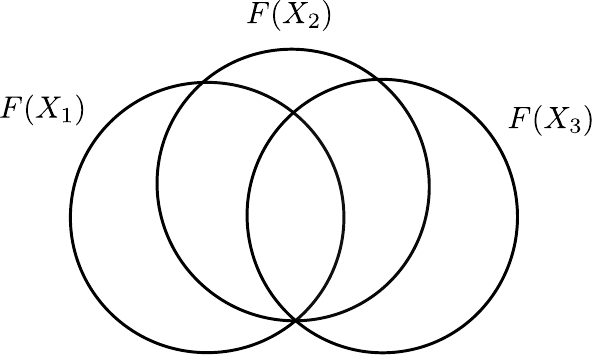}
  \end{subfigure}
  \hfill
  \begin{subfigure}[b]{0.47\textwidth}
    \centering
    \includegraphics[width=\textwidth]{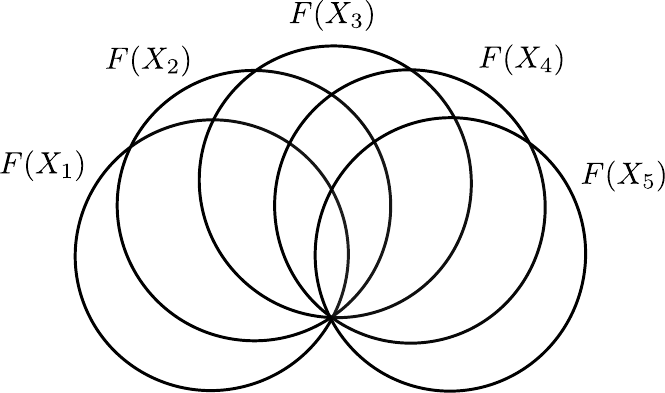}
  \end{subfigure}
  \caption{If elements $X_1, \dots, X_n \in \mon{M}$ form an $\CR$-Markov chain, then many atoms in the $\CR$-diagram disappear by Corollary~\ref{cor:F_Markov_Chain_Charac}. 
    The only atoms that remain are those corresponding to ``intervals'' in $\start{n}$. 
    This leads to a fan-like structure of the $\CR$-diagram, as visualized here for $n = 3$ and $n = 5$.}
  \label{fig:four_and_five_circles}
\end{figure}

The effect of the preceding corollary on $\CR$-diagrams is visualized in Figure~\ref{fig:four_and_five_circles}.

\begin{corollary} 
  \label{cor:information_terms_markov_chains}
  Assume $X_1, \dots, X_n$ form an $\CR$-Markov Chain.
  Then for all $I, J \subseteq \start{n}$ with $ \emptyset \neq I = \big\lbrace i_1 < i_2 < \dots < i_q \big\rbrace$, the following equality holds:
  \begin{equation*}
    X_J.\CR\big(X_{i_1};X_{i_2}; \dots ; X_{i_q}\big) = X_J.\CR\big(X_{i_1}; X_{i_q}\big).
  \end{equation*}
\end{corollary}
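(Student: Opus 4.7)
The plan is to translate both sides into the language of $\CR$-diagrams via Hu's theorem, and then use the Markov chain structure (Corollary~\ref{cor:F_Markov_Chain_Charac}) to eliminate all atoms corresponding to non-consecutive index sets. First I would apply Theorem~\ref{thm:hu_kuo_ting_generalized} to rewrite
\begin{equation*}
  X_J.\CR(X_{i_1}; \dots; X_{i_q}) = \set{\CR}\Bigg(\bigcap_{k=1}^q \set{X}_{i_k} \setminus \set{X}_J\Bigg), \qquad X_J.\CR(X_{i_1}; X_{i_q}) = \set{\CR}\big(\set{X}_{i_1} \cap \set{X}_{i_q} \setminus \set{X}_J\big).
\end{equation*}
Using the general-position property (Equation~\eqref{eq:general_position}) and the fact that $\set{\CR}$ is a measure, each side is a sum over atoms $\atom{W}$: the left side sums over those $W \subseteq \start{n}$ with $I \subseteq W$ and $W \cap J = \emptyset$, and the right side over those with $\{i_1, i_q\} \subseteq W$ and $W \cap J = \emptyset$.

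Next I would invoke Corollary~\ref{cor:F_Markov_Chain_Charac}: since $X_1, \dots, X_n$ form an $\CR$-Markov chain, $\set{\CR}(\atom{W}) = 0$ whenever $W$ is not an interval of consecutive integers. Consequently, on both sides only the atoms indexed by intervals survive. A consecutive interval $W$ contains $\{i_1, i_q\}$ if and only if $W \supseteq \range{i_1}{i_q}$, which is also precisely the condition for $W$ (consecutive) to contain the full set $I$, since $i_1 = \min I$ and $i_q = \max I$. Hence the surviving atoms on both sides are exactly the same, namely those $\atom{W}$ with $W$ an interval satisfying $\range{i_1}{i_q} \subseteq W$ and $W \cap J = \emptyset$. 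Summing over this common collection gives the equality.

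There is no real obstacle here; the only thing to be slightly careful about is the equivalence between ``$W$ consecutive and $I \subseteq W$'' and ``$W$ consecutive and $\{i_1,i_q\} \subseteq W$,'' which is an immediate consequence of $i_1 = \min I$ and $i_q = \max I$. The proof is essentially a direct visual argument on the $\CR$-diagram once Corollary~\ref{cor:F_Markov_Chain_Charac} has been applied.
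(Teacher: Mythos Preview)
Your proposal is correct and follows essentially the same route as the paper: apply Hu's theorem to express both sides as sums over atoms, invoke Corollary~\ref{cor:F_Markov_Chain_Charac} to discard all atoms whose index set is not an interval, and observe that an interval contains $\{i_1,i_q\}$ if and only if it contains all of $I$. The only cosmetic difference is that the paper first reduces without loss of generality to $X_J = \one$, whereas you carry $J$ through the computation; both are equally valid.
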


\begin{proof}
  We can without loss of generality assume $X_J = \one$.
  Using Hu's Theorem~\ref{thm:hu_kuo_ting_generalized}, we obtain
  \begin{align*}
    \CR(X_{i_1}; \dots ; X_{i_q}) & = \set{\CR}\Bigg( \bigcap_{k = 1}^{q} \set{X}_{i_k}\Bigg) = \sum_{L: \ i_1, \dots, i_q \in L} \set{\CR}(\atom{L}) 
    \overset{(\star)}{=} \sum_{L: \ i_1, i_q \in L} \set{\CR}(\atom{L}) \\
    & = \set{\CR}\Big( \set{X}_{i_1} \cap \set{X}_{i_q}\Big) 
     = \CR(X_{i_1}; X_{i_q}).
  \end{align*}
  In $(\star)$, we used Corollary~\ref{cor:F_Markov_Chain_Charac}: both sums are equal since for all summands where $L$ does not contain the whole ``interval'' $\range{i_1}{i_q}$, we have $\set{\CR}(\atom{L}) = 0$.
\end{proof}

\section{Probabilistic Independences and Markov Random Fields}\label{sec:specializing_to_probabilistic}

In this section, we specialize the results from Section~\ref{sec:yeung2002} to the probabilistic setting. 
Since we want to study probabilistic FCMIs and Markov random fields, we want to restrict our information measures --- e.g., Shannon entropy, Kullback-Leibler divergence, and cross-entropy --- to probability mass functions satisfying these properties. 
However, to be able to work in the setting of Section~\ref{sec:background} of a general function $\CR: \mon{M} \to \gro{G}$, we need a monoid action, which in the probabilistic case involves conditional probability mass functions.
We thus need to make sure that properties of distributions such as ``forms a Markov random field'' are preserved under conditioning.
We prove this for several properties in Section~\ref{sec:stability_conditioning}.

In Section~\ref{sec:functions_satisfying_chain_rule} we briefly study a general class of information functions that satisfy the chain rule, and motivate the case of higher-order cross-entropy and Kullback-Leibler divergence by a connection to the cluster cross-entropy from~\cite{Cocco2012}.
Building on Section~\ref{sec:stability_conditioning}, in Section~\ref{sec:restricting_measures}, we restrict general information functions to stable properties like ``forms a Markov random field'' and show that this preserves the monoid action and the chain rule. 
In Theorem~\ref{thm:get_mrf_for_free_with_restriction}, we then recover the notion of a Markov random field with respect to restricted information functions, providing examples for the $\CR$-Markov random fields studied in Section~\ref{sec:yeung2002}.
In Section~\ref{sec:new_kl_section} we then specialize to the case of Kullback-Leibler Markov chains with fixed transition probabilities between time-steps, reminiscent of the physical laws. 
This leads to Theorem~\ref{thm:second_law} and Corollary~\ref{cor:second_law_thermo}, where we interpret a weak version of the second law of thermodynamics in terms of a degeneracy of Kullback-Leibler diagrams.
Finally, in Section~\ref{sec:diffusion_models}, we use the Kullback-Leibler decomposition over Markov chains that follows from Theorem~\ref{thm:get_mrf_for_free_with_restriction} to derive the explicit decomposition of the evidence lower bound for diffusion models, demonstrating the applicability of our work to machine learning.

The results in~\cite{Yeung2002a} of $\Ent$-diagram characterizations of FCMIs and Markov random fields focus on \emph{fixed} probability mass functions, thus only looking at what we will call a \emph{slice} of the $\Ent$-diagram. 
This does not directly correspond to our supposed generalizations; after all, they always involve a monoid action, and thus let the probability mass functions unspecified in the probabilistic case.
In Appendix~\ref{sec:slices}, we demonstrate that this is not a problem by showing that our Theorems~\ref{thm:FCMIs_characterization} and~\ref{thm:mrf_characterization} actually imply the main results in~\cite{Yeung2002a}.
This validates the claim that our results are indeed generalizations.

In this whole section, let $\samp$ be a countable, discrete sample space, with the probability mass function \emph{not} yet fixed, and assume that all random variables $X: \samp \to \vs{X}$ have a finite discrete value space $\vs{X}$.
Also, recall the notions of the conditional independence $X \indep_P Y \ | \ Z$ of random variables on $\samp$ with respect to probability mass function $P$, and recall that equivalence classes of random variables form a monoid according to Proposition~\ref{pro:monoid_of_rvs}.
Conditional independence is well-defined at the level of equivalence classes, giving rise to a separoid (Proposition~\ref{pro:forms_a_separoid}), and we do not distinguish in notation between $X$ and its equivalence class.

\subsection{Stability under Conditioning}\label{sec:stability_conditioning}

In this subsection, we show that for random variables $X_1, \dots, X_n$, the property to be an FCMI, the property to be a Markov random field, and the property of two probability mass functions to have ``equal transition probabilities'' between time-steps in a Markov chain, are all \emph{stable under conditioning}.
That means that if they hold with respect to a probability mass function $P$, they also hold with respect to $P|_{Y = y}$ for arbitrary $Y = X_I$ with $I \subseteq \start{n}$ and $y \in \vs{Y}$.

We will later also need the following definition, for example for defining the Kullback-Leibler divergence:

\begin{definition}
  [Absolutely Continuous]
  \label{def:absolutely_continuous}
  Let $P, Q \in \Sim{\samp}$ be two probability mass functions.
  Then $P$ is said to be absolutely continuous with respect to $Q$, written $P \ll Q$, if the following implication holds for all $\omega \in \samp$:
  \begin{equation*}
       Q(\omega) = 0 \quad \Longrightarrow \quad P(\omega) = 0.
  \end{equation*}
\end{definition}

We now specialize the notions of conditional mutual independences and FCMIs from general separoids to the probabilistic case to make wordings and notation more efficient.
We did the similar specializations already for Markov random fields and Markov chains in Definitions~\ref{def:global_markov_property} and~\ref{def:markov_chain}.

\begin{terminology}\label{ter:probabilistic_terminology}
  Let $X_1, \dots, X_n, Y$ be random variables on $\samp$, $P \in \Delta(\samp)$ a probability mass function, and $\indep_P$ the corresponding separation relation.
  If $X_1, \dots, X_n$ are mutually independent given $Y$ with respect to $\indep_P$ (Definitions~\ref{def:mutual_independence}), then we write
  \begin{equation*}
    \bigindepF{P}_{i = 1}^{n} X_i \ | \  Y
  \end{equation*}
  and call this a conditional mutual $P$-independence.
  Let $\FCMI{K} = \big( J, L_i, 1 \leq i \leq q\big)$ be conditional partition of $\start{n}$ (Definition~\ref{def:conditional_partition}). 
  Then the $FCMI$ (Definition~\ref{def:fcmi})
  \begin{equation*}
    \bigindepF{P}_{i = 1}^{q} X_{L_i} \ | \ X_J
  \end{equation*}
  is called a full conditional mutual $P$-independence (with respect to $X_1, \dots, X_n$) --- or $P$-FCMI for short.
  If it holds, then we say $\FCMI{K}$ induces a $P$-FCMI.
\end{terminology}

We first need two lemmas:

\begin{lemma}
  \label{lem:double_conditioning}
  Let $X, Y, Z$ be three random variables on $\samp$ and $P \in \Delta(\samp)$ a probability mass function.
  If $P(y,z) \neq 0$, then
  \begin{equation*}
    \big(\cond{P}{Y=y}\big)(x \mid z) = P(x \mid y, z).
  \end{equation*}
\end{lemma}

\begin{proof}
  We have:
  \begin{align*}
    \big(\cond{P}{Y=y}\big)(x \mid z) = \Big( \cond{ \big( \cond{P}{Y=y}\big)}{Z = z} \Big)(x) 
     = \big( \cond{P}{YZ=(y,z)}\big)(x) 
     = P(x \mid y, z).
  \end{align*}
  The second step can be showed by explicit computation.
\end{proof}

\begin{lemma}
  \label{lem:independence_propagates}
  Let $U, V, W, Y$ be random variables on $\samp$ and $P \in \Delta(\samp)$ a probability mass function such that the following independence holds:
  \begin{equation*}
    \IndepF{P}{U}{V}{WY}.
  \end{equation*}
  Let $y \in \vs{Y}$ be arbitrary with $P(y) \neq 0$.
  Then the following independence follows:
  \begin{equation*}
    \IndepF{\cond{P}{Y=y}}{U}{V}{W}.
  \end{equation*}
\end{lemma}

\begin{proof}
  Let $u \in \vs{U}, v \in \vs{V}$ and $w \in \vs{W}$ be arbitrary.
  We want to show the following:
  \begin{equation}
    \big( \cond{P}{Y=y}\big)(u, v, w) = \big( \cond{P}{Y=y}\big)(u \mid w) \cdot \big( \cond{P}{Y=y}\big)(v, w).
    \label{eq:neww_equation_to_show}
  \end{equation}
  If $P(w \mid y) = 0$, then both sides of Equation~\eqref{eq:neww_equation_to_show} vanish and the result is clear.
  Thus, we can assume $P(w \mid y) \neq 0$, and therefore, together with $P(y) \neq 0$, we obtain $P(w, y) \neq 0$.
  By Lemma~\ref{lem:double_conditioning}, this results in $\big( \cond{P}{Y=y}\big)(u \mid w) = P(u \mid w, y)$.
  We obtain that the desired Equation~\eqref{eq:neww_equation_to_show} is equivalent to the following:
  \begin{equation*}
    \frac{P(u, v, w, y)}{P(y)} = P(u \mid w, y) \cdot \frac{P(v, w, y)}{P(y)},
  \end{equation*}
  which is equivalent to
  \begin{equation*}
    P(u, v, w, y) = P(u \mid w, y) \cdot P(v, w, y).
  \end{equation*}
  This follows from the assumption $\IndepF{P}{U}{V}{WY}$.
\end{proof}

\begin{proposition}
  \label{pro:fcmi_stable_conditioning}
  Let $X_1, \dots, X_n$ be random variables on $\samp$, $P \in \Delta(\samp)$ a probability mass function, and $\FCMI{K} = \big( J, L_i, 1 \leq i \leq q\big)$ a conditional partition of $\start{n}$ that induces a $P$-FCMI with respect to $X_1, \dots, X_n$.
  Then for all $I \subseteq \start{n}$, for $Y \coloneqq X_I$, and for all $y \in \vs{Y}$ with $P(y) \neq 0$, $\FCMI{K}$ also induces a $\big(P|_{Y = y}\big)$--FCMI with respect to $X_1, \dots, X_n$.
\end{proposition}

\begin{proof}
  By assumption, we have the $P$-FCMI $\bigindepF{P}_{i = 1}^{q} X_{L_i} \ | \ X_J$.
  This means that for all $i \in \{1, \dots, q\}$, we have
  \begin{equation}\label{eq:pairwise_induced_indep}
    \IndepF{P}{X_{L_i}}{X_{\bigcup_{l \neq i} L_l}}{X_J}.
  \end{equation}
  Now, decompose $I$ into three sets $I_i = I \cap L_i$, $I_{\setminus i} = I \cap \bigcup_{l \neq i} L_l$ and $I_J = I \cap J$.
  We obtain the corresponding random variables $Y_i = X_{I_i}$, $Y_{\setminus i} = X_{I_{\setminus i}}$ and $Y_J = X_{I_J}$.
  Since the monoid of random variables is idempotent, the above pairwise $P$-independence is equivalent to the following:
  \begin{equation*}
    \IndepF{P}{Y_{i} X_{L_i}}{Y_{\setminus i} X_{\bigcup_{l \neq i} L_l}}{X_J Y_J}.
  \end{equation*}
  Applying weak union (S4) to both the left and right factor, we obtain
  \begin{equation*}
    \IndepF{P}{X_{L_i}}{X_{\bigcup_{l \neq i} L_l}}{X_J \big(Y_iY_{\setminus i} Y_J\big)}.
  \end{equation*}
  Since $\FCMI{K}$ is \emph{full}, meaning the contained sets cover all of $\start{n}$, we obtain $I = I_i \cup I_{\setminus i} \cup I_J$ and consequently $Y_iY_{\setminus i} Y_J = Y$.
  Lemma~\ref{lem:independence_propagates} then shows the $\big(P|_{Y = y}\big)$--independence ${\IndepF{P|_{Y = y}}{X_{L_i}}{X_{\bigcup_{l \neq i}L_l}}{X_J}}$.
  Since $i$ was arbitrary, we obtain the $\big( P|_{Y = y} \big)$--FCMI $\bigindepF{P|_{Y = y}}_{i = 1}^{q} X_{L_i} \ | \ X_J$.
  That was to show.
\end{proof}

\begin{corollary}
  \label{cor:mrf_stable_conditioning}
  Let $X_1, \dots, X_n$ be random variables on $\samp$, $\Gr{G} = (\Ver{V}, \Ed{E})$ a graph with $\Ver{V} = \start{n}$, and $P \in \Delta(\samp)$ a probability mass function.
  Assume $X_1, \dots, X_n$ form a $P$-Markov random field with respect to $\Gr{G}$.

  Then for all $I \subseteq \start{n}$, for $Y \coloneqq X_I$, and for all $y \in \vs{Y}$ with $P(y) \neq 0$, $X_1, \dots, X_n$ also form a $\big(P|_{Y = y}\big)$--Markov random field with respect to $\Gr{G}$.
\end{corollary}

\begin{proof}
  We show this by proving the cutset property, which by Proposition~\ref{pro:equivalence_global_markov} is equivalent to the global Markov property.
  Thus, let $\Ver{U} \subseteq \start{n}$ be a cutset and $\Ver{V}_1(\Ver{U})$, \dots, $\Ver{V}_{\ncomp{\Ver{U}}}(\Ver{U})$ the corresponding components.
  By assumption we know that the conditional partition $\FCMI{K} = \big( \Ver{U}, \Ver{V}_{i}(\Ver{U}), 1 \leq i \leq \ncomp{\Ver{U}}\big)$ induces a $P$-FCMI.
  Then Proposition~\ref{pro:fcmi_stable_conditioning} shows that we also obtain a $\big(P|_{Y = y}\big)$--FCMI, meaning we have ${\bigindepF{P|_{Y = y}}_{i = 1}^{\ncomp{\Ver{U}}} X_{\Ver{V}_i(\Ver{U})} \ | \ X_{\Ver{U}}}$.
  This was to show.
\end{proof}

Next, we show that the property to have ``equal transition probabilities'' between time-steps in a Markov chain is stable under conditioning. 
We will use this in Section~\ref{sec:new_kl_section} to illustrate a weak version of the second law of thermodynamics.
First, we state a simple lemma about general separoids that we will use:

\begin{lemma}[\cite{Dawid2001}, Lemma 1.1]
  \label{lem:equivalence_independence}
  Let $(\mon{M}, \indep)$ be a separoid.
  For $X, Y, Z \in \mon{M}$, we have the following equivalence:
  \begin{equation*}
    \Indep{X}{Y}{Z} \quad \Longleftrightarrow \quad \Indep{XZ}{YZ}{Z}.
  \end{equation*}
\end{lemma}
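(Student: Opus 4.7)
The plan is to prove both implications separately using only the separoid axioms (S1)--(S5). The reverse direction is a straightforward application of decomposition and symmetry, while the forward direction requires two uses of contraction, each preceded by an auxiliary redundancy step.

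For the reverse direction, assume $\Indep{XZ}{YZ}{Z}$. By (S3) decomposition (viewing the left-hand side as $Z \cdot X$ and stripping off the $Z$ factor) I obtain $\Indep{X}{YZ}{Z}$. Applying (S1) symmetry gives $\Indep{YZ}{X}{Z}$, and a second application of (S3) strips the $Z$ on the left to yield $\Indep{Y}{X}{Z}$. One final use of (S1) gives $\Indep{X}{Y}{Z}$.

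For the forward direction, starting from $\Indep{X}{Y}{Z}$, I would first add a $Z$ on the left and then, after swapping sides, add another $Z$ on what was the right. To add $Z$ on the left: since $Z \precsim XZ$, (S2) redundancy gives $\Indep{Z}{Y}{XZ}$. Combined with the hypothesis $\Indep{X}{Y}{Z}$ via (S5) contraction (with $W = Z$), this produces $\Indep{ZX}{Y}{Z}$, i.e., $\Indep{XZ}{Y}{Z}$. Now apply (S1) symmetry to get $\Indep{Y}{XZ}{Z}$ and repeat the same trick on the $Y$ side: since $Z \precsim YZ$, (S2) gives $\Indep{Z}{XZ}{YZ}$, and contracting with $\Indep{Y}{XZ}{Z}$ via (S5) yields $\Indep{ZY}{XZ}{Z} = \Indep{YZ}{XZ}{Z}$. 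A final application of (S1) gives the desired $\Indep{XZ}{YZ}{Z}$.

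No step here is deep; the only thing to be careful about is lining up the conditioning sets so that each invocation of (S5) matches its pattern ``conclusion conditioned on $Z$, hypothesis conditioned on $XZ$''. Redundancy (S2) is precisely what produces the needed auxiliary independences on the larger conditioning set, so the two steps fit together cleanly, and the lemma reduces to a short syntactic manipulation.
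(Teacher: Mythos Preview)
Your proof is correct. Both directions are handled cleanly: the reverse direction via two decompositions interleaved with symmetry, and the forward direction via two redundancy-plus-contraction steps that each append a copy of $Z$ to one side. Each invocation of (S5) is properly set up, with the auxiliary independence from (S2) living on the enlarged conditioning set as required.

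The paper itself does not give a proof of this lemma; it simply cites \cite{Dawid2001}, Lemma~1.1. Your argument is the standard one and matches what one finds in that reference, so there is nothing to compare beyond confirming correctness.
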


\begin{proposition}
  \label{pro:conditionals_still_same_propagates}
  Let $X_1, \dots, X_n$ be random variables on $\samp$ and $(P, Q) \in \Sim{\samp}^2$ be two probability mass functions such that $P$ is absolutely continuous with respect to $Q$, see Definition~\ref{def:absolutely_continuous}.
  Assume that $X_1, \dots, X_n$ form a $P$-Markov chain \emph{and} $Q$-Markov chain.
  Additionally, assume that $P$ and $Q$ have the same transition probabilities, i.e.: $P(x_i \mid x_{i-1}) = Q(x_i \mid x_{i-1})$ for all $i$ and all $x_{i-1}, x_i$ with $P(x_{i-1}) \neq 0$.

  Then for all $I \subseteq \start{n}$, for $Y \coloneqq X_I$, and for all $y \in \vs{Y}$ with $P(y) \neq 0$, $\cond{P}{Y = y}$ is also absolutely continuous with respect to $\cond{Q}{Y = y}$ and $X_1, \dots, X_n$ also forms a $\big(\cond{P}{Y = y}\big)$--Markov chain and a $\big( \cond{Q}{Y = y} \big)$--Markov chain.
  Additionally, $\cond{P}{Y = y}$ and $\cond{Q}{Y = y}$ also have the same transition probabilities, i.e.: $(\cond{P}{Y=y})(x_i \mid x_{i-1}) = (\cond{Q}{Y=y})(x_i \mid x_{i-1})$ for all $i$ and all $x_{i-1}, x_i$ with $(\cond{P}{Y=y})(x_{i-1}) \neq 0$.
\end{proposition}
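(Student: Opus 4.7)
The plan is to establish the three conclusions separately in increasing order of difficulty. For absolute continuity $\cond{P}{Y=y} \ll \cond{Q}{Y=y}$, I first note that $Q(y) \neq 0$: otherwise $Q(\omega) = 0$ for every $\omega \in Y^{-1}(y)$, forcing $P(\omega) = 0$ by $P \ll Q$ and contradicting $P(y) \neq 0$. Then $\cond{Q}{Y=y}(\omega) = 0$ unpacks as either $Y(\omega) \neq y$, which yields $\cond{P}{Y=y}(\omega) = 0$ automatically, or $Q(\omega) = 0$, which yields $P(\omega) = 0$ by $P \ll Q$ and hence again $\cond{P}{Y=y}(\omega) = 0$.

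For the Markov chain property, I would invoke the already built machinery. By Proposition~\ref{pro:characterization_markov_chain}, $X_1, \dots, X_n$ forming a $P$-Markov chain is equivalent to their being a $P$-Markov random field with respect to the path graph $\Gr{G}$ on $\start{n}$. Corollary~\ref{cor:mrf_stable_conditioning} then yields that they form a $\cond{P}{Y=y}$-Markov random field with respect to $\Gr{G}$, which by Proposition~\ref{pro:characterization_markov_chain} once more means a $\cond{P}{Y=y}$-Markov chain. The identical reasoning handles $\cond{Q}{Y=y}$.

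For equal transition probabilities, Lemma~\ref{lem:double_conditioning} reduces the goal to showing $P(x_i \mid x_{i-1}, y) = Q(x_i \mid x_{i-1}, y)$ (after noting $P(x_{i-1}, y) \neq 0$ and, via absolute continuity, $Q(x_{i-1}, y) \neq 0$). If $i \in I$, both sides collapse to $\indic{x_i = y_i}$. If $i \notin I$, write $I_{>} = I \cap \range{i+1}{n}$. Applying the $P$-Markov property---via decomposition and weak union from the separation $X_{\start{i-2}} \indep X_{\range{i}{n}} \mid X_{i-1}$---I simplify $P(x_i \mid x_{i-1}, y) = P(x_i \mid x_{i-1}, y_{I_>})$. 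A second application at $j^{*} = \min I_>$ (when $I_> \neq \emptyset$) yields
\begin{equation*}
P(x_i \mid x_{i-1}, y_{I_>}) = \frac{P(x_i \mid x_{i-1}) \cdot P(y_{j^{*}} \mid x_i)}{P(y_{j^{*}} \mid x_{i-1})},
\end{equation*}
after the common tail $P(y_{I_> \setminus \{j^{*}\}} \mid y_{j^{*}})$ cancels between numerator and denominator. The analogous factorization holds for $Q$; the case $I_> = \emptyset$ is immediate from the one-step hypothesis applied at $x_{i-1}$.

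The main obstacle is then verifying that multi-step transitions $P(y_{j^{*}} \mid x)$ agree with $Q(y_{j^{*}} \mid x)$ whenever $P(x) \neq 0$, which I would prove by induction on the distance. Writing $P(x_k \mid x_l) = \sum_{x_{l+1}} P(x_{l+1} \mid x_l) \cdot P(x_k \mid x_{l+1})$, any summand with $P(x_{l+1}) = 0$ has $P(x_{l+1} \mid x_l) = 0$ and hence $Q(x_{l+1} \mid x_l) = 0$ by the equal-transition hypothesis, so that summand vanishes in both $P$'s and $Q$'s expansions; the remaining summands match by the inductive hypothesis applied at $x_{l+1}$. This careful handling of zero-marginal intermediate states along the chain is the most delicate aspect of the argument, since without it the sums for $P$ and $Q$ could in principle range over different effective supports.
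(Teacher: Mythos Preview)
Your proof is correct and follows essentially the same route as the paper: the first two conclusions are handled identically (absolute continuity directly, Markov chains via Proposition~\ref{pro:characterization_markov_chain} and Corollary~\ref{cor:mrf_stable_conditioning}), and for the third you make the same reduction via Lemma~\ref{lem:double_conditioning} and the Markov property to drop the past part of $y$. The one organizational difference is in the final step: the paper keeps the full future $y_+$ and shows $P(y_+ \mid x_{i-1}, x_i) = Q(y_+ \mid x_{i-1}, x_i)$ by a single expansion $\sum_{x_J}\prod_j P(x_{j+1}\mid x_j)$, whereas you first observe that everything beyond $j^* = \min I_>$ cancels between numerator and denominator and then verify the single multi-step transition $P(y_{j^*}\mid x) = Q(y_{j^*}\mid x)$ by induction on the gap; your explicit treatment of zero-probability intermediate states is in fact more careful than the paper's.
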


\begin{proof}
  That $\cond{P}{Y = y}$ is absolutely continuous with respect to $\cond{Q}{Y = y}$ is clear. 
  That $X_1, \dots, X_n$ also forms a $\big( \cond{P}{Y = y} \big)$--Markov chain and a $\big( \cond{Q}{Y = y} \big)$--Markov chain follows from the fact that Markov chains are Markov random fields by Proposition~\ref{pro:characterization_markov_chain}, and from Corollary~\ref{cor:mrf_stable_conditioning}.

  For the second statement, let $i$ be given.
  Write $I = I_- \cup I_+$ with $I_- \subseteq \start{i-1}$, $I_+ \subseteq \range{i}{n}$.
  Correspondingly, we write $Y = Y_- Y_+$. 
  Remember that being a Markov chain implies being a Markov random field according to Proposition~\ref{pro:characterization_markov_chain}.
  Together with Lemma~\ref{lem:equivalence_independence} and the decomposition separoid axiom (S3), one can show the following two independences:
  \begin{equation*}
    \IndepF{P}{X_i}{Y_-}{X_{i-1}Y_+}, \quad  \quad\IndepF{Q}{X_i}{Y_-}{X_{i-1}Y_{+}}.\footnotemark
  \end{equation*}
  \footnotetext{More precisely, the global Markov property ensures $X_i \indep_P X_{[i-2]} \ | \ X_{i-1}Y_+$ (Note that this is trivial if $X_i$ is part of $Y_+$, and otherwise we have the disjointness stated in the global Markov property). Then by Lemma~\ref{lem:equivalence_independence}, we can add the variables $X_{i-1}Y_+$ to the left-hand-sides. The decomposition separoid axiom (S3) then allows to remove the variables that we do not need for the final result.}
  It follows
  \begingroup
  \allowdisplaybreaks
  \begin{align*}
    \big(\cond{P}{Y = y }\big)(x_i \mid x_{i-1}) & \overset{(\star)}{=} P(x_i \mid x_{i-1}, y)  & \big( \text{Lemma~\ref{lem:double_conditioning}}\big) \\
    & = P(x_i \mid x_{i-1}, y_-, y_+) \\
    & = P(x_i \mid x_{i-1}, y_+) \\
    & = \frac{P(x_i, y_+ \mid x_{i-1})}{P(y_+ \mid x_{i-1})}  \\
    & = \frac{P(x_{i} \mid x_{i-1}) \cdot P(y_+ \mid x_{i-1}, x_i)}{\sum_{x_{i}'} P(x_{i}' \mid x_{i-1}) \cdot P(y_+ \mid x_{i-1}, x_{i}')}, 
  \end{align*}
  \endgroup
  In $(\star)$, to apply the lemma, we used $P(y) \neq 0$ and the additional assumption that $\big( P|_{Y = y}\big)(x_{i-1}) \neq 0$, which results in $P(y, x_{i-1}) \neq 0$.
  Due to $P$ being absolutely continuous with respect to $Q$, we also obtain $Q(y, x_{i-1}) \neq 0$.
  Thus, we similarly get:
  \begin{equation*}
    \big(\cond{Q}{Y=y}\big)(x_{i} \mid x_{i-1}) = \frac{Q(x_{i} \mid x_{i-1}) \cdot Q(y_+ \mid x_{i-1}, x_i)}{\sum_{x_{i}'} Q(x_{i}' \mid x_{i-1}) \cdot Q(y_+ \mid x_{i-1}, x_{i}')}.
  \end{equation*} 
  By assumption, we already know $P(x_i \mid x_{i-1}) = Q(x_i \mid x_{i-1})$.
  The preceding computations therefore show that it is enough to prove that $P(y_+ \mid x_{i-1}, x_{i}) = Q(y_+ \mid x_{i-1}, x_i)$.
  We can assume that $I_+ \subseteq \range{i+1}{n}$ since otherwise $x_i$ determines the first entry of $y_+$, which can then be removed from the expression if it is equal to $x_i$.
  Then, let $J \subseteq \range{i+1}{n}$ be such that $J \dot\cup I_+ = \range{i+1}{n}$.
  We obtain:
  \begingroup
  \allowdisplaybreaks
  \begin{align*}
    P(y_+ \mid x_{i-1}, x_i) & = \sum_{x_J} P\big(x_{\range{i+1}{n}} \mid x_{i-1}, x_i\big) \\
    & = \sum_{x_{J}}\prod_{j = i}^{n-1} P(x_{j+1} \mid x_j) & \big( \text{Markov chain property} \big) \\
    & = \sum_{x_{J}} \prod_{j = i}^{n-1} Q(x_{j+1} \mid x_j) \\
    & = \sum_{x_J} Q\big( x_{\range{i+1}{n}} \mid x_{i-1}, x_i \big) & \big(\text{Markov chain property}\big) \\
    & = Q(y_{+} \mid x_{i-1}, x_i).
  \end{align*}
  \endgroup
  This was to show.
\end{proof}

\subsection{Information Functions Satisfying the Chain Rule}\label{sec:functions_satisfying_chain_rule}

Before we show in the next subsection that information functions can always be restricted to stable probability sets, we first define a class of information functions to work with.
This includes Shannon entropy, Kullback-Leibler divergence, and cross-entropy.
We also connect higher-order cross-entropy and Kullback-Leibler divergence to the notion of cluster entropy defined in~\cite{Cocco2012}.

To capture many information functions, we work in the following general setting very similar to~\cite{Baudot2015a}, Section 5.
Let $\mon{M}$ be the monoid of equivalence classes of random variables on the fixed sample space $\samp$, and let $r \geq 0$ be a natural number. 
We denote tuples of probability mass functions in $\Sim{\samp}^{r+1}$ by $\big( 
P \| Q_1, \dots, Q_r\big) \coloneqq \big( P, Q_1, \dots, Q_r \big)$ to remind of the notation inside the Kullback-Leibler divergence.
Then, we set
\begin{equation}\label{eq:absolutely_continuous_restriction}
  \AC{\Sim{\samp}^{r+1}} \coloneqq \Big\lbrace  \big( P \| Q_1, \dots, Q_r \big) \in \Sim{\samp}^{r+1} \ \big| \  P \ll Q_1, \dots, P \ll Q_r  \Big\rbrace.
\end{equation}
Here, $P \ll Q$, means that $P$ is absolutely continuous with respect to $Q$, see Definition~\ref{def:absolutely_continuous}.
In applications, $r = 0$ would be used for Shannon entropy and $\alpha$-entropy, and $r = 1$ for Kullback-Leibler divergence, $\alpha$--Kullback-Leibler divergence and cross-entropy, indicating the number of additional probability mass functions.
We do not have applications for $r \geq 2$ in mind.

Let 
\begin{equation}\label{eq:my_ab_group}
  \gro{G} \coloneqq \Meas\Big( \AC{\Sim{\samp}^{r+1}}, \R \Big) \coloneqq \Big\lbrace f: \AC{\Sim{\samp}^{r+1}} \to \R \ \big| \  f \mathrm{\ is \ measurable } \Big\rbrace,
\end{equation}
which clearly is an abelian group.
We also assume an additive monoid action of $\mon{M}$ on $\gro{G}$ given by the formula
 \begin{equation}\label{eq:general_conditioning_formula}
   \big[ X.f \big]\big( P \| Q_1, \dots, Q_r\big) = \sum_{x \in \vs{X}} g\big( P(x), Q_1(x), \dots, Q_r(x)\big) \cdot f\big( P|_{X = x} \| Q_1|_{X = x}, \dots, Q_r|_{X = x} \big)
\end{equation}
for some measurable function $g: [0, 1]^{r+1} \to \R$ with $g(0, \star, \dots, \star) = 0$.\footnote{The condition $g(0, \star, \dots, \star) = 0$ is satisfied in all examples we consider and ensures that we have a zero coefficient whenever the conditionals are not defined. We then define the product as simply zero. 
Additionally, note that due to absolute continuity, we have $P(x) = 0$ whenever $Q_i(x) = 0$ for some $i$. Therefore, it is enough to have the vanishing condition for $g$ in the first component only.}
We assume $g$ to be the same function irrespective of $X, f$, and $P, Q_1, \dots, Q_r$. 
Note that $P \ll Q_i$ implies that also $\cond{P}{X = x} \ll \cond{Q_i}{X = x}$, which means that the preceding formula is actually defined.

Furthermore, we assume to have a function $\CR: \mon{M} \to \gro{G}$
that satisfies the chain rule, meaning that for all $X, Y \in \mon{M}$, we have
\begin{equation*}
  \CR(XY) = \CR(X) + X.\CR(Y).
\end{equation*}
The reader can verify that all of Shannon entropy, Kullback-Leibler divergence, $\alpha$-entropy, $\alpha$--Kullback-Leibler divergence, and cross-entropy, fit precisely under this umbrella.
Precise definitions that fit our exact framework can be found in~\cite{Lang2022}, Sections 2 and 5.
For example, for Shannon entropy we have $r = 0$ and $\gro{G} = \Meas\big( \Sim{\samp}, \R \big)$.
The action is given by
\begin{equation*}
  \big[ X.f \big] (P) \coloneqq \sum_{x \in \vs{X}} P(x) \cdot f\big( P|_{X = x}\big),
\end{equation*}
and the Shannon entropy $\Ent: \mon{M} \to \gro{G}$ is given by
\begin{equation}\label{eq:entropy_definition}
  \big[\Ent(X)\big](P) \coloneqq - \sum_{x \in \vs{X}} P(x) \cdot \log P(x),
\end{equation}
which indeed satisfies the chain rule $\Ent(XY) = \Ent(X) + X.\Ent(Y)$; see Section~\ref{sec:ent_mi_ii}.
Note that $0 \cdot \log 0 \coloneqq 0$ in this formula.
Similarly, for Kullback-Leibler divergence we have $r = 1$ and $\gro{G} = \Meas\Big( \AC{\Sim{\samp}^2}, \R \Big)$.
The action is given by 
\begin{equation}\label{eq:monoid_action_KL}
  \big[ X.f \big] (P \| Q) \coloneqq \sum_{x \in \vs{X}} P(x) \cdot f\big( P|_{X = x} \| Q|_{X = x} \big).
\end{equation}
The Kullback-Leibler divergence $\KL: \mon{M} \to \gro{G}$ is then defined by
\begin{equation}\label{eq:kl_divergence}
  \big[\KL(X)\big](P \| Q) \coloneqq \sum_{x \in \vs{X}} P(x) \cdot \log \frac{P(x)}{Q(x)}.
\end{equation}
This also satisfies the chain rule: $\KL(XY) = \KL(X) + X.\KL(Y)$.
Note that $0 \cdot \log \frac{0}{Q(x)} \coloneqq 0$ in this formula.
Similarly, the cross-entropy $\CE: \mon{M} \to \gro{G}$ is defined by
\begin{equation}\label{eq:cross_entropy}
  \big[ \CE(X) \big](P \| Q) \coloneqq \sum_{x \in \vs{X}} P(x) \cdot \log \frac{1}{Q(x)}.
\end{equation}
Clearly, one then has $\CE(X) = I(X) + \KL(X)$.

Once these definitions are set in place, one can study higher-order terms $\CR: \mon{M}^q \to \gro{G}$ as in Equation~\eqref{eq:inductive_definition}, and visualize them in $\CR$-diagrams.
The higher-order terms for Shannon entropy $I$, i.e., the mutual information and interaction information, are frequently studied, whereas the higher-order terms for $\CE$ and $\KL$ have gotten substantially less attention.
So before we continue to study the $\CR$-diagrams of such functions for Markov random fields in the coming section, we briefly connect these higher-order terms to the cluster entropy from~\cite{Cocco2012}.

Namely,~\cite{Cocco2012} define the cluster (cross-)entropy\footnote{The authors often write ``entropy'' when they actually mean cross-entropy.} $\Delta \CE$ for subsets of $n$ random variables $X_1, \dots, X_n$ by
\begin{equation}\label{eq:cluster_entropy}
  \Delta \CE \big( \bigscolon_{j \in J} X_j \big) \coloneqq \sum_{K \subseteq J} (-1)^{|K| - |J|} \cdot \CE(X_K).
\end{equation}
In contrast, we define higher-order cross-entropies as in Equation~\eqref{eq:inductive_definition} by the inductive formula
\begin{equation*}
  \CE\big( X_{j_1} ; \dots ; X_{j_q} \big) = \CE\big( X_{j_1}; \dots ; X_{j_{q-1}} \big) - X_{j_q}.\CE\big( X_{j_1}; \dots; X_{j_{q-1}} \big).
\end{equation*}
In~\cite{Lang2022}, Corollary 3.5, part 6, it is deduced from the generalized Hu theorem that this obeys the following inclusion-exclusion type formula:
\begin{equation}\label{eq:formula_from_old_paper}
  \CE\big( \bigscolon_{j \in J} X_j \big) = \sum_{K \subseteq J} (-1)^{|K|+1} \cdot \CE(X_K).
\end{equation}
Comparing Equations~\eqref{eq:cluster_entropy} and~\eqref{eq:formula_from_old_paper}, we see that our higher-order cross-entropies and the cluster cross-entropies differ by a simple sign that depends on the order:
\begin{equation*}
  \CE\big( \bigscolon_{j \in J} X_j \big) = (-1)^{|J| + 1} \cdot \Delta \CE \big( \bigscolon_{j \in J} X_j \big).
\end{equation*}
Higher-order Kullback-Leibler divergence is then fully determined by higher-order cross-entropy and information, as $\KL\big( \bigscolon_{j \in J} X_j \big) = \CE\big( \bigscolon_{j \in J} X_j \big) - I \big( \bigscolon_{j \in J} X_j \big)$.
We hope that these connections provide motivation for the reader to study information diagrams for functions such as cross-entropy and Kullback-Leibler divergence, and that the study of these diagrams, in turn, could help to better understand concepts like the cluster cross-entropies.

\subsection{Restricting Information Measures to Stable Probability Sets}\label{sec:restricting_measures}

In this subsection, we show that information functions that ``satisfy the chain rule'' can, under mild conditions, always be restricted to subsets of probability mass functions as long as those are stable.
We define this notion in Definition~\ref{def:stable}, which encompasses the examples we saw in the previous subsection. 
In Theorem~\ref{thm:get_mrf_for_free_with_restriction}, we will then see that restricting to probability mass functions that give rise to Markov random fields leads to the formalism of an $\CR$-Markov random field, as studied in Section~\ref{sec:markov_random_fields}.
Let the notation throughout be as in the preceding subsection.

\begin{definition}
  [Stable Property]
  \label{def:stable}
  Let $n \geq 0$, $X_1, \dots, X_n \in \mon{M}$, $r \geq 0$, and $\Prty{R} = \Prty{R}(X_1, \dots, X_n)$ be a property of probability tuples with $r+1$ entries, which means that for all $\big( P \| Q_1, \dots, Q_r\big) \in \Sim{\samp}^{r+1}$, $\Prty{R}\big( P \| Q_1, \dots, Q_r\big)$ is either true or false.
  $\Prty{R}$ is called \emph{stable} if the following holds:
  \begin{itemize}
    \item $\Prty{R}$ is \emph{well-defined}, i.e., if $Y_1, \dots, Y_n$ are random variables on $\samp$ with $Y_i \sim X_i$, then
      \begin{equation*}
	\Prty{R}(X_1, \dots, X_n) = \Prty{R}(Y_1, \dots, Y_n);
      \end{equation*}
    \item $\Prty{R}$ is \emph{measurable}, i.e., the set
      \begin{equation*}
	\Big\lbrace \big( P \| Q_1, \dots, Q_r\big) \in \Sim{\samp}^{r+1} \ \big| \ \Prty{R}\big( P \| Q_1, \dots, Q_r\big) \Big\rbrace 
      \end{equation*}
      is a measurable subset of $\Sim{\samp}^{r+1}$;
    \item $\Prty{R}$ is \emph{stable under conditioning with respect to $X_1, \dots, X_n$}, i.e.: let $\mon{M}_{\Prty{R}}$ be the submonoid of $\mon{M}$ generated by $X_1, \dots, X_n$.\footnote{I.e., this submonoid consists of all $X_I$ for $I \subseteq [n]$.}
      Then we assume that for all $\big(P \| Q_1, \dots, Q_r\big) \in \Sim{\samp}^{r+1}$, all $Y \in \mon{M}_{\Prty{R}}$, and all $y \in \vs{Y}$ with $P(y) \neq 0$, following implication holds:
\begin{equation*}
  \Prty{R}\big( P \| Q_1, \dots, Q_r \big) \quad \Longrightarrow \quad \Prty{R}\big(P|_{Y = y} \| Q_1|_{Y = y}, \dots, Q_r|_{Y = y}\big).
\end{equation*}
  \end{itemize}
\end{definition}

Now, let $X_1, \dots, X_n \in \mon{M}$, and let $\Prty{R} = \Prty{R}(X_1, \dots, X_n)$ be a stable property of probability tuples with $r+1$ elements.
Define
\begin{equation*}
  \WP{\big(\AC{\Sim{\samp}^{r+1}}\big)}{\Prty{R}} \coloneqq \Big\lbrace \big( P \| Q_1, \dots, Q_r \big) \in \AC{\Sim{\samp}^{r+1}} \ \big| \ \Prty{R}\big( P \| Q_1, \dots, Q_r \big) \Big\rbrace.
\end{equation*}
Then, define
\begin{equation}\label{eq:new_group_definition}
  \WP{\gro{G}}{\Prty{R}} \coloneqq \Meas\Big( \WP{\big(\AC{\Sim{\samp}^{r+1}}\big)}{\Prty{R}} , \R \Big)
\end{equation}
and the action $\WP{.}{\Prty{R}}: \WP{\mon{M}}{\Prty{R}} \times \WP{\gro{G}}{\Prty{R}} \to \WP{\gro{G}}{\Prty{R}}$ by
\begin{equation}\label{eq:new_action}
  \big[ X \WP{.}{\Prty{R}} f \big]\big( P \| Q_1, \dots, Q_r \big) \coloneqq \big[ X.\tilde{f}\big]\big( P \| Q_1, \dots, Q_r \big),
\end{equation}
where $\tilde{f}$ is any measurable extension of $f$ to all of $\AC{\Sim{\samp}^{r+1}}$.
Note that this means that $X.\tilde{f}$ is a measurable extension of $X\WP{.}{\Prty{R}} f$, and that $X \WP{.}{\Prty{R}} f$ is thus again measurable.

\begin{lemma}
  \label{lem:well-defined_additive_action}
  The function $\WP{.}{\Prty{R}}: \WP{\mon{M}}{\Prty{R}} \times \WP{\gro{G}}{\Prty{R}} \to \WP{\gro{G}}{\Prty{R}}$, as defined above, is a well-defined additive monoid action.
\end{lemma}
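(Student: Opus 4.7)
The plan is to verify four things in turn: (i) the right-hand side of Equation~\eqref{eq:new_action} does not depend on the choice of measurable extension $\tilde{f}$, (ii) the resulting function really lies in $\WP{\gro{G}}{\Prty{R}}$, (iii) the identity element acts trivially, and (iv) the action is associative and additive. Well-definedness in (i) is the only non-routine step; all other items reduce essentially immediately to properties of the ambient action on $\gro{G}$.

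For (i), fix $(P \| Q_1, \dots, Q_r) \in \WP{\big(\AC{\Sim{\samp}^{r+1}}\big)}{\Prty{R}}$ and unpack Equation~\eqref{eq:general_conditioning_formula} for $[X.\tilde{f}](P \| Q_1, \dots, Q_r)$. This is a finite sum over $x \in \vs{X}$ of terms $g(P(x), Q_1(x), \dots, Q_r(x)) \cdot \tilde{f}(P|_{X=x} \| Q_1|_{X=x}, \dots, Q_r|_{X=x})$. By the standing assumption $g(0, \star, \dots, \star) = 0$, summands with $P(x) = 0$ vanish regardless of $\tilde{f}$. For summands with $P(x) \neq 0$, three facts combine: first, $X \in \WP{\mon{M}}{\Prty{R}}$ is generated by $X_1, \dots, X_n$, so stability of $\Prty{R}$ under conditioning (last bullet of Definition~\ref{def:stable}) gives $\Prty{R}(P|_{X=x} \| Q_1|_{X=x}, \dots, Q_r|_{X=x})$; second, absolute continuity $P \ll Q_i$ passes to conditional distributions; third, these combine to place the conditioned tuple inside $\WP{\big(\AC{\Sim{\samp}^{r+1}}\big)}{\Prty{R}}$, where any two measurable extensions of $f$ must agree with $f$ itself. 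Hence the sum depends only on $f$, and $X \WP{.}{\Prty{R}} f$ is well defined.

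For (ii), measurability of $X \WP{.}{\Prty{R}} f$ is immediate: it coincides with the restriction of the measurable function $X.\tilde{f} \in \gro{G}$ to the measurable subset $\WP{\big(\AC{\Sim{\samp}^{r+1}}\big)}{\Prty{R}} \subseteq \AC{\Sim{\samp}^{r+1}}$, measurability of this subset being ensured by the measurability clause in Definition~\ref{def:stable}. For (iii) and (iv), by well-definedness I may in each axiom pick the most convenient extensions. Trivial action of $\one$ reduces to $\one.\tilde{f} = \tilde{f}$ in $\gro{G}$. Additivity $X \WP{.}{\Prty{R}} (f + h) = X \WP{.}{\Prty{R}} f + X \WP{.}{\Prty{R}} h$ reduces to $X.(\tilde{f} + \tilde{h}) = X.\tilde{f} + X.\tilde{h}$ once I observe that $\tilde{f} + \tilde{h}$ is an extension of $f + h$. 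Associativity $(XY) \WP{.}{\Prty{R}} f = X \WP{.}{\Prty{R}}(Y \WP{.}{\Prty{R}} f)$ is the only identity with a mild subtlety, since the right-hand side requires an extension of $Y \WP{.}{\Prty{R}} f$; by (i) the particular choice is irrelevant, and choosing the canonical extension $Y.\tilde{f}$ reduces everything to the associativity axiom $(XY).\tilde{f} = X.(Y.\tilde{f})$ for the original action on $\gro{G}$.
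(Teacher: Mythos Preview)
Your proof is correct and follows essentially the same approach as the paper: reduce everything to the ambient action on $\gro{G}$, using stability under conditioning to ensure the conditioned tuples stay in the restricted domain, and noting that $Y.\tilde{f}$ serves as a measurable extension of $Y\WP{.}{\Prty{R}} f$ for associativity. The only small point you leave implicit, which the paper spells out, is the \emph{existence} of a measurable extension $\tilde{f}$ in the first place (extend by zero on the complement, which is measurable by the measurability clause of Definition~\ref{def:stable}) and the independence of the representative of the equivalence class $X$; both are routine.
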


\begin{proof}
  For well-definedness, one first needs to check that for all measurable $f: \WP{\big(\AC{\Sim{\samp}^{r+1}}\big)}{\Prty{R}} \to \R$, a measurable extension $\tilde{f}$ to all of $\AC{\Sim{\samp}^{r+1}}$ exists:
  note that from the fact that $\Prty{R}$ is stable and thus measurable, we immediately obtain that $\WP{\big(\AC{\Sim{\samp}^{r+1}}\big)}{\Prty{R}}$ is a measurable subset of $\AC{\Sim{\samp}^{r+1}}$. 
  Then $\tilde{f}$ can, for example, be constructed by mapping all elements in $\AC{\Sim{\samp}^{r+1}} \setminus \WP{\big(\AC{\Sim{\samp}^{r+1}}\big)}{\Prty{R}}$ to zero, which is clearly measurable.

  Furthermore, one needs to check that the definition is independent of the extension $\tilde{f}$ and the representative of the \emph{equivalence class} $X$.
  The first requirement follows from Equation~\eqref{eq:general_conditioning_formula}: $g$ was assumed to be independent of $\tilde{f}$, and $\tilde{f}$ is only applied to elements on which already $f$ is uniquely defined, due to the assumption of $\Prty{R}$ being stable under conditioning.
  That the formula is independent of the representative of $X$ then follows since the original action was well-defined.

  That $\WP{.}{\Prty{R}}$ is additive and $\one \in \WP{\mon{M}}{\Prty{R}}$ acts neutrally is obvious.
  We now show that the action is associative:
  \begin{align*}
    \Big[ (XY)\WP{.}{\Prty{R}} f \Big]\big( P \| Q_1, \dots, Q_r\big) & = 
    \Big[ (XY). \tilde{f} \Big]\big( P \| Q_1, \dots, Q_r\big) \\
    & = \Big[ X.\big( Y. \tilde{f}) \Big]\big( P \| Q_1, \dots, Q_r\big) \\
    & \overset{(\star)}{=} \Big[ X\WP{.}{\Prty{R}}\big( Y \WP{.}{\Prty{R}} f) \Big]\big( P \| Q_1, \dots, Q_r\big) .
  \end{align*}
  In step $(\star)$, we used that $Y.\tilde{f}$ is by definition a measurable extension of $Y\WP{.}{\Prty{R}} f$, which by definition of the ``outer'' action $\WP{.}{\Prty{R}}$ gives the result.
  This finishes the proof.
\end{proof}

Finally, let $\CR: \mon{M} \to \gro{G}$ be a function satisfying the chain rule as in Section~\ref{sec:functions_satisfying_chain_rule} and define
\begin{equation}\label{eq:new_F_def}
\WP{\CR}{\Prty{R}}: \WP{\mon{M}}{\Prty{R}} \to \WP{\gro{G}}{\Prty{R}}, \quad \big[\WP{\CR}{\Prty{R}}(X)\big](P \| Q_1, \dots, Q_r) \coloneqq \big[F(X)\big](P \| Q_1, \dots, Q_r).
\end{equation}
In other words, $\WP{\CR}{\Prty{R}}(X)$ is just the restriction of $\CR(X)$ to the tuples of probability mass functions satisfying property $\Prty{R}$.

\begin{proposition}
  \label{pro:property_restriction_chain_rule}
  Let  $r \geq 0$ and $\gro{G}$ be the abelian group defined in Equation~\eqref{eq:my_ab_group}.
  Assume $\mon{M}$, the monoid of equivalence classes of random variables on $\samp$, acts additively on $\gro{G}$ by Equation~\eqref{eq:general_conditioning_formula}, and that $\CR: \mon{M} \to \gro{G}$ is a function satisfying the chain rule Equation~\eqref{eq:cocycle_equationn}.

  Now, let $X_1, \dots, X_n$ be random variables on $\samp$ and $\Prty{R} = \Prty{R}(X_1, \dots, X_n)$ a stable property of tuples of $r+1$ probability mass functions.
  Let $\WP{\mon{M}}{\Prty{R}}$ be the submonoid generated by $X_1, \dots, X_n$ and $\WP{\gro{G}}{\Prty{R}}$ be defined as in Equation~\eqref{eq:new_group_definition}.
  Let $\WP{\mon{M}}{\Prty{R}}$ act on $\WP{\gro{G}}{\Prty{R}}$ as in Equation~\eqref{eq:new_action}.
  Then the restricted information function $\WP{\CR}{\Prty{R}}: \WP{\mon{M}}{\Prty{R}} \to \WP{\gro{G}}{\Prty{R}}$, defined in Equation~\eqref{eq:new_F_def}, satisfies the chain rule:
  \begin{equation*}
    \WP{\CR}{\Prty{R}}(XY) = \WP{\CR}{\Prty{R}}(X) + X\WP{.}{\Prty{R}}\WP{\CR}{\Prty{R}}(Y).
  \end{equation*}
\end{proposition}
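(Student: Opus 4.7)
The plan is to evaluate both sides of the claimed identity pointwise at an arbitrary tuple $(P \| Q_1, \dots, Q_r) \in \WP{(\AC{\Sim{\samp}^{r+1}})}{\Prty{R}}$ and reduce everything to the chain rule already satisfied by the unrestricted $\CR: \mon{M} \to \gro{G}$.

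The key first step is the observation that $\CR(Y) \in \gro{G}$, viewed as a measurable function on all of $\AC{\Sim{\samp}^{r+1}}$, is itself a valid measurable extension of $\WP{\CR}{\Prty{R}}(Y)$; this is immediate from the definition in Equation~\eqref{eq:new_F_def}. By Lemma~\ref{lem:well-defined_additive_action}, the value of $X \WP{.}{\Prty{R}} \WP{\CR}{\Prty{R}}(Y)$ at $(P \| Q_1, \dots, Q_r)$ does not depend on which measurable extension is used, so I may evaluate it via $\CR(Y)$ itself, obtaining
\begin{equation*}
  \bigl[X \WP{.}{\Prty{R}} \WP{\CR}{\Prty{R}}(Y)\bigr](P \| Q_1, \dots, Q_r) = [X.\CR(Y)](P \| Q_1, \dots, Q_r).
\end{equation*}
Next, applying the chain rule for $\CR$ on $\mon{M}$ at the same tuple yields
\begin{equation*}
  [\CR(XY)](P \| Q_1, \dots, Q_r) = [\CR(X)](P \| Q_1, \dots, Q_r) + [X.\CR(Y)](P \| Q_1, \dots, Q_r).
\end{equation*}
Using Equation~\eqref{eq:new_F_def} to rewrite the left side and the first summand on the right, together with the identification above for the second summand, I recover exactly the restricted chain rule at $(P \| Q_1, \dots, Q_r)$, and arbitrariness of the tuple completes the proof.

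I do not expect any step to constitute a serious obstacle: the argument is essentially a bookkeeping exercise once Lemma~\ref{lem:well-defined_additive_action} is in hand. The one conceptual subtlety --- already absorbed into the statement and proof of that lemma --- is that the conditioning formula in Equation~\eqref{eq:general_conditioning_formula} probes the extension only on conditionals $(P|_{X=x} \| Q_1|_{X=x}, \dots, Q_r|_{X=x})$ with $P(x) \neq 0$, and stability of $\Prty{R}$ under conditioning (together with $X \in \WP{\mon{M}}{\Prty{R}}$) keeps these conditionals inside the restricted domain $\WP{(\AC{\Sim{\samp}^{r+1}})}{\Prty{R}}$. This is precisely what makes the restricted action well-defined and what lets the chain rule transfer verbatim.
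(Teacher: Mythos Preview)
Your proposal is correct and follows essentially the same approach as the paper's proof: evaluate pointwise, use that $\CR(Y)$ is a measurable extension of $\WP{\CR}{\Prty{R}}(Y)$ so that $X\WP{.}{\Prty{R}}\WP{\CR}{\Prty{R}}(Y)$ agrees with $X.\CR(Y)$ on the restricted domain, and then invoke the unrestricted chain rule. The paper's proof is more terse but identical in substance; your added remark about stability of $\Prty{R}$ under conditioning is exactly what the paper packages into Lemma~\ref{lem:well-defined_additive_action}.
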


\begin{proof}
  We have
  \begin{align*}
    \big[\WP{\CR}{\Prty{R}}(XY)\big] \big(P \| Q_1, \dots, Q_r \big) & = \big[\CR(XY) \big]\big( P \| Q_1, \dots, Q_r \big) \\
    & = \big[ \CR(X)\big]\big( P \| Q_1, \dots, Q_r \big) + \big[ X.\CR(Y) \big]\big(P \| Q_1, \dots, Q_r  \big) \\
    & \overset{(\star)}{=} \big[ \WP{\CR}{\Prty{R}}(X) \big]\big(P \| Q_1, \dots, Q_r \big) + \big[X\WP{.}{\Prty{R}} \WP{\CR}{\Prty{R}}(Y) \big] \big( P \| Q_1, \dots, Q_r\big) \\
    & = \big[ \WP{\CR}{\Prty{R}}(X) +  X\WP{.}{\Prty{R}} \WP{\CR}{\Prty{R}}(Y) \big] \big( P \| Q_1, \dots, Q_r\big).
  \end{align*}
  In step $(\star)$, in the right-hand term, we used that $\CR(Y)$ extends $\WP{\CR}{\Prty{R}}(Y)$, which by definition of the action $\WP{.}{\Prty{R}}$ results in $X.\CR(Y)$ extending $X\WP{.}{\Prty{R}} \WP{\CR}{\Prty{R}}(Y)$.
\end{proof}

The preceding proposition ensures that we can apply Hu's Theorem~\ref{thm:hu_kuo_ting_generalized} to $\WP{\CR}{\Prty{R}}$, which we will make use of in the next section on Kullback-Leibler Markov chains.

In the next theorem, we restrict this setting somewhat further to obtain a result on $\WP{\CR}{\Prty{R}}$-Markov random fields that will apply to Shannon entropy, Kullback-Leibler divergence and cross-entropy.
The result will, however, as stated not apply to $\alpha$-entropy and $\alpha$--Kullback-Leibler divergence.

For the theorem, we make the stronger assumption that the action of $\mon{M}$ on $\gro{G}$ is given by
\begin{equation}\label{eq:stronger_action}
  \big[X.f\big]\big(P \| Q_1, \dots, Q_r\big) = \sum_{x \in \vs{X}} P(x) \cdot f\big( P|_{X = x} \| Q_1|_{X = x}, \dots, Q_r|_{X = x} \big),
\end{equation}
meaning that, in Equation~\eqref{eq:general_conditioning_formula}, we have $g(p, \star, \dots, \star) = p$.
Additionally, we make the assumption that $\CR$ is given by
\begin{equation}\label{eq:F_stronger_assumption}
  \big[\CR(X)\big]\big(P \| Q_1, \dots, Q_r\big) = \sum_{x \in \vs{X}}P(x) \cdot h\big(P(x), Q_1(x), \dots, Q_r(x)\big)
\end{equation}
for some function $h: [0, 1]^{r+1} \to \R$.
Clearly, all these stronger assumptions hold for Shannon entropy, Kullback-Leibler divergence, and cross-entropy, see, e.g., Equations~\eqref{eq:entropy_definition} and~\eqref{eq:kl_divergence}.\footnote{To make those examples fit the framework precisely, one needs to define $\log(0)$ to be an arbitrary real number instead of $- \infty$. This does not change the result as $\log(0)$ is always multiplied with the coefficient $0$.}

Now, fix random variables $X_1, \dots, X_n$ on $\samp$ and a graph $\Gr{G} = (\Ver{V}, \Ed{E})$ with $\Ver{V} = \start{n}$. 
We now define the property
\begin{equation}\label{eq:mrf_property_yeah}
  \Prty{MRF} \coloneqq \Prty{MRF}(\Gr{G}; X_1, \dots, X_n).
\end{equation}
When applied to a probability tuple $\big( P \| Q_1, \dots, Q_r \big)$, this evaluates to ``true'' if $X_1, \dots, X_n$ form a $P$-Markov random field and, for all $i$, a $Q_i$-Markov random field, with respect to $\Gr{G}$.
We know from Corollary~\ref{cor:mrf_stable_conditioning} that $\Prty{MRF}$ is stable under conditioning.
It is also clearly well-defined, since conditional independences do not depend on the equivalence classes of the involved random variables.
Finally, it is measurable since it is a conjunction of conditional independence relations: the subset of probability mass functions that satisfy an independence relation can by Equation~\eqref{eq:independence_relation} easily seen to be measurable.

Thus, $\Prty{MRF}$ is a stable property and we obtain the function $\WP{\CR}{\Prty{MRF}}$ that satisfies the chain rule by Proposition~\ref{pro:property_restriction_chain_rule}.
Consequently, we obtain a separation rule $\indep_{\WP{\CR}{\Prty{MRF}}}$ by Proposition~\ref{pro:F-separoid}.
In the following, we slightly generalize from this by considering properties that \emph{imply} the Markov random field property, which is crucial in the next section when restricting to properties that are motivated by physics.

\begin{theorem}
  \label{thm:get_mrf_for_free_with_restriction}
  Let $r \geq 0$ and $\gro{G}$ be the abelian group defined in Equation~\eqref{eq:my_ab_group}.
  Assume $\mon{M}$, the monoid of equivalence classes of random variables on $\samp$, acts additively on $\gro{G}$ by Equation~\eqref{eq:stronger_action}, and that $\CR: \mon{M} \to \gro{G}$ is a function of the form Equation~\eqref{eq:F_stronger_assumption} satisfying the chain rule Equation~\eqref{eq:cocycle_equationn}.

  Now, let $X_1, \dots, X_n$ be random variables on $\samp$, $\Gr{G}$ a graph with vertex set $[n]$, and $\Prty{R} = \Prty{R}\big(\Gr{G}; X_1, \dots, X_n \big)$ a stable property that implies the property $\Prty{MRF}$ from Equation~\eqref{eq:mrf_property_yeah}:
  \begin{equation*}
    \forall \big( P \| Q_1, \dots, Q_r \big) \in \AC{\Sim{\samp}^{r+1}}: \ \ \Prty{R}\big( P \| Q_1, \dots, Q_r\big) \quad \Longrightarrow \quad \Prty{MRF}\big( P \| Q_1, \dots, Q_r \big).
  \end{equation*}
  Then $X_1, \dots, X_n$ form an $\WP{\CR}{\Prty{R}}$-Markov random field with respect to $\Gr{G}$, where $\WP{\CR}{\Prty{R}}$ is defined as in Equation~\eqref{eq:new_F_def} as a restriction of $\CR$.
\end{theorem}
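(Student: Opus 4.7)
The plan is to verify the global Markov property of Definition~\ref{def:global_markov_property} directly for the induced relation $\indep_{\WP{\CR}{\Prty{R}}}$. Fix disjoint $\Ver{A}, \Ver{B}, \Ver{C} \subseteq \Ver{V}$ with $\Ver{C}$ separating $\Ver{A}$ from $\Ver{B}$; by Definition~\ref{def:F_independence}, the goal is the equation $X_{\Ver{C}} \WP{.}{\Prty{R}} \WP{\CR}{\Prty{R}}(X_{\Ver{A}}; X_{\Ver{B}}) = 0$ in $\WP{\gro{G}}{\Prty{R}}$, which amounts to pointwise vanishing on every tuple $(P \| Q_1, \dots, Q_r)$ satisfying $\Prty{R}$. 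Fix any such tuple. By hypothesis, $\Prty{R}$ implies $\Prty{MRF}$, so $X_1, \dots, X_n$ form a $P$-Markov random field and, for every $i$, a $Q_i$-Markov random field with respect to $\Gr{G}$; the global Markov property applied at each of these distributions yields the probabilistic independences $\IndepF{P}{X_{\Ver{A}}}{X_{\Ver{B}}}{X_{\Ver{C}}}$ and $\IndepF{Q_i}{X_{\Ver{A}}}{X_{\Ver{B}}}{X_{\Ver{C}}}$. Stability of $\Prty{R}$ ensures that any conditioning performed below stays inside $\WP{(\AC{\Sim{\samp}^{r+1}})}{\Prty{R}}$, so $\WP{.}{\Prty{R}}$ and $\WP{\CR}{\Prty{R}}$ can be identified with the ambient action and $\CR$ throughout the pointwise computation.

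The central step is a short local claim: at any tuple $(P' \| Q_1', \dots, Q_r')$ such that $X$ is unconditionally independent from $Y$ with respect to $P'$ and every $Q_i'$, one has $[Y.\CR(X)](P' \| Q_1', \dots, Q_r') = [\CR(X)](P' \| Q_1', \dots, Q_r')$. The assumed form of $\CR$ in Equation~\eqref{eq:F_stronger_assumption} makes $[\CR(X)]$ a function of the $X$-marginals alone, while unconditional independence of $X$ from $Y$ under each distribution forces $(P'|_{Y=y})_X = P'_X$ and $(Q_i'|_{Y=y})_X = (Q_i')_X$ for every $y$ with $P'(y) \neq 0$. Hence each summand of the expansion
\begin{equation*}
  [Y.\CR(X)](P' \| Q_1', \dots, Q_r') = \sum_{y} P'(y)\, [\CR(X)](P'|_{Y=y} \| Q_1'|_{Y=y}, \dots, Q_r'|_{Y=y})
\end{equation*}
equals $P'(y)[\CR(X)](P' \| Q_1', \dots, Q_r')$, and the sum collapses. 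Combined with the inductive identity $\CR(X;Y) = \CR(X) - Y.\CR(X)$ of Equation~\eqref{eq:inductive_definition}, this immediately yields $[\CR(X;Y)](P' \| Q_1', \dots, Q_r') = 0$.

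To conclude, I would again expand with Equation~\eqref{eq:stronger_action} to obtain
\begin{equation*}
  [X_{\Ver{C}}.\CR(X_{\Ver{A}}; X_{\Ver{B}})](P \| Q_1, \dots, Q_r) = \sum_{x_{\Ver{C}}} P(x_{\Ver{C}})\, [\CR(X_{\Ver{A}}; X_{\Ver{B}})](P|_{X_{\Ver{C}}=x_{\Ver{C}}} \| Q_1|_{X_{\Ver{C}}=x_{\Ver{C}}}, \dots, Q_r|_{X_{\Ver{C}}=x_{\Ver{C}}}).
\end{equation*}
By Lemma~\ref{lem:independence_propagates}, at every $x_{\Ver{C}}$ with $P(x_{\Ver{C}}) \neq 0$ (and hence $Q_i(x_{\Ver{C}}) \neq 0$ by absolute continuity), the conditioned tuple satisfies the unconditional independence of $X_{\Ver{A}}$ from $X_{\Ver{B}}$ under each conditioned distribution, so the local claim forces each summand to vanish. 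This gives the desired pointwise equality and hence the $\WP{\CR}{\Prty{R}}$-independence. The main obstacle I anticipate is purely bookkeeping: checking that each conditioning preserves both membership in $\WP{(\AC{\Sim{\samp}^{r+1}})}{\Prty{R}}$ and absolute continuity, and that $\WP{.}{\Prty{R}}$ evaluated on tuples satisfying $\Prty{R}$ really does coincide with the ambient action applied to $\CR$. The substantive content is simply the marginal invariance of $[\CR(X)]$ under conditioning on independent variables.
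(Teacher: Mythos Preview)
Your proof is correct and follows essentially the same approach as the paper: both verify the global Markov property by a direct pointwise computation exploiting the explicit form of $\CR$ in Equation~\eqref{eq:F_stronger_assumption} together with the fact that conditional independence leaves the relevant marginals unchanged. The only organizational difference is that the paper reformulates the target via Proposition~\ref{pro:pairwise_independence_characterization} as $(X_{\Ver{C}}X_{\Ver{B}})\WP{.}{\Prty{R}}\WP{\CR}{\Prty{R}}(X_{\Ver{A}}) = X_{\Ver{C}}\WP{.}{\Prty{R}}\WP{\CR}{\Prty{R}}(X_{\Ver{A}})$ and performs a single chain of equalities using the conditional independence $P(x_{\Ver{A}}\mid x_{\Ver{C}},x_{\Ver{B}})=P(x_{\Ver{A}}\mid x_{\Ver{C}})$ directly, whereas you first condition on $X_{\Ver{C}}$, invoke Lemma~\ref{lem:independence_propagates} to obtain unconditional independence, and then apply your isolated local claim; the underlying computation is the same.
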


\begin{proof}
  Let $\Ver{A}, \Ver{B}, \Ver{C} \subseteq \Ver{V} = \start{n}$ be disjoint vertex sets such that $\Ver{C}$ separates $\Ver{A}$ from $\Ver{B}$.
  We need to show the $\WP{\CR}{\Prty{R}}$-independence $\IndepF{\WP{\CR}{\Prty{R}}}{X_{\Ver{A}}}{X_{\Ver{B}}}{X_{\Ver{C}}}$,
  which, by Proposition~\ref{pro:pairwise_independence_characterization}, is equivalent to the equality
  \begin{equation}
    \label{eq:independence_to_show}
    \big( X_{\Ver{C}}X_{\Ver{B}} \big) \WP{.}{\Prty{R}}  \WP{\CR}{\Prty{R}}(X_{\Ver{A}}) = 
    X_{\Ver{C}} \WP{.}{\Prty{R}}  \WP{\CR}{\Prty{R}}(X_{\Ver{A}}).
  \end{equation}
  For all $\big( P \| Q_1, \dots, Q_r \big) \in \WP{\big(\AC{\Sim{\samp}^{r+1}}\big)}{\Prty{R}}$, we obtain:
  \begingroup
    \allowdisplaybreaks
  \begin{align*}
    \Big[ & \big( X_{\Ver{C}} X_{\Ver{B}} \big)  \WP{.}{\Prty{R}} \WP{\CR}{\Prty{R}} (X_{\Ver{A}}) \Big]  \big( P \| Q_1, \dots, Q_r \big) \\
    & =  \sum_{x_{\Ver{C}},x_{\Ver{B}}} P(x_{\Ver{C}},x_{\Ver{B}}) \cdot \Big[ \WP{\CR}{\Prty{R}}(X_{\Ver{A}}) \Big]\Big(P|_{X_{\Ver{C}}X_{\Ver{B}} = (x_{\Ver{C}},x_{\Ver{B}})} \ \| \ Q_1|_{X_{\Ver{C}}X_{\Ver{B}} = (x_{\Ver{C}},x_{\Ver{B}})}, \dots, Q_r|_{X_{\Ver{C}}X_{\Ver{B}} = (x_{\Ver{C}},x_{\Ver{B}})} \Big) \\
    & = \sum_{x_{\Ver{C}}, x_{\Ver{B}}} P(x_{\Ver{C}}, x_{\Ver{B}})
    \sum_{x_{\Ver{A}}} P\big( x_{\Ver{A}} \mid x_{\Ver{C}}, x_{\Ver{B}} \big) \cdot h\Big(P(x_{\Ver{A}} \mid x_{\Ver{C}}, x_{\Ver{B}}), \ Q_1(x_{\Ver{A}} \mid x_{\Ver{C}}, x_{\Ver{B}}), \dots, Q_r(x_{\Ver{A}} \mid x_{\Ver{C}}, x_{\Ver{B}})\Big) \\
& \overset{(\star)}{=}  \sum_{x_{\Ver{C}}, x_{\Ver{B}}} P(x_{\Ver{C}}, x_{\Ver{B}})
\sum_{x_{\Ver{A}}} P\big( x_{\Ver{A}} \mid x_{\Ver{C}}, x_{\Ver{B}} \big) \cdot h\Big(P(x_{\Ver{A}} \mid x_{\Ver{C}}), \ Q_1(x_{\Ver{A}} \mid x_{\Ver{C}}), \dots, Q_r(x_{\Ver{A}} \mid x_{\Ver{C}})\Big) \\
  & = \sum_{x_{\Ver{C}}, x_{\Ver{A}}} 
  \bigg( \sum_{x_{\Ver{B}}} P\big( x_{\Ver{A}}, x_{\Ver{C}}, x_{\Ver{B}} \big) \bigg) \cdot h\Big(P(x_{\Ver{A}} \mid x_{\Ver{C}}), \ Q_1(x_{\Ver{A}} \mid x_{\Ver{C}}), \dots, Q_r(x_{\Ver{A}} \mid x_{\Ver{C}})\Big) \\
  & = \sum_{x_{\Ver{C}}} P(x_{\Ver{C}}) \sum_{x_{\Ver{A}}} P(x_{\Ver{A}} \mid x_{\Ver{C}}) \cdot h\Big(P(x_{\Ver{A}} \mid x_{\Ver{C}}), \ Q_1(x_{\Ver{A}} \mid x_{\Ver{C}}), \dots, Q_r(x_{\Ver{A}} \mid x_{\Ver{C}})\Big) \\
  & = \sum_{x_{\Ver{C}}} P(x_{\Ver{C}}) \cdot \Big[ \WP{\CR}{\Prty{R}}(X_A)  \Big]
  \Big( P|_{X_{\Ver{C}} = x_{\Ver{C}}} \ \| \ Q_1|_{X_{\Ver{C}} = x_{\Ver{C}}}, \dots, Q_r|_{X_{\Ver{C}} = x_{\Ver{C}}} \Big) \\
  & = \Big[ X_{\Ver{C}} \WP{.}{\Prty{R}} \WP{\CR}{\Prty{R}}(X_A) \Big] \big( P \| Q_1, \dots, Q_r\big).
\end{align*}
\endgroup
  In the calculation, in step $(\star)$ we used the assumption that 
  \begin{equation*}
    \IndepF{P}{X_{\Ver{A}}}{X_{\Ver{B}}}{X_{\Ver{C}}}, \quad 
    \IndepF{Q_1}{X_{\Ver{A}}}{X_{\Ver{B}}}{X_{\Ver{C}}}, \quad  \dots \quad
    \IndepF{Q_r}{X_{\Ver{A}}}{X_{\Ver{B}}}{X_{\Ver{C}}},
  \end{equation*}
  which is due to $\big( P \| Q_1, \dots, Q_r \big) \in \WP{\big(\AC{\Sim{\samp}^{r+1}}\big)}{\Prty{R}} \subseteq \WP{\big(\AC{\Sim{\samp}^{r+1}}\big)}{\Prty{MRF}}$.
  The other steps follow from the definitions.
  Overall, this shows Equation~\eqref{eq:independence_to_show}, and we are done.
\end{proof}

\subsection{The Second Law of Thermodynamics}\label{sec:new_kl_section}

In this section, we study a weak version of the second law of thermodynamics and show that it can be represented visually by the degeneracy of a certain Kullback-Leibler diagram.

We fix random variables $X_1, \dots, X_n$ on $\samp$.
We now work with the rather special property $\Prty{P} = \Prty{P}(X_1, \dots, X_n)$ of probability tuples $\big( P \| Q \big) \in \Sim{\samp}^2$ defined as follows:
$\Prty{P}\big( P \| Q\big)$ holds if and only if
\begin{itemize}
  \item $P$ is absolutely continuous with respect to $Q$;
  \item $X_1, \dots, X_n$ forms a $P$-Markov chain and $Q$-Markov chain; and
  \item for all $i \geq 2$ and all $(x_{i-1}, x_i)$ with $P(x_{i-1}) \neq 0$, we have $P(x_i \mid x_{i-1}) = Q(x_i \mid x_{i-1})$.
\end{itemize}

The only difference of $P$ and $Q$ then stems from the initial distributions $P(X_1)$, $Q(X_1)$.
Intuitively, the letter ``P'' in the property $\Prty{P} = \Prty{P}(X_1, \dots, X_n)$ is supposed to remind of ``Physics''. 
Namely, let $(P, Q) \in \widetilde{\Sim{\samp}^2}$ with $\Prty{P}\big( P \| Q \big)$, then:

\begin{itemize}
  \item $i = 1, \dots, n$ indexes time points with equal separation;
  \item The value spaces $\vs{X_i}$ model the sets of possible micro states at time point $i$ (in reality, all these spaces are equal to each other, but we need not make this assumption);
  \item $P(X_1)$ and $Q(X_1)$ are the initial distributions of $P$ and $Q$, which can be considered as macro states of the universe at some fixed point in time;
    later, in the context of the second law of thermodynamics, we will choose $P(X_1)$ to be arbitrary and $Q(X_1)$ to be the uniform distribution;
  \item $P(x_i \mid x_{i-1}) = Q(x_i \mid x_{i-1})$ models the likelihood that $x_{i-1}$ evolves to $x_i$ according to the ``physical laws'' described by $P$ and $Q$;
    In reality, the physical laws are fixed and time independent, which means there is one transition matrix $T$ such that $P(x_i \mid x_{i-1}) = Q(x_i \mid x_{i-1}) = T(x_i \mid x_{i-1})$ for all $i$ and all $x_{i-1}, x_i$, but we do not yet make this stronger assumption;
  \item $P(X_i)$ and $Q(X_i)$ are the macro states of the universe at time points $i$ when evolving $P(X_1)$ and $Q(X_1)$ according to the ``physical laws'' specified by $P$ and $Q$.
    These laws are equal to each other by assumption.
\end{itemize}

In the following, we want to investigate a weak version of the second law of thermodynamics, which states that under some conditions, entropy cannot decrease over time. 
We show this by first investigating the Kullback-Leibler diagram for the property $\Prty{P}$ in Theorem~\ref{thm:second_law}, which will lead to the desired result in Corollary~\ref{cor:second_law_thermo}.

Formally, let $\KL: \mon{M} \to \Meas\Big( \AC{\Sim{\samp}^2}, \R \Big)$ be the Kullback-Leibler divergence, see Equation~\eqref{eq:kl_divergence}.
The property $\Prty{P}$ is stable under conditioning by Proposition~\ref{pro:conditionals_still_same_propagates}, and is also easily seen to be measurable and well-defined.
Therefore, by Proposition~\ref{pro:property_restriction_chain_rule}, the restricted Kullback-Leibler divergence
\begin{equation*}
  \WP{\KL}{\Prty{P}}: \WP{\mon{M}}{\Prty{P}} \to \Meas \Big( \WP{\big( \AC{\Sim{\samp}^2} \big)}{\Prty{P}}, \R \Big) \eqqcolon \WP{\gro{G}}{\Prty{P}}
\end{equation*}
satisfies the chain rule, where $\WP{\mon{M}}{\Prty{P}}$ is the monoid generated by $X_1, \dots, X_n$.
Accordingly, by Hu's Theorem~\ref{thm:hu_kuo_ting_generalized}, we obtain a corresponding $\WP{\gro{G}}{\Prty{P}}$-valued measure $\set{\WP{\KL}{\Prty{P}}}: 2^{\set{X}} \to \WP{\gro{G}}{\Prty{P}}$. 
The following theorem shows that this measure degenerates, as visualized in Figure~\ref{fig:second_law}:

\begin{theorem}[Structure of the $\WP{\KL}{\Prty{P}}$-diagram]
  \label{thm:second_law}
  Let $\KL: \mon{M} \to \gro{G}$ be the Kullback-Leibler divergence from Equation~\eqref{eq:kl_divergence}. 
  Let $X_1, \dots, X_n$ be random variables on $\samp$ and consider the property $\Prty{P} = \Prty{P}(X_1, \dots, X_n)$ defined above, leading to a restriction $\WP{\KL}{\Prty{P}}: \WP{\mon{M}}{\Prty{P}} \to \WP{\gro{G}}{\Prty{P}}$ and a $\WP{\gro{G}}{\Prty{P}}$-valued measure $\set{\WP{\KL}{\Prty{P}}}: 2^{\set{X}} \to \WP{\gro{G}}{\Prty{P}}$. 
  One has the following:
  \begin{itemize}
    \item For all atoms $\atom{I}$ where $I$ does not only consist of consecutive numbers, one has $\set{\WP{\KL}{\Prty{P}}}(\atom{I}) = 0$;
    \item Let $\atom{\range{i}{k}}$ with $1 \leq i \leq k \leq n$ be an atom that consists of only consecutive numbers. 
      If $i \geq 2$, then $\set{\WP{\KL}{\Prty{P}}}\big(\atom{\range{i}{k}}\big) = 0$.
  \end{itemize}
\end{theorem}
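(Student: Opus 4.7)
The plan is to handle the two claims separately: the first follows from the $\WP{\KL}{\Prty{P}}$-Markov chain structure that $\Prty{P}$ automatically induces, while the second reduces to a single explicit evaluation of a conditional KL divergence, followed by subset determination.

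For the first bullet, I will argue that $\Prty{P}$ forces $X_1, \dots, X_n$ to be both a $P$-Markov chain and a $Q$-Markov chain, which by Proposition~\ref{pro:characterization_markov_chain} is the same as being a $P$- and $Q$-Markov random field with respect to the chain graph $\Gr{G}$ whose edges are $\{i,i+1\}$. Hence $\Prty{P}$ implies $\Prty{MRF}(\Gr{G}; X_1, \dots, X_n)$. Because $\KL$ has the form required by Equations~\eqref{eq:stronger_action} and~\eqref{eq:F_stronger_assumption}, Theorem~\ref{thm:get_mrf_for_free_with_restriction} then shows that $X_1, \dots, X_n$ form an $\WP{\KL}{\Prty{P}}$-Markov random field with respect to $\Gr{G}$, i.e., an $\WP{\KL}{\Prty{P}}$-Markov chain, and Corollary~\ref{cor:F_Markov_Chain_Charac} immediately gives the first bullet.

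For the second bullet, the strategy is to exhibit a single region whose measure vanishes and which contains every atom $\atom{\range{i}{k}}$ with $i \geq 2$, and then appeal to subset determination. My candidate is $\set{X}_{\range{2}{n}} \setminus \set{X}_1$; by Hu's theorem, Theorem~\ref{thm:hu_kuo_ting_generalized}, its measure is $X_1 \WP{.}{\Prty{P}} \WP{\KL}{\Prty{P}}(X_{\range{2}{n}})$. Evaluating at any $(P \| Q) \in \WP{(\AC{\Sim{\samp}^2})}{\Prty{P}}$ and unfolding the monoid action~\eqref{eq:monoid_action_KL} together with the definition of $\KL$ yields
\begin{equation*}
\sum_{x_1} P(x_1) \sum_{x_2, \dots, x_n} P(x_2, \dots, x_n \mid x_1) \log \frac{P(x_2, \dots, x_n \mid x_1)}{Q(x_2, \dots, x_n \mid x_1)}.
\end{equation*}
By the $P$- and $Q$-Markov chain properties the two conditionals factor as $\prod_{j=2}^{n} P(x_j \mid x_{j-1})$ and $\prod_{j=2}^{n} Q(x_j \mid x_{j-1})$, and the equal-transition clause of $\Prty{P}$ then matches these products factor by factor on the $P$-support. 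Hence the log-ratio vanishes on every contributing summand and the whole region has measure zero.

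Subset determination, Theorem~\ref{thm:subset_determination}, then forces $\set{\WP{\KL}{\Prty{P}}}(\atom{L}) = 0$ for every nonempty $L \subseteq \range{2}{n}$, which includes all $\atom{\range{i}{k}}$ with $2 \leq i \leq k \leq n$, completing the second bullet. The main technical nuisance I anticipate is the bookkeeping on zero-probability events: one has to verify that $P(x_{[j-1]}) > 0$ implies $Q(x_{[j-1]}) > 0$ so that the $Q$-conditionals are well defined exactly where they contribute to the expectation, but this follows by an inductive walk along the chain from $P \ll Q$ and the factor-wise equality of transitions, and so is routine.
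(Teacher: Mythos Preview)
Your proof is correct. The first bullet is handled exactly as in the paper: $\Prty{P}$ implies $\Prty{MRF}$ for the chain graph, Theorem~\ref{thm:get_mrf_for_free_with_restriction} yields an $\WP{\KL}{\Prty{P}}$-Markov random field, and Corollary~\ref{cor:F_Markov_Chain_Charac} finishes it.

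For the second bullet you take a genuinely different route. The paper works atom-by-atom: for each $\atom{\range{i}{k}}$ with $i\ge 2$ it uses Corollary~\ref{cor:how_atoms_look_like} and the Markov-chain simplification Corollary~\ref{cor:information_terms_markov_chains} to rewrite the atom value, then uses the inductive definition of $\CR_2$ to factor out $X_{i-1}\WP{.}{\Prty{P}}\WP{\KL}{\Prty{P}}(X_i)$ from each term and shows that this single-step conditional vanishes by the equal-transition clause. You instead kill all these atoms in one shot: you compute $X_1\WP{.}{\Prty{P}}\WP{\KL}{\Prty{P}}(X_{\range{2}{n}})=\set{\WP{\KL}{\Prty{P}}}\big(\set{X}_{\range{2}{n}}\setminus\set{X}_1\big)$ directly, use the Markov factorisations of both $P$ and $Q$ to see the log-ratio is zero on the $P$-support, and then invoke subset determination (Theorem~\ref{thm:subset_determination}) to annihilate every atom $\atom{L}$ with $L\subseteq\range{2}{n}$. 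Your approach is arguably more in the spirit of the paper's own methodology, since it uses subset determination as the decisive step; the paper's route trades that for a shorter pointwise computation (only a single transition $P(x_i\mid x_{i-1})=Q(x_i\mid x_{i-1})$ enters) at the cost of more algebraic massaging via Corollaries~\ref{cor:how_atoms_look_like} and~\ref{cor:information_terms_markov_chains}. The zero-probability bookkeeping you flag is indeed routine: on the $P$-support one has $P(x_1,\dots,x_n)>0$, hence $Q(x_1,\dots,x_n)>0$ by $P\ll Q$, so all the $Q$-conditionals in the factorisation are well defined exactly where they contribute.
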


\begin{proof}
  By Proposition~\ref{pro:characterization_markov_chain} and Theorem~\ref{thm:get_mrf_for_free_with_restriction}, $X_1, \dots, X_n$ forms a $\WP{\KL}{\Prty{P}}$-Markov random field with respect to the graph $\Gr{G}$ with edges $i \edge (i+1)$ for $i = 1, \dots, n-1$.
  By Proposition~\ref{pro:characterization_markov_chain} again, we know that this is equivalent to forming a $\WP{\KL}{\Prty{P}}$--Markov chain.
  Then by Corollary~\ref{cor:F_Markov_Chain_Charac}, one has $\set{\WP{\KL}{\Prty{P}}}(\atom{I}) = 0$ if $I \subseteq \start{n}$ does \emph{not} only consist of consecutive numbers.
  Thus, only the atoms $\atom{I}$ of the form $I = \range{i}{k}$ are of relevance for the $\WP{\KL}{\Prty{P}}$-diagram. 

  Let $i \geq 2$. 
  Using Lemma~\ref{lem:how_atoms_look_like}, Corollary~\ref{cor:information_terms_markov_chains}, and Equation~\eqref{eq:inductive_definition}, we have
  \begin{align*}
    \set{\WP{\KL}{\Prty{P}}}\big(\atom{\range{i}{k}}\big) & = \big(X_{\start{i-1}}X_{\range{k+1}{n}}\big).\WP{\KL}{\Prty{P}}\big(X_i; \dots; X_k \big) \\
    & = \big(X_{\start{i-1}}X_{\range{k+1}{n}}\big).\WP{\KL}{\Prty{P}}\big( X_i; X_k\big) \\
    & = \big(X_{\start{i-1}}X_{\range{k+1}{n}}\big).\WP{\KL}{\Prty{P}}\big( X_i \big) - \big(X_{\start{i-1}}X_{\range{k}{n}}\big).\WP{\KL}{\Prty{P}}\big( X_i \big) \\
    & = \big( X_{\start{i-2}}X_{\range{k+1}{n}}\big).\Big( X_{i-1}.\WP{\KL}{\Prty{P}}\big( X_i  \big)\Big) -  \big( X_{\start{i-2}}X_{\range{k}{n}} \big). \Big( X_{i-1}.\WP{\KL}{\Prty{P}}\big( X_i \big)\Big)
  \end{align*}
  Since the monoid action sends zero elements to zero, it is enough to show $X_{i-1}.\WP{\KL}{\Prty{P}}(X_i) = 0$.
  For arbitrary $\big( P \| Q \big) \in \WP{\big(\AC{\Sim{\samp}^2}\big)}{\Prty{P}}$, we have
    \begin{align*}
      \big[ X_{i-1}.\WP{\KL}{\Prty{P}}(X_i) \big]\big( P \sep Q\big) & = \sum_{x_{i-1}} P(x_{i-1}) \cdot \big[\WP{\KL}{\Prty{P}}(X_i)\big] \big( \cond{P}{X_{i-1} = x_{i-1}} \ \sep \ \cond{Q}{X_{i-1} = x_{i-1}}\big) \\
  & = \sum_{x_{i-1}} P(x_{i-1}) \sum_{x_{i}} P(x_i \mid x_{i-1}) \cdot \log \frac{P(x_i \mid x_{i-1})}{Q(x_i \mid x_{i-1})} \\
    & =\sum_{x_{i-1}} P(x_{i-1}) \sum_{x_{i}} P(x_i \mid x_{i-1}) \cdot \log 1  \\
    & = 0,
  \end{align*}
  where in the third step we used the assumptions of equal transition probabilities, which holds since $\Prty{P}(P \| Q)$ is true.
  That was to show. 
\end{proof}

\begin{figure}
  \centering
  \includegraphics[width=\textwidth]{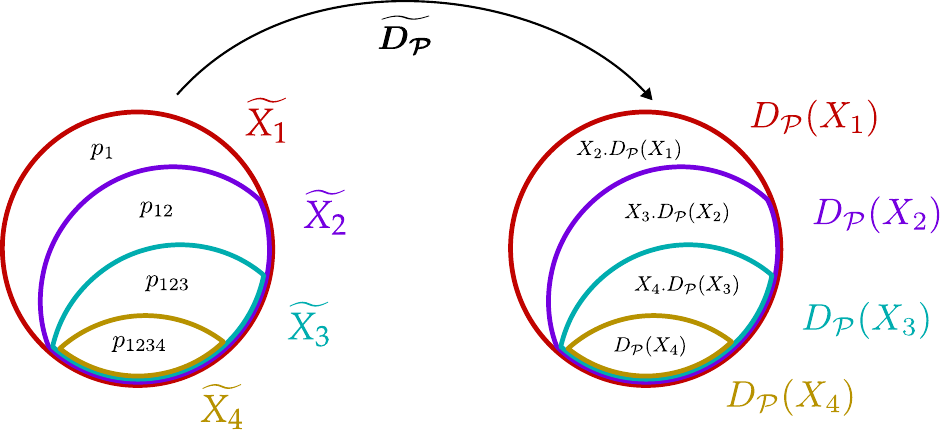}
  \caption{If one restricts the set of tuples $(P, Q)$ of probability mass functions to those that have equal transition probabilities and individually give rise to a Markov chain $X_1, \dots, X_n$, then this leads to a restricted Kullback-Leibler divergence $\WP{\KL}{\Prty{P}}$ and the corresponding measure $\set{\WP{\KL}{\Prty{P}}}$.
    Theorem~\ref{thm:second_law} shows that in the corresponding $\WP{\KL}{\Prty{P}}$-diagram, only atoms of the form $\atom{12\dots k}$ remain.
This is visualized here for the case $n = 4$ by omitting all atoms that are mapped to zero by $\set{\WP{\KL}{\Prty{P}}}$.
\emph{Intuitively, the Kullback-Leibler divergence shrinks over time.}
In particular, this leads by Hu's Theorem~\ref{thm:hu_kuo_ting_generalized} to the chain rule $\WP{\KL}{\Prty{P}}(X_{i-1}) = \WP{\KL}{\Prty{P}}(X_i) + X_i.\WP{\KL}{\Prty{P}}(X_{i-1})$, which we use in the proof of Corollary~\ref{cor:second_law_thermo} to deduce a weak version of the second law of thermodynamics.}
  \label{fig:second_law}
\end{figure}

We deduce the following weak version of the second law of thermodynamics:

\begin{corollary}
  [Second Law of Thermodynamics]
  \label{cor:second_law_thermo}
  Let all notation be as above. 
  Assume that the joint $X_1\cdots X_n: \samp \to \vs{X_1} \times \dots \times \vs{X_n}$ is surjective.
  Let $P$ be a probability mass function on $\samp$ such that $X_1, \dots, X_n$ forms a $P$-Markov chain.
  Assume that $\vs{X_i} = \vs{X_{j}}$ for all $i, j = 1, \dots, n$ and that there is a time-independent doubly stochastic transition matrix $T$ such that $P(x_{i} \mid x_{i-1}) = T(x_{i} \mid x_{i-1})$ for all $i = 2, \dots, n$ and $(x_{i-1}, x_{i})$ with $P(x_{i-1}) \neq 0$.
  Let $I: \mon{M} \to \gro{G}$ be the Shannon entropy, as defined in Equation~\eqref{eq:entropy_definition}.
  Then for all $i = 2, \dots, n$, one has
  \begin{equation*}
    \big[\Ent(X_{i-1})\big](P) \leq \big[ \Ent(X_{i})\big](P),
  \end{equation*}
  i.e., entropy cannot decrease.
\end{corollary}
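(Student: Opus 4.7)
My plan is to choose an auxiliary reference distribution $Q$ on $\samp$ whose $X_i$-marginals are uniform and whose transitions match those of $P$, so that Theorem~\ref{thm:second_law} governs the $\WP{\KL}{\Prty{P}}$-diagram of $(P,Q)$. The degeneracy of this diagram, combined with non-negativity of conditional Kullback-Leibler divergence, will give monotonicity of $\WP{\KL}{\Prty{P}}(X_i)(P \sep Q)$; and the standard identity relating KL against a uniform reference to Shannon entropy will then yield the claim.

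\emph{Step 1 (the reference distribution $Q$).} Let $m \coloneqq |\vs{X_1}|$. The uniform distribution on $\vs{X_1}$ together with the transitions $T$ determines a joint law $\mu$ on $\vs{X_1} \times \cdots \times \vs{X_n}$; since $T$ is doubly stochastic, a short induction shows that every marginal $\mu_i$ is uniform on $\vs{X_i}$. Using the surjectivity of $X_1 \cdots X_n \colon \samp \to \vs{X_1} \times \cdots \times \vs{X_n}$, every fibre is non-empty, and I can lift $\mu$ to a distribution $Q$ on $\samp$ whose joint law under $(X_1, \dots, X_n)$ equals $\mu$ and which satisfies $Q(\omega) > 0$ whenever $P(\omega) > 0$ (e.g.\ distribute $\mu(x_1, \dots, x_n)$ over the fibre proportionally to $P$ when $P$ has positive mass there, and uniformly otherwise). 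By construction, $(P, Q) \in \WP{\big(\AC{\Sim{\samp}^2}\big)}{\Prty{P}}$.

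\emph{Step 2 (KL monotonicity via the diagram).} By Theorem~\ref{thm:second_law}, the only potentially non-zero atoms of $\set{\WP{\KL}{\Prty{P}}}$ are those of the form $\atom{\{1, \dots, k\}}$. By Hu's Theorem~\ref{thm:hu_kuo_ting_generalized}, the regions $\set{X}_{i-1}$ and $\set{X}_{i-1} \cup \set{X}_i$ contain exactly the same surviving atoms (those $\atom{\{1, \dots, k\}}$ with $k \geq i-1$), hence $\WP{\KL}{\Prty{P}}(X_{i-1} X_i) = \WP{\KL}{\Prty{P}}(X_{i-1})$. Combining this with the chain rule $\WP{\KL}{\Prty{P}}(X_{i-1} X_i) = \WP{\KL}{\Prty{P}}(X_i) + X_i.\WP{\KL}{\Prty{P}}(X_{i-1})$ gives
\[
\WP{\KL}{\Prty{P}}(X_{i-1}) \;=\; \WP{\KL}{\Prty{P}}(X_i) \;+\; X_i.\WP{\KL}{\Prty{P}}(X_{i-1}).
\]

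\emph{Step 3 (from KL to entropy).} Evaluated at $(P, Q)$, the term $[X_i.\WP{\KL}{\Prty{P}}(X_{i-1})](P \sep Q)$ is a $P(x_i)$-weighted sum of Kullback-Leibler divergences between the conditional distributions of $X_{i-1}$ given $X_i = x_i$ under $P$ and $Q$, hence non-negative. Thus $[\WP{\KL}{\Prty{P}}(X_{i-1})](P \sep Q) \geq [\WP{\KL}{\Prty{P}}(X_i)](P \sep Q)$. Since every $Q_{X_j}$ is uniform on a set of size $m$, a direct computation gives $[\WP{\KL}{\Prty{P}}(X_j)](P \sep Q) = \log m - [\Ent(X_j)](P)$, and substituting yields the desired inequality $[\Ent(X_{i-1})](P) \leq [\Ent(X_i)](P)$. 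The main obstacle I anticipate is Step 1, namely arranging $Q$ to live on all of $\samp$ (not merely on the product of value spaces) while simultaneously ensuring both the prescribed marginal Markov structure and absolute continuity $P \ll Q$; once this is done, the remaining steps amount to diagrammatic bookkeeping plus the standard identity relating KL against a uniform reference to entropy.
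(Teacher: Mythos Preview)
Your proposal is correct and follows essentially the same route as the paper: construct a reference distribution $Q$ with uniform marginals and matching transitions, invoke Theorem~\ref{thm:second_law} to obtain $\WP{\KL}{\Prty{P}}(X_{i-1}) = \WP{\KL}{\Prty{P}}(X_i) + X_i.\WP{\KL}{\Prty{P}}(X_{i-1})$, use non-negativity of conditional KL, and translate via $\KL(X_j)(P\sep Q) = \log m - \Ent(X_j)(P)$. Your Step~1 is in fact more careful than the paper's, which simply asserts ``$Q$ is nowhere zero'' (not guaranteed if $T$ has zero entries) whereas you explicitly arrange $P \ll Q$ on $\samp$ via the fibrewise construction; your atom-counting derivation of the key identity in Step~2 is likewise a more explicit version of what the paper reads off from Figure~\ref{fig:second_law}.
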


\begin{proof}
  The proof is a diagrammatic representation of the reasoning in~\cite{Cover2005}, Chapter 4.4.
  Let $Q$ be any probability mass function on $\samp$ such that
  \begin{itemize}
    \item $Q_{X_1}$ is the uniform distribution on $X_1$;
    \item $Q(x_{i} \mid x_{i-1}) = T(x_{i} \mid x_{i-1})$ for all $i = 2, \dots, n$ and $(x_{i-1}, x_{i})$.\footnotemark
  \end{itemize}
  \footnotetext{To see that this exists, first construct a joint distribution on $\vs{X_1} \times \dots \times \vs{X_n}$ with these properties by multiplying the initial distribution with all the transitional distributions. 
  Then, ``pull it back'' to $\samp$, for which there is at least one possible construction since the joint $X_1\cdots X_n$ is surjective.} 
  Then, as is well-known, the fact that $T$ is doubly stochastic implies that $Q_{X_i}$ is uniform for all $i = 1, \dots, n$. 
  Let $\KL$ be the Kullback-Leibler divergence.
  We obtain
  \begin{align*}
    \big[\KL(X_{i-1})\big]( P \| Q) & = - \sum_{x_{i-1}} P(x_{i-1}) \log \frac{Q(x_{i-1})}{P(x_{i-1})} \\
    & = - \sum_{x_{i-1}} P(x_{i-1}) \log \frac{1}{\num{\vs{X_{i-1}}}} - \big[\Ent(X_{i-1})\big] (P)  \\
    & = C - \big[\Ent(X_{i-1})\big]( P),
  \end{align*}
  where $C$ is a constant, i.e., independent of $i, P$ and $Q$, and where the uniformity of $Q_{X_{i-1}}$ was used in the second step.
  Thus, it is enough to show that $\big[\KL(X_{i-1})\big]( P \| Q)$ satisfies the following inequality:
  \begin{equation*}
    \big[\KL(X_{i-1})\big]( P \| Q) \geq \big[\KL(X_{i})\big](P \| Q).
  \end{equation*}
  Now, note that $P \ll Q$ since $Q$ is nowhere zero.
  Also, by assumption, $P$ and $Q$ both have equal transition probabilities given by $T$.
  We obtain $\big( P \| Q\big) \in \WP{\Big(\AC{\Sim{\samp}^2}\Big)}{\Prty{P}}$.
  Thus, Theorem~\ref{thm:second_law}, as visualized by Figure~\ref{fig:second_law}, shows
  \begin{equation}\label{eq:equation_to_quote}
    \big[\KL(X_{i-1})\big](P \| Q) =  \big[\KL(X_{i})\big]( P \| Q) + \big[ X_{i}.\KL(X_{i-1})\big] (P \| Q) \geq \big[ \KL(X_{i})\big] (P \| Q),
  \end{equation}
  where in the last step we used that conditional Kullback-Leibler divergence is non-negative.
  That was to show.
\end{proof}

\subsection{Diffusion Models}\label{sec:diffusion_models}

We briefly elaborate another simple illustration of our theory, namely its help in deriving the explicit decomposition of the evidence lower bound (ELBO) in diffusion models~\citep{Dickstein2015}. 
Diffusion models are used in text-to-image generation models like Dalle~\citep{Ramesh2021} and Imagen~\citep{Chitwan2022} that recently received widespread attention.

In the classical formulation of diffusion models, the data is assumed to come from a distribution $Q(X) \in \Sim{\vs{X}}$ that is progressively transformed by fixed conditional noise distributions $Q(Z_1 \mid X)$ and $Q(Z_t \mid Z_{t-1})$, $t = 2, \dots, T$ over latent value spaces $\vs{Z_1}, \dots, \vs{Z_T}$. 
One can then form the joint distribution $Q(X, \mathbf{Z}) = Q(X) \cdot Q(Z_1 \mid X) \cdot \prod_{t = 2}^{T} Q(Z_t \mid Z_{t-1})$.
$X, Z_1, \dots, Z_T$ then form a Markov chain with respect to $Q$.

The goal for diffusion models is to learn a model distribution $P(X)$ that is close to the data distribution $Q(X)$.
This is done by fixing a latent distribution $P(Z_T)$ and initializing denoising distributions $P(Z_{t-1} \mid Z_{t})$, $t = 2, \dots, T$ and $P(X \mid Z_{1})$ that are parameterized by deep neural networks.
As before, one can then also build the joint distribution $P(X, \mathbf{Z}) = P(Z_T) \cdot \prod_{t = 2}^{T} P(Z_{t - 1} \mid Z_{t}) \cdot P(X \mid Z_{1})$.
Clearly, $Z_T, \dots, Z_1, X$ then form a Markov chain with respect to $P$, and due to symmetry (compare Proposition~\ref{pro:characterization_markov_chain}), also $X, Z_1, \dots, Z_T$ form a Markov chain with respect to $P$.

Now, fix a datapoint $x \in \vs{X}$ sampled from $Q(X)$.
Since minimizing the Kullback-Leibler divergence between $Q(X)$ and $P(X)$ is equivalent to maximizing the log-likelihood of $P(X)$ for datapoints sampled from $Q(X)$, the goal is to make a gradient step that increases $\log P(x)$.
In diffusion models, this is accomplished by constructing a so-called \emph{evidence-lower bound} (ELBO) of this objective that can easily be optimized with respect to the neural network parameters and does not require intractable summations.
This is in contrast to the marginal $P(x)$, which is a sum over all $\mathbf{z} \in \vs{Z_1} \times \dots \times \vs{Z_T}$.
The main result is as follows:

\begin{proposition}[Computing the ELBO for diffusion models]
  \label{pro:evidence_lower_bound}
  Assume $P(Z_T)$ is fixed and that $Q(Z_T \mid x) = P(Z_T)$.\footnote{In practice, this is achieved by adding so much noise to $x$ over the Markov chain $Q$ that eventually, all information in $x$ is destroyed.}
  Define the ELBO by
  \begin{equation*}
    \loss(x) \coloneqq \sum_{\mathbf{z}} Q(\mathbf{z} \mid x)  \log \frac{P(x, \mathbf{z})}{Q(\mathbf{z} \mid x)}.
  \end{equation*}
  Then $\loss(x) \leq \log P(x)$ and
  \begin{equation}\label{eq:elbo_criterion}
    \loss(x) = \sum_{z_1} Q(z_1 \mid x) \log P(x \mid z_1) - \sum_{t = 2}^{T} \sum_{z_t} Q(z_t \mid x) \cdot \KL\Big( Q(Z_{t-1} \mid z_t, x) \ \big\| \ P(Z_{t-1} \mid z_t) \Big).\footnotemark
  \end{equation}
  \footnotetext{For simplicity, we write $\KL\big(Q(Y) \| P(Y)\big)$ for $[\KL(Y)](Q \| P)$.}
\end{proposition}

\begin{proof}
  The two claims involve two different factorizations of $P(x, \mathbf{z})$.
  For the first claim, we note
  \begin{align*}
    \loss(x) &= \sum_{\mathbf{z}} Q(\mathbf{z} \mid x) \log \frac{P(x) \cdot P(\mathbf{z} \mid x)}{Q(\mathbf{z} \mid x)} \\
    &= \log P(x) - D\big( Q(\mathbf{Z} \mid x) \ \| \ P(\mathbf{Z} \mid x) \big) \\
    &\leq \log P(x),
  \end{align*}
  which follows from the non-negativity of Kullback-Leibler divergence.
  
  For the second claim, we note
  \begin{align*}
    \loss(x) &= \sum_{\mathbf{z}} Q(\mathbf{z} \mid x) \log \frac{P(x \mid \mathbf{z}) \cdot P(\mathbf{z})}{Q(\mathbf{z} \mid x)} \\
    &= \sum_{\mathbf{z}} Q(\mathbf{z} \mid x) \log P(x \mid \mathbf{z}) - \KL\big( Q(\mathbf{Z} \mid x) \ \| \ P(\mathbf{Z})  \big)
  \end{align*}
  Now, since $Z_T, \dots, Z_1, X$ form a Markov chain with respect to $P$, we have $P(x \mid \mathbf{z}) = P(x \mid z_1)$, and so the left part of the formula simplifies:
  \begin{equation*}
    \sum_{\mathbf{z}} Q(\mathbf{z} \mid x) \log P(x \mid \mathbf{z}) = \sum_{z_1} Q(z_1 \mid x) \log P(x \mid z_1).
  \end{equation*}
  Thus, it just remains to evaluate the Kullback-Leibler divergence.
  We first perform the computation and then justify our steps:
  \begin{align*}
    \KL\big( Q(\mathbf{Z} \mid x) \ \| \ P(\mathbf{Z}) \big) &\overset{(1)}{=} \big[ \KL(\mathbf{Z}) \big]\big( Q|_{X = x} \ \| P \big) \\
    &\overset{(2)}{=} \Bigg[ \sum_{t = 2}^{T} Z_{t}.\KL(Z_{t-1}) + \KL(Z_T) \Bigg]\big( Q|_{X = x} \ \| P \big) \\
    &\overset{(3)}{=} \sum_{t = 2}^{T} \sum_{z_t} Q(z_t \mid x) \cdot \KL \Big( Q(Z_{t-1} \mid z_t, x) \ \big\| \ P(Z_{t-1} \mid z_t) \Big) \\
    & \ \ \ \ \ + \KL\big( Q(Z_T \mid x) \ \| \ P(Z_T) \big).
  \end{align*}
  Since $Q(Z_T \mid x)$ and $P(Z_T)$ are assumed identical, the last Kullback-Leibler divergence disappears, proving the original claim.

  We now justify the steps.
  Step (1) just makes explicit that Kullback-Leibler divergence is a function whose arguments are a random variable \emph{followed by} a pair of distributions, see Equation~\eqref{eq:kl_divergence}.

  For step (2), remember that $X, Z_1, \dots, Z_T$ form a Markov chain with respect to $P$ and $Q$. 
  The Markov chain property is conditionally stable by Corollary~\ref{cor:mrf_stable_conditioning} and the fact that Markov chains are special Markov random fields (Proposition~\ref{pro:characterization_markov_chain}).
  Thus, $X, Z_1, \dots, Z_T$ also form a Markov chain with respect to $Q|_{X = x}$.
  Consequently, by Theorem~\ref{thm:get_mrf_for_free_with_restriction}, the \emph{restriction} of $\KL$ to the Markov chain property gives rise to a $\KL$-Markov chain.
  Thus, using Corollary~\ref{cor:F_Markov_Chain_Charac}, as visualized in Figure~\ref{fig:four_and_five_circles}, together with Hu's Theorem (Theorem~\ref{thm:hu_kuo_ting_generalized}), gives rise to the decomposition in step (2).

  Step (3) simply uses the definition of the monoid action, Equation~\eqref{eq:monoid_action_KL}.
  This proves the claim.
\end{proof}

Comparing with the exposition in~\cite{Bishop2023}, we see that our derivation avoids a few complications:
We do not need to compute explicit Bayesian posteriors, reason about marginalizations over many variables, or reason about terms that cancel each other. 
Since the Kullback-Leibler divergence is ubiquitous in machine learning, we think there could be many similar applications of our work that aim to find decompositions of loss functions over Markov random fields. 
We think that for Markov random fields that are more complex than the Markov chains encountered in diffusion models, our formalism could make finding useful decompositions and approximations of loss functions simpler than the alternative of more low-level reasoning.

\section{Discussion}\label{sec:discussion}

\subsection{Major Findings: Characterizations of \texorpdfstring{$\CR$}{CR}--FCMIs, \texorpdfstring{$\CR$}{CR}--Markov Random Fields, and Probabilistic Applications}

In this work, we have generalized the main results in~\cite{Yeung2002a}, which characterize probabilistic full conditional mutual independences and Markov random fields in terms of $I$-diagrams~\citep{Hu1962,Yeung1991}.
In doing so, we replaced the Shannon entropy $\Ent$ with any function $\CR: \mon{M} \to \gro{G}$ from a commutative, idempotent monoid $\mon{M}$ to an abelian group $\gro{G}$ that satisfies the chain rule:
\begin{equation*}
  F(XY) = F(X) + X.F(Y).
\end{equation*}
The dot denotes an additive monoid action of $\mon{M}$ on $\gro{G}$ that generalizes the conditioning of information functions on random variables.

Consequently, we also replaced the familiar probabilistic conditional $P$-independence --- which is characterized by a vanishing of conditional mutual information --- by $\CR$-independence:
\begin{equation*}
  \IndepF{\CR}{X}{Y}{Z}  \quad :\Longleftrightarrow \quad Z.\CR(X; Y) = 0.
\end{equation*}
This independence relation gives rise to a separoid~\citep{Dawid2001}, as we have shown in Proposition~\ref{pro:F-separoid}, a powerful framework in which one can study conditional independence relations.
Separoids generally also allow us to study conditional \emph{mutual} independences and full conditional mutual independences (FCMIs).
These can be used to characterize Markov random fields by the \emph{cutset property}, as was first observed in~\cite{Yeung2002a} for the classical case --- see Section~\ref{sec:mrfs_separoids}.
When specializing to the separoid with the independence relation given by $\indep_{\CR}$ we obtained the notions of $\CR$-FCMIs and $\CR$-Markov random fields that generalize the probabilistic counterparts.

By the generalized Hu Theorem~\ref{thm:hu_kuo_ting_generalized} from~\cite{Lang2022}, for fixed elements $X_1, \dots, X_n \in \mon{M}$, one obtains a $\gro{G}$-valued measure $\set{F}$ that describes relations of information terms $X_J.F\big( X_{L_1}; \dots ; X_{L_q} \big)$ of arbitrary degree $q$, where $J, L_1, \dots, L_q \subseteq \start{n}$ are any subsets.
These can then be visualized in $\CR$-diagrams, as we show for $n = 3$ in Figure~\ref{fig:venn_diagram_comparison}.
The core question asked in this work --- generalizing the investigations in~\cite{Yeung2002a} --- was how $\CR$-FCMIs and $\CR$-Markov random fields can be characterized in terms of the $\CR$-diagram. 
In Theorem~\ref{thm:FCMIs_characterization}, we found that $\CR$-FCMIs are characterized by a vanishing of the $\CR$-diagram on the atoms in the \emph{image} of the corresponding conditional partition, see also Figures~\ref{fig:fcmi_visualization_1} and~\ref{fig:fcmi_visualization_2}.

This results in a characterization of $\CR$-Markov random fields for a graph $\Gr{G}$, Theorem~\ref{thm:mrf_characterization}, that is easy to interpret: an atom $\atom{\Ver{W}}$ disappears in the $\CR$-diagram if and only if it is \emph{disconnected}, meaning that the vertex set $\Ver{W}$ is disconnected when removing the vertices outside of $\Ver{W}$ from $\Gr{G}$.
We visualized this result in Figure~\ref{fig:three_variables_mrfs} for Shannon entropy, though the basic picture applies for arbitrary $\CR$.
A visualization for the special case of a path-shaped graph --- recovering $\CR$-Markov chains --- can be found in Figure~\ref{fig:four_and_five_circles}.

When looking closely at the results, which are always given as an equivalence of two or more statements, it becomes clear that one direction is always easy to prove --- namely, if one starts assuming that a suitable set of atoms disappears in the $\CR$-diagram, then one can easily deduce an $\CR$-FCMI from it (possibly in the context of an $\CR$-Markov random field). 
After all, $\set{\CR}$ is a measure, and so the vanishing of a set of atoms automatically leads to the vanishing of all sets \emph{composed} of these atoms, and this includes by Hu's Theorem~\ref{thm:hu_kuo_ting_generalized} precisely the sets encoding the $\CR$-FCMIs.
However, going in the other direction is harder: why should an $\CR$-FCMI induce the vanishing of specific atoms?
This problem is addressed by our main technique ``subset determination'', Theorem~\ref{thm:subset_determination}.
It implies that whenever $\set{F}(A) = 0$ for some set of atoms $A$, then we also have $\set{F}(B) = 0$ for all subsets $B \subseteq A$, and in particular, $\set{\CR}(\atom{I}) = 0$ for all atoms $\atom{I} \in A$.
This is essentially based on inclusion-exclusion type arguments for $\CR$-diagrams, and uses the monoid action from $\mon{M}$ on $\gro{G}$.
We illustrate how to use subset determination to our advantage in Figure~\ref{fig:subset_determination_application}.
Essentially, subset determination replaces the use of inequalities in the original work~\citep{Yeung2002a}.

We then applied our results to the case where the information functions apply to probability mass functions. 
In Lemma~\ref{lem:well-defined_additive_action} and Proposition~\ref{pro:property_restriction_chain_rule} we showed that it is possible to restrict a general notion of information functions to conditionally stable properties while preserving the monoid action and the chain rule, which is similar to the use of adapted probability functors in information cohomology~\citep{Vigneaux2019a}.
Then, in Theorem~\ref{thm:get_mrf_for_free_with_restriction}, we restricted a narrower set of information functions $\CR$ --- including Shannon entropy, Kullback-Leibler divergence, and cross-entropy --- to stable properties $\Prty{R}$ that \emph{imply} the Markov random field property.
The restriction $\WP{\CR}{\Prty{R}}$ then gives rise to an $\WP{\CR}{\Prty{R}}$--Markov random field; consequently, disconnected atoms disappear from the $\CR_{\Prty{R}}$-diagram.

In Theorem~\ref{thm:second_law}, we applied this to the case of tuples of probability mass functions that give rise to a Markov chain and have the same conditional distributions --- reminiscent of physics, where the conditional distributions are governed by the physical laws.
It turns out that the corresponding $\KL$-diagram degenerates even further than the Markov chain property alone would predict; the Kullback-Leibler divergence of all regions outside the initial random variable disappears.
The reason is that in those regions, one ``conditions'' on the initial variable according to Hu's Theorem~\ref{thm:hu_kuo_ting_generalized}, and since the conditional distributions are the same, one obtains zero Kullback-Leibler divergence.

The result is an ever-shrinking\footnote{More precisely: non-increasing.} sequence of Kullback-Leibler divergences across the Markov chain, as we visualize in Figure~\ref{fig:second_law}.
This has the following consequence:
if one of the two distributions is uniform, then the other one cannot move further away from it as measured by the Kullback-Leibler divergence, which means that its entropy cannot decrease over time.
This is a weak version of the second law of thermodynamics, see Corollary~\ref{cor:second_law_thermo}.

Finally, we also used the Kullback-Leibler decomposition over Markov chains that results from Theorem~\ref{thm:get_mrf_for_free_with_restriction} to obtain a conceptually simple derivation of the evidence lower bound in diffusion models, Proposition~\ref{pro:evidence_lower_bound}.
This avoids some of the explicit computations used in prior expositions.

It is important to note that in most of these probabilistic results, we take a view in which information functions are still \emph{functions} of yet unspecified probability mass functions.
We simply restrict the sets of probability mass functions to those that satisfy stable properties, and can thereby preserve our general theory, the monoid action, and the ``subset determination'' property, Theorem~\ref{thm:subset_determination}, that does not generally hold for fixed probability mass functions.
This raises the question of whether the $P$-independence results in~\cite{Yeung2002a} are actually covered by our work.
We answer this affirmatively in Appendix~\ref{sec:slices}.
In the corresponding proofs, we use the fact that a $P$-independence $\IndepF{P}{X}{Y}{Z}$ is equivalent to the vanishing of the conditional mutual $P$-information $\big[Z.I(X;Y)\big](P)$, which is usually proved using Jensen's inequality. 
This reduction is then the only place where inequalities enter the theory, whereas they take a more central role in the proofs in~\cite{Yeung2002a}.

\subsubsection*{Further Findings: \texorpdfstring{$\CR$}--(Dual )Total Correlation, Cohomological Characterizations of Functions $\CR$, and Further Consequences}

Our proof of the $\CR$-diagram characterization of $\CR$-FCMIs uses Theorem~\ref{thm:charac_using_dual}, in which we characterize conditional mutual $\CR$-independences by the vanishing of $\CR$-dual total correlation.
This relates to $\CR$ in the same way that the classical dual total correlation from~\cite{Han1978} relates to Shannon entropy:
\begin{equation*}
  \DTC{\CR}(X_1; \dots; X_n) \coloneqq F\big( X_{\start{n}} \big) - \sum_{i = 1}^n X_{\start{n} \setminus i}.\CR(X_i).
\end{equation*}
Conditional $\CR$-dual total correlation $Y.\DTC{\CR}\big(X_1, \dots, X_n \big)$ corresponds by Hu's Theorem~\ref{thm:hu_kuo_ting_generalized} to a set of atoms, which coincide with those that vanish based on a conditional mutual $\CR$-independence $\bigindepF{\CR}_{i = 1}^n X_i \ | \ Y$. 
Again, we were able to use subset determination, Theorem~\ref{thm:subset_determination}, to prove the characterization.

In Appendix~\ref{sec:total_correlation_charac}, we also investigated $\CR$-total correlation, which generalizes the classical total correlation from~\cite{Watanabe1960}. 
We showed that it can also usually be used for a characterization of mutual $\CR$-independences. 
However, since it double-counts some atoms, the characterization only works for torsion-free groups $\gro{G}$, and we provide a counter example in the case of torsion, see Example~\ref{exa:torsion_example}.
This is not a strong restriction --- all groups that are used in practice seem to be derived from the real numbers $\R$, and therefore inherit the property to be torsion-free. 

Subset determination implies that $\CR\big( X_{\start{n}} \big)$ determines the whole $\CR$-diagram for the variables $X_1, \dots, X_n$.
In Appendix~\ref{sec:classification_CR}, this led to a classification of functions $\CR: \mon{M} \to \gro{G}$ satisfying the chain rule, in the case that $\mon{M}$ has a ``top element'' --- similar to how $X_{\start{n}}$ is ``on top'' of all elements $X_I$ for $I \subseteq \start{n}$.
We could show that such functions $\CR$ correspond precisely to elements in $\gro{G}$ that are annihilated by the top element, which we also used in our constructions for Example~\ref{exa:torsion_example}.
In Remark~\ref{rem:cohomological_interpretation}, we explained a cohomological interpretation of these findings:
the first Hochschild cohomology group of $\mon{M}$ with coefficients in $\gro{G}$ disappears.
This is in contrast to information cohomology~\citep{Baudot2015a,Vigneaux2019a}, where the joint locality property ensures that the first cohomology group is nontrivial; in fact, it is generated by Shannon entropy!

Finally, in Appendix~\ref{sec:generalizing_yeung_2019}, we explain that all results from~\cite{Yeung2019} except those that depend on an order relation also generalize to our setting.
Overall, this means that we have generalized most of the results in~\cite{Yeung1991,Kawabata1992,Yeung2002a,Yeung2019} from Shannon entropy to general functions $\CR$ satisfying the chain rule.

\subsection{Conceivable Extensions of the Theory and Open Questions}

\subsubsection*{\texorpdfstring{$K$}{K}-Independence and Kolmogorov Complexity}

We have set up the theory in such a way that it could in principle be extended beyond $\CR$-independence, since Section~\ref{sec:mrfs_in_separoids} introduces conditional (mutual) independences, Markov random fields, and Markov chains in the general context of separoids.
In this context, we also characterized Markov random fields --- defined in terms of the global Markov property --- by the \emph{cutset property} (Proposition~\ref{pro:equivalence_global_markov}), and showed that Markov chains can be equivalently described as Markov random fields corresponding to a path-shaped graph (Proposition~\ref{pro:characterization_markov_chain}).

We can think of one concrete way to potentially go beyond $\CR$-independence.
Namely,~\cite{Lang2022} also studied the case of functions $K: \mon{M} \times \mon{M} \to \gro{G}$ satisfying the chain rule absent of any monoid action of $\mon{M}$ on $\gro{G}$: $K(XY) = K(X) + K(Y \mid X)$,
where $K(Z) \coloneqq K(Z \mid \one)$.
One could then define the $K$-independence by $\IndepF{K}{X}{Y}{Z}$ if $K(X;Y \mid Z) = 0$.

One can then ask questions similar to the ones answered in this work: In what generality is $(\mon{M}, \indep_{K})$ a separoid? 
And when it is, does it allow to characterize conditional mutual $K$-independences and $K$-Markov random fields in terms of $K$-diagrams?
Note that $K$-diagrams do indeed exist, as proven in Corollary 3.3 in~\cite{Lang2022}.
However, we expect $\indep_{K}$ to not always satisfy the separoid axioms since our proof of the separoid axioms in Proposition~\ref{pro:F-separoid} made use of subset determination, Theorem~\ref{thm:subset_determination}, which in turn builds on the monoid action which is not available in the setting using $K$.

In a specific case of interest, $K$ would be a version of Kolmogorov complexity~\citep{Li1997}, especially Chaitin's prefix-free Kolmogorov complexity~\citep{Chaitin1987}, restricted to sets of strings that satisfy approximate independence relations. 
The fact that conditional mutual Kolmogorov complexity $K(x; y \mid z)$ is approximately non-negative implies that the separoid rules can be proved in this case, see~\cite{Steudel2010}.

\subsubsection*{\texorpdfstring{$\CR$}--Bayesian Networks}

Additionally, it would be interesting to attempt to go beyond Markov random fields by studying Bayesian networks.
A priori, they can be defined in general separoids using the d-separation criterion for a directed acyclic graph~\citep{Pearl1985}.
This can then be applied to the case of a separation $\indep_{\CR}$ coming from a function $\CR$ satisfying the chain rule.

When applying such a theory to the probabilistic case, however, one will likely encounter problems:
$\CR$-independences can only model conditionally stable cases, see Proposition~\ref{pro:weird_conditioning_trick}.
However, the independences in Bayesian networks are, due to colliders, not preserved under conditioning, see Remark~\ref{rem:independence_propagates}.
Therefore, this property is not stable, Definition~\ref{def:stable}.
When restricting information functions to such sets of probability mass functions, one thus loses the monoid action that makes use of conditioning.
Since the monoid action is crucially used in our main technique --- subset determination, Theorem~\ref{thm:subset_determination} --- we expect such a theory to take a different route from the one for $\CR$--Markov random fields.
Nevertheless,~\cite{Steudel2010}, Theorem 1, provides an information characterization of causal Bayesian networks for submodular information functions, which can be used as an inspiration.

\subsubsection*{\texorpdfstring{$O$}--Information and \texorpdfstring{$S$}--Information}

In~\cite{Rosas2019} and~\cite{Medina-Mardones2021}, the $O$-information and $S$-information were defined as the difference and sum of the classical total correlation and dual total correlation, respectively.
We have showed in this work that $\CR$-(dual )total correlation preserves its value in studying conditional mutual $\CR$-independences, and we therefore expect that $\CR$--$O$-information and $\CR$--$S$-information might provide useful insights into general characterizations of high-order interdependencies.

\subsubsection*{(Cluster) Cross-Entropies and Kullback-Leibler Divergence}

We think it is worthwhile to study the case where $\CR$ is cross-entropy or Kullback-Leibler divergence in greater detail.
After all, much of machine learning and deep learning involves the minimization of a cross-entropy, or equivalently Kullback-Leibler divergence~\citep{Bishop2007,Bishop2023}.
This becomes especially interesting for graphical methods, including diffusion models~\citep{Dickstein2015} that form the basis for widespread text-to-image generation methods like Dalle~\citep{Ramesh2021}, Imagen~\citep{Chitwan2022}, and stable diffusion~\citep{Stable_Diffusion2021}.
Diffusion models involve a decomposition of a joint Kullback-Leibler divergence over a Markov chain as in Figure~\ref{fig:four_and_five_circles}.
It could be valuable to analyze the loss functions of more such graphical methods using Kullback-Leibler and cross-entropy-diagrams. 
Similarly, the (cluster) cross-entropies (see Section~\ref{sec:functions_satisfying_chain_rule}) used in adaptive cluster expansion~\citep{Cocco2012} of Ising models, which form a Markov random field, deserve a further study.

\subsection*{Conclusion}

In this work we showed that it is possible to use the framework of general $\CR$-diagrams to study generalized notions of independences and Markov random fields.
In the process, we generalized well-known notions like the (dual) total correlation and developed new methods such as subset determination.
We were able to apply the general theory to the probabilistic case, and in particular to Kullback-Leibler diagrams on Markov chains. 
We think it is worthwhile to look into research areas such as machine learning, where graphical models and information functions such as cross-entropy are widespread, and to apply our generalized information-theoretic insights to such settings.

\newpage

\appendix

\section{Cohomological Characterization of Functions Satisfying the Chain Rule}\label{sec:classification_CR}

In this appendix we expand on Remark~\ref{rem:connection_to_F_classification} and show how functions $\CR$ satisfying the chain rule can be classified. 
We will also interpret that result in terms of the vanishing of a cohomology group of degree 1.
In this whole section, let $\mon{M}$ be a commutative, idempotent monoid acting additively on an abelian group $\gro{G}$.

\begin{definition}
  [Bounded Monoid, Top Element]
  \label{def:top_element}
  $\mon{M}$ is called \emph{bounded} if it contains a \emph{top element}, i.e., an element $\top \in \mon{M}$
  such that $X \cdot \top = \top$ for all $X \in \mon{M}$.

  With the relation $\precsim$ on $\mon{M}$ defined by $X \precsim Y$ if and only if $X \cdot Y = Y$, we see that a top element is equivalently described as the greatest element in $\mon{M}$.
\end{definition} 

Clearly, top elements are unique.

\begin{notation}
  We denote by
  \begin{equation*}
    \CRS(\mon{M}, \gro{G}) \coloneqq \Big\lbrace \CR: \mon{M} \to \gro{G} \  \big| \  \forall X, Y \in \mon{M}: \ \CR(XY) = \CR(X) + X.\CR(Y)\Big\rbrace
  \end{equation*}
  the set of functions satisfying the chain rule Equation~\eqref{eq:cocycle_equationn}.
\end{notation}

Note that $\CRS(\mon{M}, \gro{G})$ is itself an abelian group with addition
\begin{equation*}
  (\CR+\CR')(X) \coloneqq \CR(X) + \CR'(X).
\end{equation*}
Additionally, it carries an induced and well-defined additive monoid action $.: \mon{M} \times \CRS(\mon{M}, \gro{G}) \to \CRS(\mon{M}, \gro{G}) $ given by
\begin{equation*}
  (X.\CR)(Y) \coloneqq X.\CR(Y).
\end{equation*}

\begin{notation}
  Let $X \in \mon{M}$ be any element.
  We denote by
  \begin{equation*}
    \gro{G}^X \coloneqq \Big\lbrace g \in \gro{G} \  \big| \  X.g = 0 \Big\rbrace
  \end{equation*}
  the elements in $\gro{G}$ that are annihilated by $X$.
  This is a subgroup of $\gro{G}$.
  Furthermore, the action of $\mon{M}$ on $\gro{G}$ restricts to a well-defined additive monoid action $\mon{M} \times \gro{G}^X \to \gro{G}^X$.
\end{notation}

\begin{definition}
  [Module Homomorphism, Module Isomorphism]
  \label{def:module_homomorphism}
  Let $\gro{G}, \gro{H}$ both be groups carrying an additive monoid action from $\mon{M}$, which by abuse of notation we denote with the same symbol: 
  \begin{align*}
    .: \mon{M} \times \gro{G} \to \gro{G}, \quad .: \mon{M} \times \gro{H} \to \gro{H}. 
  \end{align*}
  A function $\Phi: \gro{G} \to \gro{H}$ is called a \emph{module homomorphism} if
  \begin{itemize}
    \item $\Phi$ is a group homomorphism: $\Phi(g+g') = \Phi(g) + \Phi(g')$ for all $g, g' \in \gro{G}$;
    \item $\Phi$ commutes with the monoid actions: $\Phi(X.g) = X.\Phi(g)$ for all $X \in \mon{M}$, $g \in \gro{G}$.
  \end{itemize}
  A \emph{module isomorphism} is a bijective module homomorphism.
\end{definition}

The following proposition shows that the functions $\CR: \mon{M} \to \gro{G}$ that satisfy the chain rule are, for bounded $\mon{M}$ with top element $\top$, essentially the same as the elements in $\gro{G}$ that are annihilated by $\top$.

\begin{proposition}
  \label{pro:description_cocycles}
  Let $\mon{M}$ be a bounded, commutative, idempotent monoid with top element $\top$, $G$ an abelian group, and $.: \mon{M} \times \gro{G} \to \gro{G}$ an additive monoid action.
  Define the pair of functions
  \begin{equation*}
    \begin{tikzcd}
      \CRS(\mon{M}, \gro{G}) \ar[rr, bend left, "\Phi"] & & \gro{G}^{\top} \ar[ll, bend left, "\Psi"]
    \end{tikzcd}
  \end{equation*}
  as follows: 
  \begin{align*}
    &  \forall \CR \in \CRS(\mon{M},\gro{G}):  \quad  \Phi(\CR) \coloneqq  \CR(\top); \\
    &  \forall g \in \gro{G}^{\top}: \quad  \Psi(g): \mon{M} \to \gro{G}, \ \  \big[\Psi(g)\big](X) \coloneqq g - X.g.
  \end{align*}
  Then $\Phi$ and $\Psi$ are mutually inverse module isomorphisms.
\end{proposition}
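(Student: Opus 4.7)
The plan is to verify four things in sequence: (i) $\Phi$ lands in $\gro{G}^{\top}$, (ii) $\Psi$ lands in $\CRS(\mon{M},\gro{G})$, (iii) both $\Phi$ and $\Psi$ are module homomorphisms, and (iv) they are mutually inverse. Each step follows from a short manipulation of the chain rule applied cleverly at the top element.

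First I would check that $\Phi$ is well-defined. Since $\top \cdot \top = \top$ by definition of the top element, the chain rule gives
\begin{equation*}
\CR(\top) = \CR(\top \cdot \top) = \CR(\top) + \top.\CR(\top),
\end{equation*}
so $\top.\CR(\top) = 0$, i.e., $\Phi(\CR) \in \gro{G}^{\top}$. Next, to see $\Psi(g) \in \CRS(\mon{M},\gro{G})$, I would compute for $X, Y \in \mon{M}$:
\begin{equation*}
[\Psi(g)](X) + X.[\Psi(g)](Y) = (g - X.g) + X.(g - Y.g) = g - (XY).g = [\Psi(g)](XY),
\end{equation*}
using associativity of the monoid action in the final step.

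Then I would verify that $\Phi$ and $\Psi$ respect the group structure and monoid action. For $\Phi$: linearity is immediate from pointwise addition of functions, and $\Phi(X.\CR) = (X.\CR)(\top) = X.\CR(\top) = X.\Phi(\CR)$ from the definition of the induced action on $\CRS(\mon{M},\gro{G})$. For $\Psi$: linearity is immediate from distributivity of the monoid action, and $[\Psi(X.g)](Y) = X.g - Y.(X.g) = X.(g - Y.g) = X.[\Psi(g)](Y) = [X.\Psi(g)](Y)$ using commutativity of $\mon{M}$ and associativity of the action.

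The crucial step, and the one that really uses boundedness, is checking that $\Psi \circ \Phi = \mathrm{id}$. For any $X \in \mon{M}$, since $X \cdot \top = \top$, the chain rule gives
\begin{equation*}
\CR(\top) = \CR(X \cdot \top) = \CR(X) + X.\CR(\top),
\end{equation*}
so $\CR(X) = \CR(\top) - X.\CR(\top) = [\Psi(\Phi(\CR))](X)$. The other direction is immediate: $\Phi(\Psi(g)) = [\Psi(g)](\top) = g - \top.g = g$ since $g \in \gro{G}^{\top}$. The main (minor) obstacle is simply being careful with the two uses of $\top$: once as an absorbing element to recover $\CR$ from $\CR(\top)$, and once as the annihilator that kills $g$ to give back $g$ itself. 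Everything else is bookkeeping.
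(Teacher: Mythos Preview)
Your proof is correct and follows essentially the same approach as the paper's: verify well-definedness of $\Phi$ via the chain rule at $\top\cdot\top=\top$, of $\Psi$ by a direct chain-rule computation, and mutual inversion via $X\cdot\top=\top$ and $\top.g=0$. You give slightly more detail than the paper on why $\Phi$ and $\Psi$ are module homomorphisms (the paper simply says this is clear), but otherwise the arguments coincide.
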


\begin{proof}
  For $\Phi$, we need to check well-definedness, i.e., that $\CR(\top) \in \gro{G}^{\top}$ for all $\CR \in \CRS(\mon{M}, \gro{G})$.
  Indeed, we have
  \begin{equation*}
    \CR(\top) = \CR(\top \top) = \CR(\top) + \top.\CR(\top)
  \end{equation*}
  by the chain rule, from which $\top.\CR(\top) = 0$ follows.
  It is clear that $\Phi$ is a module homomorphism.

  For $\Psi$, we also need to check that it is well-defined, i.e., that $\Psi(g)$ satisfies the chain rule for all $g \in \gro{G}^{\top}$.
  Indeed, we have
  \begin{align*}
    \big[ \Psi(g)\big](XY) & = g - (XY).g\\
    & = g - X.g + X.g - X.(Y.g) \\
    & = \big[\Psi(g) \big](X) + X.\big[ g - Y.g\big] \\
    & = \big[ \Psi(g)\big](X) + X.\big[\Psi(g)\big](Y).\footnotemark
  \end{align*}
  \footnotetext{Actually, the assumption $g \in \gro{G}^\top$ was not used in this computation. It works for all $g \in \gro{G}$, which we use in Remark~\ref{rem:cohomological_interpretation}}That $\Psi$ is a module homomorphism is clear.

  It remains to show that $\Psi$ and $\Phi$ are inverse to each other.
  By definition of $\gro{G}^{\top}$, we have for all $g \in \gro{G}^{\top}$:
  \begin{align*}
    \big( \Phi \circ \Psi\big)(g) & = \Phi\big(\Psi(g)\big) = \big[\Psi(g)  \big](\top) = g - \top.g = g.
  \end{align*}
  In the other direction, let $\CR \in \CRS(\mon{M}, \gro{G})$ arbitrary.
  For all $X \in \mon{M}$, we have $X \cdot \top = \top$ by definition of $\top$ and therefore
  \begin{equation*}
    \CR(\top) = \CR(X \cdot \top) = \CR(X) + X.\CR(\top)
  \end{equation*}
  by the chain rule.
  It follows
  \begin{equation*}
    \Big[\big( \Psi \circ \Phi \big)(\CR) \Big] (X) = \big[ \Psi\big(\Phi(\CR)\big)\big](X) 
    = \big[ \Psi\big(\CR(\top)\big)\big] (X) 
    = \CR(\top) - X.\CR(\top) 
    = \CR(X).
  \end{equation*}
  This means $\big( \Psi \circ \Phi\big)(\CR) = \CR$, and we are done.
\end{proof}

\begin{remark}
  [Cohomological Interpretation]
  \label{rem:cohomological_interpretation}
  For the interested reader, we interpret the preceding result in cohomological terms. 
  Namely, consider the following cochain complex:
  \begin{equation*}
    \begin{tikzcd}
      \gro{G} = \Maps(\mon{M}^0, \gro{G}) \ar[r, "\Psi"] & \Maps\big( \mon{M}, \gro{G} \big) \ar[r, "\delta"] & \Maps\big( \mon{M}^2, \gro{G} \big) \ar[r] & \dots
  \end{tikzcd}
  \end{equation*}
  Here, define $\Psi$ as in Proposition~\ref{pro:description_cocycles} by $\big[\Psi(g)\big](X) \coloneqq g - X.g$ and $\delta$ by
  \begin{equation*}
    \big[\delta(\CR)\big](X; Y) \coloneqq X.\CR(Y) - \CR(XY) + \CR(X).
  \end{equation*}
  These are Hochschild coboundary maps, as originally defined in~\cite{Hochschild1945}, and also used for information cohomology in~\cite{Baudot2015a, Vigneaux2019a}.
  Now, the fact that $\Psi$ is well-defined in Proposition~\ref{pro:description_cocycles}, i.e.\ that it only maps to functions satisfying the chain rule,\footnote{Which, as we remarked, holds for $\Psi$ defined on all of $\gro{G}$.} can now be expressed equivalently by saying that $\delta \circ \Psi = 0$, a crucial property for cochain complexes.
  The reason for this is that $\mathrm{Ker}(\delta) = \CRS(\mon{M}, \gro{G})$ is precisely the set of functions satisfying the chain rule.
  This allows to define the first cohomology group of the complex, given by $\mathrm{Ker}(\delta) / \IM(\Psi)$.

  Now, the fact that $\Psi$ is surjective in the preceding proposition --- meaning that it hits every function satisfying the chain rule --- can now be expressed with $\IM(\Psi) = \mathrm{Ker}(\delta)$, meaning the first cohomology group vanishes: $\mathrm{Ker}(\delta)/\IM(\Psi) = 0$. 
  This is shown by explicitly constructing the inverse $\Phi$.

  Finally, the preceding proposition also shows that $\Psi$ is injective when restricting to $\gro{G}^{\top}$.

  Now, consider the case that $\mon{M}$ is the monoid of equivalence classes of random variables on $\samp$ and $\gro{G} = \Meas\big( \Sim{\samp}, \R \big)$.
  Then the preceding discussion would suggest that Shannon entropy $\Ent$ disappears in the first cohomology group.
  At first sight, one might think this contradicts the fundamental result of information cohomology, which shows that the first cohomology group is non-trivial and generated by Shannon entropy~\citep{Baudot2015a}.
  The difference is explained by noting that~\cite{Baudot2015a} also require their cochains to satisfy the \emph{joint locality} property, which means that their cochains of degree zero are necessarily constant.
\end{remark}

\section{Conditional Mutual \texorpdfstring{$\CR$}--Independences and \texorpdfstring{$\CR$}--Total Correlation}\label{sec:total_correlation_charac}

In this appendix, we provide a characterization of conditional mutual $\CR$-independences using $\CR$--total correlation.
Different from the characterization using $\CR$--dual total correlation in Theorem~\ref{thm:charac_using_dual}, this characterization will only work when imposing an additional assumption on $\gro{G}$, namely that it is torsion-free.
The reason is that $\CR$--total correlation double counts atoms in the $\CR$-diagram; from a vanishing of the $\CR$--total correlation, one can then a priori only conclude that a \emph{multiple} of each of the contained atoms disappears, and one needs $\gro{G}$ to be torsion-free to be able to reduce the coefficient to $1$.

As in the main text, we fix a commutative, idempotent monoid $\mon{M}$ acting additively on an abelian group $\gro{G}$ and a function $\CR: \mon{M} \to \gro{G}$ satisfying the chain rule Equation~\eqref{eq:cocycle_equationn}.

\begin{proposition}
  \label{pro:recursive_formula_total_correlation}
  Let $X_1, \dots, X_n \in \mon{M}$.
  Then for all $\emptyset \neq I \subseteq \start{n}$, we have the following identities relating $\CR$--total correlation to higher $\CR$-interactions:
  \begin{enumerate}
    \item $\TC{\CR}\big( \bigscolon_{i \in I} X_i \big)  = \sum_{\emptyset \neq L \subseteq I} \big( \num{L} - 1\big) \cdot X_{I \setminus L}.\CR\big(\bigscolon_{l \in L} X_l \big)$;
    \item $\big( \num{I} - 1\big) \cdot \CR\big(\bigscolon_{i \in I} X_i \big) = \sum_{\emptyset \neq L \subseteq I} (-1)^{\num{I} - \num{L}} \cdot X_{I \setminus L}. \TC{\CR}\big( \bigscolon_{l \in L} X_l \big)$.
  \end{enumerate}
\end{proposition}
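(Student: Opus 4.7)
The plan is to derive part 1 directly from Hu's theorem by rewriting both sides in terms of the $\gro{G}$-valued measure $\set{\CR}$, and then to derive part 2 from part 1 by a Möbius-style inversion argument that uses Lemma~\ref{lem:pure_combinatorics}. Throughout, I will assume without loss of generality that $I = \start{n}$ (general $I$ is obtained by relabeling and restricting to the fixed elements $\{X_i\}_{i \in I}$, so the identity for $I$ is just the identity for $\start{|I|}$ applied to those variables).

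For part 1, I would apply Hu's Theorem~\ref{thm:hu_kuo_ting_generalized} with the fixed elements $X_1, \dots, X_n$ and use Corollary~\ref{cor:how_atoms_look_like} to write each individual term
$\CR(X_i) = \set{\CR}(\set{X}_i) = \sum_{\emptyset \neq L \subseteq [n],\, i \in L} \set{\CR}(\atom{L})$
and the joint term
$\CR(X_{[n]}) = \set{\CR}(\set{X}) = \sum_{\emptyset \neq L \subseteq [n]} \set{\CR}(\atom{L})$.
Summing the first identity over $i$ and grouping by $L$ produces a factor of $|L|$ in front of each $\set{\CR}(\atom{L})$. Subtracting the second then gives a factor of $|L| - 1$, and Corollary~\ref{cor:how_atoms_look_like} lets me replace $\set{\CR}(\atom{L})$ by $X_{[n] \setminus L}.\CR(\bigscolon_{l \in L} X_l)$, yielding exactly the right-hand side of part 1.

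For part 2, I would substitute the formula from part 1 into each summand $X_{I \setminus L}.\TC{\CR}(\bigscolon_{l \in L} X_l)$, using that $X_{I \setminus L} \cdot X_{L \setminus M} = X_{I \setminus M}$ for $M \subseteq L \subseteq I$ and that the monoid action is additive. After swapping the order of summation the result takes the form
\begin{equation*}
\sum_{\emptyset \neq M \subseteq I} (|M| - 1) \cdot X_{I \setminus M}.\CR(\bigscolon_{m \in M} X_m) \cdot \Bigg( \sum_{L: M \subseteq L \subseteq I} (-1)^{|I| - |L|} \Bigg).
\end{equation*}
By Lemma~\ref{lem:pure_combinatorics}, the inner combinatorial sum equals $(-1)^{|I| + |M|} \cdot \indic{M = I} = \indic{M = I}$, which collapses the outer sum to the single term $M = I$ and yields $(|I| - 1) \cdot \CR(\bigscolon_{i \in I} X_i)$ as required.

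The calculations are routine, so the main obstacle is really just bookkeeping: one must keep track of the signs $(-1)^{|I| - |L|}$ carefully through the sum-swap, verify that the identity $X_{I \setminus L} \cdot X_{L \setminus M} = X_{I \setminus M}$ holds in the idempotent monoid for the relevant index sets, and correctly apply Lemma~\ref{lem:pure_combinatorics} with the offset $(-1)^{|I|}$ factored out. No inequalities or subset determination are needed here, since both identities are purely algebraic consequences of the measure decomposition from Hu's theorem.
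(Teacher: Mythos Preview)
Your proposal is correct and follows essentially the same route as the paper: part~1 is obtained by expanding $\TC{\CR}$ into atoms via Hu's theorem and regrouping to get the coefficient $|L|-1$, and part~2 is obtained by substituting part~1 into the right-hand side, swapping sums, and collapsing via Lemma~\ref{lem:pure_combinatorics}. The only cosmetic difference is that the paper keeps $I$ general (fixing $\{X_i\}_{i\in I}$ as the elements in Hu's theorem) rather than reducing to $I=[n]$, but the computations are identical.
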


\begin{proof}
  To prove part 1, let the $X_i$, $i \in I$ be the full set of elements in Theorem~\ref{thm:hu_kuo_ting_generalized}.
  Then we obtain
  \begin{align*}
    \TC{\CR}\big( \bigscolon_{i \in I} X_i \big) & = \sum_{i \in I} \CR(X_i) - \CR(X_I)  \\
    & = \sum_{i \in I} \set{\CR}\big( \set{X}_i\big) - \set{\CR}\big(  \set{X}_I\big) \\
    & = \sum_{i \in I} \sum_{L \subseteq I, i \in L} \set{\CR}(\atom{L}) - \sum_{\emptyset \neq L \subseteq I} \set{\CR}(\atom{L}) \\
    & = \sum_{\emptyset \neq L \subseteq I} \sum_{i \in L} \set{\CR}(\atom{L}) - \sum_{\emptyset \neq L \subseteq I} \set{\CR}(\atom{L}) \\
    & = \sum_{\emptyset \neq L \subseteq I} \big( \num{L} - 1\big) \cdot \set{\CR}(\atom{L}).
  \end{align*}
  Now, note that $\set{\CR}(\atom{L}) = X_{I \setminus L}.\CR\big(\bigscolon_{l \in L} X_l \big)$ by Lemma~\ref{lem:how_atoms_look_like}. 
  The result follows.

  To prove part 2, we evaluate the right-hand-side using part 1 on each summand:
  \begin{align*}\label{eq:sum_of_TC}
     \sum_{\emptyset \neq L \subseteq I} (-1)^{\num{I} - \num{L}} & \cdot X_{I \setminus L}. \TC{\CR}\big( \bigscolon_{l \in L} X_l \big) \\
    & =  \sum_{\emptyset \neq L \subseteq I} (-1)^{\num{I} - \num{L}} \cdot X_{I \setminus L}.\Bigg( \sum_{\emptyset \neq K \subseteq L} \big( \num{K} - 1\big) \cdot X_{L \setminus K}.\CR\big(\bigscolon_{k \in K} X_k \big)\Bigg) \\
     & = \sum_{\emptyset \neq L \subseteq I} \sum_{\emptyset \neq K \subseteq L} (-1)^{\num{I} - \num{L}} \cdot \big(\num{K} - 1 \big) \cdot X_{I \setminus K}.\CR\big(\bigscolon_{k \in K} X_k \big) \\
     & = \sum_{\emptyset \neq K \subseteq I} (-1)^{\num{I}} \cdot \big( \num{K} - 1\big) \cdot
     \Bigg(\sum_{L: \ K \subseteq L \subseteq I} (-1)^{\num{L}} \Bigg) \cdot X_{I \setminus K}.\CR\big(\bigscolon_{k \in K} X_k \big).\\
     & = \sum_{\emptyset \neq K \subseteq I} (-1)^{\num{I}} \cdot \big( \num{K} - 1\big) \cdot (-1)^{\num{K}} \cdot \indic{I = K} \cdot X_{I \setminus K}.\CR\big(\bigscolon_{k \in K} X_k \big) \\
    & = (-1)^{2 \cdot \num{I}} \cdot \big( \num{I} - 1\big) \cdot X_{I \setminus I}.\CR\big(\bigscolon_{i \in I} X_i \big) \\
    & = \big( \num{I} - 1\big) \cdot \CR\big(\bigscolon_{i \in I} X_i \big).
  \end{align*}
  In the fourth step, we used Lemma~\ref{lem:pure_combinatorics}.
\end{proof}

\begin{definition}
  [Torsion-Free Abelian Group]
  \label{def:torsion_free}
  Let $\gro{G}$ be an abelian group.
  Then $\gro{G}$ is called \emph{torsion-free} if $0 \in \gro{G}$ is the only element of finite order.
  In other words, for all $0 \neq g \in \gro{G}$ and all $0 < k \in \N$, we have $k \cdot g \neq 0$.
  Equivalently, for all $k \in \N$ and $g \in \gro{G}$, if $k \cdot g = 0$ then $k = 0$ or $g = 0$.
\end{definition}

\begin{example}
  The groups $\Z$, $\R$, $\Meas\big( \Sim{\samp}, \R \big)$ and all other groups encountered in Section~\ref{sec:specializing_to_probabilistic} are torsion-free.
  $\Z/k\Z$ for $k \geq 1$ is a prototypical example of an abelian group with torsion:
  $k \cdot 1 = 0$.
\end{example}

\begin{theorem}
  \label{thm:inf_charac_mutual_independence}
  Let $\mon{M}$ be a commutative, idempotent monoid acting additively on a \emph{torsion-free} abelian group $\gro{G}$.
  Let $\CR: \mon{M} \to \gro{G}$ be a function satisfying the chain rule Equation~\eqref{eq:cocycle_equationn}.
  Let $X_1, \dots, X_n, Y \in \mon{M}$. 
  Then the following properties are equivalent:
  \begin{enumerate}
    \item $\bigindepF{\CR}_{i = 1}^{n} X_i \ | \ Y$;
    \item $Y.\TC{\CR}\big( \bigscolon_{i \in \start{n}} X_i \big) = 0$;
    \item $(Y X_{\start{n} \setminus I}).\TC{\CR}\big( \bigscolon_{i \in I} X_i \big) = 0$ for all $\emptyset \neq I \subseteq \start{n}$.
  \end{enumerate}
\end{theorem}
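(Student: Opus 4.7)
The plan is to prove the cycle of implications $1 \Rightarrow 3 \Rightarrow 2 \Rightarrow 1$, with the torsion-freeness of $\gro{G}$ entering only in the last implication. Throughout I will fix $X_1, \dots, X_n, Y$ as the fixed elements in Hu's Theorem~\ref{thm:hu_kuo_ting_generalized}, writing $X_{n+1} \coloneqq Y$, so that the atoms $\atom{L}$ for $\emptyset \neq L \subseteq \start{n+1}$ and the $\gro{G}$-valued measure $\set{\CR}$ are defined. Corollary~\ref{cor:how_atoms_look_like} supplies the workhorse identity $\set{\CR}(\atom{L}) = (Y X_{\start{n}\setminus L}).\CR(\bigscolon_{l \in L} X_l)$ for each $\emptyset \neq L \subseteq \start{n}$, so that hypothesis $1$---via property $3$ of Theorem~\ref{thm:charac_using_dual}---is equivalent to $\set{\CR}(\atom{L}) = 0$ for every such $L$ with $\num{L} \geq 2$.

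For $1 \Rightarrow 3$, I expand $\TC{\CR}(\bigscolon_{i \in I} X_i)$ using part $1$ of Proposition~\ref{pro:recursive_formula_total_correlation} and apply $(Y X_{\start{n}\setminus I}).$, using $(Y X_{\start{n}\setminus I}) \cdot X_{I \setminus L} = Y X_{\start{n}\setminus L}$ for $L \subseteq I$ to obtain $(Y X_{\start{n}\setminus I}).\TC{\CR}(\bigscolon_{i \in I} X_i) = \sum_{\emptyset \neq L \subseteq I}(\num{L}-1)(Y X_{\start{n}\setminus L}).\CR(\bigscolon_{l \in L} X_l)$; every summand then vanishes, either because the coefficient $\num{L}-1$ is zero (when $\num{L} = 1$) or by hypothesis $1$ (when $\num{L} \geq 2$). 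The implication $3 \Rightarrow 2$ is immediate by specializing to $I = \start{n}$.

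The main step is $2 \Rightarrow 1$. Combining part $1$ of Proposition~\ref{pro:recursive_formula_total_correlation} with Corollary~\ref{cor:how_atoms_look_like} recasts hypothesis $2$ as the single equation
\[
\sum_{L \subseteq \start{n},\ \num{L} \geq 2} (\num{L}-1)\, \set{\CR}(\atom{L}) \;=\; 0.
\]
For each $S \subseteq \start{n}$ I will apply $X_S.$ to this equation; by Lemma~\ref{lem:atoms_annihilated}, $X_S.\set{\CR}(\atom{L}) = \set{\CR}(\atom{L})$ if $L \cap S = \emptyset$ and $0$ otherwise, yielding the restricted family
\[
\sum_{L \subseteq \start{n}\setminus S,\ \num{L} \geq 2} (\num{L}-1)\, \set{\CR}(\atom{L}) \;=\; 0 \qquad (S \subseteq \start{n}).
\]
I then argue by induction on $\num{L_0}$ that $\set{\CR}(\atom{L_0}) = 0$ for every $L_0 \subseteq \start{n}$ with $\num{L_0} \geq 2$: the choice $S = \start{n} \setminus L_0$ restricts the sum to $L \subseteq L_0$, giving $(\num{L_0}-1)\,\set{\CR}(\atom{L_0}) + \sum_{L \subsetneq L_0,\ \num{L} \geq 2}(\num{L}-1)\,\set{\CR}(\atom{L}) = 0$; the inductive hypothesis annihilates the tail, and torsion-freeness of $\gro{G}$ cancels the positive integer coefficient $\num{L_0}-1$. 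Corollary~\ref{cor:how_atoms_look_like} recasts these vanishings as property $3$ of Theorem~\ref{thm:charac_using_dual}, which yields hypothesis $1$.

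The essential obstacle is that hypothesis $2$ supplies only a single $\gro{G}$-linear relation, whereas hypothesis $1$ encodes exponentially many atomic vanishings. The leverage is provided by the monoid action together with Lemma~\ref{lem:atoms_annihilated}, which promote the single given equation to a whole family of restricted equations indexed by $S \subseteq \start{n}$; torsion-freeness is then used to strip the positive integer coefficients during the induction, which is exactly the step that cannot be carried out in the presence of torsion.
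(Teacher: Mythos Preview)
Your proposal is correct, but the cycle you run differs from the paper's: the paper proves $1 \Rightarrow 2 \Rightarrow 3 \Rightarrow 1$, while you prove $1 \Rightarrow 3 \Rightarrow 2 \Rightarrow 1$.

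For $1 \Rightarrow 2$ the paper uses the pairwise characterization (Proposition~\ref{pro:pairwise_independence_characterization}) and a short induction to show $Y.\CR(X_{\start{n}}) = \sum_i Y.\CR(X_i)$, while your $1 \Rightarrow 3$ expands via part~1 of Proposition~\ref{pro:recursive_formula_total_correlation} and kills each summand using Theorem~\ref{thm:charac_using_dual}; both are straightforward. The interesting divergence is in the step that invokes torsion-freeness. The paper first shows $2 \Rightarrow 3$ by a direct computation proving $X_{\start{n}\setminus I}.\big(Y.\TC{\CR}(\bigscolon_{i \in \start{n}} X_i)\big) = (YX_{\start{n}\setminus I}).\TC{\CR}(\bigscolon_{i \in I} X_i)$, and then for $3 \Rightarrow 1$ applies the M\"obius-style inversion of part~2 of Proposition~\ref{pro:recursive_formula_total_correlation} to obtain $(\num{I}-1)\cdot(YX_{\start{n}\setminus I}).\CR(\bigscolon_{i \in I} X_i)$ as a signed sum of vanishing total-correlation terms, concluding via torsion-freeness. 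You instead go directly $2 \Rightarrow 1$: you rewrite hypothesis~2 as a single weighted relation among the atoms $\set{\CR}(\atom{L})$, then use Lemma~\ref{lem:atoms_annihilated} to restrict this relation to subsets $L \subseteq L_0$, and peel off the top atom by induction on $\num{L_0}$, dividing by $\num{L_0}-1$ at each stage. Your argument avoids part~2 of Proposition~\ref{pro:recursive_formula_total_correlation} entirely and is closer in spirit to a weighted variant of subset determination (Theorem~\ref{thm:subset_determination}); the paper's route, by contrast, packages that induction into a single inclusion--exclusion identity and makes explicit that property~3 --- not just property~2 --- is what feeds the torsion-free cancellation.
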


\begin{proof}
  Assume 1. 
  To prove 2, we use that by Proposition~\ref{pro:mutual_independence_characterization}, we have $\IndepF{\CR}{X_i}{X_{\start{i-1}}}{Y}$
  for all $i = 1, \dots, n$.
  By Proposition~\ref{pro:pairwise_independence_characterization}, this implies ${Y.\CR\big(X_{\start{i}}\big) = Y.\CR\big(X_{\start{i-1}}\big) + Y.\CR(X_{i})}$.
  Inductively, we obtain ${Y.\CR\big(X_{\start{n}}\big) = \sum_{i = 1}^{n} Y.\CR(X_{i})}$
  and therefore $Y.\TC{\CR}\big(\bigscolon_{i \in \start{n}} X_i\big) = 0$.

  Now assume 2.
  To prove 3, we note
  \begin{align*}
    0 & = X_{\start{n} \setminus I}.\Big(Y.\TC{\CR}\big(\bigscolon_{i \in \start{n}} X_i \big) \Big) \\
    & = Y. \Bigg( X_{\start{n} \setminus I}. \bigg( \sum_{i = 1}^{n} \CR(X_i) - \CR\big( X_{\start{n}}\big)\bigg) \Bigg) \\
    & = Y.\Bigg( \sum_{i = 1}^{n} X_{\start{n} \setminus I}.\CR(X_i) - X_{\start{n} \setminus I}.\CR\big( X_{\start{n}}\big)\Bigg) \\
    & \overset{(\star)}{=} Y. \Bigg( \sum_{i \in I} X_{\start{n} \setminus I}.\CR(X_i) - X_{\start{n} \setminus I}.\CR( X_I)\Bigg) \\
    & = \big( YX_{\start{n} \setminus I}\big). \Bigg( \sum_{i \in I} \CR(X_i) - \CR(X_I)\Bigg) \\
    & = \big( YX_{\start{n} \setminus I}\big). \TC{\CR}\big( \bigscolon_{i \in I} X_i \big).
  \end{align*}
  Step $(\star)$ can easily be seen using Hu's Theorem~\ref{thm:hu_kuo_ting_generalized}.
  That was to show.

  Finally, assume 3.
  For proving 1, we can alternatively prove the equivalent property 3 of Theorem~\ref{thm:charac_using_dual}.
  We note that for all $\emptyset \neq I \subseteq \start{n}$, Proposition~\ref{pro:recursive_formula_total_correlation} implies:
  \begin{align*}
    \big( \num{I} - 1\big) \cdot \big( YX_{\start{n} \setminus I}\big).\CR\big(\bigscolon_{i \in I} X_i \big) 
    & = \big( YX_{\start{n} \setminus I}\big) . \Big( \big( \num{I} - 1\big) \cdot \CR\big( \bigscolon_{i \in I} X_i\big)\Big) \\
    & = \big( YX_{\start{n} \setminus I}\big). \Bigg( \sum_{\emptyset \neq L \subseteq I} (-1)^{\num{I} - \num{L}} \cdot X_{I \setminus L}.\TC{\CR}\big( \bigscolon_{l \in L} X_l \big)\Bigg) \\
    & = \sum_{\emptyset \neq L \subseteq I} (-1)^{\num{I} - \num{L}} \cdot \big( YX_{\start{n} \setminus L}\big).\TC{\CR}\big( \bigscolon_{l \in L} X_l \big) \\
    & = 0.
  \end{align*}
  If $\num{I} = 1$ then the factor $\num{I} - 1$ vanishes and what we showed is vacuous.
  If $\num{I} \geq 2$, then what we showed implies
  \begin{equation*}
    \big( YX_{\start{n} \setminus I}\big).\CR\big(\bigscolon_{i \in I} X_i \big) = 0
  \end{equation*}
  since $\gro{G}$ is torsion-free, and we are done.
\end{proof}

We now show in an example that we cannot omit the assumption that $\gro{G}$ is torsion-free in the preceding theorem.
  More precisely, if $\gro{G}$ is not torsion-free, then property 2 does not necessarily imply property 1 anymore. 

\begin{example}\label{exa:torsion_example}
  Let $\mon{M} = \big( \Z/2\Z, \cdot\big)$, where $\Z/2\Z = \{0, 1\}$.
  This is a commutative, idempotent monoid with the following multiplication rules:
  \begin{equation*}
    1 \cdot 1 = 1; \quad 1 \cdot 0 = 0; \quad 0 \cdot 1 = 0; \quad 0 \cdot 0 = 0.
  \end{equation*}
  Furthermore, define $\gro{G} \coloneqq \big( \Z/2\Z, +\big)$, which is an abelian group with the following addition rules:
  \begin{equation*}
    0 + 0 = 0; \quad 0 + 1 = 1; \quad 1 + 0 = 1; \quad 1 + 1 = 0.
  \end{equation*}
  The last rule implies that $\gro{G}$ has torsion.
  We define the action of $M$ on $\gro{G}$ by $X.g \coloneqq X \cdot g$
  for all $X \in \mon{M}$ and $g \in \gro{G}$.
  That is, the action is simply the usual multiplication in the ring $\big( \Z/2\Z, +, \cdot\big)$ and therefore clearly an additive monoid action.
  Finally, define
  \begin{equation*}
    \CR: \mon{M} \to \gro{G}, \quad \CR(X) \coloneqq 1 - X.
  \end{equation*}
  We have
  \begin{equation*}
    \CR(XY) = 1 - XY = 1 - X + X - XY = 1 - X + X \cdot (1 - Y) = \CR(X) + X.\CR(Y),
  \end{equation*}
  implying that $\CR$ satisfies the chain rule.
  Thus, $\CR$ satisfies all the properties of Hu's Theorem~\ref{thm:hu_kuo_ting_generalized}.
  
  Now, set $X_1 \coloneqq X_2 \coloneqq X_3 \coloneqq 0 \in M$.
  Then we have
  \begin{equation*}
    \TC{\CR}(X_1;X_2;X_3) = \CR(0) + \CR(0) + \CR(0) - \CR(0 \cdot 0 \cdot 0) = 1 + 1 = 0.
  \end{equation*}
  However, we have
  \begin{equation*}
    \CR(X_3; X_1X_2) = \CR(0; 0) = \CR(0) = 1.
  \end{equation*}
  This means that the $\CR$-independence $\big(X_1X_2\big) \indep_{\CR} X_3$ does \emph{not} hold, implying that $X_1, X_2, X_3$ are \emph{not} mutually independent.
  Thus we cannot omit the assumption of torsion-freeness in the Theorem.

  Note that in light of Proposition~\ref{pro:description_cocycles}, the construction of $\CR$ can be explained as follows:
  $\mon{M}$ is a bounded monoid with top element $0 \in \mon{M}$.
  Then, $\gro{G}^{\top} = \gro{G}$, and so $1 \in \gro{G}$ is one of the annihilated elements.
  We then see:
  \begin{equation*}
    \big[ \Psi(1) \big](X) = 1 - X.1 = 1 - X = \CR(X),
  \end{equation*}
  showing that $\CR = \Psi(1)$.
  The only other function satisfying the chain rule, $\Psi(0)$, is trivial, and does therefore not give rise to a counter example.
\end{example}

\section{General Consequences of Section~\ref{sec:yeung2002}}\label{sec:generalizing_yeung_2019}

In this appendix, we take a look at~\cite{Yeung2019} and explain which results generalize to our setting. 
As it turns out, all these results generalize without problem with the same proofs, or otherwise cannot even be formulated in our case since the formulations are based on inequalities, which does not make sense in the context of general abelian groups. 

Fix a commutative, idempotent monoid $\mon{M}$ acting on an abelian group $\gro{G}$ and a function $\CR: \mon{M} \to \gro{G}$ satisfying the chain rule.
We fix elements $X_1, \dots, X_n \in \mon{M}$, giving rise to $\set{\CR}: 2^{\set{X}} \to \gro{G}$ by Hu's Theorem~\ref{thm:hu_kuo_ting_generalized}, and a graph $\Gr{G} = (\Ver{V}, \Ed{E})$ with $\Ver{V} = \start{n}$.

\subsubsection*{Smallest Graph Representations}

\begin{terminology}
  If $X_1, \dots, X_n$ form an $\CR$-Markov random field with respect to $\Gr{G}$, then we also say that $\Gr{G}$ is a (graph) \emph{representation} for $X_1, \dots, X_n$ with respect to $\CR$. 
\end{terminology}

\begin{definition}
  [Subgraph]
  \label{def:subgraph}
  If $\Gr{G}' = (\Ver{V}', \Ed{E}')$ is a second graph with $\Ver{V}' \subseteq \Ver{V}$ and $\Ed{E}' \subseteq \Ed{E}$, then we say that $\Gr{G}'$ is a \emph{subgraph} of $\Gr{G}$.
\end{definition}

\begin{definition}
  [Smallest Graph Representation]
  \label{def:smallest_graph_rep}
  $\Gr{G}'$ is said to be the \emph{smallest graph representation} for $X_1, \dots, X_n$ (with respect to $\indep_{\CR}$) if $\Gr{G}'$ is a representation for $X_1, \dots, X_n$ and a subgraph of all other representations for $X_1, \dots, X_n$.
\end{definition}

\begin{theorem}
  [Smallest Graph Representation]
  \label{thm:smallest_graph_representation}
  Define $\widehat{\Gr{G}} \coloneqq \big(\start{n}, \widehat{\Ed{E}}\big)$ with $\{i, j\} \in \widehat{\Ed{E}}$ if and only if
  \begin{equation}\label{eq:edge_condition}
    X_{\start{n} \setminus \{i, j\}}.\CR(X_i; X_j) = \set{\CR}\big(\atom{\{i, j\}}\big) \neq 0.
  \end{equation}
  If $X_1, \dots, X_n$ has a smallest graph representation, then it equals $\widehat{\Gr{G}}$.
\end{theorem}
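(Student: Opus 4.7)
My plan is to prove the two set inclusions $\widehat{\Ed{E}} \subseteq \Ed{E}^{\min}$ and $\Ed{E}^{\min} \subseteq \widehat{\Ed{E}}$, where $\Gr{G}^{\min} = (\start{n}, \Ed{E}^{\min})$ denotes a hypothetical smallest representation. The first inclusion is the easier direction, which I will argue by contrapositive. If $\{i,j\} \notin \Ed{E}'$ for a representation $\Gr{G}'$, then the induced subgraph $\Grdif{(\Gr{G}')}{\start{n} \setminus \{i,j\}}$ consists of the two isolated vertices $i$ and $j$, so $\{i,j\}$ is a disconnected vertex set and $\atom{\{i,j\}}$ is a disconnected atom with respect to $\Gr{G}'$. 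By the MRF characterization of Theorem~\ref{thm:mrf_characterization}, this forces $\set{\CR}(\atom{\{i,j\}}) = 0$, i.e.\ $\{i,j\} \notin \widehat{\Ed{E}}$. The contrapositive gives $\widehat{\Ed{E}} \subseteq \Ed{E}'$ for every representation $\Gr{G}'$, and in particular $\widehat{\Ed{E}} \subseteq \Ed{E}^{\min}$.

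For the reverse inclusion my key observation is that, because $\Gr{G}^{\min}$ is a subgraph of every representation, an edge $\{i,j\}$ belongs to $\Ed{E}^{\min}$ only if it belongs to $\Ed{E}'$ for every representation $\Gr{G}'$. So to force $\{i,j\} \notin \Ed{E}^{\min}$ from the assumption $\set{\CR}(\atom{\{i,j\}}) = 0$, it is enough to exhibit a single representation that omits the edge $\{i,j\}$. My candidate is the witness graph $\Gr{G}' := K_n \setminus \{i,j\}$, the complete graph on $\start{n}$ with only this one edge removed. I plan to verify via Theorem~\ref{thm:mrf_characterization} that $\Gr{G}'$ is a representation: for any $\Ver{W} \subseteq \start{n}$ that does not contain both $i$ and $j$, the induced subgraph $\Gr{G}'|_\Ver{W}$ is complete and hence connected; for any $\Ver{W}$ that contains both $i$ and $j$ together with at least one further vertex $k$, the vertex $k$ is adjacent to both $i$ and $j$ in $\Gr{G}'$, so $\Ver{W}$ remains connected via the path $i \edge k \edge j$. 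The only remaining case is $\Ver{W} = \{i,j\}$ itself, yielding the single disconnected atom $\atom{\{i,j\}}$. Consequently $\Gr{G}'$ is a representation exactly when $\set{\CR}(\atom{\{i,j\}}) = 0$, which is our hypothesis, and since $\{i,j\} \notin \Ed{E}'$ we conclude $\{i,j\} \notin \Ed{E}^{\min}$.

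The main conceptual step is choosing the right witness graph. A more obvious attempt would be to delete the edge $\{i,j\}$ directly from $\Gr{G}^{\min}$ and try to show the resulting graph is still a representation; this route is substantially harder, because removing one edge from a sparser graph can create many new disconnected atoms whose vanishing does not obviously follow from $\set{\CR}(\atom{\{i,j\}}) = 0$ alone, and it would force one to combine cutset FCMIs of $\Gr{G}^{\min}$ with subset determination in a delicate case analysis. Using $K_n \setminus \{i,j\}$ sidesteps this, because the complete graph is vacuously a representation (it has no disconnected atoms at all) and removing one edge creates exactly one new disconnected atom, namely $\atom{\{i,j\}}$. Combining the two inclusions gives $\Ed{E}^{\min} = \widehat{\Ed{E}}$, i.e.\ $\Gr{G}^{\min} = \widehat{\Gr{G}}$.
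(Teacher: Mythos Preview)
Your proof is correct. The paper itself does not spell out a proof here; it merely cites \cite{Yeung2019}, Theorem~3, and notes that the argument carries over verbatim once one replaces an inequality sign by an inequality-to-zero sign. Your argument is precisely the kind of short direct proof one expects in that reference: the first inclusion $\widehat{\Ed{E}}\subseteq\Ed{E}^{\min}$ via the contrapositive and Theorem~\ref{thm:mrf_characterization}, and the second inclusion via the witness graph $K_n\setminus\{i,j\}$, whose only disconnected atom is $\atom{\{i,j\}}$. Your explicit discussion of why removing the edge from $\Gr{G}^{\min}$ directly would be the wrong route is apt and matches the spirit of the reference.
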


\begin{proof}
  This is simply~\cite{Yeung2019}, Theorem 3, generalized to our setting. 
  Note that the first equality in Equation~\eqref{eq:edge_condition} is simply Lemma~\ref{lem:how_atoms_look_like} and always holds.
  The original proof can be copied word for word.
  Only once, an ``inequality sign'' needs to be replaced by an ``unequal sign''.
\end{proof}

\subsubsection*{Smallest Graphs for Subfields of Markov Random Fields}

\begin{definition}
  [Marginalization of Graph]
  \label{def:marginalization}
  Let $\Ver{V}' \subseteq \Ver{V}$.
  Then the Marginalization of $\Gr{G}$ for $\Ver{V}'$ is defined as $\Gr{G}^*(\Ver{V}') \coloneqq (\Ver{V}', \Ed{E}')$ with
  $\{i, j\} \in \Ed{E}'$ if and only if there is a walk from $i$ to $j$ in $\Gr{G}$ with all intermediate vertices in $\Ver{V} \setminus \Ver{V}'$.
\end{definition}

\begin{notation}
  Let $\Gr{G}' = (\Ver{V}', \Ed{E}')$ be a graph with $\Ver{V}' \subseteq \Ver{V} = \start{n}$. 
  Then we write $\Gr{G} \Longrightarrow \Gr{G}'$ if for all $Y_1, \dots, Y_n$ that form an $\CR$-Markov random field with respect to $\Gr{G}$, $Y_i, i \in \Ver{V}'$ form an $\CR$-Markov random field with respect to $\Gr{G}'$. 
  Note that ``$\Gr{G} \Longrightarrow \Gr{G}'$'' is not a statement about graphs alone, as it depends on $\CR: \mon{M} \to \gro{G}$.
\end{notation}

\begin{theorem}
  [Graphs for Subfields]
  \label{thm:subfield_graphs}
  Assume that there is at least one element $Z \in \mon{M}$ with $\CR(Z) \neq 0$.
  Let $\Ver{V}' \subseteq \Ver{V}$ be a subset.
  Then $\Gr{G}^*(\Ver{V}')$ is the smallest graph $\Gr{G}'$ such that $\Gr{G} \Longrightarrow \Gr{G}'$.
\end{theorem}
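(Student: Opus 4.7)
The strategy is to prove the two assertions separately: first, the validity $\Gr{G} \Longrightarrow \Gr{G}^*(\Ver{V}')$, and second, the minimality of $\Gr{G}^*(\Ver{V}')$ among all such representations. The hypothesis $\CR(Z) \neq 0$ for some $Z$ is used only in the minimality direction to construct a nontrivial witness.

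For validity, I would invoke the $\CR$-diagram characterization of $\CR$-Markov random fields, Theorem~\ref{thm:mrf_characterization}. Fix $Y_1, \dots, Y_n \in \mon{M}$ forming an $\CR$-Markov random field with respect to $\Gr{G}$, and let $\set{\CR}_{Y, \Ver{V}'}$ denote the $\gro{G}$-valued measure on atoms of the subfamily $\{Y_i : i \in \Ver{V}'\}$ obtained from Hu's theorem. Applying Corollary~\ref{cor:how_atoms_look_like} inside the subdiagram and then Hu's theorem inside the full diagram with $J = \Ver{V}' \setminus \Ver{W}$ yields the reassembly identity
\begin{equation*}
\set{\CR}_{Y, \Ver{V}'}(\atom{\Ver{W}}) \;=\; Y_{\Ver{V}' \setminus \Ver{W}} . \CR\!\big( \bigscolon_{w \in \Ver{W}} Y_w \big) \;=\; \sum_{\substack{I \subseteq \start{n} \\ I \cap \Ver{V}' = \Ver{W}}} \set{\CR}_Y(\atom{I}).
\end{equation*}
The key combinatorial step is: if $\Ver{W} \subseteq \Ver{V}'$ is disconnected with respect to $\Gr{G}^*(\Ver{V}')$, then every $I \subseteq \start{n}$ with $I \cap \Ver{V}' = \Ver{W}$ is disconnected with respect to $\Gr{G}$. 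This holds because any walk in $\Grdif{\Gr{G}}{(\start{n} \setminus I)}$ between two vertices of $\Ver{W}$ decomposes into subwalks through $\Ver{V} \setminus \Ver{V}'$ between consecutive $\Ver{W}$-vertices (since $I \setminus \Ver{V}' \subseteq \Ver{V} \setminus \Ver{V}'$), and by the definition of $\Gr{G}^*(\Ver{V}')$ each such subwalk is an edge there, assembling into a walk in $\Gr{G}^*(\Ver{V}')$ avoiding $\Ver{V}' \setminus \Ver{W}$, which contradicts disconnectedness of $\Ver{W}$. Hence every summand vanishes by Theorem~\ref{thm:mrf_characterization}, and the reverse direction of Theorem~\ref{thm:mrf_characterization} gives that the subfamily forms an $\CR$-Markov random field with respect to $\Gr{G}^*(\Ver{V}')$.

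For minimality, I would argue by contradiction: suppose $\Gr{G}'$ is a representation with $\Gr{G} \Longrightarrow \Gr{G}'$ but $\{i,j\} \in \Gr{G}^*(\Ver{V}')$ is missing from $\Ed{E}'$. Pick a witnessing walk $i = v_0 \edge v_1 \edge \dots \edge v_k = j$ with $v_1, \dots, v_{k-1} \in \Ver{V} \setminus \Ver{V}'$, and set $I := \{i, v_1, \dots, v_{k-1}, j\}$, which is connected in $\Gr{G}$ and satisfies $I \cap \Ver{V}' = \{i,j\}$. Take $Z \in \mon{M}$ with $\CR(Z) \neq 0$ and define $Y_k := Z$ for $k \in I$, $Y_k := \one$ otherwise. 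The idempotency $ZZ = Z$, the identity $Z.\CR(Z) = 0$ derived from the chain rule $\CR(Z) = \CR(ZZ) = \CR(Z) + Z.\CR(Z)$, the inductive identity $\CR(Z;\dots;Z) = \CR(Z)$, and the fact that $\CR(Y_1;\dots;Y_q) = 0$ as soon as some $Y_\ell = \one$ (from the inductive definition Equation~\eqref{eq:inductive_definition}) then imply by case analysis that $\set{\CR}_Y(\atom{J}) = 0$ for every $J \neq I$, while $\set{\CR}_Y(\atom{I}) = \one . \CR(Z) = \CR(Z) \neq 0$. Since $I$ is connected in $\Gr{G}$, Theorem~\ref{thm:mrf_characterization} shows that $Y_1, \dots, Y_n$ form an $\CR$-Markov random field with respect to $\Gr{G}$; but then the reassembly identity gives $\set{\CR}_{Y, \Ver{V}'}(\atom{\{i,j\}}) = \CR(Z) \neq 0$, whereas $\{i, j\}$ is disconnected in $\Gr{G}'$ (removing $\Ver{V}' \setminus \{i, j\}$ leaves two isolated vertices, since $\{i,j\} \notin \Ed{E}'$), contradicting Theorem~\ref{thm:mrf_characterization} applied on $\Gr{G}'$. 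Hence $\{i,j\} \in \Ed{E}'$, so $\Gr{G}^*(\Ver{V}')$ is a subgraph of every such $\Gr{G}'$.

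The main obstacle is the minimality step: one must engineer a test family whose $\CR$-diagram is sharply concentrated on a single connected atom while respecting the structure of $\Gr{G}$. The subtlety is that in our abstract setting we have no ``probability'' with which to spread mass, so the construction must rely purely on the algebraic features of $\mon{M}$, namely idempotency, and on the chain-rule identity $Z.\CR(Z) = 0$ to annihilate all unwanted atoms through the monoid action. Verifying that the simple choice $Y_k \in \{Z, \one\}$ really zeroes out every atom except $\atom{I}$ requires a careful but routine case analysis, and the combinatorial walk-lifting argument in the first part requires only a straightforward unpacking of the definition of $\Gr{G}^*(\Ver{V}')$.
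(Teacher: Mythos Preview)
Your proposal is correct and matches the approach the paper defers to: the paper's own proof is simply a pointer to \cite{Yeung2019}, Theorem~8, noting that the original argument carries over verbatim once one replaces the condition $\Ent(Z) > 0$ by $\CR(Z) \neq 0$ and the constant random variable by $\one \in \mon{M}$. Your write-up spells out precisely that argument---the reassembly identity together with the walk-lifting lemma for validity, and the witness family $Y_k \in \{Z, \one\}$ supported on a connected set $I$ with $I \cap \Ver{V}' = \{i,j\}$ for minimality---and your verification that $\set{\CR}_Y(\atom{J})$ vanishes for all $J \neq I$ via idempotency and $Z.\CR(Z) = 0$ is exactly the algebraic replacement the paper alludes to.
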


\begin{proof}
  This is precisely~\cite{Yeung2019}, Theorem 8. 
  The proof is exactly the same, except that the condition $\Ent(Z) > 0$ from the original paper is replaced by $\CR(Z) \neq 0$, and the constant random variable is replaced by $\one \in \mon{M}$.
\end{proof}

\begin{remark}
  \label{rem:philosophy}
  If we drop the condition that there is $Z \in \mon{M}$ with $\CR(Z) \neq 0$, then the conclusion is wrong.
  Indeed, if $\CR$ is trivial, then all $\CR$-independences always hold, meaning that the graph $\Gr{G}' \coloneqq (\Ver{V}', \emptyset)$ with empty edge set would be the smallest graph with $\Gr{G} \Longrightarrow \Gr{G}'$.
  The existence of $Z$ with $\CR(Z) \neq 0$ ensures that we can for all edges in $\Gr{G}^*\big(\Ver{V}'\big)$ construct an $\CR$-Markov random field for $\Gr{G}$ that is \emph{faithful} to that edge by having a corresponding dependence.
\end{remark}

\subsubsection*{Rewriting the Values of Atoms in $\CR$-Markov Random Fields}

Section 5 in~\cite{Yeung2019} discusses a characterization for forests of paths.
Namely, $\Gr{G} = (\Ver{V}, \Ed{E})$ is a forest of paths if and only if for all probability mass functions $P$ and all random variables $X_1, \dots, X_n$ that form a Markov random field with respect to $\Gr{G}$ and $P$, the corresponding $\Ent$-diagram is nonnegative.
These results cannot be stated in our generalized setting since $\CR$ takes values in an unspecified abelian group $\gro{G}$ that may not have an associated order relation to begin with.
However, that section still contains an interesting result worth stating in our generalized setting:

\begin{theorem}
  [Atoms in Markov Random Fields]
  \label{thm:atoms_in_mrfs}
  Assume that $X_1, \dots, X_n$ form an $\CR$-Markov random field with respect to $\Gr{G}$.
  Let $\atom{\Ver{I}}$ be a connected atom, where $\Ver{I} \subseteq \start{n}$ has at least two elements.
  Define
  \begin{equation*}
    \Ver{B} \coloneqq \Big\lbrace i \in \Ver{I} \ \big| \  \atom{\Ver{I} \setminus i} \text{ is connected}  \Big\rbrace.
  \end{equation*}
  Then one has
  \begin{equation*}
    \set{\CR}(\atom{\Ver{I}}) = X_{\Ver{V} \setminus \Ver{I}}.\CR\big( \bigscolon_{i \in \Ver{I}} X_i \big) = X_{\Ver{V} \setminus \Ver{I}}.\CR\big( \bigscolon_{i \in \Ver{B}} X_i \big).
  \end{equation*}
\end{theorem}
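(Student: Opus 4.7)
The first equality is just an instance of Corollary~\ref{cor:how_atoms_look_like}, so everything hinges on the second. My plan is to translate the identity into a statement about $\set{\CR}$ on a specific set of atoms, then use the $\CR$-Markov-random-field characterization (Theorem~\ref{thm:mrf_characterization}) together with a small graph-theoretic lemma to show that all the ``extra'' atoms in that set are disconnected, and therefore are killed by $\set{\CR}$.

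First, by Hu's Theorem~\ref{thm:hu_kuo_ting_generalized},
\begin{equation*}
  X_{\Ver{V} \setminus \Ver{I}}.\CR\big( \bigscolon_{i \in \Ver{B}} X_i \big)
  = \set{\CR}\Bigg(\bigcap_{i \in \Ver{B}} \set{X}_i \ \setminus \ \set{X}_{\Ver{V}\setminus\Ver{I}}\Bigg),
\end{equation*}
and an elementary unpacking of the definitions shows that the region on the right consists precisely of the atoms $\atom{\Ver{W}}$ with $\Ver{B} \subseteq \Ver{W} \subseteq \Ver{I}$. Since $\set{\CR}$ is a measure and $\atom{\Ver{I}}$ is one of these atoms, the second equality reduces to showing
\begin{equation*}
  \set{\CR}(\atom{\Ver{W}}) = 0 \quad \text{for every } \Ver{W} \text{ with } \Ver{B} \subseteq \Ver{W} \subsetneq \Ver{I}.
\end{equation*}

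Since $X_1,\ldots,X_n$ form an $\CR$-Markov random field with respect to $\Gr{G}$, Theorem~\ref{thm:mrf_characterization} tells me that every disconnected atom is annihilated by $\set{\CR}$. So it suffices to prove the following graph-theoretic claim: if $\Ver{B} \subseteq \Ver{W} \subsetneq \Ver{I}$, then $\atom{\Ver{W}}$ is disconnected, i.e., the induced subgraph on $\Ver{W}$ is disconnected.

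I will prove the contrapositive: if $\Ver{W} \subseteq \Ver{I}$ induces a connected subgraph and $\Ver{W} \subsetneq \Ver{I}$, then there is some vertex in $\Ver{B}$ outside $\Ver{W}$. Because $\atom{\Ver{I}}$ is connected, the induced subgraph on $\Ver{I}$ is connected, so starting from $\Ver{W}_0 := \Ver{W}$ I can iteratively build a chain $\Ver{W} = \Ver{W}_0 \subsetneq \Ver{W}_1 \subsetneq \ldots \subsetneq \Ver{W}_m = \Ver{I}$ by adding at each step a vertex of $\Ver{I} \setminus \Ver{W}_k$ that is adjacent (in $\Gr{G}$) to some vertex of $\Ver{W}_k$; such a vertex exists because otherwise no walk in the induced subgraph on $\Ver{I}$ could connect $\Ver{W}_k$ to $\Ver{I} \setminus \Ver{W}_k$, contradicting connectedness of $\Ver{I}$. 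Each $\Ver{W}_k$ therefore induces a connected subgraph. The last vertex added, call it $v$, satisfies $\Ver{I} \setminus v = \Ver{W}_{m-1}$, and the latter induces a connected subgraph, so $\atom{\Ver{I} \setminus v}$ is connected and $v \in \Ver{B}$; on the other hand $v \in \Ver{I} \setminus \Ver{W}_0 = \Ver{I} \setminus \Ver{W}$. This contradicts $\Ver{B} \subseteq \Ver{W}$ and completes the proof.

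The step I expect to be the main obstacle is the graph-theoretic claim, but it is essentially the standard fact that any connected graph admits a vertex ordering in which every prefix induces a connected subgraph (equivalently, that a leaf of any DFS tree is a non-cut vertex). Everything else is bookkeeping with Hu's theorem and an appeal to Theorem~\ref{thm:mrf_characterization}; notably, subset determination is not needed here because the $\CR$-MRF characterization has already absorbed its use.
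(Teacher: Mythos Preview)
Your argument is correct. The first equality is indeed Corollary~\ref{cor:how_atoms_look_like}; for the second, you correctly identify via Hu's theorem that the region in question consists of all atoms $\atom{\Ver{W}}$ with $\Ver{B}\subseteq\Ver{W}\subseteq\Ver{I}$, and then reduce to showing that every such $\atom{\Ver{W}}$ with $\Ver{W}\subsetneq\Ver{I}$ is disconnected. Your graph-theoretic step---growing a connected proper subset $\Ver{W}\subsetneq\Ver{I}$ one adjacent vertex at a time until it reaches $\Ver{I}$, so that the last vertex added lies in $\Ver{B}\setminus\Ver{W}$---is valid (it is precisely the statement that every connected graph on at least two vertices has a non-cut vertex, applied relative to $\Ver{W}$). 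One small point worth making explicit: $\Ver{B}$ is nonempty (indeed $|\Ver{B}|\ge 2$, by taking leaves of a spanning tree of the induced subgraph on $\Ver{I}$), so the expression $\CR\big(\bigscolon_{i\in\Ver{B}}X_i\big)$ is meaningful.

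The paper does not give its own proof but simply refers to Theorem~11 of \cite{Yeung2019}, noting that the argument there carries over verbatim. Your write-up is self-contained within the present paper's framework, invoking only Theorem~\ref{thm:mrf_characterization} and Hu's theorem, which is a genuine improvement in exposition over a bare citation.
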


\begin{proof}
  Note that the first equality is simply Lemma~\ref{lem:how_atoms_look_like}.
  The second equality is~\cite{Yeung2019}, Theorem 11.
  The original proof generalizes word for word to our setting.
\end{proof}

\begin{remark}
  \label{rem:about_atoms_mrfs}
  The preceding theorem deals with connected atoms.
  For disconnected atoms $\atom{\Ver{I}}$, Theorem~\ref{thm:mrf_characterization} shows $\set{\CR}(\atom{\Ver{I}}) = 0$.

  Furthermore, note that Theorem~\ref{thm:atoms_in_mrfs} generalizes a special case of Corollary~\ref{cor:information_terms_markov_chains}:
  that corollary implies that for an $\CR$-Markov chain $X_1, \dots, X_n$ and an ``interval'' $I = \range{i}{j} \subseteq \start{n}$ for $i < j$, one has
  \begin{equation*}
    X_{\start{n} \setminus I}.\CR\big( \bigscolon_{i \in I} X_i\big) = X_{\start{n} \setminus I}.\CR(X_i; X_j).
  \end{equation*}
\end{remark}

\subsubsection*{Trees and Drawing $\CR$-Diagrams for $\CR$-Markov Random Fields}

The remaining results in~\cite{Yeung2019} are essentially graph theoretic in nature and not dependent on $\CR$.
In Theorem 9, they discuss when the marginalization of a graph is a tree. 
Finally, in Section 6, they discuss an algorithm for drawing an $\Ent$-diagram, with two requirements relating to the graph $\Gr{G}$: connected atoms are depicted as nonempty and disconnected atoms as empty.
These diagrams are then suitable to express any $\CR$-diagram of any $\CR$-Markov random field corresponding to $\Gr{G}$, as Theorem~\ref{thm:mrf_characterization} shows.
For the case of path-shaped graphs corresponding to Markov chains, we saw examples of such depictions in Figure~\ref{fig:four_and_five_circles}.
The algorithm in Section 6 of~\cite{Yeung2019} works for any graph $\Gr{G}$.

\section{Slices of \texorpdfstring{$\Ent$}--Diagrams}\label{sec:slices}

Our generalizations of the main results from~\cite{Yeung2002a}, Theorems~\ref{thm:FCMIs_characterization} and~\ref{thm:mrf_characterization}, work in the framework of $\CR$-independence, where $\CR$ satisfies the full set of assumptions in Hu's Theorem~\ref{thm:hu_kuo_ting_generalized}.
In contrast, the original results were formulated for the $P$-independence with respect to a fixed probability mass function $P$, thus engaging only with one ``slice'' of the $\Ent$-diagrams, as we explained in Equation~\eqref{eq:slice_function}.
A priori, it may seem unclear whether these original results can actually be deduced from our supposed generalizations.
In this section, we prove that this is indeed possible.

\begin{theorem}
  [Characterization of $P$-FCMIs]
  \label{thm:charac_probabilistic_fcmis}
  Let $\Ent: \mon{M} \to \Meas\big( \Sim{\samp}, \R \big)$ be the Shannon entropy function.
  Let $X_1, \dots, X_n$ be random variables on $\samp$ and $P \in \Sim{\samp}$ a probability mass function, giving rise to $\set{I}^P: 2^{\set{X}} \to \R$ by Equation~\eqref{eq:slice_function}.
  Let $\FCMI{K} = \big(J, L_i, 1 \leq i \leq q \big)$ be a conditional partition of $\start{n}$ with $q \geq 2$.
  Then the following two statements are equivalent:
  \begin{itemize}
    \item $\FCMI{K}$ induces a $P$-FCMI: $\bigindepF{P}_{i = 1}^{q} X_{L_i} \ | \ X_J$;
    \item For all $\atom{W} \in \IM(\FCMI{K})$: $\set{I}^{P}(\atom{W}) = 0$.
  \end{itemize}
\end{theorem}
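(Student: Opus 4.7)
The plan is to reduce the theorem to our general characterization, Theorem~\ref{thm:FCMIs_characterization}, via the restriction framework of Section~\ref{sec:restricting_measures}. Define the property $\Prty{FCMI} = \Prty{FCMI}(X_1, \dots, X_n, \FCMI{K})$ on $\Sim{\samp}$ (so $r = 0$) by declaring $\Prty{FCMI}(P)$ to hold iff $\FCMI{K}$ induces a $P$-FCMI with respect to $X_1, \dots, X_n$. This property is well-defined (conditional independences depend only on equivalence classes), measurable (each defining pairwise $P$-independence is a polynomial condition on $P$, cf.\ Equation~\eqref{eq:independence_relation}), and stable under conditioning with respect to $X_1, \dots, X_n$ by Proposition~\ref{pro:fcmi_stable_conditioning}. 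Hence by Lemma~\ref{lem:well-defined_additive_action} and Proposition~\ref{pro:property_restriction_chain_rule}, the restriction $\WP{\Ent}{\Prty{FCMI}}: \WP{\mon{M}}{\Prty{FCMI}} \to \WP{\gro{G}}{\Prty{FCMI}}$ satisfies the chain rule, and Hu's Theorem~\ref{thm:hu_kuo_ting_generalized} yields a measure $\set{\WP{\Ent}{\Prty{FCMI}}}$ on $2^{\set{X}}$.

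For the forward direction, I would first argue that $\FCMI{K}$ induces an $\WP{\Ent}{\Prty{FCMI}}$-FCMI, i.e., for each $i$ the function $X_J \WP{.}{\Prty{FCMI}} \WP{\Ent}{\Prty{FCMI}}\bigl(X_{L_i}; X_{\bigcup_{l \neq i} L_l}\bigr)$ vanishes on the restricted domain. But at any $P$ with $\Prty{FCMI}(P)$, the pairwise $P$-independence $\IndepF{P}{X_{L_i}}{X_{\bigcup_{l \neq i} L_l}}{X_J}$ holds by definition, and the standard Jensen-based equivalence says precisely that $\bigl[X_J.\Ent\bigl(X_{L_i}; X_{\bigcup_{l \neq i} L_l}\bigr)\bigr](P) = 0$; this is the only inequality needed. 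Applying Theorem~\ref{thm:FCMIs_characterization} then gives $\set{\WP{\Ent}{\Prty{FCMI}}}(\atom{W}) = 0$ for every $\atom{W} \in \IM(\FCMI{K})$, and evaluating at the particular $P$ from the hypothesis (which satisfies $\Prty{FCMI}$) yields $\set{\Ent}^P(\atom{W}) = 0$.

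For the reverse direction, suppose $\set{\Ent}^P(\atom{W}) = 0$ for every $\atom{W} \in \IM(\FCMI{K})$. For each $i \in \start{q}$, Hu's Theorem~\ref{thm:hu_kuo_ting_generalized} gives
\begin{equation*}
  \bigl[X_J.\Ent\bigl(X_{L_i}; X_{\bigcup_{l \neq i} L_l}\bigr)\bigr](P) = \set{\Ent}^P\bigl(\set{X}_{L_i} \cap \set{X}_{\bigcup_{l \neq i} L_l} \setminus \set{X}_J\bigr),
\end{equation*}
and by Lemma~\ref{lem:atoms_in_peculiar_set} the atoms in this region (those $\atom{W}$ with $W \cap L_i \neq \emptyset$, $W \cap L_k \neq \emptyset$ for some $k \neq i$, and $W \cap J = \emptyset$) are all contained in $\IM(\FCMI{K})$. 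Summing, the conditional mutual $P$-information vanishes, so by the same Jensen-based equivalence we obtain $\IndepF{P}{X_{L_i}}{X_{\bigcup_{l \neq i} L_l}}{X_J}$; varying $i$ yields the $P$-FCMI.

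The main conceptual obstacle is the one flagged in Remark~\ref{rem:subset_determination_and_slices}: subset determination, Theorem~\ref{thm:subset_determination}, does not survive passage to a fixed slice $P$. The reduction above circumvents this precisely by never working in the slice until the very end --- subset determination is applied inside $\WP{\gro{G}}{\Prty{FCMI}}$, where the monoid action is still available, and only afterwards is $P$ substituted in. The one place where an order-theoretic argument (Jensen's inequality) is unavoidable is the bridge between a pairwise $P$-independence and the vanishing of its associated conditional mutual $P$-information; all remaining algebraic structure is inherited from Theorem~\ref{thm:FCMIs_characterization}.
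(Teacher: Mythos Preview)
Your proposal is correct and mirrors the paper's own proof essentially step for step: define the stable property ``$\FCMI{K}$ induces a $P'$-FCMI'', restrict entropy to it via Proposition~\ref{pro:property_restriction_chain_rule}, invoke Theorem~\ref{thm:FCMIs_characterization} in the restricted setting, and then evaluate at the given $P$; the reverse direction likewise agrees with the paper's use of Hu's theorem and the containment of the relevant region in $\IM(\FCMI{K})$. Your closing remark about confining subset determination to $\WP{\gro{G}}{\Prty{FCMI}}$ and using Jensen only at the bridge between $P$-independence and vanishing conditional mutual $P$-information is exactly the point the paper is making with this appendix.
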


\begin{proof}
  Let $\Prty{R}$ be the following property on elements $P' \in \Sim{\samp}$:
  \begin{equation*}
    \Prty{R}(P') \quad : \Longleftrightarrow \quad \bigindepF{P'}_{i = 1}^{q} X_{L_i} \ | \ X_J.
  \end{equation*}
  Then by Proposition~\ref{pro:fcmi_stable_conditioning}, and using that it is also well-defined and measurable, this is stable property.
  Consequently, we can define the restricted entropy function $\WP{\Ent}{\Prty{R}}: \WP{\mon{M}}{\Prty{R}} \to \Meas\big( \WP{\Sim{\samp}}{\Prty{R}}, \R \big)$, which satisfies the chain rule according to Proposition~\ref{pro:property_restriction_chain_rule}. 
  Now, we claim that the following conditional mutual independence holds:
  \begin{equation}\label{eq:inferred_mutual_independence}
    \bigindepF{\WP{\Ent}{\Prty{R}}}_{i = 1}^{q} X_{L_i} \ | \ X_J.
  \end{equation}
  By definition, conditional mutual independences are given by a set of pairwise independences, and these in turn by the vanishing of conditional mutual information.
  Thus, the claim reduces to the following:
  for all $i \in \start{q}$, we have
  \begin{equation*}
    X_J.\WP{\Ent}{\Prty{R}}\big( X_{L_i}; X_{L_{\setminus i}}\big) = 0,
  \end{equation*}
  where $L_{\setminus i} \coloneqq \bigcup_{k \neq i} L_k$.
  But this is clear since for all $P'$ with $\Prty{R}(P')$, we have $\IndepF{P'}{X_{L_i}}{X_{L_{\setminus i}}}{X_J}$
  and thus, as is well known, the vanishing of the corresponding conditional mutual information follows:
  \begin{equation*}
    \Big[ X_J.\WP{\Ent}{\Prty{R}}\big( X_{L_i}; X_{L_{\setminus i}}\big) \Big]\big(P'\big) = \Big[ X_J.\Ent\big( X_{L_i}; X_{L_{\setminus i}}\big) \Big]\big(P'\big) = 0.
  \end{equation*}
  This proves the claim, Equation~\eqref{eq:inferred_mutual_independence}.

  Now, assume that the first statement holds, i.e., $\bigindepF{P}_{i = 1}^{q} X_{L_i} \ | \ X_J$.
  Then $\Prty{R}(P)$ holds.
  From Equation~\eqref{eq:inferred_mutual_independence}, we obtain by Theorem~\ref{thm:FCMIs_characterization} that for all $\atom{W} \in \IM(\FCMI{K})$: $\set{\WP{\Ent}{\Prty{R}}}(\atom{W}) = 0$.
  It follows
  \begin{equation*}
    \set{\Ent}^{P}(\atom{W}) = \big[ \set{\Ent}(\atom{W}) \big](P) = \big[ \set{\WP{\Ent}{\Prty{R}}} (\atom{W})  \big](P) = 0,
  \end{equation*}
 which proves one direction.

 For the other direction, assume that $\set{I}^{P}(\atom{W}) = 0$ for all $\atom{W} \in \IM(\FCMI{K})$.
 We want to show $\bigindepF{P}_{i = 1}^{q} X_{L_i} \ | \ X_J$.
 As is well-known, this amounts to showing the following for all $i \in \start{q}$:
 \begin{equation*}
   \Big[ X_J.\Ent\big( X_{L_i}; X_{L_{ \setminus i}} \big) \Big](P) = 0.
 \end{equation*}
 By Hu's Theorem~\ref{thm:hu_kuo_ting_generalized}, we thus need to show $\set{\Ent}^P\big( \set{X}_{L_i} \cap \set{X}_{L_{\setminus i}} \setminus \set{X}_J \big) = 0$.
 But note that, clearly, we have $\set{X}_{L_i} \cap \set{X}_{L_{\setminus i}} \setminus \set{X}_J \subseteq \IM(\FCMI{K})$. 
 Thus, the claim follows from the assumption that $\set{I}^P(\atom{W}) = 0$ for all $\atom{W} \in \IM(\FCMI{K})$, together with the fact that $\set{I}^P$ is a signed measure and therefore additive over disjoint unions.
\end{proof}

Now we prove Yeung's main result, Theorem~\ref{thm:charac_proba_mrfs}, from~\citet{Yeung2002a}:

\begin{proof}[Proof of Theorem~\ref{thm:charac_proba_mrfs}]
  Let the property $\Prty{R}$ be given by $\Prty{R} \coloneqq \Prty{MRF}\big( \Gr{G}; X_1, \dots, X_n \big)$, see the discussion after Equation~\eqref{eq:mrf_property_yeah}.
  Note that in our application of that definition, we have $r = 0$.
  Now, assume the first statement, i.e., $\Prty{R}(P)$ holds.
  We also know from Theorem~\ref{thm:get_mrf_for_free_with_restriction} that $X_1, \dots, X_n$ form an $\WP{\Ent}{\Prty{R}}$-Markov random field with respect to $\Gr{G}$. 
  Then Theorem~\ref{thm:mrf_characterization} implies that $\set{\WP{\Ent}{\Prty{R}}}(\atom{\Ver{W}}) = 0$ for all disconnected atoms $\atom{\Ver{W}}$.
  For these atoms, we then obtain
  \begin{equation*}
    \set{\Ent}^{P}(\atom{\Ver{W}}) = \Big[\set{\Ent}(\atom{\Ver{W}}) \Big](P) = \Big[\set{\WP{\Ent}{\Prty{R}}}(\atom{\Ver{W}}) \Big](P) = 0,
  \end{equation*}
  which proves one direction.

  For the other direction, assume that $\set{\Ent}^P(\atom{\Ver{W}}) = 0$ for all disconnected atoms $\atom{\Ver{W}}$.
  By Proposition~\ref{pro:equivalence_global_markov}, we need to show that $X_1, \dots, X_n$ satisfies the cutset property with respect to $\Gr{G}$ and $\indep_P$.
  Thus, let $\Ver{U}$ be a cutset of $\Gr{G}$ and $\FCMI{K} \coloneqq \big( \Ver{U}, \Ver{V}_i(\Ver{U}), 1 \leq i \leq \ncomp{\Ver{U}} \big)$ the corresponding conditional partition.
  We need to show $\bigindepF{P}_{i = 1}^{\ncomp{\Ver{U}}} X_{\Ver{V}_i(\Ver{U})} \ | \ X_{\Ver{U}}$.
  By Theorem~\ref{thm:charac_probabilistic_fcmis}, we need to show that $\set{\Ent}^P(\atom{\Ver{W}}) = 0$ for all atoms $\atom{\Ver{W}} \in \IM(\FCMI{K})$.
  All such $\atom{\Ver{W}}$ are disconnected by Lemma~\ref{lem:atoms_image_disconnected}, and so the result follows from the assumption. 
\end{proof}


\phantomsection

\addcontentsline{toc}{section}{References}

\bibliographystyle{plainnat}
\bibliography{library}

@ARTICLE{Yeung1991,
  author={Yeung, Raymond W.},
  journal={IEEE Transactions on Information Theory}, 
  title={A new outlook on {S}hannon's information measures}, 
  year={1991},
  volume={37},
  number={3},
  pages={466-474},
  keywords={Random variables;Information theory;Set theory;Entropy;Atomic measurements;Helium;Mutual information},
  doi={10.1109/18.79902}
}

@book{Chaitin1987,
abstract = {Chaitin, the inventor of algorithmic information theory, presents in this book the strongest possible version of G{\"{o}}del's incompleteness theorem, using an information theoretic approach based on the size of computer programs. One half of the book is concerned with studying the halting probability of a universal computer if its program is chosen by tossing a coin. The other half is concerned with encoding the halting probability as an algebraic equation in integers, a so-called exponential diophantine equation.},
author = {Chaitin, Gregory. J.},
doi = {10.1017/CBO9780511608858},
file = {:home/leon/Documents/PhD Amsterdam/Literature/Algorithmic{\_}Information{\_}Theory.pdf:pdf},
isbn = {9780521343060},
month = {Oct.},
publisher = {Cambridge University Press},
title = {Algorithmic Information Theory},
year = {1987}
}

@book{Li1997,
  author    = {Ming Li and Paul Vit{\'a}nyi},
  title     = {An Introduction to {K}olmogorov Complexity and Its Applications},
  edition   = {4th},
  year      = {2019},
  publisher = {Springer Cham},
  isbn      = {9783030112974},
  doi       = {10.1007/978-3-030-11298-1}
}

@article{Steudel2010,
abstract = {The causal {M}arkov condition (CMC) is a postulate that links observations to causality. It describes the conditional independences among the observations that are entailed by a causal hypothesis in terms of a directed acyclic graph. In the conventional setting, the observations are random variables and the independenceis a statistical one, i.e., the information content of observations is measured in terms of Shannon entropy. We formulate a generalized CMC for any kind of observations on which independence is defined via an arbitrary submodular information measure. Recently, this has been discussed for observations in terms of binary strings where information is understood in the sense of Kolmogorov complexity. Our approach enables us to find computable alternatives to Kolmogorov complexity, e.g., the length of a text after applying existing data compression schemes. We show that our CMC is justified if one restricts the attention to a class of causal mechanisms that is adapted to the respective information measure. Our justification is similar to deriving the statistical CMC from functional models of causality, where every variable is a deterministic function of its observed causes and an unobserved noise term. Our experiments on real data demonstrate the performance of compression based causal inference.},
archivePrefix = {arXiv},
arXivId = {1002.4020},
author = {Steudel, Bastian and Janzing, Dominik and Sch{\"{o}}lkopf, Bernhard},
eprint = {1002.4020},
file = {:home/leon/.local/share/data/Mendeley Ltd./Mendeley Desktop/Downloaded/Steudel, Janzing, Sch{\"{o}}lkopf - 2010 - Causal Markov condition for submodular information measures.pdf:pdf},
isbn = {9780982252925},
journal = {Conference on Learning Theory},
pages = {464--476},
title = {Causal {M}arkov condition for submodular information measures},
year = {2010},
url = {https://www.learningtheory.org/colt2010/conference-website/papers/COLT2010proceedings.pdf}
}

@book{Cover2005,
author = {Cover, Thomas M. and Thomas, Joy A.},
booktitle = {Elements of Information Theory},
file = {:home/leon/.local/share/data/Mendeley Ltd./Mendeley Desktop/Downloaded/Cover, Thomas - 2005 - Elements of Information Theory.pdf:pdf},
howpublished = {Hardcover},
isbn = {9780471241959},
keywords = {information-theory book,it},
mendeley-tags = {it},
pages = {1--748},
publisher = {Wiley-Interscience},
title = {Elements of Information Theory},
year = {2006}
}

@article{Watanabe1960,
author = {Watanabe, Satosi},
doi = {10.1147/rd.41.0066},
journal = {IBM Journal of Research and Development},
number = {1},
pages = {66--82},
title = {Information Theoretical Analysis of Multivariate Correlation},
volume = {4},
year = {1960}
}

@book{Yeung2008,
abstract = {This book contains a thorough discussion of the classical topics in information theory together with the first comprehensive treatment of network coding, a subject first emerged under information theory in the mid 1990's that has now diffused into coding theory, computer networks, wireless communications, complexity theory, cryptography, graph theory, etc. With a large number of examples, illustrations, and original problems, this book is excellent as a textbook or reference book for a senior or graduate level course on the subject, as well as a reference for researchers in related fields.},
author = {Yeung, Raymond W.},
edition = {1st},
file = {:home/leon/Documents/PhD Amsterdam/Literature/(Information Technology{\_} Transmission, Processing and Storage) Raymond W. Yeung - Information Theory and Network Coding-Springer (2008).pdf:pdf},
isbn = {0387792333},
issn = {0009-4978},
pages = {46--5079--46--5079},
publisher = {Springer Publishing Company, Incorporated},
title = {Information Theory and Network Coding},
volume = {46},
year = {2008}
}

@article{Yeung2002a,
abstract = {We take the point of view that a Markov random field is a collection of so-called full conditional mutual independencies. Using the theory of I-Measure, we have obtained a number of fundamental characterizations related to conditional mutual independence and Markov random fields. We show that many aspects of conditional mutual independence and Markov random fields have very simple set-theoretic descriptions. New insights into the structure of conditional mutual independence and Markov random fields are obtained. Our results have immediate applications in the implication problem of probabilistic conditional independency and relational database. Toward the end of the paper, we obtain a hypergraph characterization of a Markov random field which makes it legitimate to view a Markov random field as a hypergraph. Based on this result, we naturally employ the Graham Reduction, a tool from relational database theory, to recognize a Markov forest. This connection between Markov random fields and hypergraph sheds some light on the possible role of hypergraph theory in the study of Markov random fields.},
author = {Yeung, Raymond W. and Lee, Tony T. and Ye, Zhongxing},
doi = {10.1109/TIT.2002.1013139},
file = {:home/leon/Documents/PhD Amsterdam/Literature/Information-theoretic{\_}characterizations{\_}of{\_}conditional{\_}mutual{\_}independence{\_}and{\_}Markov{\_}random{\_}fields.pdf:pdf},
issn = {0018-9448},
journal = {IEEE Transactions on Information Theory},
keywords = {Conditional independence (CI),Hypergraph,I-Measure,Markov random fields,Relational database},
number = {7},
pages = {1996--2011},
publisher = {IEEE},
title = {Information-Theoretic Characterizations of Conditional Mutual Independence and {M}arkov Random Fields},
volume = {48},
year = {2002}
}

@article{Han1978,
author = {Han, Te S.},
title = {Nonnegative entropy measures of multivariate symmetric correlations},
doi = {10.1016/S0019-9958(78)90275-9},
issn = {0019-9958},
journal = {Information and Control},
number = {2},
pages = {133--156},
volume = {36},
year = {1978}
}

@article{Yeung2019,
abstract = {Let X i , i ϵ V form a Markov random field (MRF) represented by an undirected graph G = (V,E) , and V′ be a subset of V. We determine the smallest graph that can always represent the subfield X i , i ϵ V′ as an MRF. Based on this result, we obtain a necessary and sufficient condition for a subfield of a Markov tree to be also a Markov tree. When G is a path so that X i , i ϵ V form a Markov chain, it is known that the I -Measure is always nonnegative (Kawabata and Yeung in 1992). We prove that Markov chain is essentially the only MRF such that the I -Measure is always nonnegative. By applying our characterization of the smallest graph representation of a subfield of an MRF, we develop a recursive approach for constructing information diagrams for MRFs. Our work is built on the set-theoretic characterization of an MRF (Yeung et al. in 2002).},
archivePrefix = {arXiv},
arxivId = {1608.03697},
author = {Yeung, Raymond W. and Al-Bashabsheh, Ali and Chen, Chao and Chen, Qi and Moulin, Pierre},
doi = {10.1109/TIT.2018.2866564},
eprint = {1608.03697},
file = {:home/leon/Documents/PhD Amsterdam/Literature/markov{\_}random{\_}fields.pdf:pdf},
issn = {00189448},
journal = {IEEE Transactions on Information Theory},
keywords = {I-Measure,Markov random field,conditional independence,information diagram,subfield},
number = {3},
pages = {1493--1511},
title = {On Information-Theoretic Characterizations of Markov Random Fields and Subfields},
volume = {65},
year = {2019}
}

@article{Hu1962,
author = {Hu, Kuo T.},
doi = {10.1137/1107041},
file = {:home/leon/.local/share/data/Mendeley Ltd./Mendeley Desktop/Downloaded/Ting - 1962 - On the amount of information.pdf:pdf},
journal = {Theory of Probability and Its Applications},
keywords = {it},
mendeley-tags = {it},
number = {4},
pages = {439--447},
title = {On the Amount of Information},
volume = {7},
year = {1962}
}

@article{Hochschild1945,
author = {Hochschild, Gerhard},
file = {:home/leon/Documents/PhD Amsterdam/Literature/1969145.pdf:pdf},
issn = {0003-486X},
journal = {Annals of Mathematics},
number = {1},
pages = {58--67},
publisher = {Annals of Mathematics},
title = {On the Cohomology Groups of an Associative Algebra},
volume = {46},
year = {1945},
doi = {10.2307/1969145}
}

@article{Rosas2019,
abstract = {This paper introduces a model-agnostic approach to study statistical synergy, a form of emergence in which patterns at large scales are not traceable from lower scales. Our framework leverages various multivariate extensions of Shannon's mutual information, and introduces the O-information as a metric that is capable of characterizing synergy- and redundancy-dominated systems. The O-information is a symmetric quantity, and can assess intrinsic properties of a system without dividing its parts into "predictors" and "targets." We develop key analytical properties of the O-information, and study how it relates to other metrics of high-order interactions from the statistical mechanics and neuroscience literature. Finally, as a proof of concept, we present an exploration on the relevance of statistical synergy in Baroque music scores.},
archivePrefix = {arXiv},
arXivId = {1902.11239},
author = {Rosas, Fernando E. and Mediano, Pedro A.M. and Gastpar, Michael and Jensen, Henrik J.},
doi = {10.1103/PhysRevE.100.032305},
eprint = {1902.11239},
file = {:home/leon/Documents/PhD Amsterdam/Literature/1902.11239.pdf:pdf},
issn = {2470-0053},
journal = {Physical Review E},
number = {3},
pages = {1--17},
pmid = {31640038},
title = {Quantifying High-Order Interdependencies via Multivariate Extensions of the Mutual Information},
volume = {100},
year = {2019}
}

@article{Dawid2001,
abstract = {We introduce an axiomatic definition of a mathematical structure that we term a separoid. We develop some general mathematical properties of separoids and related axiom systems, as well as connections with other mathematical structures, such as distributive lattices, Hilbert spaces, and graphs. And we show, by means of a detailed account of a number of models of the separoid axioms, how the concept of separoid unifies a variety of notions of ‘irrelevance' arising out of different formalisms for representing uncertainty in Probability, Statistics, Artificial Intelligence, and other fields.},
author = {Dawid, Alexander P.},
doi = {10.1023/A:1016734104787},
file = {:home/leon/Documents/PhD Amsterdam/Literature/a{\_}1016734104787 (1).pdf:pdf},
issn = {1573-7470},
journal = {Annals of Mathematics and Artificial Intelligence},
number = {1},
pages = {335--372},
title = {Separoids: A Mathematical Framework for Conditional Independence and Irrelevance},
volume = {32},
year = {2001}
}

@Article{Baudot2015a,
AUTHOR = {Baudot, Pierre and Bennequin, Daniel},
TITLE = {The Homological Nature of Entropy},
JOURNAL = {Entropy},
VOLUME = {17},
YEAR = {2015},
NUMBER = {5},
PAGES = {3253--3318},
URL = {https://www.mdpi.com/1099-4300/17/5/3253},
ISSN = {1099-4300},
ABSTRACT = {We propose that entropy is a universal co-homological class in a theory associated to a family of observable quantities and a family of probability distributions. Three cases are presented: (1) classical probabilities and random variables; (2) quantum probabilities and observable operators; (3) dynamic probabilities and observation trees. This gives rise to a new kind of topology for information processes, that accounts for the main information functions: entropy, mutual-informations at all orders, and Kullback–Leibler divergence and generalizes them in several ways. The article is divided into two parts, that can be read independently. In the first part, the introduction, we provide an overview of the results, some open questions, future results and lines of research, and discuss briefly the application to complex data. In the second part we give the complete definitions and proofs of the theorems A, C and E in the introduction, which show why entropy is the first homological invariant of a structure of information in four contexts: static classical or quantum probability, dynamics of classical or quantum strategies of observation of a finite system.},
DOI = {10.3390/e17053253}
}

@article{Kawabata1992,
author = {Kawabata, Tsutomu and Yeung, Raymond W.},
doi = {10.1109/18.135658},
journal = {IEEE Transactions on Information Theory},
number = {3},
pages = {1146--1149},
title = {The structure of the {I}-measure of a {M}arkov chain},
volume = {38},
year = {1992}
}

@Article{Baudot2019,
AUTHOR = {Baudot, Pierre and Tapia, Monica and Bennequin, Daniel and Goaillard, Jean-Marc},
TITLE = {Topological Information Data Analysis},
JOURNAL = {Entropy},
VOLUME = {21},
YEAR = {2019},
NUMBER = {9},
ARTICLE-NUMBER = {869},
URL = {https://www.mdpi.com/1099-4300/21/9/869},
ISSN = {1099-4300},
ABSTRACT = {This paper presents methods that quantify the structure of statistical interactions within a given data set, and were applied in a previous article. It establishes new results on the k-multivariate mutual-information ( I k ) inspired by the topological formulation of Information introduced in a serie of studies. In particular, we show that the vanishing of all I k for 2 ≤ k ≤ n of n random variables is equivalent to their statistical independence. Pursuing the work of Hu Kuo Ting and Te Sun Han, we show that information functions provide co-ordinates for binary variables, and that they are analytically independent from the probability simplex for any set of finite variables. The maximal positive I k identifies the variables that co-vary the most in the population, whereas the minimal negative I k identifies synergistic clusters and the variables that differentiate–segregate the most in the population. Finite data size effects and estimation biases severely constrain the effective computation of the information topology on data, and we provide simple statistical tests for the undersampling bias and the k-dependences. We give an example of application of these methods to genetic expression and unsupervised cell-type classification. The methods unravel biologically relevant subtypes, with a sample size of 41 genes and with few errors. It establishes generic basic methods to quantify the epigenetic information storage and a unified epigenetic unsupervised learning formalism. We propose that higher-order statistical interactions and non-identically distributed variables are constitutive characteristics of biological systems that should be estimated in order to unravel their significant statistical structure and diversity. The topological information data analysis presented here allows for precisely estimating this higher-order structure characteristic of biological systems.},
DOI = {10.3390/e21090869}
}

@phdthesis{Vigneaux2019a,
  TITLE = {Topology of statistical systems : A cohomological approach to information theory},
  AUTHOR = {Vigneaux, Juan P.},
  URL = {https://theses.hal.science/tel-02951504},
  NUMBER = {2019USPCC070},
  SCHOOL = {{Universit{\'e} Sorbonne Paris Cit{\'e}}},
  YEAR = {2019},
  MONTH = {June},
  KEYWORDS = {Sheaves ; Nonextensive statistics ; Type theory ; Multinomial coefficients ; Entropy ; Information theory ; Topos theory ; Information cohomology ; Statistique non-extensive ; Coefficients multinomiaux ; Cohomologie de l'information},
  TYPE = {{PhD} Dissertation},
  PDF = {https://theses.hal.science/tel-02951504v1/file/VIGNEAUX_Juan-Pablo_va_20190614.pdf},
  HAL_ID = {tel-02951504},
  HAL_VERSION = {v1},
}

@misc{Forre2021a,
      title={Transitional Conditional Independence}, 
      author={Patrick Forré},
      year={2021},
      eprint={2104.11547},
      archivePrefix={arXiv},
      primaryClass={math.ST},
      url={https://arxiv.org/abs/2104.11547}, 
      howpublished = {arXiv preprint arXiv:2104.11547}
}

@ARTICLE{Cocco2012,
       author = {{Cocco}, Simosa and {Monasson}, R\'emi},
        title = {Adaptive Cluster Expansion for the Inverse {I}sing Problem: Convergence, Algorithm and Tests},
      journal = {Journal of Statistical Physics},
     keywords = {Ising model, Statistical inference, Inverse problems, Inverse susceptibility, Cluster expansion, Condensed Matter - Disordered Systems and Neural Networks},
         year = 2012,
        month = {Apr.},
       volume = {147},
       number = {2},
        pages = {252-314},
          doi = {10.1007/s10955-012-0463-4},
archivePrefix = {arXiv},
       eprint = {1110.5416},
 primaryClass = {cond-mat.dis-nn},
       adsurl = {https://ui.adsabs.harvard.edu/abs/2012JSP...147..252C},
      adsnote = {Provided by the SAO/NASA Astrophysics Data System}
}

@ARTICLE{Amari2001,
  author={Amari, Shun'ichi},
  journal={IEEE Transactions on Information Theory}, 
  title={Information Geometry on Hierarchy of Probability Distributions}, 
  year={2001},
  volume={47},
  number={5},
  pages={1701-1711},
  keywords={Probability},
  doi={10.1109/18.930911}}

@book{Bishop2007,
  added-at = {2009-06-02T09:46:22.000+0200},
  asin = {0387310738},
  author = {Bishop, Christopher M.},
  biburl = {https://www.bibsonomy.org/bibtex/2d21de30a3a67c0f9f3c96bd6eec3267a/midtiby},
  description = {Amazon.com: Pattern Recognition and Machine Learning (Information Science and Statistics): Christopher M. Bishop: Books},
  dewey = {006.4},
  ean = {9780387310732},
  edition = {1st},
  interhash = {f6fec2ccd82dec0dcd63825e301662cf},
  intrahash = {d21de30a3a67c0f9f3c96bd6eec3267a},
  isbn = {0387310738},
  keywords = {algorithms machinelearning patternrecognition statistics},
  publisher = {Springer},
  timestamp = {2009-06-02T15:22:29.000+0200},
  title = {Pattern Recognition and Machine Learning},
  year = 2007
}

@book{Bishop2023,
  added-at = {2023-12-18T14:11:13.000+0100},
  author = {Bishop, Christopher M. and Bishop, Hugh},
  biburl = {https://www.bibsonomy.org/bibtex/270baa8090e176cea7e3b9c3210fd18f0/ng9q1nfk},
  doi = {10.1007/978-3-031-45468-4},
  edition = {1st},
  interhash = {7c6c84136d9d592af1f904257d185fec},
  intrahash = {70baa8090e176cea7e3b9c3210fd18f0},
  isbn = {9783031454684},
  keywords = {book deep_learning_survey},
  language = {english},
  timestamp = {2023-12-18T14:11:13.000+0100},
  title = {Deep Learning: Foundations and Concepts},
  year = 2023,
  publisher = {Springer}
}

@InProceedings{Dickstein2015,
  title = 	 {Deep Unsupervised Learning using Nonequilibrium Thermodynamics},
  author = 	 {Sohl-Dickstein, Jascha and Weiss, Eric and Maheswaranathan, Niru and Ganguli, Surya},
  booktitle = 	 {Proceedings of the 32nd International Conference on Machine Learning},
  pages = 	 {2256--2265},
  year = 	 {2015},
  editor = 	 {Bach, Francis and Blei, David},
  volume = 	 {37},
  series = 	 {Proceedings of Machine Learning Research},
  address = 	 {Lille, France},
  month = 	 {July},
  publisher =    {PMLR},
  pdf = 	 {http://proceedings.mlr.press/v37/sohl-dickstein15.pdf},
  url = 	 {https://proceedings.mlr.press/v37/sohl-dickstein15.html},
  abstract = 	 {A central problem in machine learning involves modeling complex data-sets using highly flexible families of probability distributions in which learning, sampling, inference, and evaluation are still analytically or computationally tractable. Here, we develop an approach that simultaneously achieves both flexibility and tractability. The essential idea, inspired by non-equilibrium statistical physics, is to systematically and slowly destroy structure in a data distribution through an iterative forward diffusion process. We then learn a reverse diffusion process that restores structure in data, yielding a highly flexible and tractable generative model of the data. This approach allows us to rapidly learn, sample from, and evaluate probabilities in deep generative models with thousands of layers or time steps, as well as to compute conditional and posterior probabilities under the learned model. We additionally release an open source reference implementation of the algorithm.}
}

@InProceedings{Ramesh2021,
  title = 	 {Zero-Shot Text-to-Image Generation},
  author =       {Ramesh, Aditya and Pavlov, Mikhail and Goh, Gabriel and Gray, Scott and Voss, Chelsea and Radford, Alec and Chen, Mark and Sutskever, Ilya},
  booktitle = 	 {Proceedings of the 38th International Conference on Machine Learning},
  pages = 	 {8821--8831},
  year = 	 {2021},
  editor = 	 {Meila, Marina and Zhang, Tong},
  volume = 	 {139},
  series = 	 {Proceedings of Machine Learning Research},
  month = 	 {July},
  publisher =    {PMLR},
  pdf = 	 {http://proceedings.mlr.press/v139/ramesh21a/ramesh21a.pdf},
  url = 	 {https://proceedings.mlr.press/v139/ramesh21a.html},
  abstract = 	 {Text-to-image generation has traditionally focused on finding better modeling assumptions for training on a fixed dataset. These assumptions might involve complex architectures, auxiliary losses, or side information such as object part labels or segmentation masks supplied during training. We describe a simple approach for this task based on a transformer that autoregressively models the text and image tokens as a single stream of data. With sufficient data and scale, our approach is competitive with previous domain-specific models when evaluated in a zero-shot fashion.}
}

@inproceedings{Chitwan2022,
 author = {Saharia, Chitwan and Chan, William and Saxena, Saurabh and Li, Lala and Whang, Jay and Denton, Emily L. and Ghasemipour, Kamyar and Gontijo Lopes, Raphael and Karagol Ayan, Burcu and Salimans, Tim and Ho, Jonathan and Fleet, David J. and Norouzi, Mohammad},
 booktitle = {Advances in Neural Information Processing Systems},
 editor = {S. Koyejo and S. Mohamed and A. Agarwal and D. Belgrave and K. Cho and A. Oh},
 pages = {36479--36494},
 publisher = {Curran Associates, Inc.},
 title = {Photorealistic Text-to-Image Diffusion Models with Deep Language Understanding},
 url = {https://proceedings.neurips.cc/paper_files/paper/2022/file/ec795aeadae0b7d230fa35cbaf04c041-Paper-Conference.pdf},
 volume = {35},
 year = {2022}
}

@INPROCEEDINGS{Stable_Diffusion2021,
  author={Rombach, Robin and Blattmann, Andreas and Lorenz, Dominik and Esser, Patrick and Ommer, Björn},
  booktitle={2022 IEEE/CVF Conference on Computer Vision and Pattern Recognition (CVPR)}, 
  title={High-Resolution Image Synthesis with Latent Diffusion Models}, 
  year={2022},
  pages={10674-10685},
  keywords={Training;Visualization;Image synthesis;Computational modeling;Noise reduction;Superresolution;Process control;Image and video synthesis and generation},
  doi={10.1109/CVPR52688.2022.01042}
}

@article{Medina-Mardones2021,
abstract = {Network representations often cannot fully account for the structural richness of complex systems spanning multiple levels of organisation. Recently proposed high-order information-theoretic signals are well-suited to capture synergistic phenomena that transcend pairwise interactions; however, the exponential-growth of their cardinality severely hinders their applicability. In this work, we combine methods from harmonic analysis and combinatorial topology to construct efficient representations of high-order information-theoretic signals. The core of our method is the diagonalisation of a discrete version of the Laplace–de Rham operator, that geometrically encodes structural properties of the system. We capitalise on these ideas by developing a complete workflow for the construction of hyperharmonic representations of high-order signals, which is applicable to a wide range of scenarios.},
archivePrefix = {arXiv},
arXivId = {2010.01117},
author = {Medina-Mardones, Anibal M. and Rosas, Fernando E. and Rodr{\'{i}}guez, Sebasti{\'{a}}n E. and Cofr{\'{e}}, Rodrigo},
doi = {10.1088/2632-072X/abf231},
eprint = {2010.01117},
file = {:home/leon/Documents/PhD Amsterdam/Literature/paper{\_}2.pdf:pdf},
issn = {2632-072X},
journal = {Journal of Physics: Complexity},
keywords = {Harmonic analysis,High-order phenomena,Information theory,Laplace operator,Signal processing},
number = {3},
title = {Hyperharmonic analysis for the study of high-order information-theoretic signals},
volume = {2},
year = {2021}
}

@InProceedings{Giry1982,
author="Giry, Mich{\`e}le",
editor="Banaschewski, B.",
title="A categorical approach to probability theory",
booktitle="Categorical Aspects of Topology and Analysis",
year="1982",
publisher="Springer Berlin Heidelberg",
pages="68--85",
isbn="9783540390411",
doi="10.1007/BFb0092872"
}

@article{Lang2022,
    title = {Information Decomposition Diagrams Applied beyond {S}hannon Entropy: A Generalization of {H}u's {T}heorem},
    author     = {Leon Lang and Pierre Baudot and Rick Quax and Patrick Forré},
    url        = {https://compositionality.episciences.org/14181},
    doi        = {10.46298/compositionality-7-1},
    journal    = {Compositionality},
    issn       = {2631-4444},
    volume     = {Volume 7},
    eid        = 1,
    year       = {2025},
    month      = {Jan},
    keywords   = {Computer Science - Information Theory},
}

@misc{Hammersley1971,
author = {Hammersley, John M. and Clifford, Peter},
note = {Unpublished Manuscript},
file = {:home/leon/Documents/PhD Amsterdam/Literature/hamm-cliff.pdf:pdf},
pages = {1--26},
title = {Markov Fields on Finite Graphs and Lattices},
year = {1971},
url = {https://www.statslab.cam.ac.uk/~grg/books/hammfest/hamm-cliff.pdf}
}

@book{Preston1976,
  author = {Christopher Preston},
  title        = {Random Fields},
  series       = {Lecture Notes in Mathematics},
  volume       = {534},
  year         = {1976},
  publisher    = {Springer},
  isbn         = {9783540078524},
  pages        = {212},
  language     = {English}
}

@book{Spitzer1971,
author = {Spitzer, Frank L.},
publisher = {Mathematical Association of America},
title = {Random Fields and Interacting Particle Systems},
year = {1971}
}

@inproceedings{Pearl1985,
  author       = {Judea Pearl},
  title        = {Bayesian Networks: A Model of Self-Activated Memory for Evidential Reasoning},
  booktitle    = {Proceedings of the Seventh Annual Conference of the Cognitive Science Society},
  year         = {1985},
  address      = {Irvine, CA},
  pages        = {329--334},
  url          = {http://ftp.cs.ucla.edu/tech-report/198_-reports/850017.pdf}
}

\end{document}